\DeclareMathAlphabet\mathbfcal{OMS}{cmsy}{b}{n}
\renewcommand{\d}{{\mathrm d}}
\newcommand{\im}{\mathrm{i}}
\newcommand{\e}{\mathrm{e}}
\def\tr{\mathop{\mathrm{tr}}\limits}
\newtheorem{theo}{Theorem}[section]
\newtheorem{lem}[theo]{Lemma}
\newtheorem{rem}[theo]{Remark}
\newtheorem{problem}[theo]{Riemann-Hilbert Problem}
\newtheorem{remark}[theo]{Remark}
\newtheorem{prop}[theo]{Proposition} 
\newtheorem{cor}[theo]{Corollary} 
\newtheorem{definition}[theo]{Definition}
\begin{document}
\title[Soft edge momenta in the anharmonic oscillator]{Momenta spacing distributions in anharmonic oscillators and the higher order finite temperature Airy kernel}

\author{Thomas Bothner}
\address{School of Mathematics, University of Bristol, Fry Building, Woodland Road, Bristol, BS8 1UG, United Kingdom}
\email{thomas.bothner@bristol.ac.uk}

\author{Mattia Cafasso}
\address{LAREMA, Universit\'e d'Angers, 2bd Lavoisier, 49045 Angers, France}
\email{cafasso@math.univ-angers.fr}

\author{Sofia Tarricone}
\address{LAREMA, Universit\'e d'Angers, 2bd Lavoisier, 49045 Angers, France}
\address{Department of Mathematics and Statistics, Concordia University, 1455 de Maisonneuve W., Montr\'eal, Qu\'ebec, Canada, H3G 1M8}
\email{sofia.tarricone@concordia.ca}

\date{\today}

\keywords{Higher order finite temperature Airy kernel, extreme value statistics, Fourier analysis, operator valued Riemann-Hilbert problem, operator valued Lax pair, integro-differential Painlev\'e-II and mKdV hierarchy.}

\subjclass[2010]{Primary 45J05; Secondary 30E25, 42A38, 35J10, 81V70}
\thanks{The work of T.B. is supported by the Engineering and Physical Sciences Research Council through grant EP/T013893/2. M.C. and S.T. are supported by the European Union Horizon 2020
research and innovation program under the Marie Sk\l odowska-Curie RISE 2017 grant
778010 IPaDEGAN}


\begin{abstract}
We rigorously compute the integrable system for the limiting $(N\rightarrow\infty)$ distribution function of the extreme momentum of $N$ noninteracting fermions when confined to an anharmonic trap $V(q)=q^{2n}$ for $n\in\mathbb{Z}_{\geq 1}$ at positive temperature. More precisely, the edge momentum statistics in the harmonic trap $n=1$ are known to obey the weak asymmetric KPZ crossover law which is realized via the finite temperature Airy kernel determinant or equivalently via a Painlev\'e-II integro-differential transcendent, cf. \cite{LW,ACQ}. For general $n\geq 2$, a novel higher order finite temperature Airy kernel has recently emerged in physics literature \cite{DMS} and we show that the corresponding edge law in momentum space is now governed by a distinguished Painlev\'e-II integro-differential hierarchy. Our analysis is based on operator-valued Riemann-Hilbert techniques which produce a Lax pair for an operator-valued Painlev\'e-II ODE system that naturally encodes the aforementioned hierarchy. As byproduct, we establish a connection of the integro-differential Painlev\'e-II hierarchy to a novel integro-differential mKdV hierarchy.
\end{abstract}


\maketitle

\section{Introduction and statement of results}\label{zsec0}
In this paper we present new results for the edge momentum distribution function of a noninteracting fermionic quantum many body system confined to an anharmonic trap at finite temperature. It is known from \cite[(5),(20)]{DMS} that the same extreme value distribution 
is expressible in terms of a Fredholm determinant of an integral operator whose kernel involves the higher order Airy function (equivalently, see \cite{Ko}, extended Airy function of the first kind)
\begin{equation}\label{i1}
	\textnormal{Ai}_n(x):=\frac{1}{\pi}\int_0^{\infty}\cos\left(\frac{y^{2n+1}}{2n+1}+xy\right)\d y,\ \ \ \ \ x\in\mathbb{R},\ \ \ n\in\mathbb{N}:=\mathbb{Z}_{\geq 1}.
\end{equation}
In the simplest case of a harmonic trap, the higher order Airy function \eqref{i1} becomes an ordinary Airy function and the extreme momentum statistics are known to be described by the weak asymmetric KPZ crossover law \cite[Proposition $1.2$]{ACQ}. It turns out there is a striking generalization of the  crossover law to general anharmonic traps with even monomial potential: using Fourier analytic and operator-valued Riemann-Hilbert techniques we will express the higher order finite temperature Airy kernel determinants in terms of a distinguished solution of an integro-differential Painlev\'e-II hierarchy.

\subsection{Fermionic coordinates in monomial anharmonic traps}\label{fermi1}
In order to be more explicit, consider the one-dimensional Schr\"odinger operator 
\begin{equation}\label{i2}
	H_q:=-\frac{\d^2}{\d q^2}+q^{2n},\ \ \ n\in\mathbb{N},
\end{equation}
in coordinate representation on $L^2(\mathbb{R})$ with monomial potential $L_{\textnormal{loc}}^1(\mathbb{R})\ni V(q):=q^{2n}$. By classical theory, see \cite[Chapter $2.3$]{BS} or \cite[Chapter $3$]{T}, the Hamiltonian \eqref{i2} considered on the domain $C_0^{\infty}(\mathbb{R})$ of smooth functions on $\mathbb{R}$ with compact support is essentially self-adjoint and its closure (again denoted by $H_q$) has pure point spectrum. In turn, there exists a complete orthonormal system $\{\psi_k\}_{k\in\mathbb{N}}$ for $L^2(\mathbb{R})$ consisting of eigenfunctions of $H_q$,
\begin{equation}\label{i3}
	H_q\psi_k=\lambda_k\psi_k,\ \ \ k\in\mathbb{N},
\end{equation}
with eigenvalues $\lambda_k\in\mathbb{R}$ that tend to $+\infty$ as $k\rightarrow\infty$. Moving ahead, in modeling the desired noninteracting fermionic qantum gas one recognizes the symmetrization postulate of quantum mechanics, cf. \cite[Chapter $4$, $\S$3]{T}, i.e. the locations $\{q_k\}_{k=1}^N\subset\mathbb{R}$ of $N$ identical, noninteracting fermions at zero temperature confined to the trap $V$ are distributed according to a biorthogonal point ensemble indexed by ${\bf k}:=(k_1,\ldots,k_N)\in\mathbb{N}^N$, that is to say the locations $\{q_k\}_{k=1}^N$ form a special instance of a determinantal point process with joint probability density function (pdf) 
\begin{equation}\label{i4}
	f_{0,{\bf k}}(q_1,\ldots,q_N):=\frac{1}{N!}\det\big[\psi_{k_i}^{\ast}(q_j)\big]_{i,j=1}^N\det\big[\psi_{k_i}(q_j)\big]_{i,j=1}^N,\ \ \ \ 1\leq k_1<k_2<\ldots<k_N,\ \ k_i\in\mathbb{N}.
\end{equation}
Here we use the single-coordinate wave functions $\psi_{k_i}$ given in \eqref{i3}. Note that
\begin{equation*}
	\Psi_{\bf k}(q_1,\ldots,q_N):=\frac{1}{\sqrt{N!}}\det[\psi_{k_i}(q_j)]_{i,j=1}^N,
\end{equation*}
is the standard antisymmetric $N$-coordinate wave function, an eigenfunction of the Hamiltonian 
\begin{equation*}
	H_N:=\sum_{i=1}^NH_{q_i}=\sum_{i=1}^N\left(-\frac{\d^2}{\d q_i^2}+q_i^{2n}\right),
\end{equation*}
which physically describes an eigenstate of the fermionic gas with energy  $\lambda_{{\bf k}}:=\sum_{i=1}^N\lambda_{k_i}$, see for instance \cite[Section IV]{DDMS}. 
At finite temperature $T>0$ all such eigenstates occur according to the Boltzmann-Gibbs distribution and the coordinate pdf \eqref{i4} gets generalized to
\begin{equation}\label{i5}
	f_T(q_1,\ldots,q_N):=\frac{1}{Z_N(\beta)}\sum_{\substack{{\bf k}\in\mathbb{N}^N\\ k_1<\ldots<k_N}}\big|\Psi_{\bf k}(q_1,\ldots,q_N)\big|^2\e^{-\beta\lambda_{\bf k}},\ \ \ \beta:=\frac{1}{T}>0,
\end{equation}
with the canonical partition function $Z_N(\beta):=\sum_{{\bf k}\in\mathbb{N}^N:\,k_1<\ldots<k_N}\e^{-\beta\lambda_{\bf k}}$, see \cite[$(64)$]{DDMS}\smallskip

Starting from \eqref{i5} one can now analyze various fine structure properties of the coordinate point process, in particular the large $N$ scaling behavior of its extreme value $q_{\max}(N):=\max_{1\leq i\leq N}q_i$ has been at the center of interest in theoretical physics in recent years, partially because experimental advances on cold atom imaging have made it possible to probe the positions of individual gas particles and one therefore requires a precise spatial description of the gas itself, see \cite{DDMS} for background. Mathematically, this task asks for the derivation of large $N$ limit laws and a first rigorous answer was given in \cite[$(25)$]{LW}, albeit for the harmonic trap,
\begin{equation}\label{i6}
	\lim_{N\rightarrow\infty}\mathbb{P}\left(\frac{q_{\max}(N)-\sqrt{2N}}{2^{-\frac{1}{2}}N^{-\frac{1}{6}}}\leq t\,\,\bigg|\,\,n=1,\,T=N^{\frac{1}{3}}\alpha^{-1},\ \alpha>0\ \textnormal{fixed}\right)=:F_{1}^{\alpha}(t),
\end{equation}
pointwise in $t\in\mathbb{R}$, where $F_1^{\alpha}(t)=\det(I-K_{t,1}^{\alpha}\upharpoonright_{L^2(\mathbb{R}_+)})$ equals the Fredholm determinant of the finite temperature Airy kernel (recall $\textnormal{Ai}_1\equiv\textnormal{Ai}$ as in \eqref{i1} - the reader should not confuse our $F_1$ notation as an abbreviation for the Tracy-Widom distribution in the Gaussian orthogonal ensemble)
\begin{equation}\label{i7}
	K_{t,1}^{\alpha}(x,y):=\int_{\mathbb{R}}\textnormal{Ai}_1(x+z+t)\textnormal{Ai}_1(z+y+t)\left(\frac{\e^{\alpha z}}{1+\e^{\alpha z}}\right)\d z.
\end{equation}
The same operator determinant had occurred prior to \cite{DDMS} and \cite{LW}, first in Johansson's work \cite[Theorem $1.3$]{J} on grand canonical scaling limits in the Moshe-Neuberger-Shapiro model and then in the paper \cite[Theorem $1.1$]{ACQ} by Amir-Corwin-Quastel on the probability distribution of the KPZ solution with narrow wedge initial condition. Following \cite{ACQ}, the distribution function $F_1^{\alpha}(t)$ interpolates with varying $\alpha\in(0,+\infty)$ between two universality classes (Tracy-Widom and Gumbel) and it was therefore coined a crossover distribution which proved to underlie the weak asymmetric limit of models in the KPZ universality class, cf. \cite{C,BCF,D}. Furthermore, and related to our analysis, \eqref{i6} can be expressed in terms of an integro-differential Painlev\'e-II transcendent 
\begin{equation}\label{i8}
	F_1^{\alpha}(t)=\exp\left[-\int_t^{\infty}(s-t)\left(\int_{\mathbb{R}}u^2(s|x)w_{\alpha}'(x)\,\d x\right)\d s\right],\ \ \ w_{\alpha}(x):=\frac{\e^{\alpha x}}{1+\e^{\alpha x}},\ \ w_{\alpha}':=\frac{\d w_{\alpha}}{\d x},
\end{equation}
where $u=u(s|t)$ is the unique real-valued, smooth in $s\in\mathbb{R}$ for any $t\in\mathbb{R}$, solution of the boundary value problem
\begin{equation}\label{i9}
	\frac{\d^2}{\d s^2}u(s|t)=\left[s+t+2\int_{\mathbb{R}}u^2(s|x)w_{\alpha}'(x)\,\d x\right]u(s|t),\ \ \ \ u(s|t)\stackrel{s\rightarrow\infty}{\sim}\textnormal{Ai}_1(s+t)\ \ \textnormal{pointwise in}\ t\in\mathbb{R}.
\end{equation}
Formula \eqref{i8} generalizes the Tracy-Widom formula \cite[$(1.17)$]{TW0} in the Gaussian unitary ensemble and \eqref{i9} the Hastings-McLeod Painlev\'e-II transcendent involved in it. Although \eqref{i6} has only been proven rigorously in the harmonic case, the use of local density approximations and functional methods in \cite[Section VII]{DDMS} has put forward convincing evidence that the limit law \eqref{i6} for $q_{\max}(N)$ holds true for all $V(q)=q^{2n}$, after appropriate $n$-dependent centering and scaling.

\subsection{Fermionic momenta in monomial anharmonic traps}\label{fermi2} Somewhat surprisingly, the above coordinate universality phenomenon does not appear to carry over to the momentum representation
\begin{equation*}
	H_p=p^2+(-1)^n\frac{\d^n}{\d p^n}
\end{equation*}
of the Hamiltonian \eqref{i2}. In particular, while the average \textit{coordinate} density 
\begin{equation*}
	\rho_N(q):=\int_{\mathbb{R}}\cdots\int_{\mathbb{R}}f_T(q,q_2,\ldots,q_N)\,\d q_2\cdots\d q_N
\end{equation*}
is expected to vanish square root like near $q_{\max}(N)$ in all anharmonic, even monomial traps, see \cite[Section VII, D]{DDMS}, the local behavior of the average \textit{momentum} density $\bar{\rho}_N(p)$ near $p_{\max}(N):=\max_{1\leq i\leq N}p_i$ is more sophisticated and highly $n$-dependent. Indeed, using Wigner's quasi pdf in \cite[(3)]{DDMS2}, the recent paper \cite{DMS} argued that there exist $n$-dependent factors $a_n,b_n>0$ such that for large $N$, with high probability,
\begin{equation*}
	p_{\max}(N)\sim a_nN^{\frac{n}{n+1}}=:p_{\textnormal{edge}}(N),
\end{equation*}
and consequently, also for large $N$,
\begin{equation}\label{i10}
	\bar{\rho}_N(p)\sim b_nN^{\frac{1}{2(n+1)}}\big(p_{\textnormal{edge}}(N)-p\big)^{\frac{1}{2n}},\ \ 0<p<p_{\textnormal{edge}}(N)\ \textnormal{close to the deterministic value}\ p_{\textnormal{edge}}(N),
\end{equation}	
see the supplemental material to \cite{DMS}, especially equations $(2)$ and $(3)$ therein, all for the anharmonic trap $V(q)=q^{2n}$. The $n$-dependence in \eqref{i10} hints at a novel edge momentum phenomenon in the fermionic gas, somewhat reminiscent of the non-generic higher order soft edge behavior in certain Hermitian random matrix models, see \cite{CV,CIK}. Although the techniques in \cite{DMS} are in general non-rigorous (rigorous only for $n=1$ when the coordinate and momentum representation of \eqref{i2} are in perfect Fourier duality), they have motivated an analogue  of \eqref{i6} for $p_{\max}(N)$ for arbitrary $n\in\mathbb{N}$. More precisely, see \cite[$(23),(24)$]{DMS}, it is expected that for some $n$-dependent factors $c_n,d_n>0$, pointwise in $t\in\mathbb{R}$,
\begin{equation}\label{i11}
	\lim_{N\rightarrow\infty}\mathbb{P}\left(\frac{p_{\max}(N)-p_{\textnormal{edge}}(N)}{c_nN^{-e_n}}\leq t\,\,\bigg|\,\,T=d_nN^{f_n}\alpha^{-1},\ \alpha>0\ \textnormal{fixed}\right)=:F_{n}^{\alpha}(t),
\end{equation}
where $e_n,f_n$ are the scaling exponents
\begin{equation*}
	e_n:=\frac{n}{(n+1)(2n+1)}\ \ \ \ \ \ \ \ \textnormal{and}\ \ \ \ \ \ \ f_n:=\frac{2n^2}{(n+1)(2n+1)}.
\end{equation*}
In \eqref{i11}, $F_n^{\alpha}(t)=\det(I-K_{t,n}^{\alpha}\upharpoonright_{L^2(\mathbb{R}_+)})$ is the Fredholm determinant of the higher order finite temperature Airy kernel defined as
\begin{equation}\label{i12}
	K_{t,n}^{\alpha}(x,y):=\int_{\mathbb{R}}\textnormal{Ai}_n(x+z+t)\textnormal{Ai}_n(z+y+t)\left(\frac{\e^{\alpha z}}{1+\e^{\alpha z}}\right)\d z,\ \ \ \ n\in\mathbb{N},
\end{equation}
in terms of \eqref{i1}. While \eqref{i11} is $n$-dependent, the same scaling limit is expected to be universal across the class of smooth confining potentials $V(q)$ with a single global minimum at $q_{\ast}$ such that $\lim_{|q|\rightarrow\infty}V(q)=+\infty$ and $V(q)-V(q_{\ast})\sim(q-q_{\ast})^{2n}$ near  $q_{\ast}$. In our first result below, see Theorem \ref{itheo1}, we will derive the analogues of \eqref{i8} and \eqref{i9} for the distribution function $F_n^{\alpha}(t)$. In fact, our analysis is valid for a larger class of kernels of Hankel composition operators than \eqref{i12} with the Fermi factor $w_{\alpha}(z)$. The details are as follows.
\subsection{An integro-differential Painlev\'e-II hierarchy} Abbreviate $\mathbb{R}_+:=(0,\infty)$ and consider an arbitrary positive, strictly increasing and differentiable weight function $w:\mathbb{R}\rightarrow\mathbb{R}_+$ such that for some $\omega,x_0>0$,
\begin{equation}\label{i13}
	\lim_{x\rightarrow+\infty}w(x)=1,\ \ \lim_{x\rightarrow-\infty}w(x)=0\ \ \ \ \ \ \ \ \textnormal{and}\ \ \ \ \ \ 0<w'(x)\leq \e^{-\omega|x|}\ \ \ \ \ \forall\,|x|\geq x_0.
\end{equation}
Define the integral operator $K_{t,n}:L^2(\mathbb{R}_+)\rightarrow L^2(\mathbb{R}_+)$ as follows,
\begin{equation}\label{i14}
	\big(K_{t,n}f\big)(x)=\int_{\mathbb{R}_+}K_{t,n}(x,y)f(y)\,\d y,\ \ \ \ K_{t,n}(x,y):=\int_{\mathbb{R}}\textnormal{Ai}_n(x+z+t)\textnormal{Ai}_n(z+y+t)w(z)\,\d z,
\end{equation}
and note that $K_{t,n}$ is trace class on $L^2(\mathbb{R}_+)$, see Corollary \ref{zcor:0} below. Hence, its Fredholm determinant
\begin{equation}\label{i15}
	D_n(t,\lambda):=\det(I-\lambda K_{t,n}\upharpoonright_{L^2(\mathbb{R}_+)}),\ \ \ \ \ \ (t,\lambda,n)\in\mathbb{R}\times\mathbb{C}\times\mathbb{N},
\end{equation}
is well-defined with $t\mapsto D_n(t,\cdot)$ differentiable (by \cite[Lemma $2.20$]{ACQ} since $t\mapsto K_{t,n}$ is differentiable with trace class derivative $\frac{\d}{\d t}K_{t,n}$, compare the proof of Lemma \ref{zlem:3} below) and $\lambda\mapsto D_n(\cdot,\lambda)$ entire, see \cite[Lemma $3.3$]{S}. In order to state the generalization of \eqref{i8}, \eqref{i9} to the higher order finite temperature Airy kernel determinant $D_n(t,\lambda)$, we require the following operator abbreviations.
\begin{definition}\label{idef1} Given a function $\mathbb{R}^2\ni (t,x)\mapsto f(t|x)$, we let $D_t^{\pm 1}f$ denote its fractional $t$-derivatives such that $(D_t^{-1}D_tf)(t|x)=f(t|x)$, i.e. $D_t$ is the ordinary $t$-derivative and $D_t^{-1}$ the $t$-antiderivative. Now define, for given $u=u(t|x)$,
\begin{align*}
	(\mathcal{L}_+^uf)(t|x):=&\,\im(D_tf)(t|x)-\im\big\langle (D_t^{-1}\{u,f\})(t|x,\cdot),u\big\rangle-2\im\big(D_t^{-1}\langle u,f\rangle\big)u(t|x),\\
	(\mathcal{L}_-^uf)(t|x):=&\,\im(D_tf)(t|x)+\im\big\langle (D_t^{-1}[u,f])(t|x,\cdot),u\big\rangle,
\end{align*}
where the rank two integral operators $[\alpha,\beta]:=\alpha\otimes\beta-\beta\otimes\alpha$ and $\{\alpha,\beta\}:=\alpha\otimes\beta+\beta\otimes\alpha$ have kernels
\begin{equation*}
	[\alpha,\beta](t|x,y)=\alpha(t|x)\beta(t|y)-\beta(t|x)\alpha(t|y),\ \ \  \{\alpha,\beta\}(t|x,y)=\alpha(t|x)\beta(t|y)+\beta(t|x)\alpha(t|y),
\end{equation*} 
and $\langle\cdot,\cdot\rangle$ denotes the weighted bilinear form
\begin{equation*}
	\langle f,g\rangle:=\int_{\mathbb{R}}f(t|x)g(t|x)w'(x)\,\d x,\ \ \ \ w'(x)=\frac{\d w}{\d x}(x).
\end{equation*}
\end{definition}
The operators $\mathcal{L}_{\pm}^u$ in Definition \ref{idef1} allow us to state our main result in the following compact fashion.
\begin{theo}\label{itheo1} For every $(t,\lambda,n)\in\mathbb{R}\times\overline{\mathbb{D}_1(0)}\times\mathbb{N}$, with the closed unit disk\, $\overline{\mathbb{D}_1(0)}:=\{\lambda\in\mathbb{C}:\,|\lambda|\leq 1\}$,
\begin{equation}\label{i16}
	D_n(t,\lambda)=\exp\left[-\int_t^{\infty}(s-t)\left(\int_{\mathbb{R}}u^2(s|x)w'(x)\,\d x\right)\d s\right],
\end{equation}
where $u(t|x)\equiv u(t|x;n,\lambda)$ is the unique solution of the boundary value problem
\begin{equation}\label{i17}
	-(t+x)u(t|x)=\big((\mathcal{L}_+^u\mathcal{L}_-^u)^nu\big)(t|x),\ \ \ \ \ \ \ u(t|x)\sim\lambda^{\frac{1}{2}}\textnormal{Ai}_n(t+x),\ \ \ t\rightarrow+\infty.
\end{equation}
The mapping $t\mapsto u(t|x;n,\lambda)$ is smooth for any $(x,\lambda,n)\in\mathbb{R}\times\overline{\mathbb{D}_1(0)}\times\mathbb{N}$, the asymptotic expansion in \eqref{i17} holds pointwise in $x\in\mathbb{R}$ and we choose an arbitrary, albeit fixed, branch for $\lambda^{\frac{1}{2}}$.
\end{theo}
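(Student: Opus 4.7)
The plan is to package the trace-class kernel $K_{t,n}$ as an operator-valued Its--Izergin--Korepin--Slavnov (IIKS) integrable operator and analyze it through an associated $2\times 2$ operator-valued Riemann-Hilbert problem (RHP). First, one uses the Fourier representation \eqref{i1} of $\textnormal{Ai}_n$ to rewrite the kernel in \eqref{i14} in Hankel-composition form, then deforms the cosine contour in \eqref{i1} onto a Stokes contour $\Sigma_n$ consisting of $2n+1$ rays through the origin of arguments $(2k-1)\pi/(2n+1)$, as is standard in the higher-order Painlev\'e analysis of \cite{CIK}. This produces an operator-valued RHP for a matrix $\Gamma(t|z;\lambda)$ whose entries are kernels of operators on $L^2(\mathbb{R},w'(x)\,\d x)$, with unipotent jumps on $\Sigma_n$ built from $\exp(\pm\im\theta(z,t|x))$, $\theta(z,t|x):=\frac{z^{2n+1}}{2n+1}+(t+x)z$, and normalization $\Gamma\to I$ as $z\to\infty$. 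Solvability on $\lambda\in\overline{\mathbb{D}_1(0)}$ follows from a vanishing-lemma argument exploiting the Schwarz symmetry of the jump and the exponential decay of $w'$ in \eqref{i13}.

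Next, one extracts the candidate $u(t|x)$ from the large-$z$ expansion $\Gamma(t|z;\lambda)=I+\Gamma_1(t;\lambda)z^{-1}+\Gamma_2(t;\lambda)z^{-2}+\cdots$, defining $u$ via a specified off-diagonal kernel entry of $\Gamma_1$ evaluated on the $x$-variable. A Lax pair is then obtained in the standard way: after gauging out the phase $\theta$, both $\partial_z\Gamma\cdot\Gamma^{-1}$ and $\partial_t\Gamma\cdot\Gamma^{-1}$ are rational in $z$, yielding an operator-valued linear system whose $z$-coefficient is polynomial of degree $2n$. The compatibility (zero-curvature) condition of this Lax pair encodes, after expansion in $z$, a recursion for the coefficients of $\Gamma_1,\Gamma_2,\ldots$ in terms of $u$.

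The main obstacle, and the algebraic heart of the proof, is to identify this recursion with the $n$-th member of the integro-differential Painlev\'e-II hierarchy expressed through the operators $\mathcal{L}_\pm^u$ of Definition \ref{idef1}. The plan is to recognize the coefficients of the Lax pair as iterated compositions of $\mathcal{L}_+^u$ and $\mathcal{L}_-^u$, where the bilinear form $\langle\cdot,\cdot\rangle$ and the rank-two operators $[\alpha,\beta],\{\alpha,\beta\}$ emerge naturally from pairings between off-diagonal entries of the $\Gamma_k$'s against $u$. An induction on $n$, matching powers of $z$ in the zero-curvature equation at each step, should then yield the master equation $-(t+x)u=((\mathcal{L}_+^u\mathcal{L}_-^u)^n u)(t|x)$. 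Smoothness in $t$ of the resulting $u$ follows from the regularity of the RHP solution under the hypotheses on $w$ in \eqref{i13}.

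Finally, for the Tracy-Widom type formula \eqref{i16}, the strategy is the usual two-step integration of a differential identity. Jacobi's formula gives $\frac{\d}{\d t}\log D_n(t,\lambda)=-\lambda\,\textnormal{tr}((I-\lambda K_{t,n})^{-1}\partial_t K_{t,n})$, and this trace is identified, via the IIKS resolvent representation and the expansion of $\Gamma_1$, with $-\int_\mathbb{R}u^2(t|x)w'(x)\,\d x$; integrating twice and using the decay $D_n(t,\lambda)\to 1$ as $t\to+\infty$ yields \eqref{i16}. The boundary condition $u(t|x)\sim\lambda^{1/2}\textnormal{Ai}_n(t+x)$ follows by linearizing the RHP around $\lambda=0$, where $\Gamma$ reduces to an Airy-type model and $\Gamma_1$ contributes a pure $\lambda^{1/2}\textnormal{Ai}_n(t+x)$ at leading order; the choice of branch for $\lambda^{1/2}$ is absorbed into the gauge of $\Gamma$. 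Uniqueness of the solution to \eqref{i17} subject to this asymptotic constraint is obtained by a contraction-mapping argument on the integral reformulation of \eqref{i17} for $t$ sufficiently large, combined with the entire dependence of $D_n(\cdot,\lambda)$ on $\lambda$ ensured by Fredholm determinant theory.
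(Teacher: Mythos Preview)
Your overall architecture---operator-valued $2\times 2$ RHP $\Rightarrow$ Lax pair $\Rightarrow$ zero-curvature recursion $\Rightarrow$ identification with $\mathcal{L}_\pm^u$ $\Rightarrow$ trace formula---matches the paper's, but several execution choices diverge and a couple of your steps contain genuine gaps.

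\textbf{Contour and solvability.} The paper does \emph{not} work on a $(2n{+}1)$-ray Stokes contour. Instead it first conjugates $K_{t,n}\chi_{\mathbb{R}_+}$ by the Fourier transform and a multiplier $P_n$ (Section~\ref{zsec2}), producing an equivalent trace-class operator $C_{t,n}$ on $L^2(\Sigma)$ with $\Sigma=\mathbb{R}\sqcup\Gamma_\alpha\sqcup\Gamma_\beta$, two \emph{horizontal} lines $\mathbb{R}\pm i\Delta$. This keeps the jump contour independent of $n$; only the phase $\psi_n(\zeta,\cdot)=\zeta^{2n+1}/(2n{+}1)+(\cdot)\zeta$ carries the $n$-dependence. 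Solvability is not obtained from a vanishing lemma but directly from Lemma~\ref{zlem:3}: $I-\lambda K_{t,n}$ is invertible on $L^2(\mathbb{R}_+)$ for $\lambda\in\overline{\mathbb{D}_1(0)}$, hence $D_n(t,\lambda)\neq 0$, hence $I-\lambda^{1/2}C_{t,n}$ is invertible, and the RHP solution \eqref{z38} is written down explicitly via the resolvent equations \eqref{z39}. Your ray-contour approach might be made to work, but you would need to verify that the finite-temperature weight $w$ (which enters as a Fourier--Stieltjes integral and requires $\Im(\alpha-\beta)$ bounded by $\omega/2$) is compatible with that geometry.

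\textbf{Trace identity.} You write that Jacobi's formula gives $\partial_t\log D_n$ equal to $-\int_{\mathbb{R}}u^2 w'$ and then say ``integrating twice''. This is inconsistent: the paper shows (Lemma~\ref{z:lem9}, equation \eqref{z99}) that it is the \emph{second} logarithmic derivative which equals $-\int_{\mathbb{R}}u^2\,\d\sigma$. The first derivative is $-i\lambda^{1/2}\tr_{\mathcal{H}_1}\int_\Sigma N_1\otimes K_1$, and a further differentiation using the commutator identity $(\mathbf{X}_1)_t=[\mathbf{B}_0,\mathbf{X}_2]-\mathbf{X}_1[\mathbf{B}_0,\mathbf{X}_1]$ from the $\mathcal{O}(\zeta^{-1})$ term of the Lax pair is needed to reach $-\tr(UV)=-\int u^2\,\d\sigma$. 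Your outline skips this step.

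\textbf{Boundary behaviour and uniqueness.} The asymptotic $u(t|x)\sim\lambda^{1/2}\textnormal{Ai}_n(t+x)$ is a $t\to+\infty$ statement, not a $\lambda\to 0$ one; the paper obtains it (Lemma~\ref{z:lem13}, Corollary~\ref{z:cor3}) by showing $C_{t,n}^\ast\to 0$ exponentially in operator norm as $t\to+\infty$, so that $n_1(\cdot|x)\to m_1(\cdot|x)$ and $u(t|x)=\lambda^{1/2}\int_\Sigma n_1 k_2\to\lambda^{1/2}\int_{\Gamma_\beta}\tfrac{1}{2\pi}e^{-i\psi_n(\eta,t+x)}\,\d\eta=\lambda^{1/2}\textnormal{Ai}_n(t+x)$. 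Your contraction-mapping argument would only give uniqueness for $t$ large; the paper instead deduces global uniqueness from the unique solvability of RHP~\ref{zmaster} for all $(t,\lambda,n)\in\mathbb{R}\times\overline{\mathbb{D}_1(0)}\times\mathbb{N}$ together with the representation $u(t|x)=U(x,x)$ in terms of RHP data (Remark~\ref{zrem:1}).

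\textbf{Recursion closure.} One non-obvious point the paper handles explicitly (Lemma~\ref{z:lem11}) is that the diagonal entries $A_k^{11},A_k^{22}$ of the Lax coefficients, which a priori involve $t$-antiderivatives, are in fact polynomials in the off-diagonal data---proved by squaring $\mathbf{A}(\zeta)$ and using the $t\to+\infty$ decay of Corollary~\ref{z:cor3} to fix integration constants. This is what makes the hierarchy $t$-local (Remark~\ref{irem1}) and is a step your outline does not address.
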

\begin{rem}\label{irem1} Although $\mathcal{L}_{\pm}^u$ seemingly generate $t$-antiderivatives, the operator $D_t^{-1}$ as appearing in \eqref{i17} acts always on a total $t$-derivative. In turn, \eqref{i17} is completely $t$-localized, see Lemma \ref{z:lem11} and Corollary \ref{z:cor4} below.
\end{rem}
\begin{rem}\label{irem2} The solution $u=u(t|x;n,\lambda)$ of \eqref{i17} depends on the weight function $w$ in a non-trivial fashion, compare Definition \ref{idef1}. Yet we do not choose to explicitly indicate the $w$-dependence of $u$ in our notation.
\end{rem}
\begin{rem} Assumption \eqref{i13} on the exponential decay of $w'$ is naturally enforced by our proof method. Conjecturally, compare \cite[Section $9$]{B}, a power like vanishing of $w'$ at $\pm\infty$ is sufficient for the validity of \eqref{i16} and \eqref{i17}. We further comment on this feature in Remark \ref{momrem} and Subsection \ref{method} below.
\end{rem}
Before moving on, we explicitly list a few members of the integro-differential Painlev\'e-II hierarchy defined through the dynamical system \eqref{i17}. Indeed, using the shorthand
\begin{equation*}
	u=u(t|x),\ \ \ \ u'=(D_tu)(t|x),\ \ \ \ \ u''=(D_t^2u)(t|x),\ \ \ \ \ u'''=(D_t^3u)(t|x),\ \ \ \ \ldots
\end{equation*}
the first three members read as
\begin{align}
	n=1:\ \ \ \ \ (t+x)u=&\,\,u''-2u\langle u,u\rangle,\label{i18}\\
	n=2:\ \ \ -(t+x)u=&\,\,u''''-4u''\langle u,u\rangle-8u'\langle u',u\rangle-6u\langle u,u''\rangle-2u\langle u',u'\rangle+6u\langle u,u\rangle^2,\label{i19}
\end{align}
and
\begin{align}
	n=3&:\ \ (t+x)u=u''''''-6u''''\langle u,u\rangle-8u\langle u'''',u\rangle-24u'''\langle u',u\rangle-19u'\langle u,u'''\rangle-13u\langle u''',u'\rangle\nonumber\\
	&\hspace{1cm}-31u''\langle u'',u\rangle-11u\langle u'',u''\rangle-25u''\langle u',u'\rangle-45u'\langle u'',u'\rangle+15u''\langle u,u\rangle^2\nonumber\\
	&\hspace{1cm}+55u\langle u,u\rangle\langle u'',u\rangle+60u'\langle u',u\rangle\langle u,u\rangle+25u\langle u',u'\rangle\langle u,u\rangle+55u\langle u',u\rangle^2-20u\langle u,u\rangle^3.\label{i20}
\end{align}
Clearly \eqref{i9} is a special case of \eqref{i18} and \eqref{i19} matches \cite[$(4.13)$]{Kra} once the sign difference between \cite[$(4.1)$]{Kra} and our convention for \eqref{i1}, see Lemma \ref{zlem:1}, has been observed\footnote{We follow \cite{DMS} and use the generalized Airy equation $\frac{\d^{2n}}{\d z^{2n}}w=(-1)^{n+1}zw$. This is not the case in \cite[$(4.1)$]{Kra}.}. The third member \eqref{i20} has not appeared in the literature to the best of our knowledge. It formally reproduces the third member \cite[$(3.6)$]{CM} of the ordinary Painlev\'e-II hierarchy when $w'(x)=\delta_0(x)$ is the delta point mass at $x=0$, modulo the obvious typo correction $42(w'')^2\mapsto 42w(w'')^2$ in \cite[$(3.6)$]{CM}. More generally, when $w'(x)=\delta_0(x)$, the first equality in \eqref{i17} implies, formally, the classical Painlev\'e-II hierarchy for the function $u(t|0)$, as written in \cite[$(4.11)$]{Airault}. To see this, just observe that $\mathcal L_-^{u}\mathcal L_+^{u}$, composed with the evaluation at $x = 0$, reduces to the recursion operator in \cite[$(4.6)$]{Airault}.\bigskip

Given that $F_n^{\alpha}(t)=D_n(t,1)$ in \eqref{i15} with the particular choice $w(x)=w_{\alpha}(x)$ from \eqref{i8}, Theorem \ref{itheo1} is the sought after generalization of \eqref{i8} and \eqref{i9}. We now move to our next result.
\subsection{An integro-differential mKdV hierarchy} A well known fact in integrable systems and special function theory, originally observed by Airault \cite{Airault} and Flaschka, Newell \cite{FN}, states that the ordinary Painlev\'e-II hierarchy, cf. \cite[$(3.4)$]{CM}, is obtainable through a scaling reduction of the mKdV hierarchy, cf. \cite[$(3.3)$]{CM}. When generalized to the current integro-differential setting a natural question concerns the relation of \eqref{i17} to an appropriately defined integro-differential mKdV hierarchy. Our second result settles this question in the following affirmative fashion. First, we require the below two-variable extension of Definition \ref{idef1}.
\begin{definition}\label{idef2} For $\mathbb{R}\times\mathbb{R}_+\times\mathbb{R}\ni(t_1,t_{2n+1},x)\mapsto f(t_1,t_{2n+1}|x)$, we use $D_{t_1}^{\pm1}f$ to denote its fractional $t_1$-derivatives that obey $(D_{t_1}^{-1}D_{t_1}f)(t_1,t_{2n+1}|x)=f(t_1,t_{2n+1}|x)$. Given $v=v(t_1,t_{2n+1}|x)$, we now define
\begin{align*}
	(\mathcal{L}_+^vf)(t_1,t_{2n+1}|x):=&\,\im(D_{t_1}f)(t_1,t_{2n+1}|x)-\im\big\langle (D_{t_1}^{-1}\{v,f\})(t_1,t_{2n+1}|x,\cdot),v\big\rangle-2\im\big(D_{t_1}^{-1}\langle v,f\rangle\big)v(t_1,t_{2n+1}|x),\\
	(\mathcal{L}_-^vf)(t_1,t_{2n+1}|x):=&\,\im(D_{t_1}f)(t_1,t_{2n+1}|x)+\im\big\langle (D_{t_1}^{-1}[v,f])(t_1,t_{2n+1}|x,\cdot),v\big\rangle
\end{align*}
in terms of the rank two operators $[\alpha,\beta]:=\alpha\otimes\beta-\beta\otimes\alpha$ and $\{\alpha,\beta\}:=\alpha\otimes\beta+\beta\otimes\alpha$ with kernels
\begin{align*}
	[\alpha,\beta](t_1,t_{2n+1}|x,y)=&\,\,\alpha(t_1,t_{2n+1}|x)\beta(t_1,t_{2n+1}|y)-\beta(t_1,t_{2n+1}|x)\alpha(t_1,t_{2n+1}|y),\\
	\{\alpha,\beta\}(t_1,t_{2n+1}|x,y)=&\,\,\alpha(t_1,t_{2n+1}|x)\beta(t_1,t_{2n+1}|y)+\beta(t_1,t_{2n+1}|x)\alpha(t_1,t_{2n+1}|y),
\end{align*}
and the two-variable bilinear form, with weight $w(x)$ of the general type \eqref{i13},
\begin{equation*}
	\langle f,g\rangle:=\int_{\mathbb{R}}f(t_1,t_{2n+1}|x)g(t_1,t_{2n+1}|x)w'(x)\,\d x.
\end{equation*}
\end{definition}
In turn, the relation between Painlev\'e-II and mKdV in the integro-differential setting reads as follows.
\begin{theo}\label{itheo2} Suppose $u(t|x)=u(t|x;n),n\in\mathbb{N}$ solves the integro-differential Painlev\'e-II equation
\begin{equation*}
	-(t+x)u(t|x)=\big((\mathcal{L}_+^u\mathcal{L}_-^u)^nu\big)(t|x),\ \ \ (t,x)\in\mathbb{R}^2.
\end{equation*}
Now define, with $\tau\in\mathbb{R}_+$,
\begin{equation}\label{i20bis}
	v(t_1,t_{2n+1}|x)=v(t_1,t_{2n+1}|x;n):=\frac{1}{\tau}u\left(t\Big|\frac{x}{\tau}\right),\ \ \ \ \ t_1:=\tau t\in\mathbb{R},\ \ \ \ t_{2n+1}:=\frac{\tau^{2n+1}}{2n+1}\in\mathbb{R}_+,
\end{equation}
then $v(t_1,t_{2n+1}|x)$ solves the integro-differential mKdV equation
\begin{equation}\label{i21}
	\frac{\partial v}{\partial t_{2n+1}}(t_1,t_{2n+1}|x)=\left((\mathcal{L}_-^v\mathcal{L}_+^v)^n\frac{\partial v}{\partial t_1}\right)(t_1,t_{2n+1}|x),\ \ \ \ (t_1,t_{2n+1},x)\in\mathbb{R}\times\mathbb{R}_+\times\mathbb{R}.
\end{equation}
\end{theo}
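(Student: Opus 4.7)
The natural approach is a direct substitution argument. Introducing $t := t_1/\tau$ and $y := x/\tau$ so that $v(t_1,t_{2n+1}|x) = \tau^{-1}u(t|y)$, one pushes both sides of \eqref{i21} through the scaling. Using $\mathrm d\tau/\mathrm dt_{2n+1} = \tau^{-2n}$ and the chain rule, the $t_1$-derivatives of $v$ factor simply as $\partial_{t_1}^k v = \tau^{-(k+1)}\partial_t^k u$, while the $t_{2n+1}$-derivative acquires a Euler-type piece $\partial_{t_{2n+1}} v = -\tau^{-(2n+2)}(u + t\partial_t u + y\partial_y u)$, the $y\partial_y u$ contribution reflecting the joint rescaling of $t_1$ and $x$ through the single parameter $\tau$. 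One also has the operator identity $D_{t_1}^{-1} = \tau D_t^{-1}$.

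The second step is an intertwining statement for the recursion operators. I would prove that if $f(t_1,t_{2n+1}|x) = \tau^{-k}F(t|y)$, then $(\mathcal{L}_{\pm}^v f)(t_1,t_{2n+1}|x) = \tau^{-(k+1)}(\mathcal{L}_{\pm}^u F)(t|y)$, once the two weighted bilinear forms are matched by the rescaling $w_{\text{mKdV}}(x) := w(x/\tau)$, with $w$ the weight chosen in Definition \ref{idef1}. This is verified by direct computation: the rank-two kernels $\{v,f\}$ and $[v,f]$ factor as $\tau^{-(k+1)}\{u,F\}$ and $\tau^{-(k+1)}[u,F]$ under $x = \tau y$, the Jacobian of the $x$-integration absorbs the rescaling of $w'$, and the extra factor of $\tau$ from $D_{t_1}^{-1}$ completes the count. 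Iterating gives $(\mathcal{L}_{-}^v\mathcal{L}_{+}^v)^n v_{t_1} = \tau^{-(2n+2)}(\mathcal{L}_{-}^u\mathcal{L}_{+}^u)^n u_t$.

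After cancelling $\tau^{-(2n+2)}$, the mKdV equation \eqref{i21} reduces to the single identity
\begin{equation*}
(\mathcal{L}_{-}^u\mathcal{L}_{+}^u)^n u_t \;=\; -(u + t u_t + y u_y).
\end{equation*}
Differentiating the PII equation $-(t+y)u=(\mathcal{L}_{+}^u\mathcal{L}_{-}^u)^n u$ in $t$ yields $-u - (t+y)u_t = \partial_t[(\mathcal{L}_{+}^u\mathcal{L}_{-}^u)^n u]$, so the remaining task is the swap identity
\begin{equation*}
(\mathcal{L}_{-}^u\mathcal{L}_{+}^u)^n u_t - \partial_t\bigl[(\mathcal{L}_{+}^u\mathcal{L}_{-}^u)^n u\bigr] \;=\; y(u_t - u_y),
\end{equation*}
valid on the PII locus. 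I plan to establish this by induction on $n$, leveraging the fact (Remark \ref{irem1}) that $D_t^{-1}$ in $\mathcal{L}_\pm^u$ always acts on a total $t$-derivative, so that each step of the recursion is genuinely $t$-local and reduces to a finite algebraic manipulation in $u$, its $t$-derivatives, and the bilinear forms.

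The main obstacle is precisely this swap identity, since $(\mathcal{L}_+\mathcal{L}_-)^n$ and $(\mathcal{L}_-\mathcal{L}_+)^n$ differ by commutators of increasing combinatorial complexity as $n$ grows. A cleaner alternative, in the spirit of the operator-valued Riemann-Hilbert machinery developed earlier in the paper, is to enlarge the Lax pair for the integro-differential PII hierarchy by the commuting higher-time flows $\partial_{t_{2n+1}}$; the mKdV hierarchy then emerges as the compatibility condition of the enlarged Lax system, and \eqref{i20bis} becomes the transparent statement that a self-similar reduction with respect to the scaling generator $t_1\partial_{t_1} + (2n+1)t_{2n+1}\partial_{t_{2n+1}} + x\partial_x + v$ sends mKdV solutions back to PII solutions.
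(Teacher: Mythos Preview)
Your direct substitution plan has a real gap at the intertwining step. You write that the two bilinear forms are matched ``once'' the mKdV weight is rescaled to $w_{\textnormal{mKdV}}(x)=w(x/\tau)$, but that is not what the theorem asserts: Definition~\ref{idef2} uses the \emph{same} weight $w$ as Definition~\ref{idef1} (this is explicit in Corollary~\ref{icor1}, where the Fermi factor appears unrescaled on both sides). With the same $w$, your power-counting fails: for $v=\tau^{-1}u(t\,|\,x/\tau)$ one gets
\[
\langle v,v\rangle=\int_{\mathbb{R}}\tau^{-2}u^2\!\left(t\Big|\frac{x}{\tau}\right)w'(x)\,\d x=\tau^{-1}\int_{\mathbb{R}}u^2(t\,|\,y)\,w'(\tau y)\,\d y,
\]
which is \emph{not} $\tau^{-2}\langle u,u\rangle$ unless $w'$ has a special scaling symmetry. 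The paper flags exactly this obstruction in the remark following the theorem: ``In the integro-differential setting it is not clear how to extend this procedure because of the presence of the variable $x$ and its rescaling.'' Your swap identity inherits the same problem: it contains $u_y$, but the PII hierarchy treats $y$ as a pure parameter and gives no dynamical relation between $u_t$ and $u_y$; an induction on $n$ has nothing to feed on.

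The paper instead does precisely what you sketch as the ``cleaner alternative'' at the end. It introduces a $\tau$-dependent kernel $K_{t,n}^{\tau}$ with weight $w(\tau z)$, builds the associated operator-valued RHP~\ref{kdmaster}, and derives a Lax pair in the variables $(t_1,t_{2n+1})$ whose compatibility condition yields the mKdV recursion \eqref{kd15}--\eqref{kd17} with the \emph{original} weight $w$. The identification $v=\tau^{-1}u(t\,|\,x/\tau)$ is then established separately, by a contour rescaling argument linking $U^{\tau}(x,x)$ to $U(x/\tau,x/\tau)$. The point is that the RHP approach produces the mKdV equation for $v$ and the PII equation for $u$ independently, each with the correct weight, so the mismatch you encountered never arises. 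If you want to salvage the direct route, you would have to prove the intertwining with the unrescaled weight, which amounts to the same analysis the paper carries out.
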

\begin{remark}
In the non integro-differential setting one proves Theorem \ref{itheo2} by computing the $t_1$ and $t_{2n+1}$ derivatives of $v$ in terms of $u$ using \eqref{i20bis} and then recovers \eqref{i21} through \eqref{i17}.
In the integro-differential setting it is not clear how to extend this procedure because of the presence of the variable $x$ and its rescaling. Instead, it is preferable to use an approach similar to the one used in \cite{CJM} which allows us, en passant, to obtain also a Lax pair for the hierarchy, see Section \ref{zsec6}.
\end{remark}
\begin{remark} Once $w'(x)=\delta_0(x)$, our recursion \eqref{i21} formally aligns with \cite[page $17$]{AM}.
\end{remark}
A special case of the integro-differential PDE \eqref{i21} occurs in the $\alpha$-dependent context of \eqref{i12}. The details are as follows.
\begin{cor}\label{icor1} Let $F_n^{\alpha}(t)=\det(I-K_{t,n}^{\alpha}\upharpoonright_{L^2(\mathbb{R}_+)})$ denote the Fredholm determinant with kernel \eqref{i12}. Then, for every $(t,\alpha,n)\in\mathbb{R}\times\mathbb{R}_+\times\mathbb{N}$,
\begin{equation}\label{i22}
	F_n^{\alpha}(t)=\exp\left[-\int_{\alpha t}^{\infty}(s-\alpha t)\left(\int_{\mathbb{R}}v^2(s,t_{2n+1}|x)\left(\frac{\e^x}{1+\e^x}\right)'\,\d x\right)\d s\right],
\end{equation}
where $v(t_1,t_{2n+1}|x)\equiv v(t_1,t_{2n+1}|x;n)$ with $t_{2n+1}=\frac{\alpha^{2n+1}}{2n+1}$ is the unique solution of the boundary value problem
\begin{equation}\label{i23}
	\frac{\partial v}{\partial t_{2n+1}}(t_1,t_{2n+1}|x)=\left((\mathcal{L}_-^v\mathcal{L}_+^v)^n\frac{\partial v}{\partial t_1}\right)(t_1,t_{2n+1}|x),\ \ \ \ \ \ v(t_1,t_{2n+1}|x)\sim\frac{1}{\alpha}\textnormal{Ai}_n\left(\frac{t_1+x}{\alpha}\right),
\end{equation}
with the last expansion valid as $t_1\rightarrow+\infty$, pointwise in $(t_{2n+1},x)\in\mathbb{R}_+\times\mathbb{R}$.
\end{cor}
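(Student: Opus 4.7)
\bigskip

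\noindent\textbf{Proof proposal for Corollary \ref{icor1}.} The plan is to deduce Corollary \ref{icor1} by combining Theorem \ref{itheo1} (applied to the Fermi weight $w_\alpha$ with $\lambda=1$) with the Painlev\'e-II $\to$ mKdV scaling reduction of Theorem \ref{itheo2}, taking $\tau=\alpha$. Concretely, Theorem \ref{itheo1} gives
\begin{equation*}
F_n^\alpha(t)=D_n(t,1)=\exp\left[-\int_t^\infty (s-t)\left(\int_\mathbb{R} u^2(s|x)\,w_\alpha'(x)\,\d x\right)\d s\right],
\end{equation*}
where $u(t|x)$ solves $-(t+x)u=(\mathcal{L}_+^u\mathcal{L}_-^u)^n u$ (with the bilinear form in Definition \ref{idef1} taken against $w_\alpha'$) subject to $u(t|x)\sim\textnormal{Ai}_n(t+x)$ as $t\to+\infty$.

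\smallskip

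The key observation is that $w_\alpha(x)=w_1(\alpha x)$ with $w_1(x):=\e^x/(1+\e^x)$, so $w_\alpha'(x)=\alpha w_1'(\alpha x)$. Define
\begin{equation*}
v(t_1,t_{2n+1}|x):=\frac{1}{\alpha}u\!\left(\frac{t_1}{\alpha}\Big|\frac{x}{\alpha}\right),\qquad t_1=\alpha t,\qquad t_{2n+1}=\frac{\alpha^{2n+1}}{2n+1},
\end{equation*}
i.e.\ the precise substitution \eqref{i20bis} of Theorem \ref{itheo2} with $\tau=\alpha$. Then Theorem \ref{itheo2} immediately furnishes the integro-differential mKdV equation in \eqref{i23}, provided that the bilinear form in Definition \ref{idef2} is taken against $w_1'$ (i.e.\ the derivative of the unit-temperature Fermi factor $\e^x/(1+\e^x)$). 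The boundary condition follows from $u(t|y/\alpha)\sim\textnormal{Ai}_n(t+y/\alpha)=\textnormal{Ai}_n((t_1+y)/\alpha)$ as $t_1\to+\infty$, giving $v(t_1,t_{2n+1}|x)\sim \frac{1}{\alpha}\textnormal{Ai}_n((t_1+x)/\alpha)$ as required.

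\smallskip

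For \eqref{i22} I would simply perform the change of variables $x=y/\alpha$ in the inner integral and $s=s_1/\alpha$ in the outer one. Using $w_\alpha'(y/\alpha)=\alpha w_1'(y)$ and $u(s|y/\alpha)=\alpha v(\alpha s,t_{2n+1}|y)$, the inner integral becomes $\alpha^2\int_\mathbb{R} v^2(\alpha s,t_{2n+1}|y)\,w_1'(y)\,\d y$; then
\begin{equation*}
\int_t^\infty(s-t)\Big(\cdots\Big)\d s=\int_{\alpha t}^\infty (s_1-\alpha t)\left(\int_\mathbb{R} v^2(s_1,t_{2n+1}|y)\,w_1'(y)\,\d y\right)\d s_1,
\end{equation*}
which is precisely the right-hand side of \eqref{i22} once one recognizes $w_1'(y)=(\e^y/(1+\e^y))'$. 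Uniqueness of $v$ for the boundary value problem \eqref{i23} is inherited from the uniqueness statement for $u$ in Theorem \ref{itheo1}.

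\smallskip

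\emph{Where I expect friction.} The substantive content has already been done in Theorems \ref{itheo1} and \ref{itheo2}; what remains is bookkeeping. The only genuinely delicate point is to verify that the rescaling really commutes with the operators $\mathcal L_\pm^u$ as defined against the $\alpha$-dependent weight $w_\alpha'$ and maps them into $\mathcal L_\pm^v$ against $w_1'$ (Definition \ref{idef2}). This is a matter of tracking each $D_{t_1}^{\pm 1}$, each bilinear form, and each factor of $\alpha$ produced by $\d x\mapsto \alpha^{-1}\d y$ and $w_\alpha'\mapsto \alpha w_1'$; the powers should cancel exactly since $\mathcal{L}_\pm^u$ are homogeneous in the scaling $(t,x,u)\mapsto(t_1/\alpha,x/\alpha,\alpha v)$, which is built into the normalization of $\tau$ in Theorem \ref{itheo2}. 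Apart from this careful accounting, the corollary is a direct consequence of the two preceding theorems.
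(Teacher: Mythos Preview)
Your change-of-variables computation for \eqref{i22} and the check of the boundary asymptotic are correct, but there is a genuine gap in the step where you invoke Theorem \ref{itheo2}. That theorem, as stated and proved, \emph{preserves} the weight: if $u$ solves Painlev\'e-II with the bilinear form of Definition \ref{idef1} taken against $w'$, then $v=\frac{1}{\tau}u(t|x/\tau)$ solves mKdV with the bilinear form of Definition \ref{idef2} taken against the \emph{same} $w'$. Applying it with $w=w_\alpha$ and $\tau=\alpha$ therefore yields mKdV with weight $w_\alpha'$, not $w_1'$ as \eqref{i23} requires. The $x\mapsto x/\alpha$ rescaling in \eqref{i20bis} is part of the scaling that produces mKdV; it does not simultaneously convert the weight.

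There is a more structural obstruction. In Theorem \ref{itheo2} the weight $w$ is fixed and $\tau$ varies, generating the time $t_{2n+1}=\tau^{2n+1}/(2n+1)$. In your setup you have tied the weight to $\tau$ by taking $w=w_\alpha$ with $\alpha=\tau$; then $u_\alpha$ itself depends on $t_{2n+1}$ through the weight, so $\partial_{t_{2n+1}}v$ acquires an extra term $\partial_\alpha u_\alpha$ that the Painlev\'e-II equation does not control. This is exactly the difficulty flagged in the remark following Theorem \ref{itheo2}: the na\"ive scaling argument does not close in the integro-differential setting when the $x$-variable (and here also the weight) is rescaled. The paper sidesteps the issue by fixing $w=w_1$ throughout Section \ref{zsec6}, introducing $\alpha$ only as the separate parameter $\tau$ in the kernel $w(\tau z)$, and deriving the determinant identity \eqref{i22} directly from the $\tau$-dependent RHP via Lemma \ref{kd:lem3} and \eqref{kd18}, rather than by citing Theorem \ref{itheo1}. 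To repair your route you would need to apply Theorems \ref{itheo1} and \ref{itheo2} with the $\alpha$-independent weight $w_1$ and then separately connect $F_n^\alpha(t)$ to the resulting $u_1$ (equivalently $v$); that connection is precisely what the $\tau$-RHP machinery supplies.
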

\begin{rem} The operators $\mathcal{L}_{\pm}^v$ are $t$-localized, see Section \ref{zsec6} below. Furthermore, compare Remark \ref{irem2}, we do not indicate the non-trivial dependence of $v(t_1,t_{2n+1}|x)$ on the weight $w$.
\end{rem}
\begin{rem} It might seems unnatural, from a mathematical viewpoint, to deduce the integro-differential mKdV hierarchy from its self-similar reduction \eqref{i17}. We proceed in this way to underline the fact that, in applications to non-interacting fermionic systems, the inverse of the temperature ($\alpha$ in \eqref{i8} or, more precisely, $\frac{\alpha^{2n + 1}}{2n + 1}$) gives rise to an integrable dynamics; namely it plays the role of the time variable in \eqref{i23}.
\end{rem}
\begin{rem}
When $n = 1$, it was proven in \cite{QR} (for the case of $w_\alpha$ equal to the Fermi factor) and in \cite{DoussalKP} (for more general weights), that \eqref{i15} also relates to the classical KdV equation. The formalism used in \cite{CCR} provides a framework in which all three equations, for the case $n = 1$, (integro-differential mKdV, integro-differential Painlevé-II and classical KdV) can be obtained.
\end{rem}
In conclusion of this short subsection, we write out the first two members of the integro-differential mKdV hierarchy \eqref{i21}. First
\begin{equation}\label{i24}
	n=1:\ \ \ \frac{\partial v}{\partial t_3}=-\frac{\partial^3 v}{\partial t_1^3}+3\frac{\partial v}{\partial t_1}\langle v,v\rangle+3v\left\langle\frac{\partial v}{\partial t_1},v\right\rangle,
\end{equation}
and second
\begin{align}
	n=2:\ \ \ -\frac{\partial v}{\partial t_5}=&-\frac{\partial^5v}{\partial t_1^5}+5\frac{\partial^3v}{\partial t_1^3}\langle v,v\rangle+5v\left\langle\frac{\partial^3v}{\partial t_1^3},v\right\rangle+15\frac{\partial^2v}{\partial t_1^2}\left\langle\frac{\partial v}{\partial t_1},v\right\rangle+15\frac{\partial v}{\partial t_1}\left\langle\frac{\partial^2v}{\partial t_1^2},v\right\rangle\nonumber\\
	&\,\,\,+10v\left\langle\frac{\partial^2v}{\partial t_1^2},\frac{\partial v}{\partial t_1}\right\rangle+10\frac{\partial v}{\partial t_1}\left\langle\frac{\partial v}{\partial t_1},\frac{\partial v}{\partial t_1}\right\rangle-10\frac{\partial v}{\partial t_1}\langle v,v\rangle^2-20v\langle v,v\rangle\left\langle\frac{\partial v}{\partial t_1},v\right\rangle.\label{i25}
\end{align}
The first equation \eqref{i24} exactly reproduces the first member of the ordinary mKdV hierarchy when $w'(x)=\delta_0(x)$, cf. \cite[page $55$]{CJM}, and the second equation \eqref{i25} the second member up to the sign flip $t_5\mapsto -t_5$ (because of our sign convention in Lemma \ref{zlem:1}) and the obvious typo correction $-40 v_xv_{xx} \mapsto -40 v v_x v_{xx}$. This completes the current subsection.
\subsection{Other recent occurrences of \eqref{i1} and \eqref{i15}} Throughout, our motivation for the analysis of the higher order finite temperature Airy kernel determinant \eqref{i15} stems from its occurrence in the theory of non-interacting quantum many body systems, see Subsections \ref{fermi1} and \ref{fermi2}. There are, however, a few other recent studies in mathematics and mathematical physics that involve determinants of the type \eqref{i15} with general $n\in\mathbb{N}$. 
These works concern the step function weight choice $w(x)=\chi_{\mathbb{R}_+}(x)$ throughout, and we now provide a short chronological survey.\bigskip

Firstly, in \cite{CCG} it is shown that $D_n(t,\lambda)$ with a slightly different higher order Airy function than our \eqref{i1} is related to the ordinary Painlev\'e-II hierarchy. The difference stems from the normalization of the higher order Airy function in \cite[$(1.12)$]{CCG}. We emphasize that the occurrence of the ordinary Painlev\'e-II hierarchy was first established for selected values of $n$ in the arXiv version \cite{DMSa} of \cite{DMS}. Indeed, the system \cite[$(109),(110)$]{DMSa} can be transformed to a closed form differential equation with the help of conserved quantities and this transformation was made explicit for $n=1,2$ in \cite[page $16$, $(116)$]{DMSa}. The case of general $n$ is resolved in \cite{CCG}. Furthermore, \cite{CCG} derives leading order tail expansions for $D_n(t,1)$ as $t\rightarrow-\infty$, en route confirming earlier tail decay predictions in \cite{DMS}. We also mention that the higher order Airy functions in \cite{CCG} are of the form
\begin{equation}\label{i26}
	\frac{1}{\pi}\int_0^{\infty}\cos\left(\frac{y^{2n+1}}{2n+1}+\sum_{j=1}^{n-1}\omega_jy^{2j+1}+xy\right)\d y,\ \ \ \ \ x\in\mathbb{R},\ \ n\in\mathbb{N},
\end{equation}
and thus depend on additional parameters $\{\omega_j\}_{j=1}^{n-1}\subset\mathbb{R}$. The parameter dependent function \eqref{i26} has not yet appeared in the context of non-interacting fermions to the best of our knowledge. Still, it is clear that the methods developed in this paper for \eqref{i1},\eqref{i13},\eqref{i15} can be extended to the $\omega_j$ dependent setup \eqref{i26} and, as in \cite{CCG}, this extension does not present particular conceptual difficulties. Secondly, the paper \cite{Tar} investigates a matrix-valued version of the higher order Airy function \eqref{i1} and associated Fredholm determinant. In this case the Fredholm determinant connects to a fully non commutative version of the Painlev\'e-II hierarchy which is realized as the compatibility condition of a suitable matrix-valued Lax pair. Thirdly, $D_n(t,1)$ has appeared in recent studies \cite{BBW,KZ} of fine tuned Schur measures for which the typical edge fluctuation exponent $\frac{1}{3}$ gets replaced by $\frac{1}{2n+1},n\in\mathbb{N}$. The work on random partitions\footnote{The ordinary finite temperature Airy kernel determinant \eqref{i6}, i.e. a special case of \eqref{i15} with $n=1$, appears also in the theory of random partitions, precisely in models of cylindrical partitions, see \cite{BeB}.} provides a natural bridge between the zero-temperature fermionic models and the non-generic Hermitian random matrix models mentioned in Subsection \ref{fermi2}, see \cite[page $3$]{BBW}. Still, at the moment, it is not clear if the general finite-temperature determinant \eqref{i13},\eqref{i14},\eqref{i15} plays a role in the theory of random matrices.\bigskip

\subsection{Methodology and outline of paper}\label{method} The remaining sections of the paper are organized as follows. In Section \ref{zsec1} we collect a series of basic results for the higher order Airy function \eqref{i1} and the determinant \eqref{i15}: these are analytic and asymptotic properties of \eqref{i1}, the fact that \eqref{i15} is well-defined in the indicated parameter range within the class \eqref{i13}, the fact that $I-\lambda K_{t,n}$ is invertible on $L^2(\mathbb{R}_+)$ for certain values of $(t,\lambda)$ and finally the fact that $K_{t,n}$ is indeed the correlation kernel of a determinantal point process. Our work in Section \ref{zsec1} is valid for an arbitrary bounded weight function $w:\mathbb{R}\rightarrow\mathbb{R}_+$ such that $\d\sigma(x):=w'(x)\d x$ is a positive Borel probability measure on $\mathbb{R}$ with finite first moment. The need for the exponential decay in \eqref{i13} becomes clear in Section \ref{zsec2}. Indeed, at present, there are three ways one can obtain formul\ae\,of the type \eqref{i16},\eqref{i17} for a generic finite temperature Fredholm determinant. One is algebraic and was used in \cite{ACQ} in the derivation of \eqref{i8},\eqref{i9}. This approach is an extension of the original method of Tracy and Widom \cite{TW0}, recently generalized in \cite[Section $3$]{Kra} to a larger class of weighted Hankel composition operators than our \eqref{i14}. This algebraic method requires minimal decay and regularity from the weight function $w$, however its complexity relies heavily on the degree of the differential equation underlying \eqref{i1}, in our case $2n$ for given $n\in\mathbb{N}$, see Lemma \ref{zlem:1}. This makes the derivation of the full hierarchy \eqref{i17} by the algebraic method cumbersome. The second approach was first used in mathematics literature in \cite[Section $9$]{B}, though parts of it were already present in \cite[XV.$6$]{KBI}, albeit in a non-rigorous fashion. In this analytic approach one first rewrites a kernel of the form \eqref{i14} as
\begin{equation}\label{i27}
	(x-y)K_{t,n}(x,y)=\int_{\mathbb{R}}\left(\sum_{i=1}^mf_{i,t}(x|z)g_{i,t}(y|z)\right)\d\sigma(z)
\end{equation}
for some suitable functions $f_{i,t},g_{i,t}$ (this is possible for \eqref{i14}, see equation $(40)$ in the supplementary material of \cite{DMS}), and afterwards associates an operator-valued Riemann-Hilbert problem (RHP) with the resolvent of $I-K_{t,n}$, see the workings in \cite[Subsection $9.2$]{B} for $n=1$. Unfortunately, the size of the relevant RHP depends on $m$ in \eqref{i27} and thus on $n\in\mathbb{N}$. The third approach consists in associating the Fredholm determinant \eqref{i15} to a matrix-valued RHP, as done in \cite{CCR} for $n = 1$ (see also the more recent \cite{Bel} for the case of the Bessel kernel) and then in recovering the integro-differential equation as equation satisfied by the eigenfunction of the Lax pair associated to the RHP. In this way, one can deduce both \eqref{i18} and \eqref{i21}, but unfortunately also in this case the size of the RHP depends on $n$, and the extension of this procedure to generic $n$ seems non-trivial. For this reason we return to the second approach but deviate from the initial steps carried out in \cite[Subsection $9.2$]{B} and first employ Fourier analytic transformations to the kernel \eqref{i14}, see Section \ref{zsec2}. The transformations change the operator, but leave the determinant $D_n(t,\lambda)$ invariant and avoid in turn large sized operator-valued RHPs. This first step of our approach is reminiscent of the conjugation techniques used in \cite[Section $3$]{BC} for the ordinary Airy kernel and in \cite[Section $2$]{CCG} for the ordinary higher order Airy kernels, but it requires a payoff in the form of higher regularity and decay assumptions for $w$. Our second step is carried out in Section \ref{zsec3} and constitutes in the setup of the relevant $2\times 2$ operator-valued RHP, the proof of its unique solvability for certain values of $(t,\lambda,n)$ and the derivation of symmetry and small-norm corollaries. Once done we then employ the approach of \cite[Subsection $9.3$]{B} and derive an operator-valued Lax pair for the solution of the RHP, see Section \ref{zsec4}. This Lax pair naturally encodes the integro-differential hierarchy \eqref{i9} once we analyze the underlying operator kernels in Section \ref{zsec5} and exploit various symmetries of the Lax pair. In turn, the proof of Theorem \ref{itheo1} will be completed in Section \ref{zsec5} and Section \ref{zsec6} is devoted to Theorem \ref{itheo2}. Again we use operator-valued Riemann-Hilbert techniques for this part and the integro-differential PDE hierarchy \eqref{i21} follows once more naturally from an operator-valued Lax system. The last part of the article summarizes certain auxiliary results in Appendix \ref{appA} as well as all relevant operator-valued Riemann-Hilbert terminology from \cite[Subsection $9.1$]{B} in Appendix \ref{appC}.
\begin{rem} The structure of the operator-valued RHPs \ref{zmaster} and \ref{kdmaster} turns out to be canonical within a suitable class of weighted integral Hankel composition operators. In particular, the jump condition \eqref{z34} and normalization \eqref{z35} are to a large extent independent of the contour integral formul\ae\,and Fourier analytic techniques used in the present paper. See the forthcoming work \cite{Bf} for details.
\end{rem}

\section{Basic properties of the kernel \eqref{i14} and the determinant \eqref{i15}}\label{zsec1}

Recall the conditionally convergent integral \eqref{i1} used in the definition of $y=\textnormal{Ai}_n(x)$. When $x$ lies off the real axis the same integral diverges and we therefore first transform \eqref{i1} into a contour integral.
\begin{lem}\label{zlem:1} Let $\Gamma$ denote any smooth contour oriented from $\infty\e^{\im a}$ to $\infty\e^{\im b}$ with $a\in(\frac{2n\pi}{2n+1},\pi)$ and $b\in(0,\frac{\pi}{2n+1})$, compare Figure \ref{figz:1} for one possible realization. Then
\begin{equation}\label{z2}
	\textnormal{Ai}_n(z)=\frac{1}{2\pi}\int_{\Gamma}\exp\left(\im\left[\frac{\lambda^{2n+1}}{2n+1}+z\lambda\right]\right)\d\lambda
\end{equation}
constitutes the analytic continuation of \eqref{i1} to the whole $z$-plane as an entire function. Moreover, \eqref{z2} solves the differential equation $\frac{\d^{2n}}{\d z^{2n}}w=(-1)^{n+1}zw$ with real $x\rightarrow+\infty$ asymptotic behavior
\begin{equation}\label{z3}
	\textnormal{Ai}_n(x)\sim\frac{1}{\sqrt{n\pi}}\,x^{-\frac{2n-1}{4n}}\e^{-\frac{2n}{2n+1}x^{\frac{2n+1}{2n}}}\cos\phi_n(x),\ \ \ \textnormal{Ai}_n(-x)\sim\frac{1}{\sqrt{n\pi}}\,x^{-\frac{2n-1}{4n}}\cos\left(\frac{2n}{2n+1}x^{\frac{2n+1}{2n}}-\frac{\pi}{4}\right).
\end{equation}
Both estimates in \eqref{z3} are valid for every $n\in\mathbb{N}$ and we abbreviate
\begin{equation*}
	\phi_n(x):=\frac{2n}{2n+1}x^{\frac{2n+1}{2n}}\cos\left(\frac{\pi}{2n}\right)+\frac{1}{2}\left(-\frac{\pi}{2}+\frac{\pi}{2n}\right).
\end{equation*}
\begin{figure}[tbh]
\begin{tikzpicture}[xscale=0.7,yscale=0.7]
\draw [->] (-4,0) -- (4,0) node[below]{{\small $\Re z$}};
\draw [->] (0,-1) -- (0,2) node[left]{{\small $\Im z$}};

\draw [very thin, dashed, color=darkgray,-] (0,0) -- (3.153391037,1.518593088);
\draw [very thin, dashed, color=darkgray,-] (0,0) -- (-3.153391037,1.518593088);
\node [color=darkgray] at (3.9,1.6) {$\frac{\pi}{2n+1}$};
\node [color=darkgray] at (-3.9,1.6) {$\frac{2n\pi}{2n+1}$};

\draw [thick, color=red, decoration={markings, mark=at position 0.25 with {\arrow{<}}}, decoration={markings, mark=at position 0.75 with {\arrow{<}}}, postaction={decorate}] plot[domain=-2.3:2.3] ({-sinh(\x)},{0.2169*cosh(\x)}) node[above]{$\Gamma$};

\draw [fill=red, dashed, opacity=0.5] (0,0) -- (3.153391037,1.518593088) arc (25.71428571:0:3.5cm) -- (0,0);
\draw [fill=red, dashed, opacity=0.5] (0,0) -- (-3.153391037,1.518593088) arc (154.2857143:180:3.5cm) -- (0,0);

\end{tikzpicture}
\caption{An admissible choice for the integration path $\Gamma$ in \eqref{z2}.}
\label{figz:1}
\end{figure}
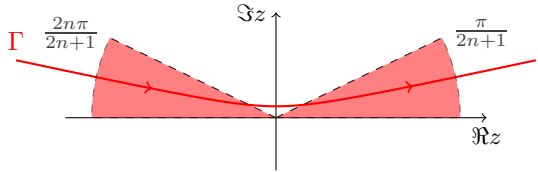
\end{lem}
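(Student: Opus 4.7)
The lemma has three ingredients that I would address in turn: (i) the contour integral \eqref{z2} defines an entire function of $z$ extending \eqref{i1}, (ii) it satisfies the $2n$-th order linear ODE, and (iii) the asymptotic formulae \eqref{z3} hold.

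For (i), I would parametrize $\lambda=r\e^{\im\theta}$ and note that the integrand of \eqref{z2} carries the decay factor $\exp(-r^{2n+1}\sin((2n+1)\theta)/(2n+1))$ times a piece bounded uniformly for $z$ in compact sets; hence the integral converges absolutely along any $\Gamma$ whose tails lie in the open sectors $\theta\in(0,\pi/(2n+1))$ or $\theta\in(2n\pi/(2n+1),\pi)$, is independent of the choice of $\Gamma$ in this class, and by Morera's theorem and Fubini defines an entire function of $z$. To match \eqref{i1} for real $x$ I would rewrite $\frac{1}{\pi}\int_0^\infty\cos(\cdot)\,\d y$ as the regularized limit $\lim_{\epsilon\downarrow 0}\frac{1}{2\pi}\int_\mathbb{R}\exp(\im(y^{2n+1}/(2n+1)+xy)-\epsilon y^2)\,\d y$ and deform the real line to an admissible $\Gamma$ via Cauchy's theorem, the arc contributions at $|\lambda|=R\to\infty$ vanishing by the same sector decay before sending $\epsilon\downarrow 0$.

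For (ii), I would differentiate \eqref{z2} under the integral sign $2n$ times, bringing down $(\im\lambda)^{2n}=(-1)^n\lambda^{2n}$, and then exploit the identity
\begin{equation*}
\lambda^{2n}\exp\!\left(\im\!\left[\frac{\lambda^{2n+1}}{2n+1}+z\lambda\right]\right)=-\im\frac{\d}{\d\lambda}\exp\!\left(\im\!\left[\frac{\lambda^{2n+1}}{2n+1}+z\lambda\right]\right)-z\exp\!\left(\im\!\left[\frac{\lambda^{2n+1}}{2n+1}+z\lambda\right]\right)
\end{equation*}
together with integration by parts along $\Gamma$; the boundary terms at the endpoints vanish by the sector decay used in (i), and one obtains $\int_\Gamma\lambda^{2n}\e^{\im(\cdots)}\d\lambda=-z\int_\Gamma\e^{\im(\cdots)}\d\lambda$, whence $\frac{\d^{2n}}{\d z^{2n}}\textnormal{Ai}_n(z)=(-1)^{n+1}z\,\textnormal{Ai}_n(z)$.

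For (iii), I would apply steepest descent to $f(\lambda)=\lambda^{2n+1}/(2n+1)+z\lambda$, whose saddles satisfy $\lambda^{2n}=-z$ and lie on the circle $|\lambda|=|z|^{1/(2n)}$. When $z=x\to+\infty$ the two dominant upper saddles sit at $\theta_0=\pi/(2n)$ and $\theta_{n-1}=\pi-\pi/(2n)$ (coinciding for $n=1$) and realize the smallest exponential decay; after deforming $\Gamma$ onto the union of the two steepest descent paths through them and using $|f''(\lambda_*)|=2n\,x^{(2n-1)/(2n)}$ in the Gaussian quadrature, the two complex-conjugate contributions combine into the cosine with phase $\phi_n(x)$, the algebraic factor $x^{-(2n-1)/(4n)}$ and the $\pi/(4n)-\pi/4$ constants both arising from $(2\pi/|f''(\lambda_*)|)^{1/2}$ and the direction of steepest descent. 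When $z=-x\to-\infty$ the saddles lie at $k\pi/n$ and the two real saddles $\pm x^{1/(2n)}$ dominate, producing the purely oscillatory second estimate in \eqref{z3}. The delicate point is exactly the bookkeeping in (iii): one must verify that $\Gamma$ can genuinely be homotoped within sectors of integrand decay onto the correct steepest descent configuration, that no Stokes transition contaminates the leading order, and that the saddles' Gaussian curvatures and phases assemble to precisely the constants $1/\sqrt{n\pi}$, $\pi/(4n)$ and $-\pi/4$ recorded in \eqref{z3}.
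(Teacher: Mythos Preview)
Your proposal is correct and follows essentially the same line as the paper; the differences are minor tactical choices rather than a different strategy. For (i), the paper avoids your Gaussian regularizer: it starts from the conditionally convergent real integral $\frac{1}{2\pi}\int_{-\infty}^{\infty}\exp(\im[\cdots])\,\d y$ for $x>0$ and shows directly, via a Jordan-type estimate on the circular arc from $r$ to $r\e^{\im\pi/(2n+1)}$, that the arc contribution is $O(r^{-2n})$, then invokes Cauchy's theorem and extends by uniform convergence on compacts. For (ii) the arguments coincide. For (iii) with $z=x\to+\infty$ the paper does exactly your steepest descent through the upper-half-plane saddles $\lambda_k=\e^{\im(\pi+2\pi k)/(2n)}$, $k=0,\dots,n-1$, noting that $k=0$ and $k=n-1$ dominate; for $z=-x$ the paper instead works from the real half-line representation \eqref{i1}, rescales to $\int_0^\infty \e^{\im t p(s)}\,\d s$ with $p(s)=s^{2n+1}/(2n+1)-s$, and applies the stationary phase method at the single point $s=1$, which is equivalent to (and slightly cleaner than) your steepest-descent picture through the pair $\pm x^{1/(2n)}$.
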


\begin{proof} By \eqref{i1} for every $x\in\mathbb{R}$,
\begin{equation}\label{z4}
	\textnormal{Ai}_n(x)=\frac{1}{2\pi}\int_{-\infty}^{\infty}\exp\left(\im\left[\frac{y^{2n+1}}{2n+1}+xy\right]\right)\d y.
\end{equation}
Hence, assuming $x>0$ temporarily, we consider
\begin{equation*}
	I(r):=\int_{\Gamma_r}\exp\left(\im\left[\frac{\lambda^{2n+1}}{2n+1}+x\lambda\right]\right)\d \lambda,\ \ \ \ r>0,
\end{equation*}
 where the integration path $\Gamma_r$ connects $r$ to $r\e^{\im\frac{\pi}{2n+1}}$ along the shorter arc of $|\lambda|=r$. Since
\begin{align*}
	\big|I(r)\big|\leq\frac{r}{2n+1}&\int_0^{\pi}\exp\left(-\frac{r^{2n+1}}{2n+1}\sin\theta-xr\sin\left(\frac{\theta}{2n+1}\right)\right)\d\theta\leq\frac{r}{2n+1}\int_0^{\pi}\exp\left(-\frac{r^{2n+1}}{2n+1}\sin\theta\right)\d\theta\\
	&=\frac{2r}{2n+1}\int_0^{\frac{\pi}{2}}\exp\left(-\frac{r^{2n+1}}{2n+1}\sin\theta\right)\d\theta\leq\frac{2r}{2n+1}\int_0^{\frac{\pi}{2}}\exp\left(-\frac{r^{2n+1}}{2n+1}\frac{2\theta}{\pi}\right)\d\theta<\frac{\pi}{r^{2n}},
\end{align*}
using Jordan's inequality \cite[$4.18.1$]{NIST} in the fourth step, we see that $I(r)$ vanishes as $r\rightarrow+\infty$. Given that the same is true with $\Gamma_r$ replaced by its conjugate path reflected through the origin, Cauchy's theorem yields \eqref{z2} for any $z>0$. However, by choice of $\Gamma$ in \eqref{z2}, the factor $\exp(\im(\frac{\lambda^{2n+1}}{2n+1}))$ dominates $\exp(\im z\lambda)$ as $\Gamma\ni\lambda\rightarrow\infty$, so \eqref{z2} converges absolutely and uniformly in $z\in\mathbb{C}$ chosen from compact subsets. In turn, \eqref{z2} is the analytic continuation of \eqref{i1} to the whole plane and \eqref{z2} an entire function. Next, differentiating under the integral sign in \eqref{z2}, we find
\begin{align*}
	\frac{\d^{2n}}{\d z^{2n}}\textnormal{Ai}_n(z)-(-1)^{n+1}z\textnormal{Ai}_n(z)=&\,\,\frac{(-1)^n}{2\pi}\int_{\Gamma}\left(\lambda^{2n}+z\right)\exp\left(\im\left[\frac{\lambda^{2n+1}}{2n+1}+z\lambda\right]\right)\d\lambda\\
	=&\,\,\frac{(-1)^n}{2\pi\im}\exp\left(\im\left[\frac{\lambda^{2n+1}}{2n+1}+z\lambda\right]\right)\bigg|_{\partial\Gamma}=0,\ \ \ z\in\mathbb{C},
\end{align*}
by the aforementioned asymptotic properties of $\exp(\im(\frac{\lambda^{2n+1}}{2n+1}))$ on $\Gamma$. It remains to establish \eqref{z3} and we begin with
\begin{equation}\label{z5}
	\textnormal{Ai}_n(-x)\stackrel{\eqref{i1}}{=}\frac{1}{\pi}\,\Re\left\{\int_0^{\infty}\exp\left(\im\left[\frac{y^{2n+1}}{2n+1}-xy\right]\right)\d y\right\}=\frac{1}{\pi}x^{\frac{1}{2n}}\,\Re\left\{\int_0^{\infty}\e^{\im tp(s)}\,\d s\right\},\ \ \ x>0
\end{equation}
where $t:=x^{\frac{2n+1}{2n}}>0$ and $p(s):=\frac{s^{2n+1}}{2n+1}-s$. Using the stationary phase method \cite[$\S 15.3$]{S2} we find, 
\begin{equation*}
	\int_0^{\infty}\e^{\im tp(s)}\,\d s=\int_0^1\e^{\im tp(1-v)}\,\d v+\int_0^{\infty}\e^{\im tp(1+v)}\,\d v=\frac{1}{2}\sqrt{\frac{\pi}{nt}}\,\e^{\im\frac{\pi}{4}}\e^{-\im t\frac{2n}{2n+1}}+\frac{1}{2}\sqrt{\frac{\pi}{nt}}\,\e^{\im\frac{\pi}{4}}\e^{-\im t\frac{2n}{2n+1}}+o\big(t^{-\frac{1}{2}}\big),
\end{equation*}
as $t\rightarrow+\infty$. Taking real parts and substituting into \eqref{z5} yields the asymptotic result for $\textnormal{Ai}_n(-x)$. Finally, 
\begin{equation}\label{z6}
	\textnormal{Ai}_n(x)\stackrel{\eqref{z2}}{=}\frac{1}{2\pi}x^{\frac{1}{2n}}\int_{\Gamma}\e^{tq(\lambda)}\,\d\lambda,\ \ \ t=x^{\frac{2n+1}{2n}}>0,
\end{equation}
where $q(\lambda):=\im(\frac{\lambda^{2n+1}}{2n+1}+\lambda)$ has stationary points $\lambda_k=\e^{\frac{\im}{2n}(\pi+2\pi k)},k=0,1,\ldots,2n-1$ with
\begin{equation*}
	q(\lambda_k+z)=\frac{2n\im\lambda_k}{2n+1}+n\,\overline{\im\lambda_k}\,z^2+\mathcal{O}\big(z^3\big),\ \ z\rightarrow 0.
\end{equation*}
We thus deform $\Gamma\mapsto\Sigma$ in \eqref{z6} to pass through the stationary points $\{\lambda_k\}_{k=0}^{n-1}$ in the upper half-plane and to be tangent to the curves of constant phase $\Im q(\lambda)=\Im q(\lambda_k)$ near each point $\{\lambda_k\}_{k=0}^{n-1}$ such that $\Re q(\lambda_k+u+\im v)$ has a local maximum at $u=v=0$ along the deformed path. Observe that this is possible in a way that $\Re q(\lambda)<0$ along the deformed contour $\Sigma$ and the method of steepest descent \cite[$\S 15.4$]{S2} yields in turn
\begin{equation*}
	\textnormal{Ai}_n(x)=\frac{1}{2\pi}x^{\frac{1}{2n}}\left[\sum_{k=0}^{n-1}\e^{tq(\lambda_k)}\left(\e^{\im w_k}\sqrt{\frac{2\pi}{|q''(\lambda_k)|t}}+\mathcal{O}\left(t^{-\frac{3}{2}}\right)\right)\right],\ \ \ \ t=x^{\frac{2n+1}{2n}}\rightarrow+\infty,
\end{equation*}
where $w_k=\frac{1}{2}(-\frac{\pi}{2}+\frac{\pi+2\pi k}{2n})$ is the angle of the tangent direction at $\lambda_k$ as we traverse $\Sigma$ from $\infty\e^{\im a}$ to $\infty\e^{\im b}$. Simplifying the last expression ($k=0$ and $k=n-1$ are the dominating terms) we obtain the outstanding asymptotic result for $\textnormal{Ai}_n(x)$ and have therefore concluded our proof.
\end{proof}
Moving ahead, we now derive the central estimate used in the proof that \eqref{i15} is well-defined.
%
%
\begin{lem}\label{zlem:2} For every $t\in\mathbb{R}$ and $n\in\mathbb{N}$,
\begin{equation}\label{z8}
	\int_{\mathbb{R}}\int_{\mathbb{R}_+}\big|\textnormal{Ai}_n(x+y+t)\big|^2w(x)\,\d y\,\d x<\infty.
\end{equation}
\end{lem}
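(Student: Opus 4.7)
The plan is to perform the inner integration first. Substituting $u=x+y+t$ in the $y$-integral gives
\begin{equation*}
	\int_{\mathbb{R}_+}\bigl|\textnormal{Ai}_n(x+y+t)\bigr|^2\,\d y=\int_{x+t}^{\infty}\bigl|\textnormal{Ai}_n(u)\bigr|^2\,\d u=:G(x+t),
\end{equation*}
so the double integral in \eqref{z8} equals $\int_{\mathbb{R}}G(x+t)w(x)\,\d x$. It then suffices to establish enough control on the tails of $G(s)$ and to combine this with the tail behavior of $w(x)$.

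Next, I would use the large-argument asymptotics from Lemma \ref{zlem:1} to derive the following bounds on $G$. Since $\textnormal{Ai}_n$ is entire, $G$ is continuous on $\mathbb{R}$ and finite on every compact set. As $s\to+\infty$, the first estimate in \eqref{z3} yields $|\textnormal{Ai}_n(u)|^2=O\bigl(u^{-\frac{2n-1}{2n}}\e^{-\frac{4n}{2n+1}u^{\frac{2n+1}{2n}}}\bigr)$, so $G(s)$ decays super-exponentially. As $s\to-\infty$, the second estimate in \eqref{z3} gives $|\textnormal{Ai}_n(-x)|^2=O(x^{-\frac{2n-1}{2n}})$ for $x\to+\infty$, hence integrating from $s$ to $0$ yields the polynomial bound
\begin{equation*}
	G(s)=G(0)+\int_s^0\bigl|\textnormal{Ai}_n(u)\bigr|^2\,\d u=O\bigl(|s|^{\frac{1}{2n}}\bigr),\qquad s\to-\infty.
\end{equation*}

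Third, I would combine these with the hypotheses on $w$ in \eqref{i13}. Since $w$ is increasing and tends to $1$, we have $0<w(x)\leq 1$ on all of $\mathbb{R}$, so on $[-x_0,+\infty)$ the integrand $G(x+t)w(x)$ is locally bounded and super-exponentially small as $x\to+\infty$, producing a finite contribution. For the left tail, I would use $w(-\infty)=0$ together with $w'(s)\leq\e^{-\omega|s|}$ for $s\leq -x_0$ to obtain
\begin{equation*}
	w(x)=\int_{-\infty}^xw'(s)\,\d s\leq\int_{-\infty}^x\e^{\omega s}\,\d s=\frac{1}{\omega}\e^{-\omega|x|},\qquad x\leq -x_0,
\end{equation*}
which beats the at-most polynomial growth $G(x+t)=O(|x|^{\frac{1}{2n}})$. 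Adding the three contributions concludes the proof.

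The statement is essentially a routine integrability check and there is no serious obstacle; the only mild subtlety worth emphasizing is that $|\textnormal{Ai}_n|^2$ is \emph{not} integrable on $\mathbb{R}$ (its tail at $-\infty$ is only $O(|u|^{-\frac{2n-1}{2n}})$), so the exponential decay of $w$ at $-\infty$ — an immediate consequence of the exponential decay assumption on $w'$ in \eqref{i13} — is the ingredient that actually makes \eqref{z8} finite.
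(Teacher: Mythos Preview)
Your proposal is correct and follows essentially the same approach as the paper: both arguments reduce the inner integral to a quantity that decays as $x\to+\infty$ and grows like $|x|^{1/2n}$ as $x\to-\infty$, then use $w\leq 1$ on the right and the exponential bound $w(x)\leq c\,\e^{\omega x}$ (for $x\leq -x_0$) on the left to conclude. Your introduction of $G(s)=\int_s^{\infty}|\textnormal{Ai}_n(u)|^2\,\d u$ is a clean repackaging of the paper's splitting, and your closing remark that the exponential decay of $w$ at $-\infty$ is the essential ingredient is exactly the point the paper later highlights in Remark~\ref{momrem}.
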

\begin{proof} By \eqref{i13} we have $w(x)\leq\hat{c}\,\e^{\omega x}$ for all $(-x)\geq x_0>0$ and $w(x)\leq 1$ for all $x\in\mathbb{R}$ with some $\hat{c}>0$. Hence, using \eqref{z3} and the fact that $y=\textnormal{Ai}_n(z)$ is entire, we obtain for every  $t\in\mathbb{R}$ and $n\in\mathbb{N}$ (the value of $c=c(t,n)>0$ below changes from line to line),
\begin{align*}
		\int_{\mathbb{R}}\bigg(\int_{\mathbb{R}_+}&\,\big|\textnormal{Ai}_n(x+y+t)\big|^2\,\d y\bigg)w(x)\,\d x\\
		=&\,\int_{\mathbb{R}_+}\left(\int_{\mathbb{R}_+}\big|\textnormal{Ai}_n(x+y+t)\big|^2\,\d y\right)w(x)\,\d x+\int_{\mathbb{R}_+}\left(\int_{\mathbb{R}_+}\big|\textnormal{Ai}_n(-x+y+t)\big|^2\,\d y\right)w(-x)\,\d x\\
		\leq&\,c\int_{\mathbb{R}_+}\int_{\mathbb{R}_+}\e^{-\frac{4n}{2n+1}(x+y+t)}\,\d y\,\d x+\int_{\mathbb{R}_+}\left(c+\int_0^x\big|\textnormal{Ai}_n(-x+y+t)\big|^2\d y\right)w(-x)\,\d x\\
		\leq&\,c\left[1+\int_{\mathbb{R}_+}\left(1+x^{\frac{1}{2n}}\right)w(-x)\,\d x\right].
\end{align*}
This proves \eqref{z8} through the imposed exponential decay of $w(-x)$ as $x\rightarrow+\infty$.
\end{proof}
\begin{cor}\label{zcor:0} The operator $K_{t,n}$ with kernel \eqref{i14} is trace class on $L^2(\mathbb{R}_+)$ for every $(t,n)\in\mathbb{R}\times\mathbb{N}$.
\end{cor}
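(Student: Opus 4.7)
The plan is to obtain the trace class property by writing $K_{t,n}$ as a product of two Hilbert--Schmidt operators, since the product of two Hilbert--Schmidt operators is automatically trace class. The factorization is suggested directly by the integral representation \eqref{i14}, by splitting the strictly positive weight as $w(z)=\sqrt{w(z)}\cdot\sqrt{w(z)}$.

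Concretely, I would introduce the two auxiliary operators
\begin{equation*}
    M_{t,n}:L^2(\mathbb{R})\to L^2(\mathbb{R}_+),\ \ \ M_{t,n}(x,z):=\textnormal{Ai}_n(x+z+t)\sqrt{w(z)},
\end{equation*}
\begin{equation*}
    N_{t,n}:L^2(\mathbb{R}_+)\to L^2(\mathbb{R}),\ \ \ N_{t,n}(z,y):=\sqrt{w(z)}\,\textnormal{Ai}_n(z+y+t),
\end{equation*}
and verify by a Fubini argument (legitimate once the Hilbert--Schmidt bound below is in hand) that $K_{t,n}=M_{t,n}N_{t,n}$ as operators on $L^2(\mathbb{R}_+)$; the kernel identity $\int_{\mathbb{R}}M_{t,n}(x,z)N_{t,n}(z,y)\d z = K_{t,n}(x,y)$ is immediate.

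The Hilbert--Schmidt norms of $M_{t,n}$ and $N_{t,n}$ are both, after relabeling integration variables, equal to
\begin{equation*}
    \int_{\mathbb{R}}\int_{\mathbb{R}_+}\big|\textnormal{Ai}_n(x+y+t)\big|^2\,w(x)\,\d y\,\d x,
\end{equation*}
which is finite by Lemma \ref{zlem:2}. Hence both $M_{t,n}$ and $N_{t,n}$ are Hilbert--Schmidt, so $K_{t,n}=M_{t,n}N_{t,n}$ is trace class on $L^2(\mathbb{R}_+)$ for every $(t,n)\in\mathbb{R}\times\mathbb{N}$.

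There is no real obstacle: the only nontrivial input is the integrability estimate \eqref{z8}, which has already been established in Lemma \ref{zlem:2} using the asymptotics \eqref{z3} of $\textnormal{Ai}_n$ and the exponential decay of $w(-x)$ postulated in \eqref{i13}. The trace class conclusion then follows from the standard fact that $\mathcal{S}_2\cdot\mathcal{S}_2\subset\mathcal{S}_1$ in the Schatten ideals.
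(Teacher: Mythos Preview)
Your proof is correct and is essentially identical to the paper's own argument: the paper also factors $K_{t,n}=C_{t,n}D_{t,n}$ through $L^2(\mathbb{R})$ by splitting $w=\sqrt{w}\cdot\sqrt{w}$, with $C_{t,n}$ and $D_{t,n}$ being exactly your $M_{t,n}$ and $N_{t,n}$ (only the names differ), and invokes Lemma~\ref{zlem:2} for the Hilbert--Schmidt bounds.
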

\begin{proof} By \eqref{z8}, the linear transformations $C_{t,n}:L^2(\mathbb{R})\rightarrow L^2(\mathbb{R}_+)$ and $D_{t,n}:L^2(\mathbb{R}_+)\rightarrow L^2(\mathbb{R})$ defined as
\begin{equation*}
	\big(D_{t,n}f\big)(x):=\int_{\mathbb{R}_+}\sqrt{w(x)}\,\textnormal{Ai}_n(x+y+t)f(y)\,\d y\ \ \ \ \textnormal{and}\ \ \ \ \big(C_{t,n}g\big)(x):=\int_{\mathbb{R}}\textnormal{Ai}_n(x+y+t)\sqrt{w(y)}g(y)\,\d y,
\end{equation*}
are Hilbert-Schmidt transformations. In turn, their composition $C_{t,n}D_{t,n}=K_{t,n}:L^2(\mathbb{R}_+)\rightarrow L^2(\mathbb{R}_+)$ is trace class on $L^2(\mathbb{R}_+)$, cf. \cite[Theorem $3.7.4$]{S3}. 
\end{proof}
\begin{lem}\label{zlem:3} For every $(t,n)\in\mathbb{R}\times\mathbb{N}$, the self-adjoint operator $K_{t,n}$ satisfies $0\leq K_{t,n}\leq 1$ and $I-\lambda K_{t,n}$ is invertible on $L^2(\mathbb{R}_+)$ for all $\lambda\in\overline{\mathbb{D}_1(0)}=\{\lambda\in\mathbb{C}:\,|\lambda|\leq 1\}$.
\end{lem}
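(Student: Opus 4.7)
The plan is to factor $K_{t,n}$ through a multiplication-by-$w$ step on $L^2(\mathbb{R})$, reduce the inequality $K_{t,n}\le I$ to a Fourier-analytic isometry statement, and then invoke the spectral theorem for the invertibility assertion. Concretely, introduce $A\colon L^2(\mathbb{R}_+)\to L^2(\mathbb{R})$, $(Af)(z):=\int_{\mathbb{R}_+}\textnormal{Ai}_n(x+z+t)f(x)\,\d x$, defined initially on a dense subspace such as $C_c^\infty(\mathbb{R}_+)$ and shown below to extend boundedly to an isometry of $L^2(\mathbb{R}_+)$. A direct computation with the kernel in \eqref{i14} gives the factorization $K_{t,n}=A^*M_wA$ on $L^2(\mathbb{R}_+)$, where $M_w$ is multiplication by $w$ on $L^2(\mathbb{R})$. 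Because $\textnormal{Ai}_n$ is real by \eqref{i1}, the kernel $K_{t,n}(x,y)$ is real and symmetric, which delivers self-adjointness, and non-negativity is immediate from
\begin{equation*}
	\langle K_{t,n}f,f\rangle=\int_{\mathbb{R}}w(z)\,|(Af)(z)|^2\,\d z\ge 0.
\end{equation*}

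The crucial identity $A^*A=I$ is obtained by Fourier analysis. Taking the Fourier transform of $Af$ in $z$ and using the distributional identity $\widehat{\textnormal{Ai}_n}(u)=\e^{\im u^{2n+1}/(2n+1)}$ read off directly from the oscillatory representation \eqref{i1}, one computes
\begin{equation*}
	\widehat{Af}(u)=\e^{\im ut}\e^{\im u^{2n+1}/(2n+1)}\,\tilde{f}(-u),\qquad\tilde{f}(u):=\int_{\mathbb{R}_+}f(x)\e^{-\im ux}\,\d x,
\end{equation*}
where $\tilde{f}$ is the Fourier transform of the zero extension of $f$ to $\mathbb{R}$. Since the exponential multiplier has unit modulus, Plancherel's theorem yields $\|Af\|_{L^2(\mathbb{R})}=\|f\|_{L^2(\mathbb{R}_+)}$, which simultaneously proves the isometry and provides the continuous extension of $A$ to all of $L^2(\mathbb{R}_+)$. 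Combining $A^*A=I$ with the trivial inequality $M_w\le I$ on $L^2(\mathbb{R})$ — which holds because $0<w(x)<1$ for every $x\in\mathbb{R}$ by strict monotonicity together with the limits $0$ at $-\infty$ and $1$ at $+\infty$ in \eqref{i13} — forces $K_{t,n}=A^*M_wA\le A^*A=I$.

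For the invertibility of $I-\lambda K_{t,n}$ on $\overline{\mathbb{D}_1(0)}$, self-adjointness places $\sigma(K_{t,n})\subset[0,1]$, so $I-\lambda K_{t,n}$ can fail to be invertible only when $1/\lambda\in\sigma(K_{t,n})$. Within $\overline{\mathbb{D}_1(0)}\setminus\{0\}$ the only such value is $\lambda=1$, for which it suffices to verify $\ker(I-K_{t,n})=\{0\}$. If $K_{t,n}f=f$, then $\|f\|^2=\langle K_{t,n}f,f\rangle=\int_{\mathbb{R}}w(z)|(Af)(z)|^2\,\d z$ while the isometry gives $\|f\|^2=\|Af\|^2$, so
\begin{equation*}
	\int_{\mathbb{R}}\bigl(1-w(z)\bigr)\,|(Af)(z)|^2\,\d z=0.
\end{equation*}
Since $1-w(z)>0$ pointwise, $Af=0$ a.e., and the isometry then yields $f=0$.

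The principal obstacle is the rigorous verification of the isometry step: the asymptotics \eqref{z3} place $\textnormal{Ai}_n$ outside both $L^1$ and $L^2$, so the Fourier identity $\widehat{\textnormal{Ai}_n}=\e^{\im u^{2n+1}/(2n+1)}$ must be interpreted distributionally and then transferred to the operator level. I would first carry out all computations on test functions $f\in C_c^\infty(\mathbb{R}_+)$, where every integral converges absolutely and the convolution-multiplication manipulations are classical, and then invoke density of this subspace in $L^2(\mathbb{R}_+)$ to extend the isometry to the full space.
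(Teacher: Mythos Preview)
Your proof is correct and shares its core with the paper's argument---both establish the bound $K_{t,n}\le I$ via the Plancherel isometry for the higher-order Airy transform---but your packaging is more economical in two places. First, you factor $K_{t,n}=A^{\ast}M_wA$ directly, whereas the paper integrates by parts to trade $w$ for $w'$, rewrites $K_{t,n}(x,y)=\int_{\mathbb{R}}\big[\int_{\mathbb{R}_+}\textnormal{Ai}_n(x+z+s+t)\textnormal{Ai}_n(z+y+s+t)\,\d s\big]\d\sigma(z)$, and obtains $\langle f,K_{t,n}f\rangle=\int_{\mathbb{R}}\big[\int_{z+t}^{\infty}|\check g_n(-s)|^2\,\d s\big]\d\sigma(z)$; the two representations are equivalent after Fubini, but yours reaches the Plancherel step without the extra $s$-integral. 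Second, in ruling out the eigenvalue~$1$ you use the pointwise strict inequality $w<1$ to force $Af=0$ and then the isometry to conclude $f=0$; the paper instead deduces from its $\d\sigma$-representation that $\check g_n(-s)=\int_{\mathbb{R}_+}\textnormal{Ai}_n(x+s)f(x)\,\d x$ vanishes a.e., and then has to invoke analytic continuation (Morera, identity theorem) to upgrade this to $\check g_n\equiv 0$. Your route bypasses that analyticity argument entirely.

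One small point worth making explicit: when you write ``it suffices to verify $\ker(I-K_{t,n})=\{0\}$'' you are implicitly using that $K_{t,n}$ is compact (indeed trace class by Corollary~\ref{zcor:0}), so that any nonzero spectral value is an eigenvalue; the paper invokes the Fredholm alternative at the same spot.
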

\begin{proof} Integrating by parts and using \eqref{z3},\eqref{i13} we find from \eqref{i14},
\begin{equation*}
	\frac{\d K_{t,n}}{\d t}(x,y)=-\int_{\mathbb{R}}\textnormal{Ai}_n(x+z+t)\textnormal{Ai}_n(z+t+t)\,\d\sigma(z),\ \ \ (x,y)\in\mathbb{R}_+^2
\end{equation*}
with the positive Borel probability measure $\d\sigma(z)=w'(z)\d z$. Hence, by the dominated convergence theorem, Fubini's theorem and \eqref{z3},
\begin{equation*}
	K_{t,n}(x,y)=-\int_t^{\infty}\frac{\d K_{s,n}}{\d s}(x,y)\,\d s=\int_{\mathbb{R}}\left[\int_{\mathbb{R}_+}\textnormal{Ai}_n(x+z+s+t)\textnormal{Ai}_n(z+y+s+t)\,\d s\right]\d\sigma(z).
\end{equation*}
Using this representation for the kernel of $K_{t,n}$ we derive for any $f\in L^2(\mathbb{R}_+)$ (note that $\textnormal{Ai}_n:\mathbb{R}\rightarrow\mathbb{R}$)
\begin{equation}\label{z11}
	\langle f,K_{t,n}f\rangle_{L^2(\mathbb{R}_+)}=\int_{\mathbb{R}}\left[\int_{z+t}^{\infty}\left|\int_{\mathbb{R}}\textnormal{Ai}_n(x+s)f_+(x)\,\d x\right|^2\d s\right]\d\sigma(z)\geq 0
\end{equation}
with $f_+(x):=f(x)\chi_{\mathbb{R}_+}(x)$ using the characteristic function of the half ray $\mathbb{R}_+$. However, from the Fourier representation \eqref{z4} we obtain, for any $s\in\mathbb{R}$,
\begin{equation*}
	\int_{\mathbb{R}}\textnormal{Ai}_n(x+s)f_+(x)\,\d x=\frac{1}{\sqrt{2\pi}}\int_{\mathbb{R}}\exp\left(\im\left[\frac{y^{2n+1}}{2n+1}+sy\right]\right)\check{f}_+(-y)\,\d y=\check{g}_n(-s),
\end{equation*}
with $\check{f}_+(y):=\frac{1}{\sqrt{2\pi}}\int_{\mathbb{R}}\e^{-\im xy}f_+(x)\,\d x$ and $g(y):=\e^{\im\frac{y^{2n+1}}{2n+1}}\check{f}_+(-y)$. This allows us to estimate \eqref{z11} as follows,
\begin{align}
	\langle f,K_{t,n}f\rangle_{L^2(\mathbb{R}_+)}=&\,\int_{\mathbb{R}}\left[\int_{z+t}^{\infty}|\check{g}_n(-s)|^2\,\d s\right]\d\sigma(z)\leq\int_{\mathbb{R}}\left[\int_{\mathbb{R}}|\check{g}_n(-s)|^2\d s\right]\d\sigma(z)=\|\check{g}_n\|_{L^2(\mathbb{R})}^2=\|g_n\|_{L^2(\mathbb{R})}^2\nonumber\\
	=&\,\|\check{f}_+\|_{L^2(\mathbb{R})}^2=\|f_+\|_{L^2(\mathbb{R})}^2=\langle f,f\rangle_{L^2(\mathbb{R}_+)},\label{z12}
\end{align}
given that $\d\sigma$ is a probability measure and by using Plancherel's theorem in the third and fifth equality. Combining \eqref{z11} and \eqref{z12} we have therefore $0\leq K_{t,n}\leq 1$ for every $(t,n)\in\mathbb{R}\times\mathbb{N}$ and in turn, by self-adjointness for the $L^2(\mathbb{R}_+)$ operator norm,
\begin{equation*}
	\|K_{t,n}\|=\sup\big\{\langle f,K_{t,n}f\rangle_{L^2(\mathbb{R}_+)}:\,\|f\|_{L^2(\mathbb{R}_+)}=1\big\}\leq 1.
\end{equation*}
The last estimate proves invertibility of $I-\lambda K_{t,n}$ on $L^2(\mathbb{R}_+)$ in the open disk $|\lambda|<1$ by the Neumann series. For the corresponding statement on all of $\overline{\mathbb{D}_1(0)}$ we use that $K_{t,n}$ is a compact operator on $L^2(\mathbb{R}_+)$ by Corollary \ref{zcor:0} and thus assume there exists $f\in L^2(\mathbb{R}_+)\setminus\{0\}$ such that $\e^{\im\theta} K_{t,n}f=f$ for some $\theta\in[0,2\pi)$. Since for this $f$,
\begin{equation*}
	\e^{-\im\theta}\langle f,K_{t.n}f\rangle_{L^2(\mathbb{R}_+)}=\langle f,\e^{\im\theta}K_{t.n}f\rangle_{L^2(\mathbb{R}_+)}=\|f\|_{L^2(\mathbb{R}_+)}>0,
\end{equation*}
we conclude from \eqref{z11} that necessarily $\theta=0$. But then all inequalities in \eqref{z12} must be equalities, so in particular
\begin{equation*}
	\int_{\mathbb{R}}\left[\int_{z+t}^{\infty}|\check{g}_n(-s)|^2\,\d s\right]\d\sigma(z)=\|\check{g}_n\|_{L^2(\mathbb{R})}^2=\int_{\mathbb{R}}|\check{g}_n(-s)|^2\,\d s,
\end{equation*}
which yields 
\begin{equation}\label{z13}
	\forall\,t\in\mathbb{R}:\ \ \ \ \int_{\mathbb{R}}\left[\int_{-\infty}^{z+t}|\check{g}_n(-s)|^2\,\d s\right]\d\sigma(z)=0.
\end{equation}
Since $\d\sigma$ is an absolutely continuous positive Borel measure, recall \eqref{i13}, we now derive from \eqref{z13} that
\begin{equation*}
	 \int_{-\infty}^{y}|\check{g}_n(-s)|^2\,\d s=0\ \ \textnormal{a.e.}\ \ \ \ \ \Rightarrow\ \ \ \ \ \ \check{g}_n(-y)=\int_{\mathbb{R}_+}\textnormal{Ai}_n(x+y)f(x)\,\d x=0\ \ \textnormal{a.e.}.
\end{equation*}
But the last integral is continuous as function of $y\in\mathbb{C}$ by the dominated convergence theorem, Cauchy-Schwarz, Lemma \ref{zlem:1} and thus an entire function by Fubini's and Morera's theorem (given that \eqref{z2} is also entire, see Lemma \ref{zlem:1}). Hence, by the identity theorem, $\check{g}_n(z)\equiv 0$ for all $z\in\mathbb{C}$ and thus $f=0\in L^2(\mathbb{R}_+)$ in contradiction to our initial assumption. All together, this shows that $I-\lambda K_{t,n}$ is injective for $|\lambda|=1$, hence invertible by the Fredholm Alternative. This concludes our proof.
\end{proof}
\begin{cor}\label{zcor:1} For every $(t,\lambda,n)\in\mathbb{R}\times[0,1]\times\mathbb{N}$ there exists a unique determinantal point process with correlation kernel $\lambda K_{t,n}$ and the distribution function of the last particle in this process equals $D_n(t,\lambda)$.
\end{cor}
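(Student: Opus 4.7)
The plan is to deduce the claim from the standard Macchi--Soshnikov existence and uniqueness theorem for determinantal point processes, once the required operator-theoretic hypotheses on $\lambda K_{t,n}$ have been verified. First, I would note that for $(t,\lambda,n)\in\mathbb{R}\times[0,1]\times\mathbb{N}$ the integral operator $\lambda K_{t,n}$ is self-adjoint on $L^2(\mathbb{R}_+)$: the kernel in \eqref{i14} is manifestly symmetric in $(x,y)$ and real-valued, and $\lambda\in\mathbb{R}$. Next, trace-classness of $\lambda K_{t,n}$ on $L^2(\mathbb{R}_+)$ (hence local trace-classness) is immediate from Corollary \ref{zcor:0}, and the inequality $0\leq\lambda K_{t,n}\leq 1$ follows from Lemma \ref{zlem:3} together with $\lambda\in[0,1]$. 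With these three properties in hand, the Macchi--Soshnikov theorem (see e.g.\ \cite[Theorem 3]{Soshnikov} or Hough--Krishnapur--Peres--Virag) produces a unique determinantal point process on $\mathbb{R}_+$ whose correlation kernel, in the sense of determinantal correlation functions, equals $\lambda K_{t,n}(x,y)$.

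For the distribution-function identification I would use the classical void-probability formula for determinantal point processes: if $\xi$ denotes the random counting measure of the process and $A\subseteq\mathbb{R}_+$ is Borel with $\lambda K_{t,n}\upharpoonright_{L^2(A)}$ trace class, then
\begin{equation*}
\mathbb{P}\big(\xi(A)=0\big)=\det\big(I-\lambda K_{t,n}\upharpoonright_{L^2(A)}\big).
\end{equation*}
Taking $A=\mathbb{R}_+$ and comparing with \eqref{i15} gives $\mathbb{P}(\xi(\mathbb{R}_+)=0)=D_n(t,\lambda)$. To phrase this as the distribution function of the last particle, one exploits the translation identity $K_{t,n}(x,y)=K_{0,n}(x+t,y+t)$ visible from \eqref{i14}: the DPP on $\mathbb{R}_+$ with kernel $\lambda K_{t,n}$ has the same law as the restriction to $(t,\infty)$ of the DPP on $\mathbb{R}$ with kernel $\lambda K_{0,n}$, shifted by $-t$. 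Under this identification the event $\{\xi(\mathbb{R}_+)=0\}$ for the former corresponds to $\{p_{\max}\leq t\}$ for the latter, so $t\mapsto D_n(t,\lambda)$ is indeed the distribution function of the rightmost point, and Lemma \ref{zlem:3} together with Lemma \ref{zlem:2} guarantees that this distribution function is nondegenerate (in particular the process has almost surely finitely many points in any half-line $(s,\infty)$, so $p_{\max}$ is well-defined a.s.).

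I do not anticipate a genuine obstacle: the bulk of the work has already been done in establishing the self-adjointness, trace-classness, and spectral bound $0\leq K_{t,n}\leq 1$. The only point requiring some care is making the translation identification $K_{t,n}(\cdot,\cdot)=K_{0,n}(\cdot+t,\cdot+t)$ rigorous at the level of point processes (as opposed to kernels), since $K_{0,n}$ must be shown to define a bona fide locally trace-class operator on the whole line; this follows by the same Hilbert--Schmidt factorization as in Corollary \ref{zcor:0} applied on each finite interval, combined with Lemma \ref{zlem:2}. Alternatively, one can avoid the extension to $\mathbb{R}$ altogether by reading $D_n(t,\lambda)$ directly as the void probability of the $t$-indexed family of processes on $\mathbb{R}_+$, and then invoking monotonicity of $t\mapsto D_n(t,\lambda)$ (which is built into the integral representation of the derivative $\frac{\d}{\d t}K_{t,n}$ appearing in the proof of Lemma \ref{zlem:3}) to conclude that $t\mapsto D_n(t,\lambda)$ is a genuine distribution function.
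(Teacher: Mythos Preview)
Your proposal is correct and follows essentially the same route as the paper: verify self-adjointness, trace-classness (Corollary~\ref{zcor:0}) and $0\leq\lambda K_{t,n}\leq 1$ (Lemma~\ref{zlem:3}), then invoke the Macchi--Soshnikov theorem for existence and uniqueness, and finally identify $D_n(t,\lambda)$ via the void-probability formula. The only difference is in how you justify that the process has a last particle: the paper simply observes that $\int_{\mathbb{R}_+}K_{t,n}(x,x)\,\d x<\infty$ by Lemma~\ref{zlem:2} (together with the continuous-kernel trace identity $\tr_{L^2(\mathbb{R}_+)}K_{t,n}=\int_{\mathbb{R}_+}K_{t,n}(x,x)\,\d x$) and cites \cite[Theorem~4(a)]{Sos} directly, whereas you pass through the translation identity $K_{t,n}(x,y)=K_{0,n}(x+t,y+t)$ to embed into a process on $\mathbb{R}$. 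Your route works but is more elaborate than necessary; the paper's direct argument on $\mathbb{R}_+$ avoids the need to check local trace-classness of $K_{0,n}$ on the whole line.
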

\begin{proof} Since $0\leq \lambda K_{t,n}\leq 1$ in the indicated parameter range by Lemma \ref{zlem:3}, abstract theory \cite[Theorem $3$]{Sos} guarantees existence of a unique determinantal process with correlation kernel $\lambda K_{t,n}$. Moreover, since 
\begin{equation*}
	\int_{\mathbb{R}_+}K_{t,n}(x,x)\,\d x<\infty
\end{equation*}
by Lemma \ref{zlem:2}, this process almost surely has a last particle, see \cite[Theorem $4$a)]{Sos}, with distribution function $D_n(t,\lambda)$, compare \cite[Lemma $3$]{Sos} and \cite[Proposition $2.4$]{Joh} as well as \cite[Theorem $3.10$]{S}, using that $K_{t,n}$ is trace class on $L^2(\mathbb{R}_+)$ with continuous kernel, so 
\begin{equation*}
	\tr_{L^2(\mathbb{R}_+)}K_{t,n}=\int_{\mathbb{R}_+}K_{t,n}(x,x)\,\d x
\end{equation*}
by \cite[Theorem $3.9$]{S}. This completes the proof.
\end{proof}
More directly we can show as follows that the determinant \eqref{i15} is the distribution function of some random variable for all $(t,\lambda,n)\in\mathbb{R}\times[0,1]\times\mathbb{N}$. Indeed, we already know that $t\mapsto D_n(t,\cdot)$ is differentiable and $0\leq K_{t,n}\leq 1$ yields $D_n(t,\lambda)\in[0,1]$ for all $(t,\lambda,n)\in\mathbb{R}\times[0,1]\times\mathbb{N}$. Also, by \eqref{z11} for every $f\in L^2(\mathbb{R}_+)$,
\begin{equation*}
	\langle f,K_{s,n}f\rangle_{L^2(\mathbb{R}_+)}>\langle f,K_{t,n}f\rangle_{L^2(\mathbb{R}_+)}\ \ \ \ \textnormal{whenever}\ \ s<t,
\end{equation*}
i.e. $t\mapsto D_n(t,\cdot)$ is strictly increasing on $\mathbb{R}$ for $(\lambda,n)\in[0,1]\times\mathbb{N}$. It thus remains to analyze the limiting behavior of $D_n(t,\lambda)$ as $t\rightarrow\pm\infty$. First, by positivity of $K_{t,n}$, for every $(t,\lambda,n)\in\mathbb{R}\times[0,1]\times\mathbb{N}$,
\begin{equation}\label{z14}
	D_n(t,\lambda)=\exp\left[-\int_0^{\lambda}\tr_{L^2(\mathbb{R}_+)}\big((I-\mu K_{t,n})^{-1}K_{t,n}\big)\,\d\mu\right]\leq\exp\left[-\lambda\tr_{L^2(\mathbb{R}_+)} K_{t,n}\right],
\end{equation}
where, assuming $t<0$,
\begin{equation*}
	\tr_{L^2(\mathbb{R}_+)}K_{t,n}=\int_{\mathbb{R}_+}K_{t,n}(x,x)\,\d x\stackrel{\eqref{i14}}{=}\int_t^{\infty}\left[\int_{\mathbb{R}}\textnormal{Ai}_n^2(x+z)\,w(z)\d z\right]\d x\stackrel{\eqref{z8}}{=}c+\int_t^0\left[\int_{\mathbb{R}}\textnormal{Ai}_n^2(x+z)\,w(z)\d z\right]\d x,
\end{equation*}
with $0<c<\infty$ independent of $t$, so by Fubini's theorem and the fact that $w$ is strictly increasing,
\begin{align*}
	\tr_{L^2(\mathbb{R}_+)}K_{t,n}=&\,c+\int_t^0\left[\int_{\mathbb{R}}\textnormal{Ai}_n^2(y)\,w(y-x)\d y\right]\d x=c+\int_{\mathbb{R}}\textnormal{Ai}_n^2(y)\left[\int_t^0w(y-x)\,\d x\right]\d y\\
	=&\,c+\int_{\mathbb{R}}\textnormal{Ai}_n^2(y)\left[\int_0^{|t|}w(y+x)\,\d x\right]\d y> c+|t|\int_{\mathbb{R}}\textnormal{Ai}_n^2(y)\,w(y)\d y\stackrel{\eqref{z3}}{\geq} c(1+|t|),\ \ \ c>0.
\end{align*}
As expected, $K_{t,n}$ becomes therefore unbounded in trace norm for large negative $t$, so by \eqref{z14},
\begin{equation*}
	D_n(t,\lambda)\leq\exp\left[-\lambda\tr_{L^2(\mathbb{R}_+)}K_{t,n}\right]\leq\e^{-c\lambda(1+|t|)}\rightarrow 0\ \ \ \ \textnormal{as}\ \ t\rightarrow-\infty.
\end{equation*}
The remaining behavior of $D_n(t,\lambda)$ as $t\rightarrow+\infty$ can be derived as follows: since $K_{t,n}$ is the composition of two Hilbert-Schmidt transformations by the proof of Corollary \ref{zcor:0}, we have in trace norm
\begin{equation*}
	\|K_{t,n}\|_1\leq \|M_{t,n}\|_2\|N_{t,n}\|_2=\int_{\mathbb{R}}\left(\int_{\mathbb{R}_+}|\textnormal{Ai}_n(x+y+t)|^2\d y\right)w(x)\d x,
\end{equation*}	
with the Hilbert-Schmidt norm $\|\cdot\|_2$. Hence, we can find $c=c(n)>0$ such that for every $t>0$,
\begin{equation}\label{z15}
	\|K_{t,n}\|_1\leq c\,\e^{-\frac{4nt}{2n+1}}+\int_{\mathbb{R}_+}\left(\int_{\mathbb{R}_+}|\textnormal{Ai}_n(-x+y+t)|^2\,\d y\right)w(-x)\d x.
\end{equation}
However, pointwise in $x\in\mathbb{R}_+$ by \eqref{z3},
\begin{equation}\label{z16}
	\int_{\mathbb{R}_+}|\textnormal{Ai}_n(-x+y+t)|^2\,\d y\leq c\,\e^{-\frac{4nt}{2n+1}}+\int_{t-x}^t|\textnormal{Ai}_n(y)|^2\,\d y\rightarrow 0\ \ \ \textnormal{as}\ \ t\rightarrow+\infty,
\end{equation}
and, uniformly in $(t,x)\in\mathbb{R}_+^2$,
\begin{equation*}
	w(-x)\int_{\mathbb{R}_+}|\textnormal{Ai}_n(-x+y+t)|^2\,\d y\stackrel{\eqref{z16}}{\leq}c(1+x)w(-x)\in L^1(\mathbb{R}_+).
\end{equation*}
Thus, by the dominated convergence theorem, \eqref{z15} yields $K_{t,n}\rightarrow 0$ in trace norm as $t\rightarrow+\infty$ and therefore
\begin{equation*}
	\big|D_n(t,\lambda)-1\big|\leq\lambda\|K_{t,n}\|_1\e^{\|K_{t,n}\|_1+1}\rightarrow 0\ \ \ \ \textnormal{as}\ \  t\rightarrow+\infty\ \ \ \textnormal{for every}\ (\lambda,n)\in[0,1]\times\mathbb{N},
\end{equation*}
cf. \cite[Theorem $3.4$]{S}. This proves the outstanding limiting behavior of $D_n(t,\lambda)$.
\begin{rem}\label{momrem} As can be seen from the proofs of Lemma \ref{zlem:2} and the discussion below Corollary \ref{zcor:1}, we could have chosen \eqref{i13} with $w'(x)\leq |x|^{-2-\epsilon}$ for $|x|\geq x_0$ and arbitrary $\epsilon>0$ instead of the exponential decay. This would not alter the results in Section \ref{zsec1}, however we will need the rapid decay of $w'(x)$ in the upcoming Section \ref{zsec2}.
\end{rem}

\section{Exploiting the determinant's conjugation invariance}\label{zsec2}
In this section we make crucial use of the fact that the operator traces
\begin{equation*}
	\tr_{L^2(\mathbb{R}_+)}(K_{t,n})^m,\ \ \ \ \ m\in\mathbb{N},
\end{equation*}
and hence the determinant $D_n(t,\lambda)$ itself, remain invariant under conjugation of $K_{t,n}$ with bounded invertible $L^2(\mathbb{R}_+)$ operators. More precisely, our next move will rephrase the combination $K_{t,n}(x,y)\chi_{\mathbb{R}_+}(y)$ in Fourier variables and we begin with the following contour integral representations for $\textnormal{Ai}_n(z)$,
\begin{equation}\label{z17}
	\textnormal{Ai}_n(x)\stackrel[\eqref{z4}]{\eqref{z2}}{=}\frac{1}{2\pi}\int_{\Gamma_{\alpha}}\e^{\im\psi_n(\alpha,x)}\,\d\alpha=\frac{1}{2\pi}\int_{\Gamma_{\beta}}\e^{-\im\psi_n(\beta,x)}\,\d\beta,\ \ \ x\in\mathbb{R};\ \ \ \ \ \ \ \psi_n(\lambda,z):=\frac{\lambda^{2n+1}}{2n+1}+z\lambda,
\end{equation}
where $\Gamma_{\alpha}$, resp. $\Gamma_{\beta}$, denotes any smooth contour oriented from $\infty\e^{\im a}$ to $\infty\e^{\im b}$, resp. $\infty\e^{\im c}$ to $\infty\e^{\im d}$, with 
\begin{equation*}
	a\in\left(\frac{2n\pi}{2n+1},\pi\right]\ \ \textnormal{and}\ \ b\in\left[0,\frac{\pi}{2n+1}\right),\ \ \ \ \textnormal{resp.}\ \ \ \ c\in\left(\pi,\frac{(2n+2)\pi}{2n+1}\right)\ \ \textnormal{and}\ \  d\in\left(\frac{(4n+1)\pi}{2n+1},2\pi\right),
\end{equation*}
such that $0<\Im(\alpha-\beta)<\frac{\omega}{2}$ and $\Im\beta<0$ is satisfied for $\alpha\in\Gamma_{\alpha}$ and $\beta\in\Gamma_{\beta}$ with $\omega>0$ as in \eqref{i13}, see Figure \ref{figz:2} below for a possible choice. Recognizing these constraints for the contours we find in turn from \eqref{i13} that
\begin{equation*}
	\forall\,(\alpha,\beta)\in\Gamma_{\alpha}\times\Gamma_{\beta}:\ \ \ \ \lim_{\substack{z\rightarrow+\infty\\ z\in\mathbb{R}}}\e^{\im z(\alpha-\beta)}w(z)=0,\ \ \ \ \lim_{\substack{z\rightarrow-\infty\\ z\in\mathbb{R}}}\e^{\im z(\alpha-\beta)}w(z)=0.
\end{equation*}
Hence, upon insertion of \eqref{z17} into \eqref{i14} and integration by parts,
\begin{eqnarray}
	K_{t,n}(x,y)&=&\frac{1}{(2\pi)^2}\int_{\Gamma_{\alpha}}\int_{\Gamma_{\beta}}\e^{\im(\psi_n(\alpha,x+t)-\psi_n(\beta,y+t))}\left[\int_{\mathbb{R}}\e^{\im z(\alpha-\beta)}w(z)\,\d z\right]\d\beta\d\alpha\nonumber\\
	&=&\frac{\im}{(2\pi)^2}\int_{\mathbb{R}}\left[\int_{\Gamma_{\alpha}}\int_{\Gamma_{\beta}}\e^{\im\psi_n(\alpha,x+z+t)}\e^{-\im\psi_n(\beta,z+y+t)}\frac{\d\beta\,\d\alpha}{\alpha-\beta}\right]\d\sigma(z),\label{z18}
\end{eqnarray}
with $\d\sigma(z)=w'(z)\d z$ as before in the proof of Lemma \ref{zlem:3}.
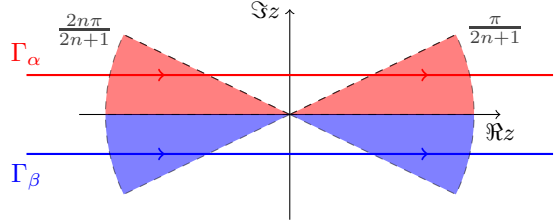
\begin{figure}[tbh]
\begin{tikzpicture}[xscale=0.7,yscale=0.7]
\draw [->] (-4,0) -- (4,0) node[below]{{\small $\Re z$}};
\draw [->] (0,-2) -- (0,2) node[left]{{\small $\Im z$}};

\draw [very thin, dashed, color=darkgray,-] (0,0) -- (3.153391037,1.518593088);
\draw [very thin, dashed, color=darkgray,-] (0,0) -- (-3.153391037,1.518593088);
\node [color=darkgray] at (3.9,1.6) {$\frac{\pi}{2n+1}$};
\node [color=darkgray] at (-3.9,1.6) {$\frac{2n\pi}{2n+1}$};

\draw [thick, color=red, decoration={markings, mark=at position 0.25 with {\arrow{<}}}, decoration={markings, mark=at position 0.75 with {\arrow{<}}}, postaction={decorate},-] (5,0.75) -- (-5,0.75) node[above]{$\Gamma_{\alpha}$};
\draw [thick, color=blue, decoration={markings, mark=at position 0.25 with {\arrow{<}}}, decoration={markings, mark=at position 0.75 with {\arrow{<}}}, postaction={decorate},-] (5,-0.75) -- (-5,-0.75) node[below]{$\Gamma_{\beta}$};

\draw [fill=red, dashed, opacity=0.5] (0,0) -- (3.153391037,1.518593088) arc (25.71428571:0:3.5cm) -- (0,0);
\draw [fill=red, dashed, opacity=0.5] (0,0) -- (-3.153391037,1.518593088) arc (154.2857143:180:3.5cm) -- (0,0);
\draw [fill=blue, dashed, opacity=0.5] (0,0) -- (3.153391037,-1.518593088) arc (-25.71428571:0:3.5cm) -- (0,0);
\draw [fill=blue, dashed, opacity=0.5] (0,0) -- (-3.153391037,-1.518593088) arc (-154.2857143:-180:3.5cm) -- (0,0);

\end{tikzpicture}
\caption{An admissible (and very simple) choice for the integration paths $\Gamma_{\alpha}$ and $\Gamma_{\beta}$ in \eqref{z17}, ensuring throughout $0<\Im(\alpha-\beta)<\frac{\omega}{2}$ and $\Im\beta<0$ for $(\alpha,\beta)\in\Gamma_{\alpha}\times\Gamma_{\beta}$.}
\label{figz:2}
\end{figure}
Next, since $D_n(t,\lambda)=\det(I-\lambda K_{t,n}\chi_{\mathbb{R}_+}\upharpoonright_{L^2(\mathbb{R})})$ where $\chi_{\mathbb{R}_+}$ is the operator of multiplication by the characteristic function $\chi_{\mathbb{R}_+}(x)$ of $\mathbb{R}_+$, we use the following integral identity, cf. \cite[Lemma $2.2$]{BB} for a similar one: for every $\beta\in\Gamma_{\beta}$ (with $\Im\beta<0$ in our setup) and $y\in\mathbb{R}\setminus\{0\}$,
\begin{equation}\label{z19}
	-\frac{1}{2\pi\im}\int_{\mathbb{R}}\e^{-\im y(\gamma-\beta)}\frac{\d\gamma}{\gamma-\beta}=\chi_{\mathbb{R}_+}(y).
\end{equation}
Hence, combining \eqref{z18} with \eqref{z19} and fixing $\Gamma_{\alpha}=\mathbb{R}$ for simplicity (this is an admissible choice for $\Gamma_{\alpha}$, see \eqref{z4}), we obtain by Fubini's theorem
\begin{equation}\label{z20}
	K_{t,n}(x,y)\chi_{\mathbb{R}_+}(y)=\int_{\mathbb{R}}\int_{\mathbb{R}}\frac{\e^{\im x\alpha}}{\sqrt{2\pi}}\underbrace{\left[\frac{1}{(2\pi)^2}\int_{\mathbb{R}}\int_{\Gamma_{\beta}}\frac{\e^{\im\psi_n(\alpha,z+t)}\e^{-\im\psi_n(\beta,z+t)}}{(\alpha-\beta)(\beta-\gamma)}\,\d\beta\,\d\sigma(z)\right]}_{=:L_{t,n}(\alpha,\gamma)}\frac{\e^{-\im y\gamma}}{\sqrt{2\pi}}\,\d\alpha\,\d\gamma,
\end{equation}
which shows that $K_{t,n}\chi_{\mathbb{R}_+}$ on $L^2(\mathbb{R})$ equals the operator composition $\mathcal{F}^{-1}L_{t,n}\,\mathcal{F}$. Here, $L_{t,n}:L^2(\mathbb{R})\rightarrow L^2(\mathbb{R})$ has kernel $L_{t,n}(\alpha,\gamma)$ as written in \eqref{z20} and $\mathcal{F}:L^1(\mathbb{R})\cap L^2(\mathbb{R})\rightarrow L^2(\mathbb{R})$ is the standard Fourier transform
\begin{equation}\label{z21}
	(\mathcal{F}f)(x):=\frac{1}{\sqrt{2\pi}}\int_{\mathbb{R}}\e^{-\im xy}f(y)\,\d y,\ \ \ \ \ \ x\in\mathbb{R},
\end{equation}
that extends unitarily on $L^2(\mathbb{R})$. Note that, by general trace ideal properties, $L_{t,n}=\mathcal{F}K_{t,n}\chi_{\mathbb{R}_+}\mathcal{F}^{-1}$ is trace class on $L^2(\mathbb{R})$, alternatively we can argue as follows.
\begin{lem}\label{zlem:4} The integral operator $L_{t,n}:L^2(\mathbb{R})\rightarrow L^2(\mathbb{R})$ with kernel written in \eqref{z20} is trace class for every $(t,n)\in\mathbb{R}\times\mathbb{N}$.
\end{lem}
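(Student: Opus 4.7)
The plan is to reduce trace-class-ness of $L_{t,n}$ to that of $K_{t,n}$ on $L^2(\mathbb R_+)$ via the Fourier-analytic identity implicit in \eqref{z20}.

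Reading \eqref{z20} as an operator identity, one has $L_{t,n}=\mathcal F\,M_{t,n}\,\mathcal F^{-1}$ on $L^2(\mathbb R)$, where $\mathcal F$ is the unitary Fourier transform \eqref{z21} and $M_{t,n}$ denotes the integral operator on $L^2(\mathbb R)$ with kernel $K_{t,n}(x,y)\chi_{\mathbb R_+}(y)$. Since the Schatten-$1$ ideal is preserved under unitary conjugation, the claim reduces to showing that $M_{t,n}$ is trace class. The factor $\chi_{\mathbb R_+}(y)$ makes $M_{t,n}$ block-lower-triangular with respect to the orthogonal splitting $L^2(\mathbb R)=L^2(\mathbb R_+)\oplus L^2(\mathbb R_-)$ of the output variable: one diagonal block coincides with $K_{t,n}$ on $L^2(\mathbb R_+)$ and is trace class by Corollary \ref{zcor:0}, while the only remaining non-trivial block is the integral operator $N_{t,n}:L^2(\mathbb R_+)\to L^2(\mathbb R_-)$ with the same kernel $K_{t,n}(x,y)$ but now acting between different half-line $L^2$-spaces.

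The task then reduces to showing that $N_{t,n}$ is trace class. I would proceed by writing $N_{t,n}$ as a composition of two Hilbert-Schmidt operators using the double integral representation
\[
K_{t,n}(x,y)=\int_{\mathbb R}\int_{\mathbb R_+}\textnormal{Ai}_n(x+z+s+t)\,\textnormal{Ai}_n(z+y+s+t)\,\d s\,\d\sigma(z),\quad \d\sigma(z)=w'(z)\,\d z,
\]
derived in the proof of Lemma \ref{zlem:3}, which trades the mere boundedness of $w$ for the exponential decay of $w'$ at $\pm\infty$. Factoring $N_{t,n}=\mathcal C_{t,n}\mathcal D_{t,n}$ through the auxiliary Hilbert space $L^2(\mathbb R_+\times\mathbb R,\d s\,\d\sigma(z))$, the Hilbert-Schmidt bounds for each factor then follow by arguments analogous to Lemma \ref{zlem:2}, the auxiliary $s$-integration providing the missing Airy decay in the direction of the unrestricted output variable $x\in\mathbb R_-$.

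The principal technical obstacle is precisely this last Hilbert-Schmidt estimate on the output factor $\mathcal C_{t,n}$. Unlike in Corollary \ref{zcor:0}, where the output half-line $\mathbb R_+$ enjoys the exponential Airy decay, here the output variable lies in $\mathbb R_-$, where $|\textnormal{Ai}_n|^2$ is only oscillatory and fails to be $L^1$-integrable. Taming this non-integrability requires combining the auxiliary $s$-integration with the exponential decay of $w'$ enforced by hypothesis \eqref{i13}; merely power-like decay of $w'$ would not suffice, which is exactly the phenomenon anticipated in Remark \ref{momrem}.
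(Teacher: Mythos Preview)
Your reduction via Fourier conjugation to the trace-class question for $K_{t,n}\chi_{\mathbb{R}_+}$ on $L^2(\mathbb{R})$ is sound, and the paper in fact mentions this route in the sentence preceding the lemma. Your block decomposition is also fine. The gap is in the proposed Hilbert--Schmidt factorization of the off-diagonal piece $N_{t,n}:L^2(\mathbb{R}_+)\to L^2(\mathbb{R}_-)$: the factor $\mathcal{C}_{t,n}$ is \emph{not} Hilbert--Schmidt, and no amount of exponential decay of $w'$ repairs this. Concretely, after carrying out the $s$-integral one finds
\[
\int_{\mathbb{R}_+}\!\!\d s\int_{\mathbb{R}}\!w'(z)\,\big|\textnormal{Ai}_n(x+z+s+t)\big|^2\,\d z
=\int_{\mathbb{R}}w'(z)\left[\int_{x+z+t}^{\infty}\!\big|\textnormal{Ai}_n(u)\big|^2\,\d u\right]\d z.
\]
For $x\to-\infty$ the bracketed quantity grows like $|x+z+t|^{1/(2n)}$ on the region where $x+z+t<0$, and since $w'$ only controls the $z$-tails, the $z$-integral still produces a term of order $|x|^{1/(2n)}$ (coming from $z$ of order one). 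This is not integrable over $x\in\mathbb{R}_-$, so $\|\mathcal{C}_{t,n}\|_2=\infty$. The ``auxiliary $s$-integration'' does not supply extra decay here; it merely reproduces the tail integral of $|\textnormal{Ai}_n|^2$, whose non-integrability at $-\infty$ is precisely the obstruction you flagged.

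The paper avoids this entirely by factoring $L_{t,n}$ \emph{directly} through $L^2(\Gamma_\beta)$, using the $\beta$-contour integral already present in the kernel \eqref{z20}. The point is that on $\Gamma_\beta=\mathbb{R}-\im\Delta$ the exponential $\e^{-\frac{\im}{2}\psi_n(\beta,t)}$ decays like $\e^{-\frac{\Delta}{2}(\Re\beta)^{2n}}$, while the Cauchy denominators $(\alpha-\beta)^{-1}$ and $(\beta-\gamma)^{-1}$ are square-integrable in the real variable because $\Gamma_\beta$ is separated from $\mathbb{R}$. Both factors are therefore Hilbert--Schmidt for elementary reasons, with no delicate Airy estimates on the negative axis. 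If you want to salvage your real-variable approach, you would need a different intermediate space or a weighted splitting that transfers some decay from the input side to the output side; the contour-integral route is by far the cleaner option.
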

\begin{proof} Define the linear transformations $M_{t,n}:L^2(\Gamma_{\beta})\rightarrow L^2(\mathbb{R})$ and $N_{t,n}:L^2(\mathbb{R})\rightarrow L^2(\Gamma_{\beta})$\footnote{We equip $L^2(\Gamma_{\beta})$ and later on $L^2(\Gamma_{\alpha})$ with the arc-length measure.} via
\begin{equation*}
	(M_{t,n}f)(\alpha)=\int_{\Gamma_{\beta}}M_{t,n}(\alpha,\beta)f(\beta)\,\d\beta,\ \ \ \ \ \ M_{t,n}(\alpha,\beta):=\frac{1}{2\pi}\frac{\e^{\im\psi_n(\alpha,t)}\e^{-\frac{\im}{2}\psi_n(\beta,t)}}{\alpha-\beta}\left[\int_{\mathbb{R}}\e^{\im z(\alpha-\beta)}\,\d\sigma(z)\right]
\end{equation*}
and
\begin{equation*}
	(N_{t,n}g)(\beta)=\int_{\mathbb{R}}N_{t,n}(\beta,\gamma)g(\gamma)\,\d\gamma,\ \ \ \ \ \ \ N_{t,n}(\beta,\gamma):=\frac{1}{2\pi}\frac{\e^{-\frac{\im}{2}\psi_n(\beta,t)}}{\beta-\gamma}.
\end{equation*}
Since, uniformly in $(\alpha,\beta)\in\mathbb{R}\times\Gamma_{\beta}$,
\begin{equation*}
	\left|\int_{\mathbb{R}}\e^{\im z(\alpha-\beta)}\,\d\sigma(z)\right|\leq\int_{\mathbb{R}}\e^{z\Im\beta}\,\d\sigma(z)\leq\int_{\mathbb{R}_+}\!\!\d\sigma(z)+\int_{\mathbb{R}_+}\!\!\e^{-z\Im\beta}\d\sigma(-z)\stackrel{\eqref{i13}}{\leq}1+c\int_{\mathbb{R}_+}\!\!\e^{-z(\omega+\Im\beta)}\,\d z<\infty
\end{equation*}
with $c>0$, we find at once, for every $(t,n)\in\mathbb{R}\times\mathbb{N}$,
\begin{equation*}
	\int_{\Gamma_{\beta}}\int_{\mathbb{R}}\big|M_{t,n}(\alpha,\beta)\big|^2\d\alpha\,|\d\beta|<\infty\ \ \ \ \ \textnormal{and}\ \ \ \ \ \int_{\mathbb{R}}\int_{\Gamma_{\beta}}\big|N_{t,n}(\beta,\gamma)\big|^2|\d\beta|\,\d\gamma<\infty.
\end{equation*}
In short, $M_{t,n}$ and $N_{t,n}$ are both Hilbert-Schmidt transformations and thus their composition $K_{t,n}=M_{t,n}N_{t,n}$ trace class on $L^2(\mathbb{R})$, cf. \cite[Theorem $3.7.4$]{S3}. This completes the proof.
\end{proof}
In our next step we consider $P_n:L^2(\mathbb{R})\rightarrow L^2(\mathbb{R})$ as multiplication
\begin{equation}\label{z22}
	(P_nf)(\alpha):=\e^{-\frac{\im}{2}\psi_n(\alpha,0)}f(\alpha),
\end{equation}
and observe that $L_{t,n}=(M_{t,n}N_{t,n}P_n^{-1})P_n:L^2(\mathbb{R})\rightarrow L^2(\mathbb{R})$ is trace class by Lemma \ref{zlem:4}. The same is true for $P_nL_{t,n}P_n^{-1}$ as shown in the upcoming Lemma.
\begin{lem}\label{zlem:5} The integral operator $J_{t,n}:=P_nL_{t,n}P_n^{-1}:L^2(\mathbb{R})\rightarrow L^2(\mathbb{R})$ is trace class for every $(t,n)\in\mathbb{R}\times\mathbb{N}$.
\end{lem}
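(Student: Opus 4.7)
The plan is to mimic the strategy of Lemma \ref{zlem:4} and exhibit $J_{t,n}$ as the composition of two Hilbert--Schmidt transformations, with the key observation being that the multiplier $P_n$ is unitary on $L^2(\mathbb{R})$.

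First I would record that, for $\alpha\in\mathbb{R}$ and $n\in\mathbb{N}$, one has $\psi_n(\alpha,0)=\frac{\alpha^{2n+1}}{2n+1}\in\mathbb{R}$, so $|\e^{-\frac{\im}{2}\psi_n(\alpha,0)}|=1$ almost everywhere. Hence $P_n$ in \eqref{z22} acts as multiplication by a unit-modulus function, which shows immediately that $P_n\in\mathcal{B}(L^2(\mathbb{R}))$ is unitary with $P_n^{-1}=P_n^{\ast}$ realized as multiplication by $\e^{\frac{\im}{2}\psi_n(\cdot,0)}$. From this observation alone one could conclude the lemma, since the trace class is a two-sided ideal and $L_{t,n}$ is trace class by Lemma \ref{zlem:4}; however, for later use it is preferable to exhibit an explicit Hilbert--Schmidt factorization of $J_{t,n}$.

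Concretely, I would set $\tilde M_{t,n}:=P_nM_{t,n}:L^2(\Gamma_{\beta})\rightarrow L^2(\mathbb{R})$ and $\tilde N_{t,n}:=N_{t,n}P_n^{-1}:L^2(\mathbb{R})\rightarrow L^2(\Gamma_{\beta})$, with $M_{t,n},N_{t,n}$ as in the proof of Lemma \ref{zlem:4}. Their kernels read
\begin{equation*}
\tilde M_{t,n}(\alpha,\beta)=\frac{1}{2\pi}\frac{\e^{\frac{\im}{2}\psi_n(\alpha,2t)}\e^{-\frac{\im}{2}\psi_n(\beta,t)}}{\alpha-\beta}\int_{\mathbb{R}}\e^{\im z(\alpha-\beta)}\,\d\sigma(z),\qquad \tilde N_{t,n}(\beta,\gamma)=\frac{1}{2\pi}\frac{\e^{-\frac{\im}{2}\psi_n(\beta,t)}\e^{\frac{\im}{2}\psi_n(\gamma,0)}}{\beta-\gamma},
\end{equation*}
after absorbing $\e^{-\frac{\im}{2}\psi_n(\alpha,0)}$ into $\e^{\im\psi_n(\alpha,t)}$ via the identity $\psi_n(\alpha,t)-\frac{1}{2}\psi_n(\alpha,0)=\frac{1}{2}\psi_n(\alpha,2t)$. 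The composition $\tilde M_{t,n}\tilde N_{t,n}$ reproduces $P_nL_{t,n}P_n^{-1}=J_{t,n}$ by a direct calculation that parallels the one carried out immediately before Lemma \ref{zlem:4}.

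To verify the Hilbert--Schmidt property of $\tilde M_{t,n}$ and $\tilde N_{t,n}$ I would simply observe that on the integration domains at hand ($\alpha,\gamma\in\mathbb{R}$ and $\beta\in\Gamma_{\beta}$ with $\Im\beta<0$), the extra phase factors have modulus $1$, so $|\tilde M_{t,n}(\alpha,\beta)|=|M_{t,n}(\alpha,\beta)|$ and $|\tilde N_{t,n}(\beta,\gamma)|=|N_{t,n}(\beta,\gamma)|$. The Hilbert--Schmidt estimates established in the proof of Lemma \ref{zlem:4} therefore transfer verbatim, the exponential decay hypothesis \eqref{i13} on $w'$ again being crucial for controlling $|\int_{\mathbb{R}}\e^{\im z(\alpha-\beta)}\d\sigma(z)|$ via $\Im\beta<0<\Im\beta+\omega$. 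Applying \cite[Theorem $3.7.4$]{S3} to the factorization $J_{t,n}=\tilde M_{t,n}\tilde N_{t,n}$ then yields the trace class property. I do not foresee a genuine obstacle here: the only point requiring a moment's attention is verifying that the formal identity $P_nL_{t,n}P_n^{-1}=\tilde M_{t,n}\tilde N_{t,n}$ holds as an equality of $L^2(\mathbb{R})$ operators, which follows from Fubini since all kernels involved are absolutely integrable on the product domain by the same decay arguments used in Lemma \ref{zlem:4}.
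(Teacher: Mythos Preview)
Your proof is correct and follows essentially the same approach as the paper: both note that the abstract trace-ideal argument (boundedness of $P_n,P_n^{-1}$) suffices, and both give an explicit Hilbert--Schmidt factorization through $L^2(\Gamma_\beta)$. The only difference is cosmetic: the paper redistributes the $\beta$-dependent exponential factors as $\e^{-\frac{\im}{2}\psi_n(\beta,2t)}$ and $\e^{-\frac{\im}{2}\psi_n(\beta,0)}$ (rather than your symmetric split $\e^{-\frac{\im}{2}\psi_n(\beta,t)}$ on each factor), yielding operators $A_{t,n}$ and $B_n$ whose kernels are tailored for the ``integrable'' structure exploited in the subsequent Riemann--Hilbert analysis; for the present lemma either splitting works equally well.
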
 
\begin{proof}
We can either use abstract trace ideal reasoning ($P_n$ and $P_n^{-1}$ are bounded on $L^2(\mathbb{R})$) or we note that $J_{t,n}$ has kernel
\begin{equation}\label{z23}
	J_{t,n}(\alpha,\gamma)=\frac{1}{(2\pi)^2}\int_{\Gamma_{\beta}}\frac{\e^{\frac{\im}{2}\psi_n(\alpha,2t)-\frac{\im}{2}\psi_n(\beta,2t)}}{\alpha-\beta}\left[\int_{\mathbb{R}}\e^{\im z(\alpha-\beta)}\,\d\sigma(z)\right]\frac{\e^{-\frac{\im}{2}\psi_n(\beta,0)+\frac{\im}{2}\psi_n(\gamma,0)}}{\beta-\gamma}\,\d\beta
\end{equation}
and can thus be factored as $J_{t,n}=A_{t,n}B_{n}$ where $A_{t,n}:L^2(\Gamma_{\beta})\rightarrow L^2(\mathbb{R})$ and $B_{n}:L^2(\mathbb{R})\rightarrow L^2(\Gamma_{\beta})$ are Hilbert-Schmidt transformations with kernels
\begin{equation}\label{z24}
	A_{t,n}(\alpha,\beta):=\frac{1}{2\pi}\frac{\e^{\frac{\im}{2}\psi_n(\alpha,2t)-\frac{\im}{2}\psi_n(\beta,2t)}}{\alpha-\beta}\left[\int_{\mathbb{R}}\e^{\im z(\alpha-\beta)}\,\d\sigma(z)\right],\ \ \ \ \ \ B_{n}(\beta,\gamma):=\frac{1}{2\pi}\frac{\e^{-\frac{\im}{2}\psi_n(\beta,0)+\frac{\im}{2}\psi_n(\gamma,0)}}{\beta-\gamma}.
\end{equation}
This concludes our proof.
\end{proof}
We now summarize the above steps:
\begin{prop}\label{zprop:1} For every $(t,\lambda,n)\in\mathbb{R}\times\mathbb{C}\times\mathbb{N}$, on $L^2(\mathbb{R})$,
\begin{equation*}
	I-\lambda K_{t,n}\chi_{\mathbb{R}_+}=\mathcal{F}^{-1}P_n^{-1}(I-\lambda J_{t,n})P_n\mathcal{F},
\end{equation*}
where the bounded linear operators $\mathcal{F},P_n$ and $J_{t,n}$ on $L^2(\mathbb{R})$ are defined in \eqref{z21},\eqref{z22} and \eqref{z23}. In particular we record the determinant equality
\begin{equation}\label{z25}
	D_n(t,\lambda)=\det(I-\lambda J_{t,n}\upharpoonright_{L^2(\mathbb{R})}).
\end{equation}
\end{prop}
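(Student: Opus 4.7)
The proposition is essentially a clean packaging of the manipulations carried out in \eqref{z17}--\eqref{z24}, so my plan is to verify that each step can be read off as an equality of bounded operators (rather than a formal identity of kernels) and then invoke conjugation invariance of the Fredholm determinant.

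First I would observe that \eqref{z20} identifies the integral kernel of $K_{t,n}\chi_{\mathbb{R}_+}$ on $L^2(\mathbb{R})$ with the kernel of $\mathcal{F}^{-1}L_{t,n}\mathcal{F}$. Concretely, for any Schwartz function $f\in L^2(\mathbb{R})$, applying Fubini (justified by the absolute convergence ensured by \eqref{i13} and the contour constraints $0<\Im(\alpha-\beta)<\omega/2$, $\Im\beta<0$, already used in the derivation of \eqref{z18}), one has
\begin{equation*}
    (K_{t,n}\chi_{\mathbb{R}_+}f)(x)=\int_{\mathbb{R}}\int_{\mathbb{R}}\frac{\e^{\im x\alpha}}{\sqrt{2\pi}}L_{t,n}(\alpha,\gamma)(\mathcal{F}f)(\gamma)\,\d\alpha\,\d\gamma=(\mathcal{F}^{-1}L_{t,n}\mathcal{F}f)(x),
\end{equation*}
where the inner $\gamma$-integral is precisely $(L_{t,n}\mathcal{F}f)(\alpha)$ because $L_{t,n}$ is trace class on $L^2(\mathbb{R})$ by Lemma \ref{zlem:4}. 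By density of Schwartz functions and boundedness of all operators involved, the identity $K_{t,n}\chi_{\mathbb{R}_+}=\mathcal{F}^{-1}L_{t,n}\mathcal{F}$ extends to all of $L^2(\mathbb{R})$.

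Next, I would use the definition $J_{t,n}=P_nL_{t,n}P_n^{-1}$ from Lemma \ref{zlem:5}, which is meaningful as an identity of bounded operators because $P_n$ acts as multiplication by $\e^{-\frac{\im}{2}\psi_n(\alpha,0)}$ and both $P_n$ and $P_n^{-1}$ (multiplication by the reciprocal) are unitary on $L^2(\mathbb{R})$ since $\alpha\mapsto\psi_n(\alpha,0)=\frac{\alpha^{2n+1}}{2n+1}$ is real-valued. Rearranging gives $L_{t,n}=P_n^{-1}J_{t,n}P_n$, and chaining with the previous identity yields
\begin{equation*}
    K_{t,n}\chi_{\mathbb{R}_+}=\mathcal{F}^{-1}P_n^{-1}J_{t,n}P_n\mathcal{F}\quad\text{on }L^2(\mathbb{R}).
\end{equation*}
Multiplying by $\lambda$ and subtracting from the identity $I=\mathcal{F}^{-1}P_n^{-1}P_n\mathcal{F}$ produces the claimed operator equality.

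Finally, for \eqref{z25}, I would first note the standard fact that
\begin{equation*}
    D_n(t,\lambda)=\det\bigl(I-\lambda K_{t,n}\upharpoonright_{L^2(\mathbb{R}_+)}\bigr)=\det\bigl(I-\lambda K_{t,n}\chi_{\mathbb{R}_+}\upharpoonright_{L^2(\mathbb{R})}\bigr),
\end{equation*}
since $L^2(\mathbb{R}_+)$ is an invariant subspace for $K_{t,n}\chi_{\mathbb{R}_+}$ on $L^2(\mathbb{R})$ and the operator vanishes on its orthogonal complement. Then the conjugation invariance of the Fredholm determinant (Sylvester's identity for trace class perturbations conjugated by bounded invertibles) applied to the operator identity of the previous paragraph gives
\begin{equation*}
    D_n(t,\lambda)=\det\bigl(\mathcal{F}^{-1}P_n^{-1}(I-\lambda J_{t,n})P_n\mathcal{F}\bigr)=\det\bigl(I-\lambda J_{t,n}\upharpoonright_{L^2(\mathbb{R})}\bigr),
\end{equation*}
which is \eqref{z25}. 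The only subtle point in the whole argument is the Fubini-type justification of step one; this was already handled by the exponential decay in \eqref{i13} together with the placement of $\Gamma_\alpha,\Gamma_\beta$, so no new estimates are required.
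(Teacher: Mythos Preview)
Your proposal is correct and follows essentially the same approach as the paper: both trace back through \eqref{z20} and \eqref{z23} to obtain the operator conjugation identity, then invoke Sylvester's determinant identity for the Fredholm determinant equality. Your version simply spells out the details (density, unitarity of $P_n$, the invariant-subspace argument for passing from $L^2(\mathbb{R}_+)$ to $L^2(\mathbb{R})$) that the paper leaves implicit in its one-line proof.
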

\begin{proof} By Sylvester's determinant identity \cite[Chapter IV, $(5.9)$]{GGK} and by tracing back our steps in \eqref{z20} and \eqref{z23}.
\end{proof}
The above Proposition concludes the content of Section \ref{zsec2}.

\section{The operator-valued Riemann-Hilbert problem}\label{zsec3}
In this section we introduce a distinguished operator-valued Riemann-Hilbert problem (RHP) which will be central in the proof of Theorem \ref{itheo1}. Our starting point is \eqref{z25} above and the following stability result, cf. \cite[Proposition $1$]{TW} for a somewhat similar argument.
\begin{prop}\label{zprop:2} Let $\Gamma_{\alpha}$ denote the reflection of $\Gamma_{\beta}$ across the real axis where we fix $\Gamma_{\beta}:=\mathbb{R}-\im\Delta$ with $0<\Delta<\frac{\omega}{2}$, see Figure \ref{figz:3} below for the contours. Now define $J_{t,n}^{\circ}:L^2(\Gamma_{\alpha})\rightarrow L^2(\Gamma_{\alpha})$ as
\begin{equation*}
	(J_{t,n}^{\circ}f)(\xi):=\int_{\Gamma_{\alpha}}J_{t,n}(\xi,\eta)f(\eta)\,\d\eta,\ \ \ \ f\in L^2(\Gamma_{\alpha}),
\end{equation*}
with kernel $J_{t,n}(\xi,\eta)$ given in \eqref{z23}. Then $J_{t,n}^{\circ}$ is trace class on $L^2(\Gamma_{\alpha})$ and we have the equality
\begin{equation}\label{z26}
	D_n(t,\lambda)=\det(I-\lambda J_{t,n}^{\circ}\upharpoonright_{L^2(\Gamma_{\alpha})}),\ \ \ \ \ \ \ \ \ (t,\lambda,n)\in\mathbb{R}\times\mathbb{C}\times\mathbb{N}.
\end{equation}
\end{prop}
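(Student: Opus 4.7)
The plan is to reduce both assertions to the factorization $J_{t,n}=A_{t,n}B_n$ from Lemma \ref{zlem:5} coupled with a Cauchy theorem argument. For the trace class property, I would mirror Lemma \ref{zlem:5} by introducing $A_{t,n}^{\circ}:L^2(\Gamma_\beta)\to L^2(\Gamma_\alpha)$ and $B_n^{\circ}:L^2(\Gamma_\alpha)\to L^2(\Gamma_\beta)$ via the very same kernels \eqref{z24}, now with the ``$\mathbb{R}$-variable'' living on $\Gamma_\alpha=\mathbb{R}+\im\Delta$. The Cauchy-like factors $(\alpha-\beta)^{-1}$ and $(\beta-\gamma)^{-1}$ are bounded by $(2\Delta)^{-1}$ on $\Gamma_\alpha\times\Gamma_\beta$; the exponentials $e^{\pm\frac{\im}{2}\psi_n(\cdot,\cdot)}$ display Gaussian decay like $e^{-c|\Re\cdot|^{2n}\Delta}$ along both $\Gamma_\alpha$ and $\Gamma_\beta$; and the inner Fourier-type integral $\int_\mathbb{R} e^{\im z(\alpha-\beta)}\d\sigma(z)$ stays uniformly bounded on $\Gamma_\alpha\times\Gamma_\beta$ thanks to \eqref{i13} together with the constraint $0<\Delta<\frac{\omega}{2}$, which forces $|\Im(\alpha-\beta)|=2\Delta<\omega$. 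Hence $A_{t,n}^{\circ}, B_n^{\circ}$ are Hilbert--Schmidt, and their composition $J_{t,n}^{\circ}=A_{t,n}^{\circ}B_n^{\circ}$ is trace class on $L^2(\Gamma_\alpha)$ by \cite[Theorem $3.7.4$]{S3}.

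For the determinant identity \eqref{z26}, I would invoke the Sylvester determinant identity \cite[Chapter IV, $(5.9)$]{GGK} on each side to obtain
\begin{equation*}
	\det\nolimits_{L^2(\mathbb{R})}(I-\lambda A_{t,n}B_n)=\det\nolimits_{L^2(\Gamma_\beta)}(I-\lambda B_nA_{t,n}),\qquad\det\nolimits_{L^2(\Gamma_\alpha)}(I-\lambda A_{t,n}^{\circ}B_n^{\circ})=\det\nolimits_{L^2(\Gamma_\beta)}(I-\lambda B_n^{\circ}A_{t,n}^{\circ}),
\end{equation*}
and would then argue that $B_nA_{t,n}$ and $B_n^{\circ}A_{t,n}^{\circ}$ coincide as operators on the common space $L^2(\Gamma_\beta)$. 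Since their integral kernels on $\Gamma_\beta\times\Gamma_\beta$ differ only in the contour of integration in
\begin{equation*}
	\int_{C}B_n(\beta,\xi)A_{t,n}(\xi,\beta')\,\d\xi,\ \ \ \ \ C\in\{\mathbb{R},\,\Gamma_\alpha\},\ \ (\beta,\beta')\in\Gamma_\beta\times\Gamma_\beta,
\end{equation*}
the claim reduces to a single-variable contour swap $\mathbb{R}\to\Gamma_\alpha$ in $\xi$, which together with Proposition \ref{zprop:1} delivers \eqref{z26}.

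I would justify this swap via Cauchy's theorem applied to the rectangle $[-R,R]\times[0,\Delta]$ in the $\xi$-plane. In the closed strip $0\le\Im\xi\le\Delta$ the integrand in $\xi$ is analytic, its only poles $\xi=\beta,\beta'$ lying on $\Gamma_\beta$ below the real line; the combined exponential $e^{\im\xi^{2n+1}/(2n+1)+\im t\xi}$ produced by the two $\psi_n$ contributions supplies Gaussian decay $e^{-c|\Re\xi|^{2n}\Im\xi}$ uniformly across the strip (with a polynomial $|\Re\xi|^{-2}$ fallback from the Cauchy factors on the real line itself), while the $\d\sigma$-integral remains uniformly bounded there by the same reasoning as above. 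Sending $R\to\infty$ kills the two vertical sides of the rectangle and yields the desired pointwise equality of kernels on $\Gamma_\beta\times\Gamma_\beta$. The main obstacle is precisely this uniform-in-$\Im\xi$ control of the inner Fourier-type integral over the shifting strip, which is the reason why the exponential decay hypothesis \eqref{i13} on $w'$ and the constraint $\Delta<\frac{\omega}{2}$ are essential to this method, compare Remark \ref{momrem}.
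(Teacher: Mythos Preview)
Your proof is correct and takes a genuinely different route from the paper. The paper establishes \eqref{z26} by working trace by trace: it writes $\tr_{L^2(\mathbb{R})}J_{t,n}^m$ as an $m$-fold integral over $\mathbb{R}^m$ and then, using the analyticity of $(z,w)\mapsto J_{t,n}(z,w)$ in a strip containing $\Gamma_\alpha\times\Gamma_\alpha$ (Subsection \ref{Stab}), consecutively deforms each of the $m$ integration contours from $\mathbb{R}$ to $\Gamma_\alpha$, concluding via the Plemelj--Smithies formula. Your approach instead passes through Sylvester's identity on both sides to land on the common space $L^2(\Gamma_\beta)$ and then performs a \emph{single} contour deformation in the inner variable $\xi$ to identify $B_nA_{t,n}$ with $B_n^{\circ}A_{t,n}^{\circ}$ as operators, which immediately gives equality of determinants. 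Your route is more economical in that it replaces an infinite family of $m$-fold deformations by one operator identity; the paper's route is slightly more self-contained in that it avoids the detour through $L^2(\Gamma_\beta)$ and Sylvester. Both rely on exactly the same analytic ingredients: the pole-free strip between $\mathbb{R}$ and $\Gamma_\alpha$, the decay of $e^{\im\psi_n}$ in the upper half-plane, and the uniform boundedness of the Fourier--Stieltjes integral guaranteed by \eqref{i13} with $2\Delta<\omega$.
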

\begin{proof} The operator $J_{t,n}^{\circ}$ is well-defined on $L^2(\Gamma_{\alpha})$ since $\Gamma_{\alpha}\cap\Gamma_{\beta}=\emptyset$ by our choice of contours and because of our discussion in Subsection \ref{FSint}. More is true: if $A_{t,n}^{\circ}:L^2(\Gamma_{\beta})\rightarrow L^2(\Gamma_{\alpha})$ and $B_{n}^{\circ}:L^2(\Gamma_{\alpha})\rightarrow L^2(\Gamma_{\beta})$ are the linear transformations with kernels given in \eqref{z24}, then 
\begin{equation*}
	\int_{\Gamma_{\alpha}}\int_{\Gamma_{\beta}}\big|A_{t,n}(\alpha,\beta)\big|^2\,|\d\beta|\,|\d\alpha|<\infty\ \ \ \ \ \textnormal{and}\ \ \ \ \ \int_{\Gamma_{\beta}}\int_{\Gamma_{\alpha}}\big|B_{n}(\beta,\alpha)\big|^2\,|\d\alpha|\,|\d\beta|<\infty,
\end{equation*}
so their composition $A_{t,n}^{\circ}B_{n}^{\circ}=J_{t,n}^{\circ}$ is trace class on $L^2(\Gamma_{\alpha})$. However, for any $m\in\mathbb{N}$,
\begin{equation}\label{z27}
	\tr_{L^2(\mathbb{R})}J_{t,n}^m=\int_{\mathbb{R}}\cdots\int_{\mathbb{R}}J_{t,n}(\zeta_1,\zeta_2)\cdot\ldots\cdot J_{t,n}(\zeta_{m-1},\zeta_m)J_t(\zeta_m,\zeta_1)\,\d\zeta_1\cdots\d\zeta_m
\end{equation}
and $J_{t,n}(z,w)$ in \eqref{z23} is analytic in a neighborhood of $(z,w)\in\mathbb{R}\times\mathbb{R}$ which contains $\Gamma_{\alpha}\times\Gamma_{\alpha}$, compare the discussion in Subsection \ref{Stab}. Thus, using the analytic and asymptotic properties of $J_{t,n}(z,w)$, we may consecutively replace $\mathbb{R}$ in \eqref{z27} by $\Gamma_{\alpha}$, obtaining en route
\begin{equation*}
	\tr_{L^2(\mathbb{R})}J_{t,n}^m=\int_{\Gamma_{\alpha}}\cdots\int_{\Gamma_{\alpha}}J_{t,n}(\zeta_1,\zeta_2)\cdot\ldots\cdot J_{t,n}(\zeta_{m-1},\zeta_m)J_t(\zeta_m,\zeta_1)\,\d\zeta_1\cdots\d\zeta_m=\tr_{L^2(\Gamma_{\alpha})}(J_{t,n}^{\circ})^m.
\end{equation*}
This shows that the trace of $(J_{t,n}^{\circ})^m$ on $L^2(\Gamma_{\alpha})$ equals the trace of $J_{t,n}^m$ on $L^2(\mathbb{R})$, in turn \eqref{z26} follows from \eqref{z25}, our above discussion on the trace class property of $J_{t,n}^{\circ}$ and the Plemelj-Smithies formula \cite[Chapter II, Theorem $3.1$]{GGK}. This concludes our proof.
\end{proof}

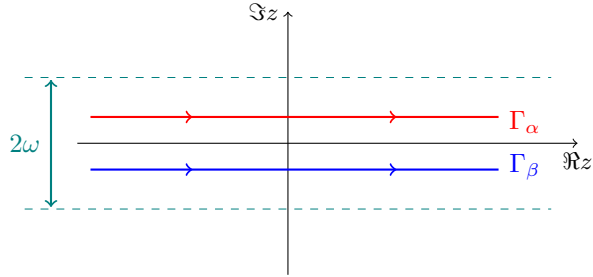
\begin{figure}[tbh]
\begin{tikzpicture}[xscale=0.7,yscale=0.7]
\draw [->] (-4,0) -- (5.5,0) node[below]{{\small $\Re z$}};
\draw [->] (0,-2.5) -- (0,2.5) node[left]{{\small $\Im z$}};

\draw [thick, color=red, decoration={markings, mark=at position 0.25 with {\arrow{>}}}, decoration={markings, mark=at position 0.75 with {\arrow{>}}}, postaction={decorate}] (-3.75,0.5) -- (4,0.5);
\node [color=red] at (4.5,0.4) {$\textcolor{red}{\Gamma_{\alpha}}$};
\draw [thick, color=blue, decoration={markings, mark=at position 0.25 with {\arrow{>}}}, decoration={markings, mark=at position 0.75 with {\arrow{>}}}, postaction={decorate}] (-3.75,-0.5) -- (4,-0.5);
\node [color=blue] at (4.5,-0.45) {$\Gamma_{\beta}$};

\draw [very thin, dashed, color=teal] (-5,-1.25) -- (5,-1.25);
\draw [very thin, dashed, color=teal] (-5,1.25) -- (5,1.25);
\draw [thick, color=teal, decoration={markings, mark=at position 0.05 with {\arrow{<}}}, decoration={markings, mark=at position 1 with {\arrow{>}}}, postaction={decorate}] (-4.5,-1.2) -- (-4.5,1.2);
\node [color=teal] at (-5,0) {$2\omega$};

\end{tikzpicture}
\caption{Our choice for $\Gamma_{\alpha,\beta}$ in Proposition \ref{zprop:2} with $0<\Delta=\textnormal{dist}(\Gamma_{\alpha},\mathbb{R})=\textnormal{dist}(\Gamma_{\beta},\mathbb{R})<\frac{\omega}{2}$.}
\label{figz:3}
\end{figure}

Equipped with the stability identities \eqref{z25},\eqref{z26} we now extend $J_{t,n}^{\circ}$ in the following sense: replace $J_{t,n}^{\circ}$ by
\begin{equation*}
	J_{t,n}^{\textnormal{ext}}:L^2(\Sigma)\rightarrow L^2(\Sigma),\ \ \ (J_{t,n}^{\textnormal{ext}}f)(\xi)=\int_{\Sigma}J_{t,n}^{\textnormal{ext}}(\xi,\eta)f(\eta)\,\d\eta,\ \ \ \ J_{t,n}^{\textnormal{ext}}(\xi,\eta):=J_{t,n}(\xi,\eta)\chi_{\Gamma_{\alpha}}(\xi)\chi_{\Gamma_{\alpha}}(\eta),
\end{equation*}
where the oriented contour $\Sigma\subset\mathbb{C}$ with $\Sigma\supset\Gamma_{\alpha}$, equipped with the arc-length measure, will be determined in Lemma \ref{zlem:6} below. Observe that this extension leaves $D_n(t,\lambda)$ invariant and we have
\begin{equation}\label{z28}
	D_n(t,\lambda)\stackrel[\eqref{z26}]{\eqref{z25}}{=}\det(I-\lambda J_{t,n}^{\textnormal{ext}}\upharpoonright_{L^2(\Sigma)})=\det(I-\lambda A_{t,n}^{\textnormal{ext}}B_{n}^{\textnormal{ext}}\upharpoonright_{L^2(\Sigma)}),\ \ \ (t,\lambda,n)\in\mathbb{R}\times\mathbb{C}\times\mathbb{N},
\end{equation}
where, recall \eqref{z24},
\begin{equation*}
	\,\,\,A_{t,n}^{\textnormal{ext}}:L^2(\Sigma)\rightarrow L^2(\Sigma),\ \ \ \ \ \ (A_{t,n}^{\textnormal{ext}}f)(\xi)=\int_{\Sigma}A_{t,n}^{\textnormal{ext}}(\xi,\eta)f(\eta)\,\d\eta,\ \ \ A_{t,n}^{\textnormal{ext}}(\xi,\eta):=A_{t,n}(\xi,\eta)\chi_{\Gamma_{\alpha}}(\xi)\chi_{\Gamma_{\beta}}(\eta),
\end{equation*}
\begin{equation*}
	B_{n}^{\textnormal{ext}}:L^2(\Sigma)\rightarrow L^2(\Sigma),\ \ \ \ \ \ (B_{n}^{\textnormal{ext}}g)(\eta)=\int_{\Sigma}B_{n}^{\textnormal{ext}}(\eta,\zeta)g(\zeta)\,\d\zeta,\ \ \ B_{n}^{\textnormal{ext}}(\eta,\zeta):=B_{n}(\eta,\zeta)\chi_{\Gamma_{\beta}}(\eta)\chi_{\Gamma_{\alpha}}(\zeta).
\end{equation*}
The benefit of extending $A_{t,n},B_{n}$ and thus $J_{t,n}^{\circ}$ in this fashion stems from the following improvement of the proof working in Lemma \ref{zlem:5}.
\begin{lem}\label{zlem:6} The operators $A_{t,n}^{\textnormal{ext}},B_{n}^{\textnormal{ext}}:L^2(\Sigma)\rightarrow L^2(\Sigma)$ with $\Sigma:=\mathbb{R}\sqcup\Gamma_{\beta}\sqcup\Gamma_{\alpha}$ are trace class on $L^2(\Sigma)$ for every $(t,n)\in\mathbb{R}\times\mathbb{N}$.
\end{lem}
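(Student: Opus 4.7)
The plan is to reduce trace class-ness on $L^2(\Sigma)$ to trace class-ness of two restricted operators $\widetilde{A}_{t,n}:L^2(\Gamma_{\beta})\rightarrow L^2(\Gamma_{\alpha})$ and $\widetilde{B}_{n}:L^2(\Gamma_{\alpha})\rightarrow L^2(\Gamma_{\beta})$ (carrying the same kernels as in \eqref{z24}), and then to factor each through an auxiliary $L^2$-space as the product of two Hilbert--Schmidt operators, in the spirit of Lemma \ref{zlem:5}. The reduction is structural: since $A_{t,n}^{\textnormal{ext}}$ has kernel supported only on $\Gamma_{\alpha}\times\Gamma_{\beta}$, it factors as $\iota_{\alpha}\widetilde{A}_{t,n}\pi_{\beta}$ with $\iota_{\alpha}:L^2(\Gamma_{\alpha})\hookrightarrow L^2(\Sigma)$ extension by zero and $\pi_{\beta}:L^2(\Sigma)\rightarrow L^2(\Gamma_{\beta})$ the canonical restriction (both contractive), and analogously for $B_{n}^{\textnormal{ext}}$.

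For $\widetilde{A}_{t,n}$ I would first reverse the integration by parts behind the kernel \eqref{z24}: on $\Gamma_{\alpha}\times\Gamma_{\beta}$ we have $0<\Im(\alpha-\beta)=2\Delta<\omega$, and \eqref{i13} gives $w(z)\leq\frac{1}{\omega}\e^{\omega z}$ for $z\leq-x_0$, so the $z=\pm\infty$ boundary terms vanish and
\begin{equation*}
	\frac{1}{\alpha-\beta}\int_{\mathbb{R}}\e^{\im z(\alpha-\beta)}w'(z)\,\d z=-\im\int_{\mathbb{R}}\e^{\im z(\alpha-\beta)}w(z)\,\d z.
\end{equation*}
This produces a factorization $\widetilde{A}_{t,n}=\widetilde{P}_{t,n}\widetilde{Q}_{t,n}$ through $L^2(\mathbb{R})$ with
\begin{equation*}
	\widetilde{P}_{t,n}(\alpha,z):=\frac{-\im}{\sqrt{2\pi}}\,\e^{\frac{\im}{2}\psi_n(\alpha,2t)+\im z\alpha}\sqrt{w(z)},\qquad \widetilde{Q}_{t,n}(z,\beta):=\frac{1}{\sqrt{2\pi}}\,\e^{-\frac{\im}{2}\psi_n(\beta,2t)-\im z\beta}\sqrt{w(z)}.
\end{equation*}
To check both factors are Hilbert--Schmidt, I would observe that on $\Gamma_{\alpha}=\mathbb{R}+\im\Delta$ one has $|\e^{\im z\alpha}|^2=\e^{-2z\Delta}$ and $\Re\bigl(\frac{\im}{2}\psi_n(u+\im\Delta,2t)\bigr)\sim-\frac{\Delta}{2}u^{2n}$ as $|u|\rightarrow\infty$ (same on $\Gamma_{\beta}$), so both HS norms decouple into a super-exponentially convergent contour integral times $\int_{\mathbb{R}}\e^{-2z\Delta}w(z)\,\d z$, the latter finite precisely because $\omega-2\Delta>0$.

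For $\widetilde{B}_{n}$ I would instead use the elementary identity $\frac{1}{\beta-\gamma}=-\im\int_0^{\infty}\e^{\im s(\gamma-\beta)}\,\d s$, valid on $\Gamma_{\beta}\times\Gamma_{\alpha}$ since $\Im(\gamma-\beta)=2\Delta>0$. Distributing the $\psi_n(\cdot,0)$ exponentials gives a factorization $\widetilde{B}_{n}=\widetilde{P}'_{n}\widetilde{Q}'_{n}$ through $L^2(\mathbb{R}_+)$, and the two HS norms decouple into super-exponentially decaying contour integrals and the trivial $\int_0^{\infty}\e^{-2s\Delta}\,\d s=\frac{1}{2\Delta}$. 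The main obstacle is the integration-by-parts identity for $\widetilde{A}_{t,n}$: the vanishing of the $z\rightarrow-\infty$ boundary term is what pins down the contour separation $2\Delta<\omega$ fixed in Proposition \ref{zprop:2} and forces the genuinely exponential decay of $w'$ assumed in \eqref{i13}, consistently with the warning in Remark \ref{momrem}. Once this identity is in hand, the remaining HS estimates amount to bookkeeping with the super-exponential decay inherited from the $(2n+1)$-th power in $\psi_n$ along the shifted contours.
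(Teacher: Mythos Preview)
Your argument is correct and, for $A_{t,n}^{\textnormal{ext}}$, coincides with the paper's: both undo the integration by parts to recover $w(z)$ and split the resulting kernel via $\sqrt{w(z)}$ into two Hilbert--Schmidt factors through $L^2(\mathbb{R})$. (Minor slip: your Laplace identity for $B_n$ should read $\frac{1}{\beta-\gamma}=\im\int_0^{\infty}\e^{\im s(\gamma-\beta)}\,\d s$; the sign is immaterial for the HS estimates.)

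For $B_n^{\textnormal{ext}}$ the factorizations differ. The paper inserts the real line sitting inside $\Sigma$ via the residue identity
\[
	-\frac{1}{2\pi\im}\int_{\mathbb{R}}\frac{\d\delta}{(\gamma-\delta)(\delta-\beta)}=\frac{1}{\gamma-\beta},\qquad (\gamma,\beta)\in\Gamma_{\alpha}\times\Gamma_{\beta},
\]
obtaining $B_n^{\textnormal{ext}}=B_{n,1}^{\textnormal{ext}}B_{n,2}^{\textnormal{ext}}$ as operators on $L^2(\Sigma)$ with the intermediate variable on $\mathbb{R}\subset\Sigma$; the two HS kernels are Cauchy-type, $\frac{\e^{\mp\frac{\im}{2}\psi_n(\cdot,0)}}{\cdot-\delta}$. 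Your Laplace representation instead routes $\widetilde{B}_n$ through the external space $L^2(\mathbb{R}_+)$ with oscillatory-exponential kernels. Both are valid; the paper's choice explains why $\mathbb{R}$ is built into $\Sigma$ and keeps all factorizations internal to $L^2(\Sigma)$, while your choice is slightly more elementary (no residue computation) and makes the convergent $s$-integral $\int_0^{\infty}\e^{-2s\Delta}\,\d s$ manifest. The reduction via $\iota_{\alpha},\pi_{\beta}$ is equivalent to the paper's use of characteristic functions on $\Sigma$.
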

\begin{proof} By residue theorem, cf. \cite[Lemma $3.1$]{BB}, for every $(\gamma,\beta)\in\Gamma_{\alpha}\times\Gamma_{\beta}$,
\begin{equation*}
	-\frac{1}{2\pi\im}\int_{\mathbb{R}}\frac{\d\delta}{(\gamma-\delta)(\delta-\beta)}=\frac{1}{\gamma-\beta},
\end{equation*}
so that with \eqref{z24}
\begin{equation*}
	B_{n}^{\textnormal{ext}}(\beta,\gamma)=-\frac{\im}{(2\pi)^2}\int_{\mathbb{R}}\frac{\e^{-\frac{\im}{2}\psi_n(\beta,0)+\frac{\im}{2}\psi_n(\gamma,0)}}{(\gamma-\delta)(\delta-\beta)}\,\d\delta\,\chi_{\Gamma_{\beta}}(\beta)\chi_{\Gamma_{\alpha}}(\gamma).
\end{equation*}
This allows us to factor $B_{n}^{\textnormal{ext}}$ as $B_{n}^{\textnormal{ext}}=B_{n,1}^{\textnormal{ext}}B_{n,2}^{\textnormal{ext}}$ where $B_{n,j}^{\textnormal{ext}}:L^2(\Sigma)\rightarrow L^2(\Sigma)$ have Hilbert-Schmidt kernels
\begin{equation*}
	B_{n,1}^{\textnormal{ext}}(\beta,\delta):=-\frac{\im}{2\pi}\frac{\e^{-\frac{\im}{2}\psi_n(\beta,0)}}{\beta-\delta}\chi_{\Gamma_{\beta}}(\beta)\chi_{\mathbb{R}}(\delta),\ \ \ \ \ B_{n,2}^{\textnormal{ext}}(\delta,\gamma):=\frac{1}{2\pi}\frac{\e^{\frac{\im}{2}\psi_n(\gamma,0)}}{\delta-\gamma}\chi_{\mathbb{R}}(\delta)\chi_{\Gamma_{\alpha}}(\gamma).
\end{equation*}
Similarly, using again \eqref{z24} and integration by parts,
\begin{equation*}
	A_{t,n}^{\textnormal{ext}}(\alpha,\beta)=-\frac{\im}{2\pi}\int_{\mathbb{R}}\e^{\frac{\im}{2}\psi_n(\alpha,2t)-\frac{\im}{2}\psi_n(\beta,2t)}\e^{\im z(\alpha-\beta)}w(z)\,\d z\,\chi_{\Gamma_{\alpha}}(\alpha)\chi_{\Gamma_{\beta}}(\beta)
\end{equation*}
and thus $A_{t,n}^{\textnormal{ext}}=A_{t,n,1}^{\textnormal{ext}}A_{t,n,2}^{\textnormal{ext}}$ where $A_{t,n,j}^{\textnormal{ext}}:L^2(\Sigma)\rightarrow L^2(\Sigma)$ have Hilbert-Schmidt kernels
\begin{equation*}
	A_{t,n,1}^{\textnormal{ext}}(\alpha,z):=-\frac{\im}{2\pi}\e^{\frac{\im}{2}\psi_n(\alpha,2t)+\im z\alpha}\sqrt{w(z)}\,\chi_{\Gamma_{\alpha}}(\alpha)\chi_{\mathbb{R}}(z),\ \ \ A_{t,n,2}^{\textnormal{ext}}(z,\beta):=\e^{-\frac{\im}{2}\psi_n(\beta,2t)-\im z\beta}\sqrt{w(z)}\,\chi_{\mathbb{R}}(z)\chi_{\Gamma_{\beta}}(\beta).
\end{equation*}
In summary, both $A_{t,n}^{\textnormal{ext}}$ and $B_n^{\textnormal{ext}}$ admit Hilbert-Schmidt factorizations on $L^2(\Sigma)$ and are thus trace class on the same space. This completes our proof.
\end{proof}
Lemma \ref{zlem:6} is useful since, by continuity of $(\alpha,\beta)\mapsto A_{t,n}^{\textnormal{ext}}(\alpha,\beta)$ and $(\beta,\gamma)\mapsto B_{n}^{\textnormal{ext}}(\beta,\gamma)$ on $\Sigma\times\Sigma$, it allows us to compute the operator traces, cf. \cite[Theorem $3.9$]{S},
\begin{equation}\label{z29}
	\tr_{L^2(\Sigma)}A_{t,n}^{\textnormal{ext}}=\int_{\Sigma}A_{t,n}^{\textnormal{ext}}(z,z)\,\d z=0,\ \ \ \ \ \ \ \ \tr_{L^2(\Sigma)}B_{n}^{\textnormal{ext}}=\int_{\Sigma}B_{n}^{\textnormal{ext}}(z,z)\,\d z=0.
\end{equation} 
More importantly, and more generally, the operators $A_{t,n}^{\textnormal{ext}},B_{n}^{\textnormal{ext}}:L^2(\Sigma)\rightarrow L^2(\Sigma)$ are nilpotent,
\begin{equation}\label{z30}
	A_{t,n}^{\textnormal{ext}}A_{t,n}^{\textnormal{ext}}=0\ \ \ \ \textnormal{and}\ \ \ \ \ B_{n}^{\textnormal{ext}}B_{n}^{\textnormal{ext}}=0\ \ \ \textnormal{on}\ \ L^2(\Sigma)
\end{equation}
for every $(t,n)\in\mathbb{R}\times\mathbb{N}$ given that $\Gamma_{\alpha}\cap\Gamma_{\beta}=\emptyset$. This simple observation lies at the heart of the following factorization identity.
\begin{lem}\label{zlem:7} For every $(t,\lambda,n)\in\mathbb{R}\times\mathbb{C}\times\mathbb{N}$, we have on $L^2(\Sigma)$,
\begin{equation*}
	\big(I+\lambda^{\frac{1}{2}}A_{t,n}^{\textnormal{ext}}\big)\big(I-\lambda^{\frac{1}{2}}(A_{t,n}^{\textnormal{ext}}+B_{n}^{\textnormal{ext}})\big)\big(I+\lambda^{\frac{1}{2}}B_{n}^{\textnormal{ext}}\big)=I-\lambda A_{t,n}^{\textnormal{ext}}B_{n}^{\textnormal{ext}}=I-\lambda J_{t,n}^{\textnormal{ext}},
\end{equation*}
with an arbitrary, but throughout fixed, branch for $\lambda^{\frac{1}{2}}$.
\end{lem}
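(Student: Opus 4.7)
The plan is to verify this identity by direct algebraic expansion, using the nilpotency \eqref{z30} as the only non-trivial input. Let me abbreviate $A := A_{t,n}^{\textnormal{ext}}$ and $B := B_n^{\textnormal{ext}}$, viewed as bounded operators on $L^2(\Sigma)$. The second equality $I-\lambda A_{t,n}^{\textnormal{ext}}B_{n}^{\textnormal{ext}}=I-\lambda J_{t,n}^{\textnormal{ext}}$ is immediate from the kernel computation
\begin{equation*}
(AB)(\xi,\zeta)=\chi_{\Gamma_{\alpha}}(\xi)\chi_{\Gamma_{\alpha}}(\zeta)\int_{\Gamma_{\beta}}A_{t,n}(\xi,\eta)B_{n}(\eta,\zeta)\,\d\eta=J_{t,n}^{\textnormal{ext}}(\xi,\zeta),
\end{equation*}
where the projection onto $\Gamma_\beta$ in the $\eta$ integration comes from the cutoff factors in the definitions of $A$ and $B$.

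For the first equality, I would expand the triple product from left to right. Since $A^2=0$ by \eqref{z30}, I have
\begin{equation*}
(I+\lambda^{\frac12}A)\bigl(I-\lambda^{\frac12}(A+B)\bigr)=I-\lambda^{\frac12}B-\lambda A^2-\lambda AB=I-\lambda^{\frac12}B-\lambda AB.
\end{equation*}
Multiplying this on the right by $(I+\lambda^{\frac12}B)$ and using $B^2=0$,
\begin{equation*}
\bigl(I-\lambda^{\frac12}B-\lambda AB\bigr)(I+\lambda^{\frac12}B)=I+\lambda^{\frac12}B-\lambda^{\frac12}B-\lambda B^2-\lambda AB-\lambda^{\frac32}AB^2=I-\lambda AB,
\end{equation*}
which is the claimed identity. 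Observe that the cross term $BA$ (which is in general nonzero, since $\Gamma_\alpha\cap\Gamma_\alpha\neq\emptyset$ inside the $\eta$-integration) never appears thanks to the particular ordering of factors on the left-hand side, so that the only algebraic input needed is the nilpotency of $A$ and $B$ established in \eqref{z30}. There is no analytic obstacle here: all operators are trace class on $L^2(\Sigma)$ by Lemma \ref{zlem:6} and the product is well-defined in the trace ideal. Hence the lemma follows purely from the two-line calculation above.
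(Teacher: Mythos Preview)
Your proof is correct and follows exactly the approach the paper indicates: the paper's own proof is the one-line remark ``By straightforward algebra, using \eqref{z30},'' and you have simply written out that algebra in full. The second equality $A_{t,n}^{\textnormal{ext}}B_n^{\textnormal{ext}}=J_{t,n}^{\textnormal{ext}}$ is already part of the definition in \eqref{z28}, so your kernel verification is fine but not strictly needed.
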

\begin{proof} By straightforward algebra, using \eqref{z30}.
\end{proof}
It is now time to summarize our previous steps.
\begin{prop}\label{zprop:3} For every $(t,\lambda,n)\in\mathbb{R}\times\mathbb{C}\times\mathbb{N}$,
\begin{equation}\label{z31}
	D_n(t,\lambda)=\det(I-\lambda^{\frac{1}{2}}C_{t,n}\upharpoonright_{L^2(\Sigma)}),
\end{equation}
where $C_{t,n}:L^2(\Sigma)\rightarrow L^2(\Sigma)$ with $\Sigma=\mathbb{R}\sqcup\Gamma_{\beta}\sqcup\Gamma_{\alpha}$ is trace class and has kernel
\begin{align}\label{z32}
	(\xi-\eta)C_{t,n}(\xi,\eta):=\frac{1}{2\pi}\int_{\mathbb{R}}\Big(\e^{\frac{\im}{2}(\psi_n(\xi,2t+2z)-\psi_n(\eta,2t+2z))}&\,\chi_{\Gamma_{\alpha}}(\xi)\chi_{\Gamma_{\beta}}(\eta)\nonumber\\
	&\,+\e^{-\frac{\im}{2}(\psi_n(\xi,0)-\psi_n(\eta,0))}\chi_{\Gamma_{\beta}}(\xi)\chi_{\Gamma_{\alpha}}(\eta)\Big)\d\sigma(z).
\end{align}
\end{prop}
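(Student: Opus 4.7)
\medskip

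\textbf{Proof proposal.} The statement is essentially an algebraic corollary of Lemma \ref{zlem:7} together with the nilpotency \eqref{z30} and the vanishing traces \eqref{z29}, so my plan is to take the identity of Lemma \ref{zlem:7} to the level of Fredholm determinants and read off $C_{t,n}$ from the middle factor. Concretely, starting from the identity \eqref{z28},
\begin{equation*}
	D_n(t,\lambda)=\det(I-\lambda J_{t,n}^{\textnormal{ext}}\upharpoonright_{L^2(\Sigma)}),
\end{equation*}
I would apply Lemma \ref{zlem:7} on $L^2(\Sigma)$ and use multiplicativity of the Fredholm determinant for operators of the form $I+(\textnormal{trace class})$ (all three factors qualify, since both $A_{t,n}^{\textnormal{ext}}$ and $B_n^{\textnormal{ext}}$ are trace class by Lemma \ref{zlem:6}) to obtain
\begin{equation*}
	\det\bigl(I-\lambda J_{t,n}^{\textnormal{ext}}\bigr)=\det\bigl(I+\lambda^{\frac{1}{2}}A_{t,n}^{\textnormal{ext}}\bigr)\det\bigl(I-\lambda^{\frac{1}{2}}(A_{t,n}^{\textnormal{ext}}+B_n^{\textnormal{ext}})\bigr)\det\bigl(I+\lambda^{\frac{1}{2}}B_n^{\textnormal{ext}}\bigr).
\end{equation*}

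The main simplification comes next: by \eqref{z30} we have $(A_{t,n}^{\textnormal{ext}})^2=0=(B_n^{\textnormal{ext}})^2$, so $I\pm\lambda^{\frac{1}{2}}A_{t,n}^{\textnormal{ext}}$ is invertible with inverse $I\mp\lambda^{\frac{1}{2}}A_{t,n}^{\textnormal{ext}}$ (similarly for $B_n^{\textnormal{ext}}$), and the Plemelj--Smithies expansion $\log(I+T)=\sum_{k\geq 1}(-1)^{k+1}T^k/k$ terminates after one term in the trace. Taking traces, only $\textnormal{tr}\,A_{t,n}^{\textnormal{ext}}$ and $\textnormal{tr}\,B_n^{\textnormal{ext}}$ can survive, and these are zero by \eqref{z29}. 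Consequently $\det(I+\lambda^{\frac{1}{2}}A_{t,n}^{\textnormal{ext}})=\det(I+\lambda^{\frac{1}{2}}B_n^{\textnormal{ext}})=1$, and the above chain collapses to
\begin{equation*}
	D_n(t,\lambda)=\det\bigl(I-\lambda^{\frac{1}{2}}(A_{t,n}^{\textnormal{ext}}+B_n^{\textnormal{ext}})\upharpoonright_{L^2(\Sigma)}\bigr).
\end{equation*}
This forces the identification $C_{t,n}:=A_{t,n}^{\textnormal{ext}}+B_n^{\textnormal{ext}}$, which is trace class on $L^2(\Sigma)$ as a sum of two such operators, giving \eqref{z31}.

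To finish, the kernel formula \eqref{z32} is obtained by directly adding the kernels written out in \eqref{z24} (restricted by the appropriate indicator functions). For the $A$-part, I would factor $\psi_n(\xi,2t)-\psi_n(\eta,2t)+2z(\xi-\eta)=\psi_n(\xi,2t+2z)-\psi_n(\eta,2t+2z)$ inside the $z$-integration, yielding the first summand in \eqref{z32} upon multiplication by $(\xi-\eta)$. For the $B$-part the factor $\xi-\eta$ cancels the Cauchy denominator, and the trivial identity $1=\int_{\mathbb{R}}w'(z)\,\d z=\int_{\mathbb{R}}\d\sigma(z)$ (which holds by the boundary conditions in \eqref{i13}) allows me to absorb the resulting constant into an integral against $\d\sigma(z)$, producing the second summand in \eqref{z32}. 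I do not anticipate any real obstacle; the only point requiring mild care is the justification of multiplicativity of the Fredholm determinant in combination with the use of the exponential formula for $\det(I+\lambda^{\frac{1}{2}}A_{t,n}^{\textnormal{ext}})$, but both follow at once from the trace class and nilpotency properties already established.
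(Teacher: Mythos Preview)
Your proposal is correct and follows essentially the same route as the paper: invoke Lemma \ref{zlem:7}, use multiplicativity of the Fredholm determinant together with \eqref{z29}, \eqref{z30} and the Plemelj--Smithies formula to kill the two outer factors, then identify $C_{t,n}=A_{t,n}^{\textnormal{ext}}+B_n^{\textnormal{ext}}$ and read off its kernel. Your explicit verification of \eqref{z32}---in particular the rewriting via $\psi_n(\xi,2t)+2z\xi=\psi_n(\xi,2t+2z)$ for the $A$-part and the insertion of $1=\int_{\mathbb{R}}\d\sigma(z)$ for the $B$-part---spells out details the paper leaves implicit.
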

\begin{proof} The operator $C_{t,n}=A_{t,n}^{\textnormal{ext}}+B_{n}^{\textnormal{ext}}$ is trace class on $L^2(\Sigma)$ as sum of two trace class operators on the same space, recall Lemma \ref{zlem:6}. Moreover, by the Plemelj-Smithies formula and \eqref{z29},\eqref{z30} we have
\begin{equation*}
	\det(I+\lambda^{\frac{1}{2}}A_{t,n}^{\textnormal{ext}}\upharpoonright_{L^2(\Sigma)})=\exp\left[-\sum_{k=1}^{\infty}(-1)^k\frac{\lambda^{\frac{k}{2}}}{k}\tr_{L^2(\Sigma)}(A_{t,n}^{\textnormal{ext}})^k\right]=1,
\end{equation*}
and similarly $\det(I+\lambda^{\frac{1}{2}}B_{n}^{\textnormal{ext}}\upharpoonright_{L^2(\Sigma)})=1$. Hence, by Lemma \ref{zlem:7}, the factorization identity \cite[$(3.10)$]{S} and \eqref{z28} we find \eqref{z31} with the indicated kernel \eqref{z32}. This completes our proof.
\end{proof}

After having arrived at \eqref{z31} with the particular kernel structure \eqref{z32} we now proceed as in \cite[Section $9.1,9.2$]{B}, throughout relying on the abbreviations summarized in Appendix \ref{appC}.
\begin{definition}\label{zdef:1} Let $M_i(\zeta)\otimes K_j(\zeta)\in\mathcal{I}(\mathcal{H}_1),i,j=1,2$ denote the $\Sigma\ni\zeta$-parametric family of rank one integral operators with kernels
\begin{equation*}
	\big(M_i(\zeta)\otimes K_j(\zeta)\big)(x,y):=m_i(\zeta|x)k_j(\zeta|y),\ \ \ \ \ x,y\in\mathbb{R},
\end{equation*}
defined in terms of the $\Sigma\ni\zeta$-parametric family of functions\footnote{These functions also depend on $(t,n)\in\mathbb{R}\times\mathbb{N}$, but we do not highlight this in our notation.}
\begin{equation*}
	k_1(\zeta|y):=\frac{1}{2\pi}\e^{\frac{\im}{2}\psi_n(\zeta,2t+2y)}\chi_{\Gamma_{\alpha}}(\zeta),\ \ k_2(\zeta|y):=\frac{1}{2\pi}\e^{-\frac{\im}{2}\psi_n(\zeta,0)}\chi_{\Gamma_{\beta}}(\zeta),\ \ m_1(\zeta|x):=\e^{-\frac{\im}{2}\psi_n(\zeta,2t+2x)}\chi_{\Gamma_{\beta}}(\zeta),
\end{equation*}
\begin{equation}\label{z33}
	m_2(\zeta|x):=\e^{\frac{\im}{2}\psi_n(\zeta,0)}\chi_{\Gamma_{\alpha}}(\zeta).
\end{equation}
Equivalently, $M_i(\zeta)$ are the operators on $\mathcal{H}_1$ which multiply by the functions $m_i(\zeta|x)$ and $K_j(\zeta)$ are the integral operators on $\mathcal{H}_1$ with kernel $k_j(\zeta|y)$.
\end{definition}
\begin{remark}
Observe that, in this way, we can write the kernel of the operator $C_{t,n}$ in ``integrable'' form. Indeed, \eqref{z32} reads
\begin{equation}\label{z33bis}
	(\xi - \eta)C_{t,n}(\xi,\eta) = \int_{\mathbb R} \Big( k_1(\xi|z)m_1(\eta|z) + k_2(\xi|z)m_2(\eta|z) \Big) \d\sigma(z),
\end{equation}
see also equation \eqref{i27}.
\end{remark}
Now consider the below $\mathcal{I}(\mathcal{H}_2)$-valued RHP, the central operator-valued RHP of this text.
\begin{problem}\label{zmaster} Given $(t,\lambda,n)\in\mathbb{R}\times\overline{\mathbb{D}_1(0)}\times\mathbb{N}$, determine ${\bf X}(\zeta)={\bf X}(\zeta;t,\lambda,n)\in\mathcal{I}(\mathcal{H}_2)$ such that
\begin{enumerate}
	\item[(1)] ${\bf X}(\zeta)=\mathbb{I}_2+{\bf X}_0(\zeta)$ and ${\bf X}_0(\zeta)\in\mathcal{I}(\mathcal{H}_2)$ with kernel ${\bf X}_0(\zeta|x,y)$ is analytic in $\mathbb{C}\setminus\Sigma$.
	\item[(2)] ${\bf X}(\zeta)$ admits continuous boundary values ${\bf X}_{\pm}(\zeta)\in\mathcal{I}(\mathcal{H}_2)$ on $\Sigma$, oriented as shown in Figure \ref{figz:3}, which satisfy ${\bf X}_+(\zeta)={\bf X}_-(\zeta){\bf G}(\zeta)$ with
	\begin{equation}\label{z34}
		{\bf G}(\zeta)=\mathbb{I}_2+2\pi\im\lambda^{\frac{1}{2}}\begin{bmatrix}M_1(\zeta)\otimes K_1(\zeta) & M_1(\zeta)\otimes K_2(\zeta)\smallskip\\
		M_2(\zeta)\otimes K_1(\zeta) & M_2(\zeta)\otimes K_2(\zeta)\end{bmatrix},\ \ \ \ \zeta\in\Sigma,
	\end{equation}
	using the same branch for $\lambda^{\frac{1}{2}}$ as in Lemma \ref{zlem:7}.
	\item[(3)] There exists $c=c(n,t)>0$ such that for $\zeta\in\mathbb{C}\setminus\Sigma$,
	\begin{equation}\label{z35}
		\|{\bf X}_0(\zeta|x,y)\|\leq\frac{c\sqrt{|\lambda|}}{1+|\zeta|}\,\Delta^{-\frac{1}{4n}}\e^{-\frac{(-1)^n\Delta}{2(2n+1)}\Delta^{2n}}\e^{\Delta(|x|+|y|+|t|)},\ \ \Delta:=\textnormal{dist}(\Gamma_{\alpha},\mathbb{R})=\textnormal{dist}(\Gamma_{\beta},\mathbb{R})>0,
	\end{equation}
	uniformly in $(x,y)\in\mathbb{R}^2$ and $\lambda\in\overline{\mathbb{D}_1(0)}$.
\end{enumerate}
\end{problem}
\begin{rem} The real line does not enter explicitly in property $(2)$ of RHP \ref{zmaster}, compare \eqref{z33}. Nevertheless we require $\Sigma=\mathbb{R}\sqcup\Gamma_{\alpha}\sqcup\Gamma_{\beta}$ in the proof of Theorem \ref{z:theo1} below: since $C_{t,n}$ is trace class on $L^2(\Sigma)$, the non-vanishing of $D_n(t,\lambda)$ for every $(t,\lambda,n)\in\mathbb{R}\times\overline{\mathbb{D}_1(0)}\times\mathbb{N}$ by Lemma \ref{zlem:3} yields invertibility of $I-\lambda^{\frac{1}{2}}C_{t,n}$ on $L^2(\Sigma)$ by \eqref{z31} in the same parameter range, cf. \cite[Theorem $3.5$(b)]{S}. This observation turns out to be central to the solvability of RHP \ref{zmaster}.
\end{rem}
In our first observation below we record that properties $(1)-(3)$ in RHP \ref{zmaster} define ${\bf X}(\zeta)$ unambiguously.
\begin{lem}\label{z:lem7} The $\mathcal{I}(\mathcal{H}_2)$-valued RHP \ref{zmaster}, if solvable, is uniquely solvable.
\end{lem}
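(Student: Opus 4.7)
The plan is to follow the classical uniqueness argument for Riemann--Hilbert problems, transplanted into the operator-ideal framework of Appendix \ref{appC}. The cornerstone is an algebraic identity for the jump factor \eqref{z34}. Because $\Gamma_{\alpha}\cap\Gamma_{\beta}=\emptyset$, the indicator functions in \eqref{z33} force the scalar pairings
\begin{equation*}
	\int_{\mathbb{R}}k_j(\zeta|y)m_j(\zeta|y)\,\d y=0,\qquad \zeta\in\Sigma,\ j=1,2,
\end{equation*}
so that, writing $\mathbf{R}(\zeta):=(\mathbf{G}(\zeta)-\mathbb{I}_2)/(2\pi\im\lambda^{\frac{1}{2}})$, a short block computation yields $\mathbf{R}(\zeta)^2=0$ together with $\tr\mathbf{R}(\zeta)=0$. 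Consequently
\begin{equation*}
	\mathbf{G}(\zeta)^{-1}=\mathbb{I}_2-2\pi\im\lambda^{\frac{1}{2}}\mathbf{R}(\zeta),\qquad \det\mathbf{G}(\zeta)\equiv 1\ \ \textnormal{on}\ \Sigma,
\end{equation*}
the Fredholm determinant being $1$ because nilpotence truncates $\log\mathbf{G}$ to its linear term.

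Now let $\mathbf{X}_1,\mathbf{X}_2$ be two solutions of RHP \ref{zmaster}. My first goal is to show that $\mathbf{X}_2(\zeta)$ is invertible in $\mathcal{I}(\mathcal{H}_2)$ for every $\zeta\in\mathbb{C}\setminus\Sigma$. To this end I would introduce $d(\zeta):=\det\mathbf{X}_2(\zeta)=\det(\mathbb{I}_2+\mathbf{X}_{2,0}(\zeta))$ as a Fredholm determinant on $\mathcal{H}_2$, analytic on $\mathbb{C}\setminus\Sigma$ since there $\mathbf{X}_{2,0}(\zeta)$ lies analytically in the relevant trace ideal. Across $\Sigma$, multiplicativity of the Fredholm determinant combined with $\det\mathbf{G}\equiv 1$ gives $d_+(\zeta)=d_-(\zeta)\det\mathbf{G}(\zeta)=d_-(\zeta)$, so $d$ extends to an entire scalar function. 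The bound \eqref{z35} forces $d(\zeta)\to 1$ as $|\zeta|\to\infty$; by scalar Liouville, $d\equiv 1$, and in particular $\mathbf{X}_2(\zeta)$ is invertible in $\mathcal{I}(\mathcal{H}_2)$ throughout $\mathbb{C}\setminus\Sigma$.

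Given invertibility, I would form the ratio $\mathbf{Y}(\zeta):=\mathbf{X}_1(\zeta)\mathbf{X}_2(\zeta)^{-1}$. The shared jump relation immediately yields
\begin{equation*}
	\mathbf{Y}_+(\zeta)=\mathbf{X}_{1,-}(\zeta)\mathbf{G}(\zeta)\mathbf{G}(\zeta)^{-1}\mathbf{X}_{2,-}(\zeta)^{-1}=\mathbf{Y}_-(\zeta),\qquad \zeta\in\Sigma,
\end{equation*}
so $\mathbf{Y}-\mathbb{I}_2$ extends to an entire operator-valued function that tends to $0$ as $|\zeta|\to\infty$, once more by \eqref{z35}. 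Testing $\mathbf{Y}(\zeta)-\mathbb{I}_2$ against arbitrary $\phi,\psi\in\mathcal{H}_2$ reduces matters to scalar Liouville and forces $\mathbf{Y}\equiv\mathbb{I}_2$, i.e.\ $\mathbf{X}_1\equiv\mathbf{X}_2$.

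The main technical obstacle is the careful bookkeeping inside the operator ideal $\mathcal{I}(\mathcal{H}_2)$ of Appendix \ref{appC}: multiplicativity of the Fredholm determinant across the operator-valued jump, the conversion of the kernelwise bound \eqref{z35} into the trace-norm decay of $\mathbf{X}_0(\zeta)$ needed at infinity (which uses $2\Delta<\omega$ to absorb the exponential growth factor against the weighted inner product entering $\mathcal{H}_2$), and the weak-versus-norm holomorphy comparison underlying the operator Liouville step. Each of these is routine within the framework developed in \cite[\S 9.1]{B}; what makes the whole argument elementary is the nilpotence $\mathbf{R}^2=0$, a direct consequence of the disjointness $\Gamma_{\alpha}\cap\Gamma_{\beta}=\emptyset$, which delivers $\det\mathbf{G}\equiv 1$ and the explicit formula for $\mathbf{G}^{-1}$ for free.
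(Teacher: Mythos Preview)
Your argument is correct and follows essentially the same route as the paper's proof: show $\det\mathbf{G}\equiv 1$ via nilpotence and traceless-ness of $\mathbf{G}_0$, deduce invertibility of any solution by applying scalar Liouville to its Fredholm determinant, then form the ratio $\mathbf{Y}=\mathbf{X}_1\mathbf{X}_2^{-1}$ and invoke Liouville again. One cosmetic slip: the pairing you write should use $\d\sigma(y)$ rather than $\d y$ (since $\mathcal{H}_1=L^2(\mathbb{R},\d\sigma)$), but this is immaterial because $k_j(\zeta|\cdot)m_j(\zeta|\cdot)\equiv 0$ pointwise from the disjoint indicators, which is stronger than what you state; the paper also runs the final Liouville step kernel-wise $(\sigma\otimes\sigma)$-almost everywhere rather than by weak testing, but your version is equally valid.
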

\begin{proof} Let ${\bf X}(\zeta)$ denote any solution of RHP \ref{zmaster}. By condition (1), estimate \eqref{z35} and our discussion in Appendix \ref{appA}, the Fredholm determinant
\begin{equation}\label{z36}
	d(\zeta):=\det(\mathbb{I}_2+{\bf X}_0(\zeta)\upharpoonright_{\mathcal{H}_2}),\ \ \zeta\in\mathbb{C}\setminus\Sigma
\end{equation}
is well-defined in the indicated $\zeta$-domain. Also, by Morera's and Fubini's theorem, the imposed analyticity of ${\bf X}_0(\zeta)\in\mathcal{I}(\mathcal{H}_2)$ away from $\Sigma$ yields analyticity of $d(\zeta)$ for $\zeta\in\mathbb{C}\setminus\Sigma$. Furthermore, by properties $(2),(3)$ and the dominated convergence theorem, the non-tangential boundary values $d_{\pm}(\zeta),\zeta\in\Sigma$ exist and satisfy
\begin{equation*}
	d_{\pm}(\zeta)=\det({\bf X}_{\pm}(\zeta)\upharpoonright_{\mathcal{H}_2}),\ \ \ \ \zeta\in\Sigma.
\end{equation*}
However, from condition $(2)$ and a simple estimation of \eqref{z33}, we obtain ${\bf G}(\zeta)=\mathbb{I}_2+{\bf G}_0(\zeta)$ where ${\bf G}_0(\zeta)\in\mathcal{I}(\mathcal{H}_2)$ is trace class on $\mathcal{H}_2$ and satisfies
\begin{equation*}
	\|{\bf G}_0(\zeta|x,y)\|\leq c\sqrt{|\lambda|}\,\e^{-\frac{(-1)^n\Delta}{2n+1}\Delta^{2n}}\e^{\Delta(|x|+|y|+|t|)},\ \ \ c=c(n)>0,
\end{equation*}
uniformly in $(\zeta,x,y)\in\Sigma\times\mathbb{R}^2$ and $(t,\lambda)\in\mathbb{R}\times\overline{\mathbb{D}_1(0)}$. Thus, $\det(\mathbb{I}_2+{\bf G}_0(\zeta)\upharpoonright_{\mathcal{H}_2})$ exists for $\zeta\in\Sigma$ by Hadamard's inequality and since
\begin{equation*}
	\tr_{\mathcal{H}_2}{\bf G}_0(\zeta)=0,\ \ \ \ \ \ \big({\bf G}_0(\zeta)\big)^2={\bf 0},\ \ \ \ \zeta\in\Sigma,
\end{equation*}
the Plemelj-Smithies formula \cite[Chapter II, Theorem $3.1$]{GGK} yields that $\det(\mathbb{I}_2+{\bf G}_0(\zeta)\upharpoonright_{\mathcal{H}_2})=1$ for all $\zeta\in\Sigma$. In turn, using the multiplicativity of Fredholm determinants, \eqref{z36} thus satisfies (by property (2) and our above discussion) the scalar jump condition
\begin{equation*}
	d_+(\zeta)=d_-(\zeta),\ \ \zeta\in\Sigma.
\end{equation*}
Since also $d(\zeta)\rightarrow 1$ as $\zeta\rightarrow\infty$ by property (3) we can conclude (using the continuity of the boundary values $d_{\pm}(\zeta)$ on $\Sigma$) that $d(\zeta)$ must be an entire scalar-valued function normalized to unity at infinity. Hence $d(\zeta)\equiv 1$ for all $\zeta\in\mathbb{C}$. Consequently, any solution ${\bf X}(\zeta)\in\mathcal{I}(\mathcal{H}_2)$ of RHP \ref{zmaster} is invertible for all $\zeta\in\mathbb{C}\setminus\Sigma$ and so are its continuous boundary values ${\bf X}_{\pm}(\zeta),\zeta\in\Sigma$. Moving ahead, we now consider two solutions ${\bf X}_1(\zeta),{\bf X}_2(\zeta)$ of RHP \ref{zmaster} and introduce
\begin{equation*}
	\mathcal{I}(\mathcal{H}_2)\ni{\bf Y}(\zeta):={\bf X}_1(\zeta)\big({\bf X}_2(\zeta)\big)^{-1}\equiv\mathbb{I}_2+{\bf Y}_0(\zeta),\ \ \ \ \zeta\in\mathbb{C}\setminus\Sigma.
\end{equation*}
By RHP \ref{zmaster}, ${\bf Y}(\zeta)$ is analytic in $\zeta\in\mathbb{C}\setminus\Sigma$, attains continuous boundary values ${\bf Y}_{\pm}(\zeta)$ on $\Sigma$ and satisfies ${\bf Y}_+(\zeta)={\bf Y}_-(\zeta)$ on $\Sigma$. Thus, $\zeta\mapsto{\bf Y}_0(\zeta|x,y)$ is entire $(\sigma\otimes\sigma)$-almost everywhere and since ${\bf Y}_0(\zeta|x,y)\rightarrow{\bf 0}$ as $\zeta\rightarrow\infty$ also $(\sigma\otimes\sigma)$-almost everywhere, we conclude by Liouville's theorem that ${\bf Y}(\zeta)=\mathbb{I}_2$, i.e. ${\bf X}_1(\zeta)={\bf X}_2(\zeta)$ and uniqueness is therefore established. This concludes the proof.
\end{proof}
Complementing Lemma \ref{z:lem7} we now show that RHP \ref{zmaster} is solvable: First, using the chain of determinant equalities \eqref{z25},\eqref{z26},\eqref{z28},\eqref{z31}, 
\begin{equation}\label{z37}
	\forall\,(t,\lambda,n)\in\mathbb{R}\times\mathbb{C}\times\mathbb{N}:\ \ \ D_n(t,\lambda)=\det(I-\lambda K_{t,n}\upharpoonright_{L^2(\mathbb{R}_+)})=\det\big(I-\lambda^{\frac{1}{2}}C_{t,n}\upharpoonright_{L^2(\Sigma)}\big),
\end{equation}
and $I-\lambda K_{t,n}$ is invertible on $L^2(\mathbb{R}_+)$ by Lemma \ref{zlem:3} if $\lambda\in\overline{\mathbb{D}_1(0)}$. Hence $D_n(t,\lambda)\neq 0$ for all $(t,\lambda,n)\in\mathbb{R}\times\overline{\mathbb{D}_1(0)}\times\mathbb{N}$ which yields invertibility of $I-\lambda^{\frac{1}{2}}C_{t,n}$ on $L^2(\Sigma)$ in the same parameter range. This simple observation lies at the heart of the below solvability theorem.
\begin{theo}\label{z:theo1} For every $(t,\lambda,n)\in\mathbb{R}\times\overline{\mathbb{D}_1(0)}\times\mathbb{N}$ consider the integral operator
\begin{equation}\label{z38}
	{\bf X}(\zeta)=\mathbb{I}_2+\lambda^{\frac{1}{2}}\int_{\Sigma}\begin{bmatrix}N_1(\eta)\otimes K_1(\eta) & N_1(\eta)\otimes K_2(\eta)\smallskip\\
	N_2(\eta)\otimes K_1(\eta) & N_2(\eta)\otimes K_2(\eta)\end{bmatrix}\frac{\d\eta}{\eta-\zeta},\ \ \ \ \zeta\in\mathbb{C}\setminus\Sigma,
\end{equation}
where $N_i(\eta)$ are the operators on $\mathcal{H}_1$ which multiply by the functions $n_i(\eta|x)$ determined via
\begin{equation}\label{z39}
	\big(I-\lambda^{\frac{1}{2}}C_{t,n}^{\ast}\upharpoonright_{L^2(\Sigma)}\big)n_i(\cdot|x)=m_i(\cdot|x),\ \ \ i=1,2,
\end{equation}
with $x\in\mathbb{R}$ and the real adjoint $C_{t,n}^{\ast}$ of $C_{t,n}$. Then \eqref{z38} solves RHP \ref{zmaster}.
\end{theo}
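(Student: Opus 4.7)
The plan is to verify properties (1)-(3) of RHP \ref{zmaster} for $\mathbf{X}(\zeta)$ as given in \eqref{z38}, the key conceptual step being the reduction of the jump condition to the resolvent equation \eqref{z39} via the integrable-kernel identity \eqref{z33bis}.

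First I would establish that \eqref{z39} defines $n_i(\cdot|x)\in L^2(\Sigma)$ unambiguously. The chain of determinant equalities culminating in \eqref{z37}, combined with Lemma \ref{zlem:3}, gives $D_n(t,\lambda)\neq 0$ for $(t,\lambda,n)\in\mathbb{R}\times\overline{\mathbb{D}_1(0)}\times\mathbb{N}$, hence $I-\lambda^{\frac{1}{2}}C_{t,n}$ and its real adjoint $I-\lambda^{\frac{1}{2}}C_{t,n}^*$ are invertible on $L^2(\Sigma)$. One checks $m_i(\cdot|x)\in L^2(\Sigma)$ for each fixed $x\in\mathbb{R}$ from the explicit form \eqref{z33}: along $\Gamma_{\alpha,\beta}=\mathbb{R}\pm\im\Delta$, the leading contribution of $\mp\frac{1}{2(2n+1)}\Im\zeta^{2n+1}$ behaves like $-\Delta u^{2n}$ for large $|u|$, providing Gaussian-type decay in $u=\Re\zeta$. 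Given this, the analyticity of $\mathbf{X}_0(\zeta)$ for $\zeta\in\mathbb{C}\setminus\Sigma$ is immediate from the Cauchy kernel, while $\mathbf{X}_0(\zeta)\in\mathcal{I}(\mathcal{H}_2)$ follows by viewing \eqref{z38} as a Bochner integral of rank-one (hence trace class) operators with trace norm bounded by $|\lambda|^{\frac{1}{2}}|\eta-\zeta|^{-1}\|n_i(\eta|\cdot)\|_{\mathcal{H}_1}\|k_j(\eta|\cdot)\|_{\mathcal{H}_1}$.

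The heart of the proof is the jump condition. Entry-wise Sokhotski-Plemelj applied to \eqref{z38} produces $\mathbf{X}_+(\zeta)-\mathbf{X}_-(\zeta)=2\pi\im\lambda^{\frac{1}{2}}[N_i(\zeta)\otimes K_j(\zeta)]_{ij}$, so $\mathbf{X}_+=\mathbf{X}_-\mathbf{G}$ is equivalent to the scalar functional identity
\begin{equation*}
	n_i(\zeta|x)=\sum_{k=1}^2\bigl(X_{ik,-}(\zeta)m_k(\zeta|\cdot)\bigr)(x),\qquad \zeta\in\Sigma,\ x\in\mathbb{R},\ i=1,2,
\end{equation*}
after factoring out the common rank-one factor $[\,\cdot\,]\otimes k_j(\zeta|\cdot)$ from both sides. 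Expanding the right-hand side via \eqref{z38}, the identity contribution gives $m_i(\zeta|x)$ while the Cauchy contribution produces inside the integrand the scalar $\sum_k\langle k_k(\eta|\cdot),m_k(\zeta|\cdot)\rangle_{\sigma}$, which equals $(\eta-\zeta)C_{t,n}(\eta,\zeta)$ by the integrable-kernel representation \eqref{z33bis}. The factor $(\eta-\zeta)$ cancels the Cauchy singularity as $\zeta_-\to\zeta$, collapsing the Cauchy integral to $(C_{t,n}^*n_i(\cdot|x))(\zeta)=\int_\Sigma C_{t,n}(\eta,\zeta)n_i(\eta|x)\,\d\eta$. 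The identity then reads $n_i(\zeta|x)=m_i(\zeta|x)+\lambda^{\frac{1}{2}}(C_{t,n}^*n_i(\cdot|x))(\zeta)$, which is precisely \eqref{z39}.

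Finally, the decay estimate \eqref{z35} is obtained by Cauchy-Schwarz on \eqref{z38}, combined with the bound $\|n_i(\cdot|x)\|_{L^2(\Sigma)}\leq\|(I-\lambda^{\frac{1}{2}}C_{t,n}^*)^{-1}\|\,\|m_i(\cdot|x)\|_{L^2(\Sigma)}$ and the analogous estimate on $\|k_j(\cdot|y)\|_{L^2(\Sigma)}$. The main technical obstacle is tracking the $\Delta$-dependence: the prefactor $\Delta^{-1/(4n)}$ originates from the Gaussian integral $\int_\mathbb{R}\e^{-\Delta u^{2n}}\,\d u\sim\Delta^{-1/(2n)}$, the exponent $-\frac{(-1)^n\Delta^{2n+1}}{2(2n+1)}$ from subleading contributions in the expansion of $\Im\zeta^{2n+1}$ on $\Gamma_{\alpha,\beta}$, and the $1/(1+|\zeta|)$ decay from the elementary $|\zeta|\to\infty$ asymptotics of the Cauchy integral with rapidly decaying integrand.
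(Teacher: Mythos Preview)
Your proposal is correct and follows essentially the same route as the paper: both verify property (2) by applying Plemelj--Sokhotski to obtain the additive jump \eqref{z42}, then reduce the multiplicative jump to the resolvent equation \eqref{z39} via the integrable-kernel identity \eqref{z33bis} (the paper multiplies out ${\bf X}_-{\bf G}$ directly rather than first factoring off the common $K_j$-tensor leg, but the computation is the same). One technical point you leave implicit is the H\"older continuity of $\eta\mapsto n_i(\eta|x)k_j(\eta|y)$ on $\Sigma$, which the paper extracts from the regularity of the resolvent and which is what justifies both the Plemelj--Sokhotski step and the continuity of the boundary values ${\bf X}_{\pm}(\zeta)$ required in property (2).
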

\begin{proof} Since \eqref{z39} is uniquely solvable, compare \eqref{z37}, we conclude from the boundedness of the resolvent,
\begin{equation}\label{z40}
	\|n_i(\cdot|x)\|_{L^2(\Sigma)}\leq c\|m_i(\cdot|x)\|_{L^2(\Sigma)},\ \ \ c=c(n,t)>0,\ \ \ i=1,2,
\end{equation}
uniformly in $x\in\mathbb{R}$ and $\lambda\in\overline{\mathbb{D}_1(0)}$. Now consider the right hand side of \eqref{z38} and note that its nontrivial kernel is built out of the functions
\begin{equation*}
	X_0^{ij}(\zeta|x,y)=\lambda^{\frac{1}{2}}\int_{\Sigma}n_i(\eta|x)k_j(\eta|y)\frac{\d\eta}{\eta-\zeta},\ \ \ \zeta\notin\Sigma,\ \ \ (x,y)\in\mathbb{R}^2.
\end{equation*}
But \eqref{z33},\eqref{z40}, Figure \ref{figz:3} and Cauchy-Schwarz yield
\begin{equation}\label{z41}
	\big|X_0^{ij}(\zeta|x,y)\big|\leq\frac{c\sqrt{|\lambda|}}{\textnormal{dist}(\zeta,\Sigma)}\Delta^{-\frac{1}{4n}}\e^{-\frac{(-1)^n\Delta}{2(2n+1)}\Delta^{2n}}\e^{\Delta(|x|+|y|+|t|)},\ \ \ c=c(n,t)>0,\ \ i,j=1,2,
\end{equation}
uniformly in $(x,y)\in\mathbb{R}^2$ and $\lambda\in\overline{\mathbb{D}_1(0)}$. Hence, $(x,y)\mapsto X_0(\zeta|x,y)$ is in $L^2(\mathbb{R}^2,\d\sigma\otimes\d\sigma;\mathbb{C}^{2\times 2})$ for every $\zeta\in\mathbb{C}\setminus\Sigma$. Moreover, using the regularity properties of the resolvent, we deduce from \eqref{z39} that $\eta\mapsto n_i(\eta|x)k_j(\eta|y)$ are H\"older continuous on $\Sigma$ for every $(x,y)\in\mathbb{R}^2$ and $i,j\in\{1,2\}$. Thus, by the Plemelj-Sokhotski theorem, $\zeta\mapsto X_0(\zeta|x,y)$ is analytic in $\zeta\in\mathbb{C}\setminus\Sigma$ for every $(x,y)\in\mathbb{R}^2$, i.e. the right hand side of \eqref{z38} all together analytic in $\mathbb{C}\setminus\Sigma$ in the sense of Definition \ref{c2}. This establishes property (1) of RHP \ref{zmaster} for \eqref{z38}. Moving ahead, estimate \eqref{z35}, i.e. property (3) for \eqref{z38}, follows at once from \eqref{z41}. Finally, by the H\"older continuity of $\eta\mapsto n_i(\eta|x)k_j(\eta|y)$, the boundary values ${\bf X}_{\pm}(\zeta),\zeta\in\Sigma$ not only exist by the Plemelj-Sokhotski theorem but are also H\"older continuous by the Plemelj-Privalov theorem \cite[Chapter $2$, $\S$ 19]{M} and are again in $\mathcal{I}(\mathcal{H}_2)$, using en route that $\Sigma$ is a union of smooth contours. It now remains to verify that the right hand side of \eqref{z38} satisfies the jump condition \eqref{z34} in property (2) of RHP \ref{zmaster}. To this end we first compute from \eqref{z38},
\begin{equation}\label{z42}
	{\bf X}_+(\zeta)-{\bf X}_-(\zeta)=2\pi\im\lambda^{\frac{1}{2}}\begin{bmatrix}N_1(\zeta)\otimes K_1(\zeta) & N_1(\zeta)\otimes K_2(\zeta)\smallskip\\
	N_2(\zeta) \otimes K_1(\zeta) & N_2(\zeta)\otimes K_2(\zeta)\end{bmatrix},\ \ \ \ \zeta\in\Sigma.
\end{equation}
However, we also have from \eqref{z34} and \eqref{z38},
\begin{align}
	{\bf X}_-(\zeta){\bf G}(\zeta)=&\,\,{\bf X}_-(\zeta)\left\{\mathbb{I}_2+2\pi\im\lambda^{\frac{1}{2}}\begin{bmatrix}M_1(\zeta)\otimes K_1(\zeta) & M_1(\zeta)\otimes K_2(\zeta)\smallskip\\
	M_2(\zeta)\otimes K_1(\zeta) & M_2(\zeta)\otimes K_2(\zeta)\end{bmatrix}\right\}\nonumber\\
	=&\,\,{\bf X}_-(\zeta)+2\pi\im\lambda^{\frac{1}{2}}\left\{\mathbb{I}_2+\lambda^{\frac{1}{2}}\int_{\Sigma}\begin{bmatrix}N_1(\eta)\otimes K_1(\eta) & N_1(\eta)\otimes K_2(\eta)\smallskip\\
	N_2(\eta)\otimes K_1(\eta) & N_2(\eta)\otimes K_2(\eta)\end{bmatrix}\frac{\d\eta}{\eta-\zeta_-}\right\}\nonumber\\
	&\hspace{1cm}\circ\begin{bmatrix}M_1(\zeta)\otimes K_1(\zeta) & M_1(\zeta)\otimes K_2(\zeta)\smallskip\\
	M_2(\zeta)\otimes K_1(\zeta) & M_2(\zeta)\otimes K_2(\zeta)\end{bmatrix},\ \ \ \zeta\in\Sigma.\label{z43}
\end{align}
Since, by general theory of rank one integral operators, \eqref{z32} and \eqref{z33bis},
\begin{align*}
	&\begin{bmatrix}N_1(\eta)\otimes K_1(\eta) & N_1(\eta)\otimes K_2(\eta)\smallskip\\
	N_2(\eta)\otimes K_1(\eta) & N_2(\eta)\otimes K_2(\eta)\end{bmatrix}\begin{bmatrix}M_1(\zeta)\otimes K_1(\zeta) & M_1(\zeta)\otimes K_2(\zeta)\smallskip\\
	M_2(\zeta)\otimes K_1(\zeta) & M_2(\zeta)\otimes K_2(\zeta)\end{bmatrix}\\
	&\hspace{1cm}=(\eta-\zeta)C_{t,n}(\eta,\zeta)\begin{bmatrix}N_1(\eta)\otimes K_1(\zeta) & N_1(\eta)\otimes K_2(\zeta)\smallskip\\
	N_2(\eta)\otimes K_1(\zeta) & N_2(\eta)\otimes K_2(\zeta)\end{bmatrix},\ \ \ \ (\eta,\zeta)\in\Sigma\times\Sigma,
\end{align*}
identity \eqref{z43} transforms to
\begin{align}
	{\bf X}_-(\zeta){\bf G}(\zeta)=&\,\,{\bf X}_-(\zeta)+2\pi\im\lambda^{\frac{1}{2}}\begin{bmatrix}M_1(\zeta)\otimes K_1(\zeta) & M_1(\zeta)\otimes K_2(\zeta)\smallskip\\
	M_2(\zeta)\otimes K_1(\zeta) & M_2(\zeta)\otimes K_2(\zeta)\end{bmatrix}\nonumber\\
	&\,\,\hspace{1cm}+2\pi\im\lambda\int_{\Sigma}C_{t,n}(\eta,\zeta)\begin{bmatrix}N_1(\eta)\otimes K_1(\zeta) & N_1(\eta)\otimes K_2(\zeta)\smallskip\\
	N_2(\eta)\otimes K_1(\zeta) & N_2(\eta)\otimes K_2(\zeta)\end{bmatrix}\,\d\eta,\ \ \ \zeta\in\Sigma.\label{z44}
\end{align}
Keeping in mind \eqref{z39}, equivalently the operator-valued integral equation
\begin{equation*}
	N_i(\zeta)-\lambda^{\frac{1}{2}}\int_{\Sigma}C_{t,n}(\eta,\zeta)N_i(\eta)\,\d\eta=M_i(\zeta),\ \ \ \ \zeta\in\Sigma,
\end{equation*}
we substitute this equation into \eqref{z44} and simplify
\begin{equation*}
	{\bf X}_-(\zeta){\bf G}(\zeta)={\bf X}_-(\zeta)+2\pi\im\lambda^{\frac{1}{2}}\begin{bmatrix}N_1(\zeta)\otimes K_1(\zeta) & N_1(\zeta)\otimes K_2(\zeta)\smallskip\\
	N_2(\zeta) \otimes K_1(\zeta) & N_2(\zeta)\otimes K_2(\zeta)\end{bmatrix}\stackrel{\eqref{z42}}{=}{\bf X}_+(\zeta),\ \ \ \zeta\in\Sigma.
\end{equation*}
In summary, property (2) is also satisfied by the right hand side of \eqref{z38} and thus ${\bf X}(\zeta)$ as defined in the same equation solves RHP \ref{zmaster} for every $(t,\lambda,n)\in\mathbb{R}\times\overline{\mathbb{D}_1(0)}\times\mathbb{N}$. This completes our proof.
\end{proof}
By Lemma \ref{z:lem7} and Theorem \ref{z:theo1}, the $\mathcal{I}(\mathcal{H}_2)$-valued RHP \ref{zmaster} is uniquely solvable for every $(t,\lambda,n)\in\mathbb{R}\times\overline{\mathbb{D}_1(0)}\times\mathbb{N}$. This result has several consequences which we summarize below and apply later on.
\begin{cor}\label{z:cor1} Let ${\bf X}(\zeta)\in\mathcal{I}(\mathcal{H}_2)$ denote the unique solution \eqref{z38} of RHP \ref{zmaster}. Then ${\bf X}(\zeta)$ is invertible on $\mathcal{H}_2$ for every $(t,\lambda,n)\in\mathbb{R}\times\overline{\mathbb{D}_1(0)}\times\mathbb{N}$ and we have
\begin{equation}\label{z45}
	\big({\bf X}(\zeta)\big)^{-1}=\mathbb{I}_2-\lambda^{\frac{1}{2}}\int_{\Sigma}\begin{bmatrix}M_1(\eta)\otimes L_1(\eta) & M_1(\eta)\otimes L_2(\eta)\smallskip\\
	M_2(\eta)\otimes L_1(\eta) & M_2(\eta)\otimes L_2(\eta)\end{bmatrix}\frac{\d\eta}{\eta-\zeta},\ \ \ \zeta\in\mathbb{C}\setminus\Sigma,
\end{equation}
where $L_i(\eta)$ are the integral operators on $\mathcal{H}_1$ with kernel $\ell_i(\eta|y)$ determined from the equation
\begin{equation}\label{z46}
	(I-\lambda^{\frac{1}{2}}C_{t,n}\upharpoonright_{L^2(\Sigma)})\ell_i(\cdot|y)=k_i(\cdot|y),\ \ \ i=1,2,
\end{equation}
with $y\in\mathbb{R}$. Moreover, for every $\zeta\in\Sigma$, independently of the choice of boundary values for ${\bf X}(\zeta)$,
\begin{equation}\label{z47}
	{\bf N}(\zeta)={\bf X}(\zeta){\bf M}(\zeta),\ \ \ \ \ \ \ \ \ {\bf L}(\zeta)={\bf K}(\zeta)\big({\bf X}(\zeta)\big)^{-1},
\end{equation}
where we introduce the vector-valued operators
\begin{equation*}
	{\bf N}(\zeta):=\big[N_1(\zeta),N_2(\zeta)\big]^{\top}\!,\ \ {\bf M}(\zeta):=\big[M_1(\zeta),M_2(\zeta)\big]^{\top}\!,\ \ 
	{\bf L}(\zeta):=\big[L_1(\zeta),L_2(\zeta)\big],\ \ {\bf K}(\zeta):=\big[K_1(\zeta),K_2(\zeta)\big].
\end{equation*}
\end{cor}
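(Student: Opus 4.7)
The invertibility claim is essentially a consequence of the proof of Lemma~\ref{z:lem7}: that argument shows $d(\zeta) := \det(\mathbb{I}_2 + {\bf X}_0(\zeta)\upharpoonright_{\mathcal{H}_2}) \equiv 1$ on $\mathbb{C}\setminus\Sigma$ for any solution of RHP~\ref{zmaster}. Since Theorem~\ref{z:theo1} exhibits the solution \eqref{z38} in the required parameter range, ${\bf X}(\zeta)$ is invertible on $\mathcal{H}_2$ everywhere off $\Sigma$, and the boundary values $\mathbf{X}_\pm(\zeta)$ on $\Sigma$ are invertible as well.

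For the inverse formula \eqref{z45}, the plan is to verify it by setting up a \emph{dual} operator-valued Riemann-Hilbert problem. Denoting the right-hand side of \eqref{z45} by ${\bf Y}(\zeta)$, the target jump matrix is
$$
{\bf G}^{-1}(\zeta) = \mathbb{I}_2 - 2\pi\im\lambda^{\frac{1}{2}}\begin{bmatrix}M_1(\zeta)\otimes K_1(\zeta) & M_1(\zeta)\otimes K_2(\zeta)\\ M_2(\zeta)\otimes K_1(\zeta) & M_2(\zeta)\otimes K_2(\zeta)\end{bmatrix},
$$
the identity ${\bf G}{\bf G}^{-1} = \mathbb{I}_2$ being immediate from the nilpotency of ${\bf G} - \mathbb{I}_2$, analogous to \eqref{z30}. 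Unique solvability of \eqref{z46} is obtained exactly as for \eqref{z39}: the non-vanishing $D_n(t,\lambda)\neq 0$ in the prescribed parameter range forces invertibility of $I-\lambda^{\frac{1}{2}}C_{t,n}$ on $L^2(\Sigma)$. The analytic, decay, and H\"older-regularity properties of ${\bf Y}(\zeta)$ follow mutatis mutandis from Theorem~\ref{z:theo1}, while the jump relation ${\bf Y}_+ = {\bf G}^{-1}{\bf Y}_-$ is read off by feeding \eqref{z46} into the Plemelj-Sokhotski difference, paralleling \eqref{z42}-\eqref{z44}. Once ${\bf Y}$ solves this dual RHP, the product ${\bf X}(\zeta){\bf Y}(\zeta)$ is analytic on $\mathbb{C}\setminus\Sigma$, tends to $\mathbb{I}_2$ at infinity, and has matching boundary values on $\Sigma$ since ${\bf X}_+{\bf Y}_+ = {\bf X}_-{\bf G}\cdot{\bf G}^{-1}{\bf Y}_- = {\bf X}_-{\bf Y}_-$. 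The Liouville-style argument used at the end of Lemma~\ref{z:lem7} then forces ${\bf X}(\zeta){\bf Y}(\zeta) = \mathbb{I}_2$, which is \eqref{z45}.

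For the boundary identities \eqref{z47}, I would first establish the claimed boundary independence of ${\bf X}(\zeta){\bf M}(\zeta)$. By \eqref{z42}, $({\bf X}_+(\zeta) - {\bf X}_-(\zeta)){\bf M}(\zeta)$ is built from compositions $(N_i(\zeta)\otimes K_j(\zeta))M_j(\zeta)$, each of which, applied to $f\in\mathcal{H}_1$, produces a term proportional to $n_i(\zeta|x)\int_{\mathbb{R}}[k_1(\zeta|y)m_1(\zeta|y) + k_2(\zeta|y)m_2(\zeta|y)]f(y)\,\d\sigma(y)$. From \eqref{z33}, every summand carries the factor $\chi_{\Gamma_\alpha}(\zeta)\chi_{\Gamma_\beta}(\zeta)$, which vanishes since $\Gamma_\alpha\cap\Gamma_\beta = \emptyset$; hence ${\bf X}_+{\bf M} = {\bf X}_-{\bf M}$ on $\Sigma$. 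To identify this common value with ${\bf N}(\zeta)$, I would rewrite \eqref{z39} in transposed form as $N_i(\zeta) = M_i(\zeta) + \lambda^{\frac{1}{2}}\int_\Sigma C_{t,n}(\eta,\zeta)N_i(\eta)\,\d\eta$, expand $(\eta-\zeta)C_{t,n}(\eta,\zeta)$ via \eqref{z33bis}, and match the result against ${\bf X}_-(\zeta){\bf M}(\zeta)$ computed directly from the integral representation \eqref{z38}. The second identity in \eqref{z47} follows by the symmetric argument applied to \eqref{z45}, \eqref{z46}, using the row vectors ${\bf L}, {\bf K}$ and the same disjointness $\Gamma_\alpha\cap\Gamma_\beta = \emptyset$. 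The principal technical obstacle is precisely this last identification: unambiguously translating the scalar integral equation \eqref{z39} into the operator identity ${\bf N}(\zeta) = {\bf X}(\zeta){\bf M}(\zeta)$ demands careful bookkeeping of which $\zeta$-parameters are held fixed versus integrated, and of how compositions of the multiplication operators $M_i, N_i$ with the rank-one integral operators on $\mathcal{H}_1$ recombine through \eqref{z33bis}.
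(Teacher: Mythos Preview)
Your approach is correct and reaches the same conclusions, but the route differs from the paper's in two places.

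For the inverse formula \eqref{z45}, the paper does \emph{not} set up a dual RHP. Instead it multiplies ${\bf X}(\zeta)$ and the candidate ${\bf Y}(\zeta)$ directly, uses the rank-one algebra
\[
\overline{{\bf X}}(\eta_1)\overline{{\bf Y}}(\eta_2)=(\eta_1-\eta_2)C_{t,n}(\eta_1,\eta_2)\begin{bmatrix}N_i(\eta_1)\otimes L_j(\eta_2)\end{bmatrix}_{i,j},
\]
splits $\frac{1}{(\eta_1-\zeta)(\eta_2-\zeta)}$ by partial fractions, and cancels the single integrals against the double integral via \eqref{z39} and \eqref{z46}. This is shorter and avoids re-running the Plemelj--Sokhotski/H\"older verification for ${\bf Y}$; your dual-RHP route works but effectively redoes the analytic part of Theorem~\ref{z:theo1} a second time (and note the left-jump ${\bf Y}_+={\bf G}^{-1}{\bf Y}_-$ is not literally parallel to \eqref{z42}--\eqref{z44}, which handle a right-jump; the calculation is the one in the paper's \eqref{z51}--\eqref{z53}).

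For \eqref{z47}, the paper again takes a different path: rather than first proving boundary independence via the disjointness $\chi_{\Gamma_\alpha}\chi_{\Gamma_\beta}=0$ and then separately identifying ${\bf X}_\pm{\bf M}$ with ${\bf N}$, it reads off from the jump computation (the identities ${\bf X}_-{\bf G}={\bf X}_-+2\pi\im\lambda^{1/2}[N_i\otimes K_j]$ and ${\bf X}_+{\bf G}^{-1}={\bf X}_+-2\pi\im\lambda^{1/2}[N_i\otimes K_j]$ already obtained in Theorem~\ref{z:theo1}) the single matrix identity ${\bf X}_\pm[M_i\otimes K_j]=[N_i\otimes K_j]$, which gives both the boundary independence and the identification in one stroke. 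Your disjointness argument is perfectly valid and arguably more transparent, but it leaves the ``matching'' step that you flag as the technical obstacle; the paper's approach sidesteps that bookkeeping entirely.
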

\begin{proof} We already know that each solution of RHP \ref{zmaster} is invertible on $\mathcal{H}_2$, see the proof workings of Lemma \ref{z:lem7}. Hence, with \eqref{z38} and \eqref{z45} (abbreviating its right hand side as ${\bf Y}(\zeta)$) we compute
\begin{equation}\label{z48}
	{\bf X}(\zeta){\bf Y}(\zeta)=\mathbb{I}_2+\lambda^{\frac{1}{2}}\int_{\Sigma}\big(\overline{{\bf X}}(\eta)-\overline{{\bf Y}}(\eta)\big)\frac{\d\eta}{\eta-\zeta}-\lambda\int_{\Sigma}\int_{\Sigma}\overline{{\bf X}}(\eta_1)\overline{{\bf Y}}(\eta_2)\frac{\d\eta_1}{\eta_1-\zeta}\frac{\d\eta_2}{\eta_2-\zeta},\ \ \ \zeta\notin\Sigma,
\end{equation}
where $\overline{{\bf X}}$ and $\overline{{\bf Y}}$ denote the finite rank integrands in \eqref{z38} and \eqref{z45}. Since
\begin{equation*}
	\overline{{\bf X}}(\eta_1)\overline{{\bf Y}}(\eta_2)\stackrel{\eqref{z32}}{=}(\eta_1-\eta_2)C_{t,n}(\eta_1,\eta_2)\begin{bmatrix}N_1(\eta_1)\otimes L_1(\eta_2) & N_1(\eta_1)\otimes L_2(\eta_2)\smallskip\\
	N_2(\eta_1)\otimes L_1(\eta_2) & N_2(\eta_1)\otimes L_2(\eta_2)\end{bmatrix},
\end{equation*}
we can use partial fractions in the iterated integral in \eqref{z48} and both equations \eqref{z39},\eqref{z46}. The result equals
\begin{equation*}
	\lambda^{\frac{1}{2}}\int_{\Sigma}\big(\overline{{\bf X}}(\eta)-\overline{{\bf Y}}(\eta)\big)\frac{\d\eta}{\eta-\zeta}-\lambda\int_{\Sigma}\int_{\Sigma}\overline{{\bf X}}(\eta_1)\overline{{\bf Y}}(\eta_2)\frac{\d\eta_1}{\eta_1-\zeta}\frac{\d\eta_2}{\eta_2-\zeta}={\bf 0},\ \ \zeta\notin\Sigma,
\end{equation*}
and therefore yields ${\bf X}(\zeta){\bf Y}(\zeta)=\mathbb{I}_2$ for $\zeta\in\mathbb{C}\setminus\Sigma$. In other words, ${\bf Y}(\zeta)$ is a right-sided inverse for ${\bf X}(\zeta)$ which is invertible on $\mathcal{H}_2$, see the proof of Lemma \ref{zlem:7}. This only happens when \eqref{z45} holds for all $\zeta\notin\Sigma$, as claimed. Next we revisit our proof of Theorem \ref{z:theo1}, precisely the last identity in it,
\begin{equation}\label{z49}
	{\bf X}_-(\zeta){\bf G}(\zeta)={\bf X}_-(\zeta)+2\pi\im\lambda^{\frac{1}{2}}\begin{bmatrix}N_1(\zeta)\otimes K_1(\zeta) & N_1(\zeta)\otimes K_2(\zeta)\smallskip\\
	N_2(\zeta)\otimes K_1(\zeta) & N_2(\zeta)\otimes K_2(\zeta)\end{bmatrix},\ \ \zeta\in\Sigma.
\end{equation}
Since
\begin{equation*}
	\big({\bf G}(\zeta)\big)^{-1}=\mathbb{I}_2-2\pi\im\lambda^{\frac{1}{2}}\begin{bmatrix}M_1(\zeta)\otimes K_1(\zeta) & M_1(\zeta)\otimes K_2(\zeta)\smallskip\\
	M_2(\zeta)\otimes K_1(\zeta) & M_2(\zeta)\otimes K_2(\zeta)\end{bmatrix},\ \ \ \zeta\in\Sigma,
\end{equation*}
we then repeat the steps leading to \eqref{z43},\eqref{z44} and conclude likewise
\begin{equation}\label{z50}
	{\bf X}_+(\zeta)\big({\bf G}(\zeta)\big)^{-1}={\bf X}_+(\zeta)-2\pi\im\lambda^{\frac{1}{2}}\begin{bmatrix}N_1(\zeta)\otimes K_1(\zeta) & N_1(\zeta)\otimes K_2(\zeta)\smallskip\\
	N_2(\zeta)\otimes K_1(\zeta) & N_2(\zeta)\otimes K_2(\zeta)\end{bmatrix},\ \ \ \zeta\in\Sigma.
\end{equation}
Hence, combining \eqref{z49} and \eqref{z50} with the explicit formula for ${\bf G}(\zeta)$, see \eqref{z34}, we find
\begin{equation*}
	{\bf X}_{\pm}(\zeta)\begin{bmatrix}M_1(\zeta)\otimes K_1(\zeta) & M_1(\zeta)\otimes K_2(\zeta)\smallskip\\
	M_2(\zeta)\otimes K_1(\zeta) & M_2(\zeta)\otimes K_2(\zeta)\end{bmatrix}=\begin{bmatrix}N_1(\zeta)\otimes K_1(\zeta) & N_1(\zeta)\otimes K_2(\zeta)\smallskip\\
	N_2(\zeta)\otimes K_1(\zeta) & N_2(\zeta)\otimes K_2(\zeta)\end{bmatrix},\ \ \ \zeta\in\Sigma,
\end{equation*}
and thus the first equality in \eqref{z47}. The second equality follows by similar logic: from \eqref{z45}, for $\zeta\in\Sigma$,
\begin{equation*}
	\big({\bf X}(\zeta)\big)^{-1}_+-\big({\bf X}(\zeta)\big)^{-1}_-=-2\pi\im\lambda^{\frac{1}{2}}\begin{bmatrix}M_1(\zeta)\otimes L_1(\zeta) & M_1(\zeta)\otimes L_2(\zeta)\smallskip\\
	M_2(\zeta)\otimes L_1(\zeta) & M_2(\zeta)\otimes L_2(\zeta)\end{bmatrix},
\end{equation*}
and
\begin{eqnarray}
	\big({\bf G}(\zeta)\big)^{-1}\big({\bf X}(\zeta)\big)^{-1}_-\!\!\!\!&=&\!\!\!\!\left\{\mathbb{I}_2-2\pi\im\lambda^{\frac{1}{2}}\begin{bmatrix}M_1(\zeta)\otimes K_1(\zeta) & M_1(\zeta)\otimes K_2(\zeta)\smallskip\\
	M_2(\zeta)\otimes K_1(\zeta) & M_2(\zeta)\otimes K_2(\zeta)\end{bmatrix}\right\}\big({\bf X}(\zeta)\big)^{-1}_-\nonumber\\
	\!\!\!\!&\stackrel{\eqref{z45}}{=}&\!\!\!\!\big({\bf X}(\zeta)\big)^{-1}_--2\pi\im\lambda^{\frac{1}{2}}\begin{bmatrix}M_1(\zeta)\otimes K_1(\zeta) & M_1(\zeta)\otimes K_2(\zeta)\smallskip\\
	M_2(\zeta)\otimes K_1(\zeta) & M_2(\zeta)\otimes K_2(\zeta)\end{bmatrix}\nonumber\\
	&&\hspace{1.5cm}\circ\left\{\mathbb{I}_2-\lambda^{\frac{1}{2}}\int_{\Sigma}\begin{bmatrix}M_1(\eta)\otimes L_1(\eta) & M_1(\eta)\otimes L_2(\eta)\smallskip\\
	M_2(\eta)\otimes L_1(\eta) & M_2(\eta)\otimes L_2(\eta)\end{bmatrix}\frac{\d\eta}{\eta-\zeta_-}\right\}.\label{z51}
\end{eqnarray}
Note however, with \eqref{z32},
\begin{align*}
	&\begin{bmatrix}M_1(\zeta)\otimes K_1(\zeta) & M_1(\zeta)\otimes K_2(\zeta)\smallskip\\
	M_2(\zeta)\otimes K_1(\zeta) & M_2(\zeta)\otimes K_2(\zeta)\end{bmatrix}\begin{bmatrix}M_1(\eta)\otimes L_1(\eta) & M_1(\eta)\otimes L_2(\eta)\smallskip\\
	M_2(\eta)\otimes L_1(\eta) & M_2(\eta)\otimes L_2(\eta)\end{bmatrix}\\
	&\hspace{1cm}=(\zeta-\eta)C_{t,n}(\zeta,\eta)\begin{bmatrix}M_1(\zeta)\otimes L_1(\eta) & M_1(\zeta)\otimes L_2(\eta)\smallskip\\
	M_2(\zeta)\otimes L_1(\eta) & M_2(\zeta)\otimes L_2(\eta)\end{bmatrix},\ \ \ (\zeta,\eta)\in\Sigma\times\Sigma,
\end{align*}
so that \eqref{z51} simplifies to
\begin{equation}\label{z52}
	\big({\bf G}(\zeta)\big)^{-1}\big({\bf X}(\zeta)\big)^{-1}_-\stackrel{\eqref{z46}}{=}\big({\bf X}(\zeta)\big)^{-1}_--2\pi\im\lambda^{\frac{1}{2}}\begin{bmatrix}M_1(\zeta)\otimes L_1(\zeta) & M_1(\zeta)\otimes L_2(\zeta)\smallskip\\
	M_2(\zeta)\otimes L_1(\zeta) & M_2(\zeta)\otimes L_2(\zeta)\end{bmatrix},\ \ \ \zeta\in\Sigma.
\end{equation}
On the other hand we also establish the chain of equalities
\begin{eqnarray}
	{\bf G}(\zeta)\big({\bf X}(\zeta)\big)^{-1}_+\!\!\!\!&=&\!\!\!\!\left\{\mathbb{I}_2+2\pi\im\lambda^{\frac{1}{2}}\begin{bmatrix}M_1(\zeta)\otimes K_1(\zeta) & M_1(\zeta)\otimes K_2(\zeta)\smallskip\\
	M_2(\zeta)\otimes K_1(\zeta) & M_2(\zeta)\otimes K_2(\zeta)\end{bmatrix}\right\}\big({\bf X}(\zeta)\big)^{-1}_+\nonumber\\
	\!\!\!\!&\stackrel{\eqref{z45}}{=}&\!\!\!\!\big({\bf X}(\zeta)\big)^{-1}_++2\pi\im\lambda^{\frac{1}{2}}\begin{bmatrix}M_1(\zeta)\otimes K_1(\zeta) & M_1(\zeta)\otimes K_2(\zeta)\smallskip\\
	M_2(\zeta)\otimes K_1(\zeta) & M_2(\zeta)\otimes K_2(\zeta)\end{bmatrix}\nonumber\\
	&&\hspace{1cm}\circ\left\{\mathbb{I}_2-\lambda^{\frac{1}{2}}\int_{\Sigma}\begin{bmatrix}M_1(\eta)\otimes L_1(\eta) & M_1(\eta)\otimes L_2(\eta)\smallskip\\
	M_2(\eta)\otimes L_1(\eta) & M_2(\eta)\otimes L_2(\eta)\end{bmatrix}\frac{\d\eta}{\eta-\zeta_+}\right\}\nonumber\\
	\!\!\!\!&\stackrel{\eqref{z46}}{=}&\!\!\!\!\big({\bf X}(\zeta)\big)^{-1}_++2\pi\im\lambda^{\frac{1}{2}}\begin{bmatrix}M_1(\zeta)\otimes L_1(\zeta) & M_1(\zeta)\otimes L_2(\zeta)\smallskip\\
	M_2(\zeta)\otimes L_1(\zeta) & M_2(\zeta)\otimes L_2(\zeta)\end{bmatrix},\ \ \ \zeta\in\Sigma.\label{z53}
\end{eqnarray}
Finally, combining \eqref{z52} with \eqref{z53} and \eqref{z34},
\begin{equation*}
	\begin{bmatrix}M_1(\zeta)\otimes K_1(\zeta) & M_1(\zeta)\otimes K_2(\zeta)\smallskip\\
	M_2(\zeta)\otimes K_1(\zeta) & M_2(\zeta)\otimes K_2(\zeta)\end{bmatrix}\big({\bf X}(\zeta)\big)^{-1}_{\pm}=\begin{bmatrix}M_1(\zeta)\otimes L_1(\zeta) & M_1(\zeta)\otimes L_2(\zeta)\smallskip\\
	M_2(\zeta)\otimes L_1(\zeta) & M_2(\zeta)\otimes L_2(\zeta)\end{bmatrix},\ \ \zeta\in\Sigma.
\end{equation*}
This proves the second equality in \eqref{z47} and thus concludes the proof.
\end{proof}
\begin{cor}\label{z:cor2} Consider the operators $M_i(\zeta),K_j(\zeta)$ and $N_i(\eta),L_j(\eta)$ as defined in \eqref{z33} and \eqref{z39},\eqref{z46}. Then for any $i,j\in\{1,2\}$, on $\mathcal{H}_1$,
\begin{equation}\label{z54}
	\int_{\Sigma}M_i(\eta)\otimes L_j(\eta)\,\d\eta=\int_{\Sigma}N_i(\eta)\otimes K_j(\eta)\,\d\eta, 
\end{equation}
followed by
\begin{equation}\label{z55}
	\int_{\Sigma}M_i(\eta)\otimes L_j(\eta)\,\eta\,\d\eta=\int_{\Sigma}N_i(\eta)\otimes K_j(\eta)\,\eta\,\d\eta+\lambda^{\frac{1}{2}}\left[\int_{\Sigma}N_i(\eta)\otimes K_j(\eta)\,\d\eta\right]^2.
\end{equation}
\end{cor}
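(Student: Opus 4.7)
The plan is to extract \eqref{z54} and \eqref{z55} from the large $\zeta$ expansion of the identity ${\bf X}(\zeta)({\bf X}(\zeta))^{-1}=\mathbb{I}_2$ on $\mathcal{H}_2$, using the two Cauchy-type integral representations \eqref{z38} and \eqref{z45} for ${\bf X}(\zeta)$ and its inverse. Since $k_j(\eta|\cdot)$ and $m_i(\eta|\cdot)$ (hence $\ell_j(\eta|\cdot)$ and $n_i(\eta|\cdot)$ via \eqref{z39}, \eqref{z46}) are supported on $\Gamma_\alpha$ or $\Gamma_\beta$, all integrals over $\Sigma$ in \eqref{z38}, \eqref{z45} are effectively over these bounded-distance-from-$\mathbb{R}$ contours, where the kernel bounds behind \eqref{z35} and \eqref{z41} supply enough decay in $|\eta|$ to justify the geometric-series expansion
\[
	\frac{1}{\eta-\zeta}=-\sum_{k\geq 0}\frac{\eta^k}{\zeta^{k+1}},\qquad |\zeta|\gg 1,
\]
together with termwise integration against $\d\eta$.

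First I would plug this expansion into \eqref{z38} and \eqref{z45} and read off the first two coefficients. Concretely, one gets
\[
	{\bf X}(\zeta)=\mathbb{I}_2+\frac{{\bf X}_1}{\zeta}+\frac{{\bf X}_2}{\zeta^2}+\mathcal{O}\bigl(\zeta^{-3}\bigr),\qquad
	\bigl({\bf X}(\zeta)\bigr)^{-1}=\mathbb{I}_2+\frac{{\bf Y}_1}{\zeta}+\frac{{\bf Y}_2}{\zeta^2}+\mathcal{O}\bigl(\zeta^{-3}\bigr),
\]
where, denoting the natural $2\times 2$ operator arrays by $[N_i\otimes K_j]$ and $[M_i\otimes L_j]$,
\[
	{\bf X}_k=-\lambda^{\frac{1}{2}}\int_\Sigma [N_i\otimes K_j]\,\eta^{k-1}\,\d\eta,\qquad {\bf Y}_k=+\lambda^{\frac{1}{2}}\int_\Sigma [M_i\otimes L_j]\,\eta^{k-1}\,\d\eta,\qquad k=1,2.
\]
The remainders are controlled pointwise on $\mathcal{H}_2$ by the same kind of Cauchy--Schwarz and decay estimates that produced \eqref{z41}, so the expansion is valid uniformly for $(x,y)$ on $\mathbb{R}^2$.

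Next I would match powers of $\zeta^{-1}$ in the identity $({\bf X}(\zeta))({\bf X}(\zeta))^{-1}=\mathbb{I}_2$, which Corollary \ref{z:cor1} guarantees for every $\zeta\in\mathbb{C}\setminus\Sigma$. The $\zeta^{-1}$ coefficient gives ${\bf X}_1+{\bf Y}_1={\bf 0}$, i.e.
\[
	-\lambda^{\frac{1}{2}}\int_\Sigma [N_i\otimes K_j]\,\d\eta+\lambda^{\frac{1}{2}}\int_\Sigma [M_i\otimes L_j]\,\d\eta={\bf 0},
\]
which is \eqref{z54} componentwise. The $\zeta^{-2}$ coefficient gives ${\bf X}_2+{\bf X}_1{\bf Y}_1+{\bf Y}_2={\bf 0}$; substituting ${\bf Y}_1=-{\bf X}_1$ and clearing the overall factor $\lambda^{\frac{1}{2}}$ yields
\[
	\int_\Sigma [M_i\otimes L_j]\,\eta\,\d\eta=\int_\Sigma [N_i\otimes K_j]\,\eta\,\d\eta+\lambda^{\frac{1}{2}}\left[\int_\Sigma [N_i\otimes K_j]\,\d\eta\right]^2,
\]
which is exactly \eqref{z55}.

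The only genuine obstacle is the bookkeeping for operator products: one must verify that the expansion identities above may be interpreted componentwise in $\mathcal{I}(\mathcal{H}_2)$, and that products such as ${\bf X}_1{\bf Y}_1$ of integrated rank-two operators compose through the bilinear pairing induced by $k_j$ against $m_i$ exactly the way \eqref{z54} substitutes it. This is, however, precisely the kind of partial-fractions and rank-one composition manipulation already carried out repeatedly in the proof of Corollary \ref{z:cor1} (e.g.\ around \eqref{z48}), so no new analytic input is required: the expansion of the resolvent formulas \eqref{z38}, \eqref{z45} at infinity and the associativity of the operator product finish the proof.
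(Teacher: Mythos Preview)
Your proposal is correct and follows essentially the same approach as the paper: the paper invokes the identity
\[
\lambda^{\frac12}\int_{\Sigma}\big(\overline{{\bf X}}(\eta)-\overline{{\bf Y}}(\eta)\big)\frac{\d\eta}{\eta-\zeta}-\lambda\int_{\Sigma}\int_{\Sigma}\overline{{\bf X}}(\eta_1)\overline{{\bf Y}}(\eta_2)\frac{\d\eta_1}{\eta_1-\zeta}\frac{\d\eta_2}{\eta_2-\zeta}={\bf 0}
\]
obtained in the proof of Corollary~\ref{z:cor1} (which is nothing but ${\bf X}(\zeta)({\bf X}(\zeta))^{-1}=\mathbb{I}_2$ written out), inserts the finite expansion $\frac{1}{\eta-\zeta}=-\frac{1}{\zeta}-\frac{\eta}{\zeta^2}+\frac{\eta^2}{\zeta^2(\eta-\zeta)}$, and collects powers of $\zeta^{-1}$ exactly as you do. The only cosmetic difference is that the paper works directly with this vanishing identity rather than naming the coefficients ${\bf X}_k,{\bf Y}_k$, and that the square in \eqref{z55} is to be read as the $(i,j)$-entry of the matrix square (which is what both your computation and the paper's actually produce).
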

\begin{proof} During the proof of Corollary \ref{z:cor1} we established the identity
\begin{equation*}
	\lambda^{\frac{1}{2}}\int_{\Sigma}\big(\overline{{\bf X}}(\eta)-\overline{{\bf Y}}(\eta)\big)\frac{\d\eta}{\eta-\zeta}-\lambda\int_{\Sigma}\int_{\Sigma}\overline{{\bf X}}(\eta_1)\overline{{\bf Y}}(\eta_2)\frac{\d\eta_1}{\eta_1-\zeta}\frac{\d\eta_2}{\eta_2-\zeta}={\bf 0},\ \ \ \ \ \zeta\notin\Sigma.
\end{equation*}
Using $\frac{1}{\eta-\zeta}=-\frac{1}{\zeta}-\frac{\eta}{\zeta^2}+\frac{\eta^2}{\zeta^2(\eta-\zeta)}$ for $\eta\neq\zeta$ and afterwards collecting powers in $\zeta^{-1}$ as $|\zeta|\rightarrow\infty$ yields precisely the stated identities. This concludes the proof.
\end{proof}
Besides the elementary symmetry constraints \eqref{z54} and \eqref{z55} we will make crucial use of the below identity.
\begin{lem}\label{z:lem8} For every $(x,y)\in\mathbb{R}^2$ and $(t,\lambda,n)\in\mathbb{R}\times\overline{\mathbb{D}_1(0)}\times\mathbb{N}$, 
\begin{equation}\label{z56}
	\int_{\Sigma}\big(N_2(\eta)\otimes K_1(\eta)\big)(x,y)\,\d\eta=\int_{\Sigma}\big(N_1(\eta)\otimes K_2(\eta)\big)(y,x)\,\d\eta,
\end{equation}
with $N_i(\eta)$ defined in \eqref{z39}.
\end{lem}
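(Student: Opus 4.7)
The plan is to rewrite both sides of \eqref{z56} as a common integral by combining two structural ingredients: a resolvent duality between the equations \eqref{z39} and \eqref{z46}, and a parity symmetry generated by the reflection $\iota:\zeta\mapsto -\zeta$. The main obstacle I anticipate is purely bookkeeping: each of the three components $\mathbb{R}$, $\Gamma_{\alpha}$, $\Gamma_{\beta}$ of $\Sigma$ has its orientation flipped by $\iota$, and one must verify that this sign cancels the one from $\d\eta\mapsto -\d\eta$ componentwise, so that $\int_{\Sigma}f(\eta)\,\d\eta=\int_{\Sigma}f(-\eta)\,\d\eta$. Once that elementary check is done, the remaining manipulations are algebraic.

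First, since $C_{t,n}^{\ast}$ is the real (transpose) adjoint of $C_{t,n}$, I would pair the resolvent equation $(I-\lambda^{1/2}C_{t,n}^{\ast})n_i(\cdot|x)=m_i(\cdot|x)$ against $\ell_j(\cdot|y)=(I-\lambda^{1/2}C_{t,n})^{-1}k_j(\cdot|y)$ in the real bilinear form on $L^2(\Sigma)$, yielding the duality
\[
\int_{\Sigma}n_i(\eta|x)k_j(\eta|y)\,\d\eta=\int_{\Sigma}m_i(\eta|x)\ell_j(\eta|y)\,\d\eta,\qquad i,j\in\{1,2\}.
\]

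Second, I would exploit the oddness of the phase $\psi_n(\zeta,s)=\frac{\zeta^{2n+1}}{2n+1}+s\zeta$ in $\zeta$ together with the fact that $\iota$ interchanges $\Gamma_{\alpha}$ and $\Gamma_{\beta}$ as sets. Direct inspection of \eqref{z33} then gives $m_i(-\zeta|x)=2\pi k_i(\zeta|x)$ for $i=1,2$, and \eqref{z32} yields the kernel symmetry $C_{t,n}(-\xi,-\eta)=C_{t,n}(\eta,\xi)$, i.e.\ the operator identity $\iota C_{t,n}^{\ast}\iota^{-1}=C_{t,n}$ on $L^2(\Sigma)$ (using the sign-neutral change of variable noted above). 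Applying $\iota$ to \eqref{z39} and comparing with \eqref{z46} then delivers the key relation
\[
\ell_i(\eta|x)=\frac{1}{2\pi}n_i(-\eta|x),\qquad \eta\in\Sigma,\ x\in\mathbb{R},\ i=1,2.
\]

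Equipped with these two ingredients, I would close the argument in a short substitution chain: use the duality with $(i,j)=(2,1)$, then the key relation, then the change $\eta\mapsto -\eta$, and finally $m_2(-\eta|x)=2\pi k_2(\eta|x)$, to obtain
\[
\int_{\Sigma}n_2(\eta|x)k_1(\eta|y)\,\d\eta=\int_{\Sigma}m_2(\eta|x)\ell_1(\eta|y)\,\d\eta=\frac{1}{2\pi}\int_{\Sigma}m_2(-\eta|x)n_1(\eta|y)\,\d\eta=\int_{\Sigma}n_1(\eta|y)k_2(\eta|x)\,\d\eta,
\]
which is the right-hand side of \eqref{z56}. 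No further analytic input is needed; everything is driven by the oddness of $\psi_n(\cdot,s)$ and the reflection symmetry of the contour configuration fixed in Proposition \ref{zprop:2}.
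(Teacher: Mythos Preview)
Your argument is correct and rests on the same two pillars as the paper's proof: the reflection symmetry $\zeta\mapsto -\zeta$ (which swaps $\Gamma_\alpha\leftrightarrow\Gamma_\beta$ and exploits the oddness of $\psi_n(\cdot,s)$) together with the duality identity \eqref{z54}. Your orientation check is right: each of $\int_{\mathbb{R}}$, $\int_{\Gamma_\alpha}$, $\int_{\Gamma_\beta}$ is invariant under $\eta\mapsto -\eta$ once the interchange $\Gamma_\alpha\leftrightarrow\Gamma_\beta$ is taken into account.

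Where you differ from the paper is in packaging. The paper decomposes $C_{t,n}=A_{t,n}^{\textnormal{ext}}+B_n^{\textnormal{ext}}$, uses the nilpotency/mapping properties to show $C_{t,n}^k(-\xi,-\eta)=C_{t,n}^k(\eta,\xi)$ for $(\xi,\eta)\in\Gamma_\alpha\times\Gamma_\alpha$ power by power, and then invokes a Neumann series expansion to transfer this to the resolvent. You instead read off the kernel identity $C_{t,n}(-\xi,-\eta)=C_{t,n}(\eta,\xi)$ directly from \eqref{z32} on all of $\Sigma\times\Sigma$, phrase it as the operator relation $\iota\,C_{t,n}^{\ast}\,\iota^{-1}=C_{t,n}$, and pass to the resolvent in one step using invertibility of $I-\lambda^{1/2}C_{t,n}$ on $L^2(\Sigma)$. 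This is a modest but genuine streamlining: it bypasses the $A/B$ bookkeeping and the Neumann series (which, strictly, only converges for small $|\lambda|$ and would otherwise need an analytic-continuation remark), and it yields as a byproduct the clean pointwise relation $\ell_i(\eta|x)=\frac{1}{2\pi}n_i(-\eta|x)$, stronger than what the paper extracts.
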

\begin{proof} By definition of $N_1(\eta)$ and $K_2(\eta)$, for every $(x,y)\in\mathbb{R}^2$,
\begin{align}
	\int_{\Sigma}\big(N_1(\eta)\otimes K_2(\eta)\big)&\,(x,y)\,\d\eta=\int_{\Sigma}n_1(\eta|x)k_2(\eta|y)\,\d\eta\nonumber\\
	&\,\stackrel[\eqref{z39}]{\eqref{z33}}{=}\frac{1}{2\pi}\int_{\Gamma_{\beta}}\left[\int_{\Gamma_{\beta}}(I-\lambda^{\frac{1}{2}}C_{t,n}\upharpoonright_{L^2(\Sigma)})^{-1}(\xi,\eta)\,\e^{-\frac{\im}{2}\psi_n(\xi,2t+2x)}\,\d\xi\right]\e^{-\frac{\im}{2}\psi_n(\eta,0)}\,\d\eta\nonumber\\
	&\,=\frac{1}{2\pi}\int_{\Gamma_{\alpha}}\left[\int_{\Gamma_{\alpha}}(I-\lambda^{\frac{1}{2}}C_{t,n}\upharpoonright_{L^2(\Sigma)})^{-1}(-\xi,-\eta)\,\e^{\frac{\im}{2}\psi_n(\xi,2t+2x)}\,\d\xi\right]\e^{\frac{\im}{2}\psi_n(\eta,0)}\,\d\eta,\label{z57}
\end{align}
where we have used the conjugation symmetry $\overline{\Gamma}_{\beta}=\Gamma_{\alpha}$, see Figure \ref{figz:3}, and the fact that $\lambda\mapsto\psi_n(\lambda,\cdot)$ is odd. Returning to the proof of Proposition \ref{zprop:3} we have $C_{t,n}=A_{t,n}^{\textnormal{ext}}+B_{n}^{\textnormal{ext}}$ on $L^2(\Sigma)$ and thus, by the mapping properties of $A_{t,n}^{\textnormal{ext}}:L^2(\Gamma_{\beta})\rightarrow L^2(\Gamma_{\alpha})$ and $B_{n}^{\textnormal{ext}}:L^2(\Gamma_{\alpha})\rightarrow L^2(\Gamma_{\beta})$,
\begin{equation}\label{z58}
	C_{t,n}^{2m+1}(-\xi,-\eta)=0\ \ \ \ \ \ \textnormal{for}\ \ \ \ \ \ (\xi,\eta)\in\Gamma_{\alpha}\times\Gamma_{\alpha},\ \ m\in\mathbb{Z}_{\geq 0},
\end{equation}
as well as, compare \eqref{z60} below,
\begin{equation}\label{z59}
	C_{t,n}^{2m}(-\xi,-\eta)=(B_{n}^{\textnormal{ext}}A_{t,n}^{\textnormal{ext}})^m(-\xi,-\eta)=C_{t,n}^{2m}(\eta,\xi)\ \ \ \ \ \ \textnormal{for}\ \ \ \ \ \ (\xi,\eta)\in\Gamma_{\alpha}\times\Gamma_{\alpha},\ \ m\in\mathbb{Z}_{\geq 0},
\end{equation}
where we used that $B_{n}(-\xi,-\eta)=B_{n}(\eta,\xi)$ for every $(\xi,\eta)\in\Gamma_{\alpha}\times\Gamma_{\beta}$ and $A_{t,n}(-\eta,-\zeta)=A_{t,n}(\zeta,\eta)$ for every $(\eta,\zeta)\in\Gamma_{\beta}\times\Gamma_{\alpha}$, compare \eqref{z24}, together with the aforementioned conjugation symmetry $\overline{\Gamma}_{\beta}=\Gamma_{\alpha}$. All together, given that
\begin{equation}\label{z60}
	C_{t,n}^k=\begin{cases}(A_{t,n}^{\textnormal{ext}}B_{n}^{\textnormal{ext}})^mA_{t,n}^{\textnormal{ext}}+(B_{n}^{\textnormal{ext}}A_{t,n}^{\textnormal{ext}})^mB_{n}^{\textnormal{ext}},&k=2m+1\smallskip\\
	(A_{t,n}^{\textnormal{ext}}B_{n}^{\textnormal{ext}})^m+(B_{n}^{\textnormal{ext}}A_{t,n}^{\textnormal{ext}})^m,&k=2m\end{cases},
\end{equation}
identities \eqref{z58} and \eqref{z59} show that for all $k\in\mathbb{N}:C_{t,n}^k(-\xi,-\eta)=C_{t,n}^k(\eta,\xi)$ whenever $(\xi,\eta)\in\Gamma_{\alpha}\times\Gamma_{\alpha}$ and thus by a Neumann series expansions argument back in \eqref{z57},
\begin{align*}
	\int_{\Sigma}\big(N_1(\eta)&\,\otimes K_2(\eta)\big)(x,y)\,\d\eta=\frac{1}{2\pi}\int_{\Gamma_{\alpha}}\left[\int_{\Gamma_{\alpha}}(I-\lambda^{\frac{1}{2}}C_{t,n}\upharpoonright_{L^2(\Sigma)})^{-1}(\eta,\xi)\,\e^{\frac{\im}{2}\psi_n(\xi,2t+2x)}\,\d\xi\right]\e^{\frac{\im}{2}\psi_n(\eta,0)}\,\d\eta\\
	&\stackrel[\eqref{z46}]{\eqref{z33}}{=}\int_{\Sigma}\ell_1(\eta|x)m_2(\eta|y)\,\d\eta=\int_{\Sigma}\big(M_2(\eta)\otimes L_1(\eta)\big)(y,x)\,\d\eta\stackrel{\eqref{z54}}{=}\int_{\Sigma}\big(N_2(\eta)\otimes K_1(\eta)\big)(y,x)\,\d\eta.
\end{align*}
This proves \eqref{z56} after relabelling $x$ and $y$.
\end{proof}
Yet another useful result deals with the following $t\rightarrow+\infty$ behavior of the solution \eqref{z38} to RHP \ref{zmaster}.
\begin{cor}\label{z:cor3} Let $i,j\in\{1,2\}$ and $m\in\mathbb{Z}_{\geq 0}$ with $0<\Delta<\frac{\omega}{2}$ fixed as indicated in Figure \ref{figz:3}. Then
\begin{equation}\label{z61}
	\int_{\Sigma}N_i(\eta)\otimes K_j(\eta)\,\eta^m\,\d\eta\rightarrow 0\ \ \ \ \textnormal{and}\ \ \ \ \int_{\Sigma}M_i(\eta)\otimes L_j(\eta)\,\eta^m\,\d\eta\rightarrow 0
\end{equation}
exponentially fast as $t\rightarrow+\infty$ in operator norm on $\mathcal{H}_1$.
\end{cor}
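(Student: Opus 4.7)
The plan exploits the block decomposition $C_{t,n}=A_{t,n}^{\textnormal{ext}}+B_n^{\textnormal{ext}}$ from Lemma \ref{zlem:6}: $A_{t,n}^{\textnormal{ext}}$ is supported on $\Gamma_\alpha\times\Gamma_\beta$, $B_n^{\textnormal{ext}}$ on $\Gamma_\beta\times\Gamma_\alpha$, and the $\mathbb{R}$-block is empty; crucially on $\Gamma_\alpha\times\Gamma_\beta$ one has $|\e^{\im t(\alpha-\beta)}|=\e^{-2t\Delta}$. A direct estimate, using absolute integrability of $\int\e^{-2z\Delta}w'(z)\,\d z$ (guaranteed by $2\Delta<\omega$ in \eqref{i13}) together with the decay of $|\e^{\frac{\im}{2}(\alpha^{2n+1}-\beta^{2n+1})/(2n+1)}|$ coming from $\Im(\alpha^{2n+1})\to+\infty$ as $|\Re\alpha|\to\infty$, yields $\|A_{t,n}^{\textnormal{ext}}\|_{\textnormal{HS}}\leq C\e^{-2t\Delta}$. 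Because $(A_{t,n}^{\textnormal{ext}})^2=(B_n^{\textnormal{ext}})^2=0$ by $\Gamma_\alpha\cap\Gamma_\beta=\emptyset$, one has $C_{t,n}^2=A_{t,n}^{\textnormal{ext}}B_n^{\textnormal{ext}}+B_n^{\textnormal{ext}}A_{t,n}^{\textnormal{ext}}$ of operator norm $\leq C\e^{-2t\Delta}$, and the Carleman-type identity $(I-\lambda^{\frac{1}{2}}C_{t,n})^{-1}=(I+\lambda^{\frac{1}{2}}C_{t,n})(I-\lambda C_{t,n}^2)^{-1}$ gives a uniform-in-$t$ bound on the resolvent for large $t$.

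Next I refine \eqref{z40} contour by contour. Writing $n_i=n_i\chi_{\Gamma_\alpha}+n_i\chi_{\Gamma_\beta}$ and substituting into $n_i=m_i+\lambda^{\frac{1}{2}}C_{t,n}^{\ast}n_i$ decouples, for $i=1$, into
\begin{equation*}
n_1|_{\Gamma_\alpha}=\lambda^{\frac{1}{2}}(B_n^{\textnormal{ext}})^{\ast}n_1|_{\Gamma_\beta},\qquad \big(I-\lambda(A_{t,n}^{\textnormal{ext}})^{\ast}(B_n^{\textnormal{ext}})^{\ast}\big)n_1|_{\Gamma_\beta}=m_1,
\end{equation*}
and into the symmetric pair for $n_2$ (swapping $\Gamma_\alpha\leftrightarrow\Gamma_\beta$ and the adjoint ordering); the analogous block system for $\ell_j$ follows from \eqref{z46}. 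Direct computation of the phases in \eqref{z33} gives $\|m_1(\cdot|x)\|_{L^2(\Gamma_\beta)}\leq C\e^{-(t+x)\Delta}$, $\|m_2(\cdot|x)\|_{L^2(\Gamma_\alpha)}\leq C$, and similarly for $k_j$. Combining these with the uniform resolvent bound on $I-\lambda(A^{\textnormal{ext}})^{\ast}(B^{\textnormal{ext}})^{\ast}$ and with $\|A_{t,n}^{\textnormal{ext}}\|_{\textnormal{op}}\leq C\e^{-2t\Delta}$, I obtain $\|n_1(\cdot|x)|_\Gamma\|_{L^2(\Gamma)}\leq C\e^{-(t+x)\Delta}$ for both $\Gamma\in\{\Gamma_\alpha,\Gamma_\beta\}$, while $\|n_2(\cdot|x)|_{\Gamma_\alpha}\|\leq C$ and $\|n_2(\cdot|x)|_{\Gamma_\beta}\|\leq C\e^{-2t\Delta}$; dual estimates hold for $\ell_j$.

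Finally, for each $(i,j)\in\{1,2\}^2$, Cauchy-Schwarz applied separately on $\Gamma_\alpha$ and on $\Gamma_\beta$ gives
\begin{equation*}
\bigg|\int_\Sigma n_i(\eta|x)k_j(\eta|y)\,\eta^m\,\d\eta\bigg|\leq\sum_{\Gamma\in\{\Gamma_\alpha,\Gamma_\beta\}}\|n_i(\cdot|x)|_\Gamma\|_{L^2(\Gamma)}\,\|k_j(\cdot|y)\eta^m|_\Gamma\|_{L^2(\Gamma)},
\end{equation*}
and a case-by-case check shows the right-hand side is dominated by $C\e^{-t\Delta}$ times at most one of $\e^{-x\Delta},\e^{-y\Delta}$. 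Squaring, integrating against $w'(x)w'(y)\,\d x\,\d y$, and using $\int\e^{-2\Delta|x|}w'(x)\,\d x<\infty$ (from $2\Delta<\omega$), controls the $\mathcal{H}_1$-Hilbert-Schmidt norm, hence the operator norm, by $O(\e^{-t\Delta})$; the treatment of $\int_\Sigma M_i\otimes L_j\,\eta^m\,\d\eta$ is entirely parallel via \eqref{z46}. The main technical obstacle is that $\|C_{t,n}\|_{\textnormal{op}}$ itself does \emph{not} decay in $t$ (since $B_n^{\textnormal{ext}}$ is $t$-independent), so neither a naive Neumann series nor the uniform estimate \eqref{z40} alone suffices; one must exploit the off-diagonal nilpotent structure to extract the decay exclusively from $A_{t,n}^{\textnormal{ext}}$.
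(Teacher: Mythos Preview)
Your proposal is correct and rests on the same pillars as the paper's proof: the exponential decay $\|A_{t,n}^{\textnormal{ext}}\|\leq C\e^{-2t\Delta}$, the nilpotency $(A_{t,n}^{\textnormal{ext}})^2=(B_n^{\textnormal{ext}})^2=0$, and a Cauchy--Schwarz estimate on the rank-one kernels. The organization, however, is genuinely different. The paper works globally on $\Sigma$: it expands $(I-\lambda^{\frac12}C_{t,n}^\ast)^{-1}=I+\lambda^{\frac12}C_{t,n}^\ast+O(\e^{-2t\Delta})$, bounds $\int_{\mathbb R}\|n_i(\cdot|x)\|_{L^2(\Sigma)}^2\,\d\sigma(x)$ via \eqref{z63}, and then estimates $\|T_{ij}^m\|$ by \eqref{z64}. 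This handles $(i,j)\neq(2,2)$ directly but forces a separate treatment of $(i,j)=(2,2)$, where neither $m_2$ nor $k_2$ carries any $t$-decay; the paper resolves this via a second-order resolvent identity \eqref{z65} and the observation that $\int_\Sigma M_2(\eta)\otimes K_2(\eta)\,\eta^m\,\d\eta=0$ by disjointness of supports. Your contour-by-contour block decomposition of $n_i$ into its $\Gamma_\alpha$- and $\Gamma_\beta$-components is more systematic: it localizes the decay to $\|n_2(\cdot|x)|_{\Gamma_\beta}\|\leq C\e^{-2t\Delta}$ (coming from $n_2|_{\Gamma_\beta}=\lambda^{\frac12}(A_{t,n}^{\textnormal{ext}})^\ast n_2|_{\Gamma_\alpha}$), so that the $(2,2)$ case falls out with no extra work since $k_2$ lives on $\Gamma_\beta$. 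The Carleman-type factorization $(I-\lambda^{\frac12}C_{t,n})^{-1}=(I+\lambda^{\frac12}C_{t,n})(I-\lambda C_{t,n}^2)^{-1}$ is equivalent to the paper's Neumann expansion \eqref{z62} but packages the uniform resolvent bound more cleanly. One cosmetic point: your phrase ``times at most one of $\e^{-x\Delta},\e^{-y\Delta}$'' is slightly loose for $(i,j)=(1,1)$, where both factors appear, but this only strengthens the estimate.
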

\begin{proof} We only focus on the first limit in \eqref{z61}, the second one follows by analogous logic. To this end we now show that $A_{t,n}^{\textnormal{ext}}\rightarrow 0$ exponentially fast as $t\rightarrow+\infty$ in operator and trace norm on $L^2(\Sigma)$. Indeed, from the proof of Lemma \ref{zlem:6}, for the trace norm, 
\begin{equation*}
	\|A_{t,n}^{\textnormal{ext}}\|_1\leq \|A_{t,n,1}^{\textnormal{ext}}\|_2\|A_{t,n,2}^{\textnormal{ext}}\|_2\leq c\,\e^{-2t\Delta},\ \ \ c=c(n,\Delta)>0,
\end{equation*}
uniformly in $t\in\mathbb{R}$. Hence, since the operator norm on $L^2(\Sigma)$ is dominated by the trace norm, we have likewise $\|A_{t,n}^{\textnormal{ext}}\|\leq c\,\e^{-2t\Delta}$ with $c=c(n,\Delta)>0$ uniformly in $t\in\mathbb{R}$. This shows by the Neumann series that for any $(\lambda,n)\in\overline{\mathbb{D}_1(0)}\times\mathbb{N}$ and $0<\Delta<\frac{\omega}{2}$ we can find $t_0=t_0(n,\lambda,\Delta)>0$ and $c=c(n,\Delta)>0$ such that
\begin{equation}\label{z62}
	\|(I-\lambda^{\frac{1}{2}}C_{t,n})^{-1}-I-\lambda^{\frac{1}{2}}C_{t,n}\|\leq c\,\e^{-2t\Delta}\ \ \ \ \ \forall\,t\geq t_0
\end{equation}
in operator norm on $L^2(\Sigma)$, compare \eqref{z60}\footnote{We have $\|C_{t,n}^k\|_1\rightarrow 0$ as $t\rightarrow+\infty$ for all $k\in\mathbb{Z}_{\geq 2}$ but not for $k=1$ since $\|B_{n}^{\textnormal{ext}}\|_1\nrightarrow 0$ as $t\rightarrow+\infty$.}. Equipped with \eqref{z62} we now return to \eqref{z39} and derive the following improvement of \eqref{z40}, for any $(\lambda,n)\in\overline{\mathbb{D}_1(0)}\times\mathbb{N}$ and $0<\Delta<\frac{\omega}{2}$ as well as $x\in\mathbb{R}$,
\begin{equation*}
	\|n_i(\cdot|x)-m_i(\cdot|x)-\lambda^{\frac{1}{2}}C_{t,n}^{\ast}m_i(\cdot|x)\|_{L^2(\Sigma)}\leq c\,\e^{-2t\Delta}\|m_i(\cdot|x)\|_{L^2(\Sigma)}\ \ \ \ \forall\,t\geq t_0,\ \ \ i\in\{1,2\},
\end{equation*}
with $c=c(n,\Delta)>0$. Consequently, by triangle inequality,
\begin{equation}\label{z63}
	\int_{\mathbb{R}}\|n_i(\cdot|x)\|_{L^2(\Sigma)}^2\d\sigma(x)\leq \big(1+c\,\e^{-2t\Delta}\big)\int_{\mathbb{R}}\|m_i(\cdot|x)\|_{L^2(\Sigma)}^2\d\sigma(x),
\end{equation}
where we used
\begin{equation*}
	\|C_{t,n}^{\ast}m_1(\cdot|x)\|_{L^2(\Sigma)}^2\leq c\,\e^{-2\Delta(t+x)},\ \ \ \ \|C_{t,n}^{\ast}m_2(\cdot|x)\|_{L^2(\Sigma)}^2\leq c\,\e^{-4\Delta t},\ \ c=c(n,\Delta)>0,
\end{equation*}
uniformly in $(t,x)\in\mathbb{R}^2$. This allows us to prove \eqref{z61} as follows: first, by Cauchy-Schwarz inequality (once on $\mathcal{H}_1$ and once on $L^2(\Sigma)$), Fubini's theorem and with the shorthand $T_{ij}^m:=\int_{\Sigma}N_i(\eta)\otimes K_j(\eta)\eta^m\d\eta$, for any $f\in\mathcal{H}_1$,
\begin{align*}
	\big|(T_{ij}^mf)(x)\big|=&\,\left|\int_{\mathbb{R}}\left(\int_{\Sigma}n_i(\eta|x)k_j(\eta|y)\eta^m\d\eta\right)f(y)\,\d\sigma(y)\right|\leq\int_{\Sigma}\left|n_i(\eta|x)\eta^m\left(\int_{\mathbb{R}}k_j(\eta|y)f(y)\,\d\sigma(y)\right)\right|\,|\d\eta|\\
	\leq&\,\int_{\Sigma}\big|n_i(\eta|x)\eta^m\big|\|k_j(\eta|\cdot)\|_{\mathcal{H}_1}\|f\|_{\mathcal{H}_1}\,|\d\eta|\leq\|n_i(\cdot|x)\|_{L^2(\Sigma)}\sqrt{\int_{\Sigma}\int_{\mathbb{R}}\big|k_j(\eta|y)\eta^m\big|^2\d\sigma(y)\,|\d\eta|}\,\|f\|_{\mathcal{H}_1},
\end{align*}
so that in operator norm on $\mathcal{H}_1$,
\begin{equation}\label{z64}
	\|T_{ij}^m\|^2\leq\left(\int_{\mathbb{R}}\|n_i(\cdot|x)\|_{L^2(\Sigma)}^2\,\d\sigma(x)\right)\left(\int_{\Sigma}\int_{\mathbb{R}}\big|k_j(\eta|y)\eta^m\big|^2\d\sigma(y)\,|\d\eta|\right).
\end{equation}
Second, using \eqref{z33}, for any $n\in\mathbb{N}$ and $0<\Delta<\frac{\omega}{2}$ there exists $c=c(n,\Delta)>0$ such that
\begin{equation*}
	\|m_1(\cdot|x)\|_{L^2(\Sigma)}^2\leq c\,\e^{-2\Delta(t+x)},\ \ \ \|m_2(\cdot|x)\|_{L^2(\Sigma)}^2\leq c,\ \ \ \|k_1(\cdot|y)\|_{L^2(\Sigma)}^2\leq c\,\e^{-2\Delta(t+y)},\ \ \ \|k_2(\cdot|y)\|_{L^2(\Sigma)}^2\leq c
\end{equation*}
hold true for all $(x,y,t)\in\mathbb{R}^3$. Moreover, for any fixed $m\in\mathbb{Z}_{\geq 0}$ there exists $\hat{c}=\hat{c}(n,\Delta,m)>0$ such that
\begin{equation*}
	\int_{\Sigma}\big|k_1(\eta|y)\eta^m\big|^2|\d\eta|\leq \hat{c}\,\e^{-2\Delta(t+y)},\ \ \ \ \int_{\Sigma}\big|k_2(\eta|y)\eta^m\big|^2|\d\eta|\leq\hat{c}
\end{equation*}
uniformly in $(y,t)\in\mathbb{R}^2$. Hence, combining \eqref{z64} with \eqref{z63} and using the last six estimates we immediately establish $T_{ij}^m\rightarrow 0$ exponentially fast in operator norm on $\mathcal{H}_1$ as $t\rightarrow+\infty$ in view of \eqref{i13}, provided $i,j\in\{1,2\}$ but $(i,j) \neq (2,2)$. Lastly, if $i=j=2$, then 
\begin{equation}\label{z65}
	T_{22}^m\stackrel{\eqref{z39}}{=}\int_{\Sigma}M_2(\eta)\otimes K_2(\eta)\,\eta^m\,\d\eta+\lambda^{\frac{1}{2}}C_{t,n}^{\ast}\int_{\Sigma}M_2(\eta)\otimes K_2(\eta)\,\eta^m\,\d\eta+\lambda (C_{t,n}^{\ast})^2\int_{\Sigma}N_2(\eta)\otimes K_2(\eta)\,\eta^m\,\d\eta
\end{equation}
where we used $(I-\lambda^{\frac{1}{2}}C_{t,n}^{\ast})^{-1}=I+\lambda^{\frac{1}{2}}C_{t,n}^{\ast}+\lambda(C_{t,n}^{\ast})^2(I-\lambda^{\frac{1}{2}}C_{t,n}^{\ast})^{-1}$. The first summand in \eqref{z65} vanishes identically by \eqref{z33} since $\Gamma_{\alpha}$ and $\Gamma_{\beta}$ are disjoint so we only need to control the following $\mathcal{H}_1$ operator norms
\begin{eqnarray*}
	\left\|C_{t,n}^{\ast}\int_{\Sigma}M_2(\eta)\otimes K_2(\eta)\,\eta^m\,\d\eta\right\|\!\!\!\!&\leq&\!\!\!\!\left(\int_{\mathbb{R}}\|C_{t,n}^{\ast}m_2(\cdot|x)\|_{L^2(\Sigma)}^2\,\d\sigma(x)\right)\left(\int_{\Sigma}\int_{\mathbb{R}}\big|k_2(\eta|y)\eta^m\big|^2\d\sigma(y)\,|\d\eta|\right)\\
	&\leq&\!\!\!\! \hat{c}\,\e^{-2\Delta t},\ \ \ \ \hat{c}=\hat{c}(n,\Delta,m)>0,
\end{eqnarray*}
and
\begin{eqnarray*}
	\left\|(C_{t,n}^{\ast})^2\int_{\Sigma}N_2(\eta)\otimes K_2(\eta)\,\eta^m\,\d\eta\right\|\!\!\!\!&\leq&\!\!\!\!\|(C_{t,n}^{\ast})^2\|\left(\int_{\mathbb{R}}\|n_2(\cdot|x)\|_{L^2(\Sigma)}^2\,\d\sigma(x)\right)\left(\int_{\Sigma}\int_{\mathbb{R}}\big|k_2(\eta|y)\eta^m\big|^2\d\sigma(y)\,|\d\eta|\right)\\
	&\leq&\!\!\!\! \hat{c}\,\|C_{t,n}^2\|\leq \hat{c}\,\|C_{t,n}^2\|_1\stackrel{\eqref{z60}}{\leq} \hat{c}\,\|A_{t,n}^{\textnormal{ext}}\|_1\|B_{n}^{\textnormal{ext}}\|\leq \hat{c}\,\e^{-2\Delta t},
\end{eqnarray*}
also with $\hat{c}=\hat{c}(n,\Delta,m)>0$. Combined in \eqref{z65}, this proves the exponentially fast convergence of $T_{22}^m$ to the zero operator on $\mathcal{H}_1$ in operator norm. Our proof of \eqref{z61} is now completed.
\end{proof}
Corollaries \ref{z:cor1}, \ref{z:cor2}, \ref{z:cor3} and Lemma \ref{z:lem8} conclude the content of this section on the $\mathcal{I}(\mathcal{H}_2)$-valued RHP \ref{zmaster}. We will now use this problem in the proof of Theorem \ref{itheo1}.

\section{The integro-differential Painlev\'e-II hierarchy - proof of Theorem \ref{itheo1}, part 1}\label{zsec4}
In order to arrive at the integro-differential dynamical system \eqref{i17} we will first derive a certain operator-valued Lax pair from RHP \ref{zmaster}, see Proposition \ref{z:prop4} and equation \eqref{z74} below. This approach follows the methodology in \cite[Section $9.3$]{B} and plays out as follows. First, we view the multiplication operators $M_i(\zeta)$ and $N_i(\zeta)$, see \eqref{z33} and \eqref{z39}, as integral operators on $\mathcal{H}_1$ with distributional kernels. In detail, we replace
\begin{align*}
	m_i(\zeta|x)\mapsto m_i(\zeta|x,y):=&\,m_i(\zeta|x)\delta(x-y)(w'(y))^{-1},\\
	n_i(\zeta|x)\mapsto n_i(\zeta|x,y):=&\,n_i(\zeta|x)\delta(x-y)(w'(y))^{-1},
\end{align*}
with $(x,y)\in\mathbb{R}$ and where, by definition,
\begin{equation*}
	\int_{-\infty}^{\infty}\delta(x-y)(w'(y))^{-1}f(y)\,\d\sigma(y):=f(x),\ \ \ \ f\in\mathcal{H}_1.
\end{equation*}
Then, differentiating \eqref{z33}, we find the kernel identity (recall the definition of ${\bf M}(\zeta)$ in Corollary \ref{z:cor1})
\begin{equation*}
	\frac{\partial}{\partial\zeta}\,{\bf M}(\zeta|x,y)=\begin{bmatrix}-\im(\frac{1}{2}\zeta^{2n}+t+x) & 0\smallskip\\
	0 & \frac{\im}{2}\zeta^{2n}\end{bmatrix}{\bf M}(\zeta|x,y),\ \ \ \ (\zeta,x,y)\in\Sigma\times\mathbb{R}^2,
\end{equation*}
or equivalently the operator identity
\begin{equation}\label{z66}
	\frac{\partial}{\partial\zeta}\,{\bf M}(\zeta)=\big(\zeta^{2n}{\bf A}_0+\widehat{{\bf A}}_{2n}\big){\bf M}(\zeta),\ \ \ \zeta\in\Sigma,
\end{equation}
where the operators ${\bf A}_0,\widehat{{\bf A}}_{2n}:\mathcal{H}_2\rightarrow\mathcal{H}_2$ are $\zeta$-independent and have kernels
\begin{equation}\label{z67}
	{\bf A}_0(x,y):=\delta(x-y)\frac{1}{2}\begin{bmatrix}-\im & 0\smallskip\\
	0 & \im\end{bmatrix}\big(w'(y)\big)^{-1},\ \ \ \ \widehat{{\bf A}}_{2n}(x,y):=\delta(x-y)\begin{bmatrix}-\im(t+x) & 0\smallskip\\
	0 & 0\end{bmatrix}\big(w'(y)\big)^{-1}.
\end{equation}
Similarly,
\begin{equation}\label{z68}
	\frac{\partial}{\partial t}\,{\bf M}(\zeta)=\big(\zeta{\bf B}_0\big){\bf M}(\zeta),\ \ \ \ \zeta\in\Sigma,
\end{equation}
where ${\bf B}_0:\mathcal{H}_2\rightarrow\mathcal{H}_2$ has kernel
\begin{equation}\label{z69}
	{\bf B}_0(x,y):=\delta(x-y)\begin{bmatrix}-\im & 0\smallskip\\ 0 & 0\end{bmatrix}\big(w'(y)\big)^{-1}.
\end{equation}
At this point we return to \eqref{z47}.
\begin{prop}\label{z:prop4} There exist $(t,\lambda,n)$-dependent, analytic in $\zeta\in\mathbb{C}$ integral operators ${\bf A}(\zeta),{\bf B}(\zeta)$ on $\mathcal{H}_2$ such that for every $\zeta\in\Sigma$ and $(t,\lambda,n)\in\mathbb{R}\times\overline{\mathbb{D}_1(0)}\times\mathbb{N}$,
\begin{equation*}
	\frac{\partial{\bf N}}{\partial\zeta}(\zeta)={\bf A}(\zeta){\bf N}(\zeta),\ \ \ \ \frac{\partial{\bf N}}{\partial t}(\zeta)={\bf B}(\zeta){\bf N}(\zeta).
\end{equation*}
\end{prop}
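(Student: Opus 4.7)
The plan is to construct $\mathbf{A}(\zeta)$ and $\mathbf{B}(\zeta)$ as ``logarithmic derivatives'' of $\mathbf{X}(\zeta)$, twisted by the evolutions \eqref{z66} and \eqref{z68} of $\mathbf{M}(\zeta)$. Using the factorisation $\mathbf{N}(\zeta) = \mathbf{X}(\zeta)\mathbf{M}(\zeta)$ from \eqref{z47} together with the invertibility of $\mathbf{X}(\zeta)$ on $\mathcal{H}_2$ granted by Corollary \ref{z:cor1}, for $\zeta\in\mathbb{C}\setminus\Sigma$ I would set
\begin{align*}
    \mathbf{A}(\zeta) &:= \frac{\partial\mathbf{X}}{\partial\zeta}(\zeta)\cdot\mathbf{X}(\zeta)^{-1} + \mathbf{X}(\zeta)\bigl(\zeta^{2n}\mathbf{A}_0 + \widehat{\mathbf{A}}_{2n}\bigr)\mathbf{X}(\zeta)^{-1},\\
    \mathbf{B}(\zeta) &:= \frac{\partial\mathbf{X}}{\partial t}(\zeta)\cdot\mathbf{X}(\zeta)^{-1} + \zeta\,\mathbf{X}(\zeta)\mathbf{B}_0\mathbf{X}(\zeta)^{-1}.
\end{align*}
Differentiating $\mathbf{N}=\mathbf{X}\mathbf{M}$ and invoking \eqref{z66}, \eqref{z68} then yields the Lax identities $\partial_\zeta\mathbf{N}=\mathbf{A}\mathbf{N}$ and $\partial_t\mathbf{N}=\mathbf{B}\mathbf{N}$ on $\Sigma$ immediately from the product rule, so the entire task reduces to proving that $\mathbf{A}(\zeta)$ and $\mathbf{B}(\zeta)$ extend to entire $\mathcal{I}(\mathcal{H}_2)$-valued functions of $\zeta$. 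The required smoothness of $\mathbf{X}(\zeta)$ in $t$ and in $\zeta\in\mathbb{C}\setminus\Sigma$ needed to make sense of the partial derivatives follows by differentiating through the integral representation \eqref{z38} and using the regularity of the resolvent $(I-\lambda^{1/2}C_{t,n}^{\ast})^{-1}$ from \eqref{z39}.

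Analyticity of $\mathbf{A}, \mathbf{B}$ off $\Sigma$ is inherited from property (1) of RHP \ref{zmaster}. A short algebraic manipulation based on $\mathbf{X}_+=\mathbf{X}_-\mathbf{G}$ reduces the vanishing of the boundary-value jump $\mathbf{A}_+(\zeta)-\mathbf{A}_-(\zeta)$ to the commutator identity
\begin{equation*}
    \frac{\partial\mathbf{G}}{\partial\zeta}(\zeta) = \bigl[\zeta^{2n}\mathbf{A}_0+\widehat{\mathbf{A}}_{2n},\ \mathbf{G}(\zeta)\bigr],\qquad \zeta\in\Sigma,
\end{equation*}
and analogously $\partial_t\mathbf{G}(\zeta)=[\zeta\mathbf{B}_0,\mathbf{G}(\zeta)]$. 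Both identities are verified entry-wise on the rank-two piece $\mathbf{G}(\zeta)-\mathbb{I}_2$ of \eqref{z34}: direct differentiation of \eqref{z33} gives
\begin{equation*}
    \frac{\partial}{\partial\zeta}\bigl[m_i(\zeta|x)k_j(\zeta|y)\bigr] = \bigl(a_i(\zeta|x)-a_j(\zeta|y)\bigr)m_i(\zeta|x)k_j(\zeta|y),
\end{equation*}
with $a_1(\zeta|x)=-\tfrac{\im}{2}\zeta^{2n}-\im(t+x)$ and $a_2(\zeta|x)=\tfrac{\im}{2}\zeta^{2n}$, which are exactly the diagonal entries of the multiplier operator $\zeta^{2n}\mathbf{A}_0+\widehat{\mathbf{A}}_{2n}$ from \eqref{z67}. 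Since left-multiplication of $M_i\otimes K_j$ by $\mathbf{A}_\bullet$ multiplies the kernel by $a_i(\zeta|x)$ and right-multiplication by $a_j(\zeta|y)$, the commutator $[\mathbf{A}_\bullet,\mathbf{M}\otimes\mathbf{K}]_{ij}$ has kernel $(a_i(\zeta|x)-a_j(\zeta|y))m_i(\zeta|x)k_j(\zeta|y)$, matching exactly $\partial_\zeta(m_ik_j)$. The $t$-identity follows identically from $\partial_t m_1=-\im\zeta m_1$, $\partial_t k_1=\im\zeta k_1$ and $\partial_t m_2=\partial_t k_2=0$.

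Finally, using $\mathbf{X}(\zeta)=\mathbb{I}_2+\mathbf{X}_0(\zeta)$ with $\|\mathbf{X}_0(\zeta|x,y)\|=O(|\zeta|^{-1})$ from \eqref{z35}, and expanding $\mathbf{X}(\zeta)^{-1}=\mathbb{I}_2-\mathbf{X}_0(\zeta)+O(|\zeta|^{-2})$, I would obtain
\begin{equation*}
    \mathbf{A}(\zeta)=\zeta^{2n}\mathbf{A}_0+\widehat{\mathbf{A}}_{2n}+\bigl[\mathbf{X}_0(\zeta),\,\zeta^{2n}\mathbf{A}_0+\widehat{\mathbf{A}}_{2n}\bigr]+O(|\zeta|^{-2}),\qquad|\zeta|\to\infty,
\end{equation*}
and a similar expansion with $\mathbf{A}_\bullet$ replaced by $\zeta\mathbf{B}_0$ for $\mathbf{B}(\zeta)$. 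Combined with the no-jump result, a kernel-level Liouville argument analogous to the one in the proof of Lemma \ref{z:lem7} then forces $\mathbf{A}(\zeta)$ to be a polynomial in $\zeta$ of degree at most $2n$ and $\mathbf{B}(\zeta)$ to be affine in $\zeta$; both are therefore entire, which is the desired conclusion. The principal technical obstacle is the entrywise verification of the commutator identity for $\partial_\zeta\mathbf{G}$ — in particular, checking that the $\zeta^{2n}$ contributions cancel correctly in all four blocks $(i,j)\in\{1,2\}^2$ and that right-multiplication of $M_i\otimes K_j$ by the distributional operator $\widehat{\mathbf{A}}_{2n}$ indeed contributes the factor $a_j(\zeta|y)$ at the kernel level, which relies on the interpretation of $\mathbf{A}_0,\widehat{\mathbf{A}}_{2n}$ as integral operators with distributional kernels introduced before \eqref{z66}.
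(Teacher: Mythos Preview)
Your proposal is correct and follows essentially the same route as the paper: define $\mathbf{A}(\zeta),\mathbf{B}(\zeta)$ by conjugating the bare coefficients $\zeta^{2n}\mathbf{A}_0+\widehat{\mathbf{A}}_{2n}$ and $\zeta\mathbf{B}_0$ by $\mathbf{X}(\zeta)$ and adding the logarithmic derivative, then verify the commutator identities $\partial_\zeta\mathbf{G}=[\zeta^{2n}\mathbf{A}_0+\widehat{\mathbf{A}}_{2n},\mathbf{G}]$ and $\partial_t\mathbf{G}=[\zeta\mathbf{B}_0,\mathbf{G}]$ to kill the jump across $\Sigma$. One minor remark: the polynomial-in-$\zeta$ conclusion you obtain via Liouville at the end is not needed for this proposition (entirety already follows once $\mathbf{A}_+=\mathbf{A}_-$ on $\Sigma$ and $\mathbf{A}$ is analytic off $\Sigma$ with continuous boundary values); the paper defers that sharper statement to the next proposition.
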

\begin{proof} We $\zeta$-differentiate the first identity in \eqref{z47} using \eqref{z66},
\begin{equation*}
	\frac{\partial{\bf N}}{\partial\zeta}(\zeta)=\underbrace{\left[\frac{\partial{\bf X}}{\partial\zeta}(\zeta)\big({\bf X}(\zeta)\big)^{-1}+{\bf X}(\zeta)\big(\zeta^{2n}{\bf A}_0+\widehat{{\bf A}}_{2n}\big)\big({\bf X}(\zeta)\big)^{-1}\right]}_{=:{\bf A}(\zeta)}{\bf N}(\zeta).
\end{equation*}
Here, ${\bf A}(\zeta)\in\mathcal{I}(\mathcal{H}_2)$ by Theorem \ref{z:theo1}, Corollary \ref{z:cor1}, and ${\bf A}(\zeta)$ is analytic for $\zeta\in\mathbb{C}\setminus\Sigma$ with continuous boundary values ${\bf A}_{\pm}(\zeta)\in\mathcal{I}(\mathcal{H}_2)$ on $\Sigma$ by the same reasoning. Recalling \eqref{z34} we then compute on $\Sigma$,
\begin{align}
	{\bf A}_+(\zeta)=\bigg[\frac{\partial{\bf X}_-}{\partial\zeta}(\zeta){\bf G}(\zeta)+&\,{\bf X}_-(\zeta)\frac{\partial{\bf G}}{\partial\zeta}(\zeta)\bigg]\big({\bf G}(\zeta)\big)^{-1}\big({\bf X}_-(\zeta)\big)^{-1}\nonumber\\
	&\,+{\bf X}_-(\zeta){\bf G}(\zeta)\big(\zeta^{2n}{\bf A}_0+\widehat{{\bf A}}_{2n}\big)\big({\bf G}(\zeta)\big)^{-1}\big({\bf X}_-(\zeta)\big)^{-1},\label{z70}
\end{align}
and with \eqref{z34},\eqref{z67} derive for $\zeta\in\Sigma$,
\begin{equation*}
	\frac{\partial{\bf G}}{\partial\zeta}(\zeta|x,y)=\int_{-\infty}^{\infty}\Big\{\big(\zeta^{2n}{\bf A}_0(x,z)+\widehat{{\bf A}}_{2n}(x,z)\big){\bf G}_0(\zeta|z,y)-{\bf G}_0(\zeta|x,z)\big(\zeta^{2n}{\bf A}_0(z,y)+\widehat{{\bf A}}_{2n}(z,y)\big)\Big\}\,\d\sigma(z).
\end{equation*}
Here we abbreviate, as in the proof of Lemma \ref{z:lem7}, ${\bf G}(\zeta)=\mathbb{I}_2+{\bf G}_0(\zeta)$ and note that the last kernel identity is equivalent to the operator commutator identity
\begin{equation}\label{z71}
	\frac{\partial{\bf G}}{\partial\zeta}(\zeta)=\big[\zeta^{2n}{\bf A}_0+\widehat{{\bf A}}_{2n},{\bf G}(\zeta)\big]\in\mathcal{I}(\mathcal{H}_2),\ \ \ \ \zeta\in\Sigma.
\end{equation}
Inserting \eqref{z71} into \eqref{z70} we find at once
\begin{equation*}
	{\bf A}_+(\zeta)=\frac{\partial{\bf X}_-}{\partial\zeta}(\zeta)\big({\bf X}_-(\zeta)\big)^{-1}+{\bf X}_-(\zeta)\big(\zeta^{2n}{\bf A}_0+\widehat{{\bf A}}_{2n}\big)\big({\bf X}_-(\zeta)\big)^{-1}={\bf A}_-(\zeta),\ \ \zeta\in\Sigma,
\end{equation*}
i.e. ${\bf A}(\zeta)$ extends analytically across $\Sigma$. In turn, ${\bf A}(\zeta)$ is analytic for every $\zeta\in\mathbb{C}$ given that $(x,y)\mapsto{\bf A}(\zeta|x,y)$ is in $L^2(\mathbb{R}^2,\d\sigma\otimes\d\sigma;\mathbb{C}^{2\times 2})$ for every $\zeta\in\mathbb{C}$ by construction. This proves our first identity and the reasoning for the second one is analogous: first differentiate \eqref{z47} using \eqref{z68},
\begin{equation}\label{z72}
	\frac{\partial{\bf N}}{\partial t}(\zeta)=\underbrace{\left[\frac{\partial{\bf X}}{\partial t}(\zeta)\big({\bf X}(\zeta)\big)^{-1}+{\bf X}(\zeta)\big(\zeta{\bf B}_0\big)\big({\bf X}(\zeta)\big)^{-1}\right]}_{=:{\bf B}(\zeta)}{\bf N}(\zeta).
\end{equation}
Since ${\bf B}(\zeta)\in\mathcal{I}(\mathcal{H}_2)$ and ${\bf B}(\zeta)$ is analytic for $\zeta\in\mathbb{C}\setminus\Sigma$ with continuous boundary values ${\bf B}_{\pm}(\zeta)\in\mathcal{I}(\mathcal{H}_2)$ on $\Sigma$, compare Theorem \ref{z:theo1} and Corollary \ref{z:cor1}, we simply compute for $\zeta\in\Sigma$
\begin{equation}\label{z73}
	{\bf B}_+(\zeta)=\left[\frac{\partial{\bf X}_-}{\partial t}{\bf G}(\zeta)+{\bf X}_-(\zeta)\frac{\partial{\bf G}}{\partial t}(\zeta)\right]\big({\bf G}(\zeta)\big)^{-1}\big({\bf X}_-(\zeta)\big)^{-1}+{\bf X}_-(\zeta){\bf G}(\zeta)\big(\zeta{\bf B}_0\big)\big({\bf G}(\zeta)\big)^{-1}\big({\bf X}_-(\zeta)\big)^{-1}.
\end{equation}
But from \eqref{z34},\eqref{z69},
\begin{equation*}
	\frac{\partial{\bf G}}{\partial t}(\zeta|x,y)=\int_{-\infty}^{\infty}\Big\{\big(\zeta{\bf B}_0(x,z)\big){\bf G}_0(\zeta|z,y)-{\bf G}_0(\zeta|x,z)\big(\zeta{\bf B}_0(z,y)\big)\Big\}\,\d\sigma(z),
\end{equation*}
leading to the following replacement of \eqref{z71}
\begin{equation*}
	\frac{\partial{\bf G}}{\partial t}(\zeta)=\big[\zeta{\bf B}_0,{\bf G}(\zeta)\big]\in\mathcal{I}(\mathcal{H}_2),\ \ \ \zeta\in\Sigma.
\end{equation*}
Once substituted back into \eqref{z73} we find at once ${\bf B}_+(\zeta)={\bf B}_-(\zeta)$ for $\zeta\in\Sigma$, i.e. ${\bf B}(\zeta)$ is analytic for $\zeta\in\mathbb{C}$. This concludes our proof.
\end{proof}
In our next step we will express the coefficient operators ${\bf A}(\zeta),{\bf B}(\zeta)$ introduced in Proposition \ref{z:prop4} to some extent in terms of the solution of RHP \ref{zmaster}.
\begin{prop}\label{z:prop5} We have 
\begin{equation}\label{z74}
	{\bf B}(\zeta)=\zeta{\bf B}_0+{\bf B}_1,\ \ \ \ \ \ {\bf A}(\zeta)=\zeta^{2n}{\bf A}_0+\sum_{k=1}^{2n}{\bf A}_k\zeta^{2n-k}+\widehat{{\bf A}}_{2n},
\end{equation}
where ${\bf B}_j:\mathcal{H}_2\rightarrow\mathcal{H}_2$ are the $\zeta$-independent integral operators with kernels written in \eqref{z69} and \eqref{z76} below. Likewise, ${\bf A}_j:\mathcal{H}_2\rightarrow\mathcal{H}_2$ are $\zeta$-independent, the kernels of ${\bf A}_0$ and $\widehat{{\bf A}}_{2n}$ are written in \eqref{z67} and the entries of ${\bf A}_k$ are polynomials in $\int_{\Sigma}N_i(\eta)\otimes K_j(\eta)\eta^m\d\eta$ and $\int_{\Sigma}M_i(\eta)\otimes L_j(\eta)\eta^m\d\eta$ with $m\in\mathbb{Z}_{\geq 0},i,j\in\{1,2\}$.
\end{prop}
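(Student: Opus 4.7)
The plan is to combine the entire character of ${\bf A}(\zeta),{\bf B}(\zeta)$ established in Proposition \ref{z:prop4} with a Laurent expansion of ${\bf X}(\zeta),({\bf X}(\zeta))^{-1}$ at $\zeta=\infty$. An operator-valued Liouville argument will then force all negative powers of $\zeta$ to vanish, leaving exactly the polynomial structure claimed in \eqref{z74}.

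I will start from the integral formulas \eqref{z38} and \eqref{z45} and substitute the geometric identity $\frac{1}{\eta-\zeta}=-\sum_{k=0}^{K-1}\eta^k\zeta^{-k-1}+\zeta^{-K}\frac{\eta^K}{\eta-\zeta}$ to produce, for any fixed $K\in\mathbb{N}$ and as $|\zeta|\to\infty$ off $\Sigma$, the kernel-norm expansions
\begin{equation*}
	{\bf X}(\zeta)=\mathbb{I}_2+\sum_{k=1}^{K}\frac{{\bf X}^{(k)}}{\zeta^k}+O(|\zeta|^{-K-1}),\ \ \ \ \ \big({\bf X}(\zeta)\big)^{-1}=\mathbb{I}_2+\sum_{k=1}^{K}\frac{{\bf Y}^{(k)}}{\zeta^k}+O(|\zeta|^{-K-1}),
\end{equation*}
whose matrix entries are $-\lambda^{\frac{1}{2}}\int_\Sigma N_i(\eta)\otimes K_j(\eta)\eta^{k-1}\,\d\eta$ and $+\lambda^{\frac{1}{2}}\int_\Sigma M_i(\eta)\otimes L_j(\eta)\eta^{k-1}\,\d\eta$, respectively. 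All such moment integrals converge absolutely in $\mathcal{I}(\mathcal{H}_2)$: the indicator factors $\chi_{\Gamma_\alpha},\chi_{\Gamma_\beta}$ in \eqref{z33} restrict integration to the shifted real lines $\mathbb{R}\pm\im\Delta$, and on these contours $|\e^{\pm\frac{\im}{2}\psi_n(\eta,z)}|$ decays at the super-polynomial rate $\e^{-c\Delta|\Re\eta|^{2n}}$, the same mechanism already exploited in Lemma \ref{zlem:6} and Corollary \ref{z:cor3}.

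Substituting these expansions into the defining formulas
\begin{align*}
	{\bf B}(\zeta)&=\frac{\partial{\bf X}}{\partial t}(\zeta)\big({\bf X}(\zeta)\big)^{-1}+\zeta\,{\bf X}(\zeta){\bf B}_0\big({\bf X}(\zeta)\big)^{-1},\\
	{\bf A}(\zeta)&=\frac{\partial{\bf X}}{\partial\zeta}(\zeta)\big({\bf X}(\zeta)\big)^{-1}+\zeta^{2n}{\bf X}(\zeta){\bf A}_0\big({\bf X}(\zeta)\big)^{-1}+{\bf X}(\zeta)\widehat{{\bf A}}_{2n}\big({\bf X}(\zeta)\big)^{-1}
\end{align*}
taken from the proof of Proposition \ref{z:prop4}, and using that $\partial_t{\bf X}(\zeta)=O(|\zeta|^{-1})$ and $\partial_\zeta{\bf X}(\zeta)=O(|\zeta|^{-2})$ by term-by-term differentiation of the Laurent series, a direct comparison of powers of $\zeta$ (with $K\geq 2n$) should yield
\begin{equation*}
	{\bf B}(\zeta)=\zeta{\bf B}_0+{\bf B}_1+O(|\zeta|^{-1}),\ \ \ {\bf A}(\zeta)=\zeta^{2n}{\bf A}_0+\sum_{k=1}^{2n}{\bf A}_k\zeta^{2n-k}+\widehat{{\bf A}}_{2n}+O(|\zeta|^{-1}),
\end{equation*}
where the operator coefficients ${\bf B}_1$ and ${\bf A}_k$ are polynomials in the matrix entries of ${\bf X}^{(j)},{\bf Y}^{(j)}$ with $1\leq j\leq 2n$, hence, via Corollary \ref{z:cor2}, polynomials in the advertised moment operators $\int_\Sigma N_i\otimes K_j\,\eta^m\,\d\eta$ and $\int_\Sigma M_i\otimes L_j\,\eta^m\,\d\eta$. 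Since both ${\bf A}(\zeta)$ and ${\bf B}(\zeta)$ are entire by Proposition \ref{z:prop4}, the two $O(|\zeta|^{-1})$-remainders are themselves entire operator-valued functions whose kernels vanish at infinity in $L^2(\mathbb{R}^2,\d\sigma\otimes\d\sigma;\mathbb{C}^{2\times 2})$; an entrywise application of Liouville's theorem then forces each remainder to be identically zero, which gives precisely \eqref{z74}.

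The main obstacle will be producing the uniform kernel-norm bound of the form $O(|\zeta|^{-1})$ as $\zeta$ escapes to $\infty$ off $\Sigma$, which is what the Liouville step consumes. I expect to secure this by combining the rapid decay of $|\e^{\pm\frac{\im}{2}\psi_n(\eta,\cdot)}|$ on $\Gamma_\alpha\cup\Gamma_\beta$ with the resolvent estimate \eqref{z40}, in the exact spirit of estimate \eqref{z64} used in the proof of Corollary \ref{z:cor3}; once that quantitative control on the tails is in place, the remaining bookkeeping — identifying the coefficients and invoking Liouville — is routine.
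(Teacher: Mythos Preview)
Your proposal is correct and follows essentially the same route as the paper: expand ${\bf X}(\zeta)$ and $({\bf X}(\zeta))^{-1}$ at infinity via the geometric identity for $\tfrac{1}{\eta-\zeta}$, substitute into the defining formulas from Proposition~\ref{z:prop4}, and invoke Liouville's theorem (together with Corollary~\ref{z:cor2}) to kill the $O(|\zeta|^{-1})$ remainder. The paper carries out exactly this argument, recording the remainder estimate as \eqref{z75} and then reading off the explicit kernel \eqref{z76} for ${\bf B}_1$.
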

\begin{proof} Return to \eqref{z38} and \eqref{z45}, write $\frac{1}{\eta-\zeta}=-\frac{1}{\zeta}\sum_{k=0}^{2n-1}(\frac{\eta}{\zeta})^k+\frac{\eta^{2n}}{\zeta^{2n}(\eta-\zeta)}$ for $\zeta\neq\eta$ and derive
\begin{align*}
	{\bf X}(\zeta)=\mathbb{I}_2-\lambda^{\frac{1}{2}}\sum_{k=1}^{2n}\frac{1}{\zeta^k}\int_{\Sigma}\begin{bmatrix}N_1(\eta)\otimes K_1(\eta) & N_1(\eta)\otimes K_2(\eta)\smallskip\\
	N_2(\eta)\otimes K_1(\eta) & N_2(\eta)\otimes K_2(\eta)\end{bmatrix}\eta^{k-1}\,\d\eta+{\bf X}_{\textnormal{asy}}(\zeta),\ \ \zeta\notin\Sigma,
\end{align*}
where ${\bf X}_{\textnormal{asy}}(\zeta)\in\mathcal{I}(\mathcal{H}_2)$ and there exists $c=c(n,t)>0$ such that for $\zeta\in\mathbb{C}\setminus\Sigma$,
\begin{equation}\label{z75}
	\|{\bf X}_{\textnormal{asy}}(\zeta|x,y)\|\leq\frac{c\sqrt{|\lambda|}}{1+|\zeta|^{2n+1}}\Delta^{-\frac{1}{4n}}\e^{-\frac{(-1)^n\Delta}{2(2n+1)}\Delta^{2n}}\e^{\Delta(|x|+|y|+|t|)}
\end{equation}
uniformly in $(x,y)\in\mathbb{R}^2$ and $\lambda\in\overline{\mathbb{D}_1(0)}$. Similarly from \eqref{z45},
\begin{align*}
	\big({\bf X}(\zeta)\big)^{-1}=\mathbb{I}_2+\lambda^{\frac{1}{2}}\sum_{k=1}^{2n}\frac{1}{\zeta^k}\int_{\Sigma}&\,\begin{bmatrix}M_1(\eta)\otimes L_1(\eta) & M_1(\eta)\otimes L_2(\eta)\smallskip\\
	M_2(\eta)\otimes L_1(\eta) & M_2(\eta)\otimes L_2(\eta)\end{bmatrix}\eta^{k-1}\,\d\eta+{\bf X}_{\textnormal{asy}}^{-1}(\zeta),\ \ \zeta\notin\Sigma,
\end{align*}
where\footnote{Please note that ${\bf X}_{\textnormal{asy}}^{-1}$ is not the inverse of ${\bf X}_{\textnormal{asy}}$, as the notation might suggest.} ${\bf X}_{\textnormal{asy}}^{-1}(\zeta)\in\mathcal{I}(\mathcal{H}_2)$ also satisfies \eqref{z75}. At this point we first return to the defining equation of ${\bf B}(\zeta)$ in \eqref{z72} and apply Liouville's theorem combined with Corollary \ref{z:cor2},
\begin{equation*}
	{\bf B}(\zeta)=\zeta{\bf B}_0+{\bf B}_1,\ \ \ \zeta\in\mathbb{C},
\end{equation*}
where ${\bf B}_1$ is the integral operator on $\mathcal{H}_2$ with kernel ${\bf B}_1(x,y)=[B_1^{ij}(x,y)]_{i,j=1}^2$ and
\begin{align}\label{z76}
	B_1^{11}(x,y)=B_1^{22}(x,y)=0,\ \ \ \ \ \ \ \ \ B_1^{12}(x,y)=-\im U(x,y),\ \ \ B_1^{21}(x,y)=\im V(x,y),
\end{align}
where we use the shorthand $U(x,y)$ and $V(x,y)$ for the kernels of $U:=\lambda^{\frac{1}{2}}\int_{\Sigma}N_1(\eta)\otimes K_2(\eta)\d\eta$ and $V:=\lambda^{\frac{1}{2}}\int_{\Sigma}N_2(\eta)\otimes K_1(\eta)\d\eta$.	
Next, by similar logic, 
\begin{equation*}
	{\bf A}(\zeta)=\zeta^{2n}{\bf A}_0+\sum_{k=1}^{2n}{\bf A}_k\zeta^{2n-k}+\widehat{{\bf A}}_{2n},\ \ \ \zeta\in\mathbb{C},
\end{equation*}
obtained from inserting the above representations for ${\bf X}(\zeta)$ and $({\bf X}(\zeta))^{-1}$ into the defining equation for ${\bf A}(\zeta)$ and applying Liouville's theorem. This concludes the proof of \eqref{z74}.
\end{proof}
Before moving on and before we employ \eqref{z74} in the operator-valued Lax pair
\begin{equation}\label{z77}
	\frac{\partial{\bf N}}{\partial\zeta}(\zeta)={\bf A}(\zeta){\bf N}(\zeta),\ \ \ \ \frac{\partial{\bf N}}{\partial t}(\zeta)={\bf B}(\zeta){\bf N}(\zeta),
\end{equation}
we first record the following connection formula.
\begin{lem}\label{z:lem9} For every $(t,\lambda,n)\in\mathbb{R}\times\overline{\mathbb{D}_1(0)}\times\mathbb{N}$,
\begin{equation*}
	\frac{\partial}{\partial t}\ln D_n(t,\lambda)=-\im\lambda^{\frac{1}{2}}\tr_{\mathcal{H}_1}\int_{\Sigma}N_1(\xi)\otimes K_1(\xi)\,\d\xi,
\end{equation*}
followed by
\begin{equation*}
	\frac{\partial^2}{\partial t^2}\ln D_n(t,\lambda)=-\lambda\tr_{\mathcal{H}_1}\int_{\Sigma}\int_{\Sigma}\big(N_1(\eta)\otimes K_2(\eta)\big)\big(N_2(\xi)\otimes K_1(\xi)\big)\,\d\eta\,\d\xi.
\end{equation*}
\end{lem}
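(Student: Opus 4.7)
My approach would follow two stages: establish the first identity by direct differentiation of the Fredholm determinant \eqref{z31}, then use the $t$-Lax equation from Proposition \ref{z:prop4} together with the explicit structure \eqref{z74} of ${\bf B}(\zeta)$ to differentiate once more.

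For the first identity, I would start from \eqref{z31} and use the standard formula
\[
\frac{\partial}{\partial t}\ln D_n(t,\lambda) = -\lambda^{\frac{1}{2}}\tr_{L^2(\Sigma)}\bigl(R_t\,\partial_t C_{t,n}\bigr),\qquad R_t:=(I-\lambda^{\frac{1}{2}}C_{t,n})^{-1}.
\]
Examining \eqref{z32}, only the first (the $A^{\text{ext}}_{t,n}$) summand depends on $t$, through the phase $\psi_n(\cdot,2t+2z)$ whose $t$-derivative is twice the first argument. A direct computation then gives
\[
\partial_t C_{t,n}(\xi,\eta) = \im\int_{\mathbb{R}} k_1(\xi|z)\,m_1(\eta|z)\,\d\sigma(z),
\]
i.e.\ a $\d\sigma$-superposition of rank-one operators on $L^2(\Sigma)$. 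Interchanging trace and $z$-integral (justified by the exponential decay in \eqref{i13} together with the Hilbert-Schmidt factorizations of Lemma \ref{zlem:6}) and using $\tr_{L^2(\Sigma)}(R_t\,(k_1(\cdot|z)\otimes m_1(\cdot|z))) = \langle k_1(\cdot|z),R_t^{\ast}m_1(\cdot|z)\rangle_{L^2(\Sigma)}$ together with \eqref{z39} (which says precisely $R_t^{\ast}m_1(\cdot|z)=n_1(\cdot|z)$), one obtains
\[
\tr_{L^2(\Sigma)}\bigl(R_t\,\partial_t C_{t,n}\bigr)=\im\int_{\mathbb{R}}\!\int_{\Sigma}n_1(\xi|z)k_1(\xi|z)\,\d\xi\,\d\sigma(z)=\im\,\tr_{\mathcal{H}_1}\!\int_{\Sigma}N_1(\xi)\otimes K_1(\xi)\,\d\xi,
\]
which, together with the prefactor $-\lambda^{\frac{1}{2}}$, is the first identity.

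For the second identity I would differentiate the just-derived formula in $t$, pushing the derivative under the $\xi$-integral and the trace. Reading the Lax equation $\partial_t{\bf N}(\zeta)=(\zeta{\bf B}_0+{\bf B}_1){\bf N}(\zeta)$ entrywise using \eqref{z69} and \eqref{z76}, one gets
\[
\partial_t N_1(\xi)=-\im\,\xi\,N_1(\xi)-\im\,U\,N_2(\xi),\qquad U:=\lambda^{\frac{1}{2}}\!\int_{\Sigma}N_1(\eta)\otimes K_2(\eta)\,\d\eta,
\]
while from \eqref{z33} and $\partial_t\psi_n(\xi,2t+2y)=2\xi$ one has $\partial_t K_1(\xi)=\im\,\xi\,K_1(\xi)$. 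Applying Leibniz to $\int_{\Sigma}N_1(\xi)\otimes K_1(\xi)\,\d\xi$, the $\mp\im\,\xi$ contributions cancel, leaving
\[
\partial_t\!\int_{\Sigma}N_1(\xi)\otimes K_1(\xi)\,\d\xi = -\im\,\lambda^{\frac{1}{2}}\!\int_{\Sigma}\!\int_{\Sigma}\bigl(N_1(\eta)\otimes K_2(\eta)\bigr)\bigl(N_2(\xi)\otimes K_1(\xi)\bigr)\,\d\eta\,\d\xi.
\]
Multiplying by $-\im\,\lambda^{\frac{1}{2}}$ and taking $\tr_{\mathcal{H}_1}$ yields the stated second identity.

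The principal difficulty is bookkeeping rather than conceptual: one must consistently treat the multiplication operators $M_i,N_i$ as integral operators with distributional kernels, verify that the products $(N_i\otimes K_j)(N_k\otimes K_\ell)$ are trace class on $\mathcal{H}_1$ (which they are, by Corollary \ref{z:cor1} and the decay of $w'$), and justify all Fubini/differentiation-under-integral steps using the bounds already assembled in Section \ref{zsec3} (in particular \eqref{z33} and the resolvent estimate \eqref{z40}). Once these technicalities are in place, the computation is driven essentially by the off-diagonal structure of ${\bf B}_1$ and the cancellation described above.
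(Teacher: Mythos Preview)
Your proof of the first identity is essentially the same as the paper's: differentiate the Fredholm determinant \eqref{z31}, compute $\partial_t C_{t,n}$ from \eqref{z32}, and invoke \eqref{z39} to replace the resolvent action on $m_1$ by $n_1$.

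For the second identity your route differs slightly from the paper's. You apply the Lax equation $\partial_t{\bf N}=(\zeta{\bf B}_0+{\bf B}_1){\bf N}$ directly at the function level to obtain $\partial_t n_1(\xi|x)=-\im\xi\,n_1(\xi|x)-\im\int_{\mathbb{R}}U(x,z)n_2(\xi|z)\,\d\sigma(z)$, combine this with $\partial_t k_1(\xi|y)=\im\xi\,k_1(\xi|y)$ via Leibniz, and observe the cancellation of the $\pm\im\xi$ terms. The paper instead revisits the asymptotic expansion underlying Proposition~\ref{z:prop5}: inserting the large-$\zeta$ expansions of ${\bf X}(\zeta)$ and $({\bf X}(\zeta))^{-1}$ into the defining formula for ${\bf B}(\zeta)$, the $\mathcal{O}(\zeta^{-1})$ coefficient must vanish by Liouville, which yields the commutator identity $({\bf X}_1)_t=[{\bf B}_0,{\bf X}_2]-{\bf X}_1[{\bf B}_0,{\bf X}_1]$; reading off the $(1,1)$ entry and using \eqref{z54},\eqref{z55} then gives the same derivative formula for $\lambda^{\frac{1}{2}}\int_\Sigma N_1(\xi)\otimes K_1(\xi)\,\d\xi$. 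Your argument is more direct and avoids the detour through the ${\bf X}_k$-coefficients and Corollary~\ref{z:cor2}; the paper's argument is more in the Riemann--Hilbert spirit and fits the pattern used later in Lemma~\ref{z:lem10} for generating the full recursion. Both are valid.
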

\begin{proof} We begin with 
\begin{equation}\label{z78}
	\frac{\partial}{\partial t}\ln D_n(t,\lambda)\stackrel{\eqref{z37}}{=}\frac{\partial}{\partial t}\ln\det(I-\lambda^{\frac{1}{2}}C_{t,n}\upharpoonright_{L^2(\Sigma)})=-\lambda^{\frac{1}{2}}\tr_{L^2(\Sigma)}\left[(I-\lambda^{\frac{1}{2}}C_{t,n}\upharpoonright_{L^2(\Sigma)})^{-1}\frac{\partial}{\partial t}C_{t,n}\right]
\end{equation}
and compute from \eqref{z32} the kernel derivative
\begin{equation*}
	\frac{\partial}{\partial t}C_{t,n}(\xi,\eta)=\frac{\im}{2\pi}\int_{\mathbb{R}}\e^{\frac{\im}{2}(\psi_n(\xi,2t+2z)-\psi_n(\eta,2t+2z))}\chi_{\Gamma_{\alpha}}(\xi)\chi_{\Gamma_{\beta}}(\eta)\,\d\sigma(z)\stackrel{\eqref{z33}}{=}\im\int_{\mathbb{R}}k_1(\xi|z)m_1(\eta|z)\,\d\sigma(z).
\end{equation*}
Hence back in \eqref{z78},
\begin{eqnarray*}
	\frac{\partial}{\partial t}\ln D_n(t,\lambda)&=&-\lambda^{\frac{1}{2}}\int_{\Sigma}\int_{\Sigma}(I-\lambda^{\frac{1}{2}}C_{t,n})^{-1}(\eta,\xi)\frac{\partial}{\partial t} C_{t,n}(\xi,\eta)\,\d\xi\,\d\eta\\
	&\stackrel{\eqref{z39}}{=}&-\im\lambda^{\frac{1}{2}}\int_{\mathbb{R}}\left[\int_{\Sigma}\big(N_1(\xi)\otimes K_1(\xi)\big)(z,z)\,\d\xi\right]\d\sigma(z)=-\im\lambda^{\frac{1}{2}}\tr_{\mathcal{H}_1}\int_{\Sigma}N_1(\xi)\otimes K_1(\xi)\,\d\xi,
\end{eqnarray*}
as claimed in the first identity. For the second identity we revisit our proof of Proposition \ref{z:prop5} and explicitly compute the $\mathcal{O}(\zeta^{-1})$ correction when inserting the asymptotic representations of ${\bf X}(\zeta)$ and $({\bf X}(\zeta))^{-1}$ into the defining equation of ${\bf B}(\zeta)$ in \eqref{z72}. The same $\mathcal{O}(\zeta^{-1})$ correction has to vanish identically by Liouville's theorem and this yields the operator commutator identity
\begin{equation*}
	({\bf X}_1)_t=[{\bf B}_0,{\bf X}_2]-{\bf X}_1[{\bf B}_0,{\bf X}_1],\ \ \ \ \ {\bf X}_k:=\int_{\Sigma}\begin{bmatrix}M_1(\eta)\otimes L_1(\eta) & M_1(\eta)\otimes L_2(\eta)\smallskip\\
	M_2(\eta)\otimes L_1(\eta) & M_2(\eta)\otimes L_2(\eta)\end{bmatrix}\,\eta^{k-1}\,\d\eta,
\end{equation*}
where ${\bf B}_0$ is written in \eqref{z69}. Reading the last identity entry wise and using \eqref{z54},\eqref{z55} yields in particular
\begin{equation*}
	\frac{\partial}{\partial t}\left(\lambda^{\frac{1}{2}}\int_{\Sigma}N_1(\xi)\otimes K_1(\xi)\,\d\xi\right)=-\im\lambda\int_{\Sigma}\int_{\Sigma}\big(N_1(\eta)\otimes K_2(\eta)\big)\big(N_2(\xi)\otimes K_1(\xi)\big)\,\d\eta\,\d\xi,
\end{equation*}
and therefore the second identity when applying the first.
\end{proof}
At this point we use our Lax pair \eqref{z77},\eqref{z74} and write out its compatibility condition
\begin{equation}\label{z79}
	{\bf A}(\zeta){\bf B}(\zeta)-{\bf B}(\zeta){\bf A}(\zeta)=\frac{\partial{\bf B}}{\partial\zeta}(\zeta)-\frac{\partial{\bf A}}{\partial t}(\zeta),\ \ \ \ \ \zeta\in\mathbb{C},
\end{equation}
keeping in mind that the entries of ${\bf A}(\zeta)$ and ${\bf B}(\zeta)$ are integral operators which in general do not commute. 
\begin{lem}\label{z:lem10} Recall $U,V:\mathcal{H}_1\rightarrow\mathcal{H}_1$ in \eqref{z76}
and introduce the integral operator $M_t:\mathcal{H}_1\rightarrow\mathcal{H}_1$ with distributional kernel $M_t(x,y):=(t+x)\delta(x-y)(w'(y))^{-1}$. Then \eqref{z79} is equivalent to the operator-valued system \eqref{z80}, \eqref{z81} and \eqref{z82} written out below where $A_k^{ij}$ are the entries of ${\bf A}_k$ in \eqref{z74}.
\end{lem}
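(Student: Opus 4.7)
The statement is structural: it decomposes the single operator identity \eqref{z79} into its polynomial-in-$\zeta$ components. My plan is to substitute the explicit forms \eqref{z74} of ${\bf A}(\zeta)$ and ${\bf B}(\zeta)$ into \eqref{z79} and match coefficients of each power of $\zeta$, exploiting the fact that ${\bf A}(\zeta)$ is a polynomial of degree $2n$ in $\zeta$ while ${\bf B}(\zeta)$ is linear, so the commutator has degree at most $2n+1$.

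I first simplify the right hand side. From \eqref{z74}, $\frac{\partial {\bf B}}{\partial \zeta} = {\bf B}_0$. The only $t$-dependence of ${\bf A}(\zeta)$ sits in the coefficients ${\bf A}_k$ ($k\geq 1$) and in $\widehat{\bf A}_{2n}$; a direct comparison of the distributional kernels in \eqref{z67} and \eqref{z69} gives $\frac{\partial \widehat{\bf A}_{2n}}{\partial t} = {\bf B}_0$, so the two ${\bf B}_0$ contributions cancel and
\[
\frac{\partial {\bf B}}{\partial \zeta}(\zeta) - \frac{\partial {\bf A}}{\partial t}(\zeta) = -\sum_{k=1}^{2n}\frac{\partial {\bf A}_k}{\partial t}\,\zeta^{2n-k}.
\]
I then expand the left hand side as a polynomial in $\zeta$ by multiplying out ${\bf A}(\zeta){\bf B}(\zeta) - {\bf B}(\zeta){\bf A}(\zeta)$, being careful that the $2\times 2$ matrix entries of ${\bf A}_k, \widehat{\bf A}_{2n}$ and ${\bf B}_j$ are integral operators on $\mathcal{H}_1$ that need not commute. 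Equating coefficients of like powers of $\zeta$ yields a ladder of operator identities. The top $\zeta^{2n+1}$ coefficient is $[{\bf A}_0,{\bf B}_0]$, which vanishes identically because both operators are diagonal $2\times 2$ matrices with diagonal multiplier kernels. The $\zeta^{2n}$ coefficient is $[{\bf A}_1,{\bf B}_0]+[{\bf A}_0,{\bf B}_1]=0$, and each intermediate coefficient $\zeta^{2n-k}$ for $1\leq k\leq 2n-2$ gives the recursion $[{\bf A}_{k+1},{\bf B}_0]+[{\bf A}_k,{\bf B}_1]=-\partial_t{\bf A}_k$; finally the $\zeta^{1}$ and $\zeta^{0}$ coefficients are the ones that receive the additional contributions from $\widehat{\bf A}_{2n}$.

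To reach the stated system \eqref{z80}, \eqref{z81}, \eqref{z82}, I read each of these matrix commutators entrywise using the very specific shape of the relevant operators: by \eqref{z69} and \eqref{z76}, ${\bf B}_0$ has its only nonzero entry $-\im$ at position $(1,1)$ while ${\bf B}_1$ is strictly off-diagonal with entries $-\im U$ and $\im V$; by \eqref{z67}, $\widehat{\bf A}_{2n}$ contributes only $-\im M_t$ at position $(1,1)$. Consequently each $2\times 2$ matrix commutator decouples into a small number of scalar-level operator equations that couple the $A_k^{ij}$ with $U$, $V$ and $M_t$, and collecting them in the three natural groups (top constraint, interior recursion, bottom equations involving $M_t$) produces precisely \eqref{z80}, \eqref{z81} and \eqref{z82}. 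The converse is immediate by reassembling matrices and repackaging the coefficient matching. The only step warranting genuine attention is the cancellation $\frac{\partial \widehat{\bf A}_{2n}}{\partial t} = \frac{\partial {\bf B}}{\partial \zeta}$ that cleanly removes the inhomogeneous ${\bf B}_0$ term from the $\zeta^{0}$ side; apart from this, the argument is a careful bookkeeping exercise in a non-commutative polynomial identity.
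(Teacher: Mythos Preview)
Your argument is correct and follows essentially the same route as the paper: insert \eqref{z74} into \eqref{z79}, use that $[{\bf A}_0,{\bf B}_0]=[\widehat{\bf A}_{2n},{\bf B}_0]=0$ together with the cancellation $\partial_t\widehat{\bf A}_{2n}=\partial_\zeta{\bf B}={\bf B}_0$, and then match powers of $\zeta$. One small bookkeeping slip: the $\zeta^1$ coefficient does \emph{not} receive a nontrivial $\widehat{\bf A}_{2n}$ contribution (the term $[\widehat{\bf A}_{2n},{\bf B}_0]$ vanishes), so the standard recursion \eqref{z81} runs all the way through $k=2n-1$, and only the $\zeta^0$ equation \eqref{z82} is genuinely modified by $\widehat{\bf A}_{2n}$ via $[\widehat{\bf A}_{2n},{\bf B}_1]$.
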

\begin{proof} The polynomial equation \eqref{z79} yields at once (given that ${\bf B}_0$ and ${\bf A}_0$ as well as ${\bf B}_0$ and $\widehat{{\bf A}}_{2n}$ commute)
\begin{equation*}
	\sum_{k=1}^{2n}\frac{\partial{\bf A}_k}{\partial t}\zeta^{2n-k}=\big[{\bf B}_1,{\bf A}_{2n}+\widehat{{\bf A}}_{2n}\big]+\sum_{k=0}^{2n-1}\Big(\big[{\bf B}_0,{\bf A}_{k+1}\big]+\big[{\bf B}_1,{\bf A}_k\big]\Big)\zeta^{2n-k},\ \ \ \zeta\in\mathbb{C},
\end{equation*}
and therefore, after matching powers in $\zeta$, first to order $\mathcal{O}(\zeta^{2n})$,
\begin{equation}\label{z80}
	A_1^{12}=-\im U,\ \ \ A_1^{21}=\im V,
\end{equation}
followed by all orders $\mathcal{O}(\zeta^{2n-k})$ for $k=1,\ldots,2n-1$,
\begin{equation}\label{z81}
	 \begin{cases}\displaystyle\frac{\partial A_k^{11}}{\partial t}=-\im(UA_k^{21}+A_k^{12}V),\ \ \ \frac{\partial A_k^{12}}{\partial t}=-\im(A_{k+1}^{12}+UA_k^{22}-A_k^{11}U)\bigskip\\
	\displaystyle\frac{\partial A_k^{22}}{\partial t}=\im(VA_k^{12}+A_k^{21}U),\ \ \ \ \ \,\frac{\partial A_k^{21}}{\partial t}=\im(A_{k+1}^{21}+VA_k^{11}-A_k^{22}V)\end{cases},
\end{equation}
and finally the order $\mathcal{O}(\zeta^0)$,
\begin{equation}\label{z82}
	 \begin{cases}\displaystyle\frac{\partial A_{2n}^{11}}{\partial t}=-\im(UA_{2n}^{21}+A_{2n}^{12}V),\ \ \ \frac{\partial A_{2n}^{12}}{\partial t}=-\im(UA_{2n}^{22}-A_{2n}^{11}U+\im M_tU)\bigskip\\
	\displaystyle\frac{\partial A_{2n}^{22}}{\partial t}=\im(VA_{2n}^{12}+A_{2n}^{21}U),\ \ \ \ \ \,\frac{\partial A_{2n}^{21}}{\partial t}=\im(VA_{2n}^{11}-A_{2n}^{22}V-\im VM_t)\end{cases}.
\end{equation}
This completes our proof of the Lemma.
\end{proof}
The system \eqref{z81},\eqref{z82} allows us to explicitly integrate the diagonal equations for $A_k^{11}$ and $A_k^{22}$ with the help of Corollary \ref{z:cor3} and Proposition \ref{z:prop5}.
\begin{lem}\label{z:lem11} We have on $\mathcal{H}_1$ for $k=1,2,\ldots,2n$,
\begin{equation*}
	A_k^{11}=-\im\sum_{j=1}^{k-1}\left(A_j^{11}A_{k-j}^{11}+A_j^{12}A_{k-j}^{21}\right)\ \ \ \ \ \textnormal{and}\ \ \ \ \ A_k^{22}=\im\sum_{j=1}^{k-1}\left(A_j^{22}A_{k-j}^{22}+A_j^{21}A_{k-j}^{12}\right),
\end{equation*}
and thus in particular $A_1^{11}=A_1^{22}=0$.
\end{lem}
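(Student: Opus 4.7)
The plan is to argue by induction on $k$, using the diagonal evolution equations in \eqref{z81}--\eqref{z82} together with the asymptotic vanishing from Corollary \ref{z:cor3}. By Proposition \ref{z:prop5}, each entry $A_m^{ij}$ is a polynomial in the integrals $\int_\Sigma N_i(\eta)\otimes K_j(\eta)\,\eta^{r}\d\eta$ and $\int_\Sigma M_i(\eta)\otimes L_j(\eta)\,\eta^{r}\d\eta$, which all tend to the zero operator on $\mathcal{H}_1$ exponentially fast as $t\to+\infty$. In particular $A_k^{11}\to 0$ and $A_k^{22}\to 0$ in operator norm, giving us the ``boundary condition at $+\infty$'' required to convert an ODE in $t$ into a closed-form identity.

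For the base case $k=1$, formula \eqref{z80} gives $U=\im A_1^{12}$ and $V=-\im A_1^{21}$, so the diagonal equation $\partial_tA_1^{11}=-\im(UA_1^{21}+A_1^{12}V)$ from \eqref{z81} reads $-\im(\im UV-\im UV)=0$. Thus $A_1^{11}$ is $t$-independent; combined with the vanishing at $+\infty$ this forces $A_1^{11}=0$, matching the empty sum on the right. The argument for $A_1^{22}$ is identical.

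For the inductive step, assume the statement for all indices $<k$ and introduce
\[
W_k \;:=\; A_k^{11}\,+\,\im\sum_{j=1}^{k-1}\bigl(A_j^{11}A_{k-j}^{11}+A_j^{12}A_{k-j}^{21}\bigr),
\]
so that the claim becomes $W_k=0$. Corollary \ref{z:cor3} and Proposition \ref{z:prop5} give $W_k\to 0$ as $t\to+\infty$, so it suffices to show $\partial_tW_k=0$. I would compute $\partial_tW_k$ by substituting the four equations of \eqref{z81} (or \eqref{z82} if $k=2n$): (i) the contributions $-\im A_{j+1}^{12}A_{k-j}^{21}$ coming from $\partial_tA_j^{12}$ telescope against the $\im A_j^{12}A_{k-j+1}^{21}$ coming from $\partial_tA_{k-j}^{21}$ after shifting the summation index; (ii) the $UA_j^{22}$ and $VA_{k-j}^{11}$ type terms collapse via the inductive formulas for $A_j^{22}, A_{k-j}^{11}$ together with $U=\im A_1^{12}, V=-\im A_1^{21}$; (iii) the uncancelled residue is exactly $-\im(UA_k^{21}+A_k^{12}V)=\partial_tA_k^{11}$, closing the identity $\partial_tW_k=0$. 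When $k=2n$ one picks up the additional $\im M_tU$ and $-\im VM_t$ terms from \eqref{z82}; these occupy the slots that would have been filled by the (nonexistent) $A_{2n+1}^{12}, A_{2n+1}^{21}$ in the telescoping and cancel in exactly the same way, so nothing changes. The identity for $A_k^{22}$ follows by the same argument with opposite sign conventions.

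I expect the main obstacle to be purely bookkeeping: verifying that the telescoping pattern in step (i)--(iii) closes correctly requires a careful reindexing of the double sum and repeated application of the lower-order inductive hypothesis on both $A_j^{11}$ and $A_{k-j}^{22}$. There is no genuine analytic content beyond the vanishing at $t=+\infty$; once the combinatorial pattern is isolated, the check is a finite, purely algebraic one, and the identification $U=\im A_1^{12}$, $V=-\im A_1^{21}$ is what makes the final collapse possible.
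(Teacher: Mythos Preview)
Your telescoping scheme in step (i) is correct and the residue you predict, $-\im(UA_k^{21}+A_k^{12}V)=\partial_tA_k^{11}$, does appear. The gap is in step (ii). After the telescoping and the two obvious pairwise cancellations, the leftover cross-terms do not collapse via the diagonal inductive formulas for $A_j^{11},A_j^{22}$; a direct computation gives
\[
\partial_tW_k \;=\; U\Bigg(\sum_{j=1}^{k-1}\big(A_j^{21}A_{k-j}^{11}+A_j^{22}A_{k-j}^{21}\big)\Bigg)+\Bigg(\sum_{j=1}^{k-1}\big(A_j^{11}A_{k-j}^{12}+A_j^{12}A_{k-j}^{22}\big)\Bigg)V,
\]
and neither bracketed sum is expressible in terms of lower $A_\ell^{11},A_\ell^{22}$ alone. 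In other words, to close your induction you must simultaneously prove the two \emph{off-diagonal} identities $\sum_{j=1}^{k-1}(A_j^{11}A_{k-j}^{12}+A_j^{12}A_{k-j}^{22})=0$ and $\sum_{j=1}^{k-1}(A_j^{21}A_{k-j}^{11}+A_j^{22}A_{k-j}^{21})=0$, which your induction hypothesis (on the diagonal only) does not supply. Substituting the formulas for $A_j^{11},A_j^{22}$ into these sums only produces triple sums with no evident cancellation.

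The paper avoids this obstruction by packaging all four identities at once: it sets ${\bf C}(\zeta):={\bf A}(\zeta){\bf A}(\zeta)$, derives the evolution $\partial_t{\bf C}=\{{\bf A},{\bf B}_0\}+[{\bf B},{\bf C}]$ from the compatibility condition, and then inducts on the full vanishing ${\bf C}_k=\mathbf{0}$ for $k=1,\ldots,2n-1$. The point is that the off-diagonal equations in this system read $C_{k+1}^{12}=\im\partial_tC_k^{12}-UC_k^{22}+C_k^{11}U$ (and similarly for $C_{k+1}^{21}$), so the off-diagonal vanishing propagates \emph{algebraically} from the full inductive hypothesis, and only then does $\partial_tC_{k+1}^{11}=-\im(UC_{k+1}^{21}+C_{k+1}^{12}V)=0$ combine with Corollary~\ref{z:cor3} to give $C_{k+1}^{11}=0$. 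Your $W_k$ is precisely $\im C_k^{11}$, so your computation is recovering the $(11)$-entry of this system; the missing ingredient is the parallel off-diagonal induction. (Incidentally, your remark about the $M_tU,VM_t$ terms at $k=2n$ is off: the sum in $W_{2n}$ runs over $j\le 2n-1$, so only $\partial_tA_{2n}^{11}$ from \eqref{z82} enters, and that equation has no $M_t$ term.)
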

\begin{proof} Inspired by \cite[$(18)$]{WE} we first compute the composition operator ${\bf C}(\zeta)={\bf A}(\zeta){\bf A}(\zeta)$ on $\mathcal{H}_2$ from \eqref{z74},
\begin{equation}\label{z83}
	{\bf C}(\zeta)=\sum_{k=0}^{4n}\bigg(\sum_{j=0}^k{\bf A}_j{\bf A}_{k-j}\bigg)\zeta^{4n-k}+\sum_{k=0}^{2n}\big({\bf A}_k\widehat{{\bf A}}_{2n}+\widehat{{\bf A}}_{2n}{\bf A}_k\big)\zeta^{2n-k}+\widehat{{\bf A}}_{2n}\widehat{{\bf A}}_{2n}\equiv\sum_{k=0}^{4n}{\bf C}_k\zeta^{4n-k},
\end{equation}
and then use the compatibility constraint \eqref{z79},
\begin{equation}\label{z84}
	\frac{\partial{\bf C}}{\partial t}(\zeta)=\big\{{\bf A}(\zeta),{\bf B}_0\big\}+\big[{\bf B}(\zeta),{\bf C}(\zeta)\big],
\end{equation}
where the curly brackets indicate the anticommutator). Matching powers $\mathcal{O}(\zeta^{4n-k})$ for $k=0,\ldots,2n-1$ in \eqref{z84} while using \eqref{z83} and \eqref{z74} yields at once
\begin{equation}\label{z85}
	\begin{cases}\displaystyle \frac{\partial C_k^{11}}{\partial t}=-\im(UC_k^{21}+C_k^{12}V),\ \ \ \frac{\partial C_k^{12}}{\partial t}=-\im(C_{k+1}^{12}+UC_k^{22}-C_k^{11}U)\bigskip\\
	\displaystyle\frac{\partial C_k^{22}}{\partial t}=\im(VC_k^{12}+C_k^{21}U),\ \ \ \ \ \frac{\partial C_k^{21}}{\partial t}=\im(C_{k+1}^{21}+VC_k^{11}-C_k^{22}V)
	\end{cases},\ \ \ k=0,\ldots,2n-1,
\end{equation}
and
\begin{equation}\label{z86}
	\frac{\partial C_{2n}^{11}}{\partial t}=-\im(B_0^{11}+UC_{2n}^{21}+C_{2n}^{12}V),\ \ \ \ \ \ \ \ \ \ \frac{\partial C_{2n}^{22}}{\partial t}=\im(VC_{2n}^{12}+C_{2n}^{21}U),
\end{equation}
for some of the coefficient operator entries of ${\bf C}_k$ with $k=0,1,\ldots,2n$. In turn, system \eqref{z85} shows that the operators ${\bf C}_k$ are trivial for $k=1,\ldots,2n-1$ and
\begin{equation*}
	C_{2n}^{12}=C_{2n}^{21}=C_{2n}^{22}=0.
\end{equation*}
Indeed, using \eqref{z80},\eqref{z81} and Corollary \ref{z:cor3} we find that $A_1^{11}=A_1^{22}=0$ on $\mathcal{H}_1$ and so by direct computation from \eqref{z83},
\begin{equation*}
	{\bf C}_1=\sum_{j=0}^1{\bf A}_j{\bf A}_{1-j}\stackrel[\eqref{z80}]{\eqref{z67}}{=}{\bf 0}\ \ \ \textnormal{on}\ \mathcal{H}_2.
\end{equation*}
Hence, proceeding inductively and assuming ${\bf C}_j={\bf 0}$ for all $j=1,\ldots,k$ with arbitrary $k\in\{1,\ldots,2n-2\}$ we first use the off-diagonal equations in \eqref{z85} to conclude that
\begin{equation*}
	C_{j+1}^{12}=\im\frac{\partial C_j^{12}}{\partial t}-UC_j^{22}+C_j^{11}U=0,\ \ \ \ C_{j+1}^{21}=-\im\frac{\partial C_j^{21}}{\partial t}-VC_j^{11}+C_j^{22}V=0,
\end{equation*}
by induction hypothesis. Hence, again by \eqref{z85}, this time through the diagonal equations,
\begin{equation*}
	\frac{\partial C_{j+1}^{11}}{\partial t}=-\im(UC_j^{21}+C_j^{12}V)=0,\ \ \ \ \frac{\partial C_{j+1}^{22}}{\partial t}=\im(VC_j^{12}+C_j^{21}U)=0,
\end{equation*}
yielding $C_{j+1}^{11}=C_{j+1}^{22}=0$ on $\mathcal{H}_1$ by Corollary \ref{z:cor3} and Proposition \ref{z:prop5} since ${\bf C}_k=\sum_{j=0}^k{\bf A}_j{\bf A}_{k-j}$ for $k=1,\ldots,2n-1$ by \eqref{z83} vanishes uniformly as $t\rightarrow+\infty$. Moving ahead the proclaimed vanishing of $C_{2n}^{12},C_{2n}^{21}$ and $C_{2n}^{22}$ follows now from the off-diagonal equations in \eqref{z85} as well as the second equation in \eqref{z86}. We are now prepared to prove the stated formul\ae\,for $A_k^{11}$ and $A_k^{22}$. First, from \eqref{z83},
\begin{align*}
	{\bf C}_{2n}=&\,\sum_{j=1}^{2n-1}{\bf A}_j{\bf A}_{2n-j}+{\bf A}_0({\bf A}_{2n}+\widehat{{\bf A}}_{2n})+({\bf A}_{2n}+\widehat{{\bf A}}_{2n}){\bf A}_0,\\
	{\bf C}_k=&\,\sum_{j=1}^{k-1}{\bf A}_j{\bf A}_{k-j}+{\bf A}_0{\bf A}_k+{\bf A}_k{\bf A}_0,\ \ \ \ k=2,\ldots,2n-1,
\end{align*}
so reading off $(22)$-entries, with the aforementioned fact that $C_k^{22}=0$ for $k=1,\ldots,2n$ and with \eqref{z67},
\begin{equation}\label{z87}
	0=C_k^{22}=\sum_{j=1}^{k-1}\big(A_j^{22}A_{k-j}^{22}+A_j^{21}A_{k-j}^{12}\big)+\im A_k^{22},\ \ \ k=2,\ldots,2n.
\end{equation}
Combined with the $(22)$-equation in \eqref{z81}, identity \eqref{z80} and again Corollary \ref{z:cor3}, \eqref{z87} yields the desired equation for $A_k^{22},k=1,\ldots,2n$. By similar logic
\begin{equation*}
	0=C_k^{11}=\sum_{j=1}^{k-1}\big(A_j^{11}A_{k-j}^{11}+A_j^{12}A_{k-j}^{21}\big)-\im A_k^{11},\ \ \ \ k=2,\ldots,2n-1;\ \ \ \frac{\partial A_1^{11}}{\partial t}\stackrel[\eqref{z81}]{\eqref{z80}}{=}0
\end{equation*}
which confirms the stated equation for $A_k^{11}$ provided $k=1,\ldots,2n-1$ after another application of Corollary \ref{z:cor3}. The $A_{2n}^{11}$ formula has to be treated slightly different since by \eqref{z86}, after our above workings,
\begin{equation*}
	\frac{\partial C_{2n}^{11}}{\partial t}=-\im B_0^{11},
\end{equation*}
and in addition
\begin{equation*}
	C_{2n}^{11}=\sum_{j=1}^{2n-1}\big(A_j^{11}A_{2n-j}^{11}+A_j^{12}A_{2n-j}^{21}\big)-\im A_{2n}^{11}-\im\widehat{A}_{2n}^{11}.
\end{equation*}
However $\frac{\partial}{\partial t}\widehat{A}_{2n}^{11}=B_0^{11}$, so the last two identities yield
\begin{equation*}
	0=\frac{\partial}{\partial t}\left[\sum_{j=1}^{2n-1}\big(A_j^{11}A_{2n-j}^{11}+A_j^{12}A_{2n-j}^{21}\big)-\im A_{2n}^{11}\right]
\end{equation*}
and hence after $t$-integration and an application of Corollary \ref{z:cor3} indeed the stated identity for $A_{2n}^{11}$. This concludes our proof of the Lemma.
\end{proof}
Observe that Lemma \ref{z:lem10} and \ref{z:lem11} allow us to express all operators ${\bf A}_k,k=1,\ldots,2n$ in \eqref{z74} in terms of $U,V$ given in \eqref{z76} and their $t$-derivatives. Indeed we have derived the following recursive recipe
\begin{cor}\label{z:cor4} On $\mathcal{H}_2$,
\begin{equation}\label{z88}
	{\bf A}_1=\begin{bmatrix}0 & -\im U\smallskip\\ \im V & 0\end{bmatrix},\ \ \ \ {\bf A}_{k+1}=\begin{bmatrix}A_{k+1}^{11} & A_{k+1}^{12}\smallskip\\
	A_{k+1}^{21} & A_{k+1}^{22}\end{bmatrix},\ \ k=1,\ldots,2n-1,
\end{equation}
where
\begin{equation}\label{z89}
	\begin{cases}\displaystyle A_{k+1}^{12}=\im\frac{\partial A_k^{12}}{\partial t}-UA_k^{22}+A_k^{11}V\smallskip\\
	\displaystyle A_{k+1}^{21}=-\im\frac{\partial A_k^{21}}{\partial t}-VA_k^{11}+A_k^{22}V\end{cases},\ \ \ \begin{cases}\displaystyle A_{k+1}^{11}=-\im\sum_{j=1}^k(A_j^{11}A_{k+1-j}^{11}+A_j^{12}A_{k+1-j}^{21})\smallskip\\
	\displaystyle A_{k+1}^{22}=\im\sum_{j=1}^k(A_j^{22}A_{k+1-j}^{22}+A_j^{21}A_{k+1-j}^{12})
	\end{cases}.
\end{equation}
Moreover,
\begin{equation}\label{z90}
	\frac{\partial A_{2n}^{12}}{\partial t}=-\im(UA_{2n}^{22}-A_{2n}^{11}U+\im M_tU),\ \ \ \ \frac{\partial A_{2n}^{21}}{\partial t}=\im(VA_{2n}^{11}-A_{2n}^{22}V-\im VM_t),
\end{equation}
and \eqref{z82},\eqref{z88},\eqref{z89},\eqref{z90} combined together yield the following $(2n)$-th order coupled, operator-valued system for $U$ and $V$,
\begin{equation}\label{z91}
	\mathcal{D}^{2n}\begin{bmatrix}-\im U\\\,\,\,\,\im V\end{bmatrix}=\begin{bmatrix}\,\,\,\,\im M_tU\,\\ -\im VM_t\,\end{bmatrix},\ \ \ \ \ \ \ \ \mathcal{D}\begin{bmatrix}A\smallskip\\ B\end{bmatrix}:=\begin{bmatrix}\im\frac{\partial A}{\partial t}-\im UD_t^{-1}(VA+BU)-\im D_t^{-1}(UB+AV)U\smallskip\\
	-\im\frac{\partial B}{\partial t}+\im VD_t^{-1}(UB+AV)+\im D_t^{-1}(VA+BU)V\end{bmatrix}
\end{equation}
where $\mathcal{D}$ acts entrywise on operators $A$ and $B$ on $\mathcal{H}_1$ and $D_t^{-1}$ denotes the formal $t$-antiderivative.
\end{cor}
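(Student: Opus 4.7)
The plan is to assemble \eqref{z88}--\eqref{z91} from the algebraic output of Lemmas \ref{z:lem10} and \ref{z:lem11}, and then to verify that iterating the resulting recursion exactly $2n$ times converts the seed $[-\im U,\im V]^\top$ into $[\im M_t U,-\im V M_t]^\top$.

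First, \eqref{z88} is immediate: the off-diagonal entries of $\mathbf{A}_1$ are read off from \eqref{z80}, while $A_1^{11}=A_1^{22}=0$ is the $k=1$ case of Lemma \ref{z:lem11} (both sums are empty). For the recursion \eqref{z89} at $k=1,\ldots,2n-1$, I would solve the two off-diagonal equations in \eqref{z81} algebraically for $A_{k+1}^{12}$ and $A_{k+1}^{21}$ in terms of $\partial_t A_k^{12},\partial_t A_k^{21}$ and the quadratic combinations $UA_k^{22}, A_k^{11}U, VA_k^{11}, A_k^{22}V$; the remaining two formulas in \eqref{z89} are Lemma \ref{z:lem11} with $k$ replaced by $k+1$. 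The boundary relations \eqref{z90} are simply the two off-diagonal identities already contained in \eqref{z82}.

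To turn this into the compact operator form \eqref{z91}, I would integrate the diagonal equations in \eqref{z81} (which are identical to those in \eqref{z82}, hence valid also at $k=2n$). These linear first-order ODEs in $t$ have bilinear right-hand sides, and Corollary \ref{z:cor3} ensures that $A_k^{11},A_k^{22}\to 0$ as $t\to+\infty$, so they integrate uniquely to
\begin{equation*}
  A_k^{11}=\im\, D_t^{-1}\bigl(U A_k^{21}+A_k^{12} V\bigr),\qquad A_k^{22}=-\im\, D_t^{-1}\bigl(V A_k^{12}+A_k^{21} U\bigr).
\end{equation*}
Substituting these into the off-diagonal part of \eqref{z89} one reads off exactly the operator $\mathcal{D}$ of \eqref{z91} acting on the column $[A_k^{12},A_k^{21}]^\top$, so that $[A_{k+1}^{12},A_{k+1}^{21}]^\top=\mathcal{D}[A_k^{12},A_k^{21}]^\top$. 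Iterating $2n-1$ times from $[A_1^{12},A_1^{21}]^\top=[-\im U,\im V]^\top$ yields $[A_{2n}^{12},A_{2n}^{21}]^\top=\mathcal{D}^{2n-1}[-\im U,\im V]^\top$.

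The final application of $\mathcal{D}$ is not handled by the recursion (which terminates at $k=2n-1$) but by \eqref{z90}: upon inserting the same integral representations of $A_{2n}^{11},A_{2n}^{22}$ into $\mathcal{D}[A_{2n}^{12},A_{2n}^{21}]^\top$, the quadratic contributions $UA_{2n}^{22}-A_{2n}^{11}U$ and $VA_{2n}^{11}-A_{2n}^{22}V$ cancel against $\im\partial_t A_{2n}^{12}$ and $-\im\partial_t A_{2n}^{21}$ in view of \eqref{z90}, leaving only the driving terms $\im M_t U$ and $-\im V M_t$; equivalently $\mathcal{D}^{2n}[-\im U,\im V]^\top=[\im M_t U,-\im V M_t]^\top$, which is \eqref{z91}. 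The only non-algebraic input in the whole derivation, and the one place where something genuinely analytic enters, is the use of the vanishing-at-$+\infty$ boundary condition to invert $D_t$; this is precisely the content of Corollary \ref{z:cor3}.
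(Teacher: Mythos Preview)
Your argument is correct and follows essentially the same route as the paper: assemble \eqref{z88}--\eqref{z90} from \eqref{z80}, \eqref{z81}, \eqref{z82} and Lemma \ref{z:lem11}, then integrate the diagonal equations (using the decay from Corollary \ref{z:cor3}) to rewrite $A_k^{11},A_k^{22}$ as $t$-antiderivatives, recognize the operator $\mathcal{D}$, and iterate. The paper's proof is slightly more terse but organizes the same steps in the same order; the only additional remark it makes is that the closed-form sums for $A_k^{11},A_k^{22}$ in \eqref{z89} show the resulting system \eqref{z91} is in fact $t$-local (no genuine antiderivatives survive), which you do not mention but is not needed for the statement itself.
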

\begin{proof} By \eqref{z81} and \eqref{z89},
\begin{equation}\label{z92}
	\mathcal{D}\begin{bmatrix}A_k^{12}\smallskip\\ A_k^{21}\end{bmatrix}=\begin{bmatrix}A_{k+1}^{12}\smallskip\\ A_{k+1}^{21}\end{bmatrix},\ \ k=1,\ldots,2n-1
\end{equation}
since $D_t^{-1}(VA_k^{12}+A_k^{21}U)=-\im A_k^{22}$ and $D_t^{-1}(UA_k^{21}+A_k^{12}V)=\im A_k^{11}$. Likewise, by \eqref{z90} and \eqref{z82},
\begin{equation}\label{z93}
	\mathcal{D}\begin{bmatrix}A_{2n}^{12}\smallskip\\ A_{2n}^{21}\end{bmatrix}=\begin{bmatrix}\,\,\,\,\im M_tU\,\smallskip\\ -\im VM_t\,\end{bmatrix},
\end{equation}
where we use $D_t^{-1}(VA_{2n}^{12}+A_{2n}^{21}U)=-\im A_{2n}^{22}$ and $D_t^{-1}(UA_{2n}^{21}+A_{2n}^{12}V)=\im A_{2n}^{11}$. Hence, iterating \eqref{z92},\eqref{z93} with the initial data \eqref{z88} we arrive at the desired system \eqref{z91} which does not contain any antiderivative terms because of the iterative formul\ae\,for $A_k^{11}$ and $A_k^{22}$ written in \eqref{z89}.
\end{proof}
The operator-valued ODE system \eqref{z91} for $U$ and $V$ naturally encodes the desired integro-differential Painlev\'e-II hierarchy \eqref{i17}. This final step in our proof of Theorem \ref{itheo1} is worked out in Section \ref{zsec5} below.

\section{The integro-differential Painlev\'e-II hierarchy - proof of Theorem \ref{itheo1}, part 2}\label{zsec5}

We will now evaluate system \eqref{z91} for the underlying kernels which will lead us to \eqref{i17}. First, we collect the following crucial symmetry constraints.
\begin{lem}\label{z:lem12} Let $k\in\{1,\ldots,2n\}$, then $A_k^{12}(x,y)$ and $A_k^{21}(y,x)$ are $y$-independent and we have
\begin{equation}\label{z94}
	A_k^{12}(x,y)=(-1)^kA_k^{21}(y,x),\ \ \ \ (x,y)\in\mathbb{R}^2.
\end{equation}
\end{lem}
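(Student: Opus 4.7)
The plan is to prove Lemma \ref{z:lem12} by induction on $k$, simultaneously establishing two auxiliary claims that propagate naturally through the recursion of Corollary \ref{z:cor4}: first, that $A_k^{22}$ has constant (i.e.\ $x$- and $y$-independent) kernel $c_k$; second, the transpose identities $(A_k^{11})^T=(-1)^kA_k^{11}$ and $(A_k^{22})^T=(-1)^kA_k^{22}$, where $A^T$ denotes the operator with kernel $A^T(x,y):=A(y,x)$. Observe that the statement \eqref{z94} of the lemma is equivalent to the operator-level identity $A_k^{12}=(-1)^k(A_k^{21})^T$.

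For the base case $k=1$, Lemma \ref{z:lem11} gives $A_1^{11}=A_1^{22}=0$, so the auxiliary claims hold trivially. From \eqref{z80}, $A_1^{12}=-\im U$ and $A_1^{21}=\im V$. The kernel of $U$ is $y$-independent because $k_2(\eta|y)$ is, by \eqref{z33}; analogously $V(x,y)$ is $x$-independent, since $m_2(\eta|x)$, and hence the resolvent-defined $n_2(\eta|x)$ via \eqref{z39}, is $x$-independent. Lemma \ref{z:lem8} supplies $V=U^T$, which directly yields $A_1^{12}=-\im U=-(\im V)^T=-(A_1^{21})^T=(-1)^1(A_1^{21})^T$.

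Assume the three properties for all $j\leq k$. In the recursion of Corollary \ref{z:cor4} for $A_{k+1}^{22}$, each summand $A_j^{22}A_{k+1-j}^{22}$ has constant kernel $c_jc_{k+1-j}\int w'\,\d z=c_jc_{k+1-j}$, while each summand $A_j^{21}A_{k+1-j}^{12}$, whose factors depend only on the $z$-integration variable by the inductive $y$/$x$-independence, has kernel $\langle\psi_j,\phi_{k+1-j}\rangle$ (with $\psi_j,\phi_{k+1-j}$ the single-variable generators of $A_j^{21}$ and $A_{k+1-j}^{12}$); thus $A_{k+1}^{22}$ is constant. For the transpose identity on $A_{k+1}^{11}$, apply $(AB)^T=B^TA^T$ to the recursion, insert $(A_j^{11})^T=(-1)^jA_j^{11}$, $(A_j^{12})^T=(-1)^jA_j^{21}$, $(A_j^{21})^T=(-1)^jA_j^{12}$ from the inductive hypotheses, and reindex $j\mapsto k+1-j$; the signs collapse to one overall factor $(-1)^{k+1}$. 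Constancy of $A_{k+1}^{22}$ makes $(A_{k+1}^{22})^T=(-1)^{k+1}A_{k+1}^{22}$ automatic when $k+1$ is even, and when $k+1$ is odd a pairing $j\leftrightarrow k+1-j$ in the recursion combined with the symmetry of $\langle\cdot,\cdot\rangle$ forces $c_{k+1}=0$.

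It remains to establish \eqref{z94} at level $k+1$. Each term of $A_{k+1}^{12}=\im\partial_tA_k^{12}-UA_k^{22}+A_k^{11}U$ (obtained by rearranging \eqref{z81}) has $y$-independent kernel: the first by induction; $UA_k^{22}$ because $A_k^{22}=c_k$ gives $(UA_k^{22})(x,y)=c_ku(x)$; and $A_k^{11}U$ because $U(z,y)=u(z)$ is $y$-independent, so $(A_k^{11}U)(x,y)=\int A_k^{11}(x,z)u(z)w'(z)\,\d z$ automatically. Analogous reasoning handles $x$-independence of $A_{k+1}^{21}$. For the sign identity, transpose the $A_{k+1}^{12}$-recursion, substitute the inductive identities together with $V=U^T$, and compare with the recursion for $A_{k+1}^{21}$:
\begin{align*}
(A_{k+1}^{12})^T&=\im\partial_t(A_k^{12})^T-(A_k^{22})^TV+V(A_k^{11})^T\\
&=(-1)^k\bigl[\im\partial_tA_k^{21}-A_k^{22}V+VA_k^{11}\bigr]=(-1)^k\cdot(-A_{k+1}^{21})=(-1)^{k+1}A_{k+1}^{21},
\end{align*}
which is \eqref{z94} at $k+1$. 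The main obstacle throughout is consistent sign bookkeeping under transpose and the summation reindexing $j\mapsto k+1-j$ in the $(A_{k+1}^{11})^T$ computation; this works cleanly only because the product of inductive signs $(-1)^j(-1)^{k+1-j}=(-1)^{k+1}$ is independent of $j$ and factors out of the sum.
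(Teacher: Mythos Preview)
Your proof is correct, and it takes a genuinely different route from the paper's. The paper works entirely with the antiderivative recursion \eqref{z92}: it writes out $A_{k+1}^{12}(x,y)$ and $A_{k+1}^{21}(x,y)$ as explicit integrals involving $U,V,A_k^{12},A_k^{21}$ and $\int^t\!\ldots\d t$, then checks $y$-independence and the sign relation by direct kernel manipulation, relabelling the inner integration variables $z\leftrightarrow w$ and invoking only $U(x,y)=V(y,x)$ together with the level-$k$ hypothesis. No properties of the diagonal entries $A_k^{11},A_k^{22}$ are ever needed, because those entries appear only implicitly through the antiderivative $D_t^{-1}$.

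You instead rearrange \eqref{z81} to $A_{k+1}^{12}=\im\partial_tA_k^{12}-UA_k^{22}+A_k^{11}U$ and work at the operator level with the transpose $(\cdot)^T$. This forces you to carry two auxiliary inductive statements: constancy of the kernel of $A_k^{22}$ (via the closed-form formula of Lemma \ref{z:lem11}) and the transpose identities for $A_k^{11},A_k^{22}$. The payoff is a cleaner algebraic bookkeeping---the final display collapses in two lines once the auxiliary facts are in place---and some extra structural information (e.g.\ $A_k^{22}=0$ for odd $k$) that the paper's argument does not surface. The cost is the additional machinery: you must invoke Lemma \ref{z:lem11} and run a parallel induction on the diagonal entries, whereas the paper's approach is self-contained at the off-diagonal level. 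Both arguments are equally rigorous; yours is more algebraic, the paper's more analytic in flavor.
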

\begin{proof} We have $A_1^{12}(x,y)=-\im U(x,y)$ and $A_1^{21}(y,x)=\im V(y,x)$ by \eqref{z88}. Using Lemma \ref{z:lem8}, we thus obtain \eqref{z94} for $k=1$ and since
\begin{equation*}
	U(x,y)=\lambda^{\frac{1}{2}}\int_{\Sigma}n_1(\eta|x)k_2(\eta|y)\,\d\eta,
\end{equation*}
the $y$-independence of $A_1^{12}(x,y)$ follows from \eqref{z33}. But $U(x,y)=V(y,x)$, so the $y$-independence of $A_1^{21}(y,x)$ follows similarly. Proceeding inductively, we assume that the claims have been proven for $k\in\{1,\ldots,m\}$ and some $1\leq m\leq 2n-1$. Since by \eqref{z92},
\begin{align*}
	A_{k+1}^{12}(x,y)=&\,\im\frac{\partial A_k^{12}}{\partial t}(x,y)-\im\int_{\mathbb{R}}U(x,z)\int^t\!\int_{\mathbb{R}}\big(V(z,w)A_k^{12}(w,y)+A_k^{21}(z,w)U(w,y)\big)\d\sigma(w)\d t\,\d\sigma(z)\\
	&\,-\im\int^t\!\int_{\mathbb{R}}\int_{\mathbb{R}}\big(U(x,z)A_k^{21}(z,w)+A_k^{12}(x,z)V(z,w)\big)U(w,y)\,\d\sigma(z)\d\sigma(w)\d t
\end{align*}
we see that $A_{k+1}^{12}(x,y)$ is $y$-independent by the induction hypothesis and base case. Moreover, using explicitly the induction hypothesis in the form $A_k^{12}(x,y)=(-1)^kA_k^{21}(y,x)$, we obtain
\begin{align}
	A_{k+1}^{12}(x,y)=&\,(-1)^{k+1}\bigg[-\im\frac{\partial A_k^{21}}{\partial t}(y,x)+\im\int_{\mathbb{R}}U(x,z)\int^t\!\int_{\mathbb{R}}\big(V(z,w)A_k^{21}(y,w)+A_k^{12}(w,z)U(w,y)\big)\,\times\label{z95}\\
	&\,\times\d\sigma(w)\d t\,\d\sigma(z)+\im \int^t\!\int_{\mathbb{R}}\int_{\mathbb{R}}\big(U(x,z)A_k^{12}(w,z)+A_k^{21}(z,x)V(z,w)\big)U(w,y)\,\d\sigma(z)\d\sigma(w)\d t\bigg].\nonumber
\end{align}
On the other hand, \eqref{z92} also says
\begin{align*}
	A_{k+1}^{21}(x,y)=&\,-\im\frac{\partial A_k^{21}}{\partial t}(x,y)+\im\int_{\mathbb{R}}V(x,z)\int^t\!\int_{\mathbb{R}}\big(U(z,w)A_k^{21}(w,y)+A_k^{12}(z,w)V(w,y)\big)\d\sigma(w)\d t\,\d\sigma(z)\\
	&\,+\im\int^t\int_{\mathbb{R}}\int_{\mathbb{R}}\big(V(x,z)A_k^{12}(z,w)+A_k^{21}(x,z)U(z,w)\big)V(w,y)\,\d\sigma(w)\d\sigma(z)\d t,
\end{align*}
and thus $A_{k+1}^{21}(x,y)$ is $x$-independent by the induction hypothesis and base case. Finally, relabelling the integration variables $z\leftrightarrow w$ in the last equality and using the induction base case six times in the form $U(x,y)=V(y,x)$ we see at once that with \eqref{z95},
\begin{equation*}
	A_{k+1}^{21}(x,y)=(-1)^{k+1}A_{k+1}^{12}(y,x),\ \ \ (x,y)\in\mathbb{R}^2.
\end{equation*}
This completes our proof.
\end{proof}
The conditions laid out in Lemma \ref{z:lem12} tell us that both equations in \eqref{z91}, when evaluated on the kernel level, lead to the same dynamical system in $t$ which only depends on one additional variable $x$, say - modulo their secondary $(\lambda,n)$-dependence. Indeed, introducing $u(t|x;n,\lambda)\equiv u(t|x):=U(x,x)$ (recall $U$ and $V$ both depend on $t,\lambda$ and $n$) we have by Lemma \ref{z:lem8} and \ref{z:lem12} that $u(t|x)=U(x,x)=U(x,y)=V(y,x)=V(x,x)$ for all $(t,x)\in\mathbb{R}^2$. More generally, setting 
\begin{equation*}
	a_k(t|x):=A_k^{12}(x,x)=(-1)^kA_k^{21}(x,x),\ \ \ \ k\in\{1,\ldots,2n\},
\end{equation*} 
the recursion \eqref{z81} for $A_k^{12}$ (replacing $A_k^{22}$ and $A_k^{11}$ with their antiderivative expressions) is equivalent to
\begin{equation}\label{z96}
	a_{k+1}(t|x)=\begin{cases}
	(\mathcal{L}_{+}^ua_k)(t|x),&k\equiv 0\mod 2\smallskip\\
	(\mathcal{L}_-^ua_k)(t|x),&k\equiv 1\mod 2
	\end{cases},\ \ \ \ \ k=1,2\ldots,2n-1;\ \ \ \ \ \ \ \ \ a_1(t|x):=-\im u(t|x)
\end{equation}
with the operators $\mathcal{L}_{\pm}^u,[\cdot,\cdot],\{\cdot,\cdot\}$ and the weighted bilinear form $\langle\cdot,\cdot\rangle$ of Definition \ref{idef1}. In deriving \eqref{z96} we use \eqref{z94} throughout and recall that $\d\sigma$ is a probability measure. Furthermore, next to the $2n-1$ equations \eqref{z96}, we also find from \eqref{z82} that 
\begin{equation*}
	-(t+x)a_1(t|x)=(\mathcal{L}_+^ua_{2n})(t|x)
\end{equation*}
and thus iteratively, with \eqref{z96},
\begin{equation}\label{z97}
	-(t+x)a_1(t|x)=\big((\mathcal{L}_+^u\mathcal{L}_-^u)^na_1\big)(t|x)\ \ \ \ \ \Rightarrow\ \ \ \ \ \ (t+x)u(t|x)=-\big((\mathcal{L}^u_+\mathcal{L}^u_-)^nu\big)(t|x).
\end{equation}
It now remains to record a representation formula for $D_n(t,\lambda)$ in terms of $u(t|x)$, see \eqref{z98} below, and its boundary behavior at $t=+\infty$. Once established, the same formula together with \eqref{z97} and Remark \ref{zrem:1} completes the proof of Theorem \ref{itheo1}.
\begin{lem}\label{z:lem13} For every $(t,\lambda,n)\in\mathbb{R}\times\overline{\mathbb{D}_1(0)}\times\mathbb{N}$,
\begin{equation}\label{z98}
	D_n(t,\lambda)=\exp\left[-\int_t^{\infty}(s-t)\left(\int_{\mathbb{R}}u^2(s|x)\d\sigma(x)\right)\d s\right].
\end{equation}
where $u(t|x)\equiv u(t|x;n,\lambda)$ solves the dynamical system \eqref{z97} such that $u(t|x)\sim\lambda^{\frac{1}{2}}\textnormal{Ai}_n(t+x)$ as $t\rightarrow+\infty$, pointwise in $x\in\mathbb{R}$.
\end{lem}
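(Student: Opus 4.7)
The starting point is the second-derivative formula of Lemma \ref{z:lem9}, which expresses $\partial_t^2\ln D_n(t,\lambda)$ as a trace on $\mathcal{H}_1$. Recognizing $U=\lambda^{\frac12}\int_\Sigma N_1(\eta)\otimes K_2(\eta)\,\d\eta$ and $V=\lambda^{\frac12}\int_\Sigma N_2(\xi)\otimes K_1(\xi)\,\d\xi$ from \eqref{z76}, this formula reads $\partial_t^2\ln D_n(t,\lambda)=-\tr_{\mathcal H_1}(UV)$. The symmetry results Lemmas \ref{z:lem8} and \ref{z:lem12} show that $U(x,y)$ is $y$-independent and equal to $u(t|x)$, while $V(x,y)=U(y,x)$ is $x$-independent and equal to $u(t|y)$. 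Hence $(UV)(x,y)=\int_{\mathbb R}u(t|x)u(t|y)\,\d\sigma(z)=u(t|x)u(t|y)$ because $\d\sigma$ is a probability measure, and therefore
\[
	\frac{\partial^2}{\partial t^2}\ln D_n(t,\lambda)=-\int_{\mathbb{R}}u^2(t|x)\,\d\sigma(x).
\]

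\textbf{Integration.} The intended boundary data at $t=+\infty$ have already been proven in Section \ref{zsec1}: the discussion after Corollary \ref{zcor:1} establishes $D_n(t,\lambda)\to 1$ in trace norm, and the first-derivative formula of Lemma \ref{z:lem9} combined with the uniform decay estimates on the resolvent kernels used in the proof of Corollary \ref{z:cor3} yields $\partial_t\ln D_n(t,\lambda)\to 0$ as $t\to+\infty$. Two successive integrations of the second-order ODE above therefore give
\[
	\ln D_n(t,\lambda)=-\int_t^{\infty}\!\int_s^{\infty}\!\left(\int_{\mathbb{R}}u^2(\tau|x)\,\d\sigma(x)\right)\d\tau\,\d s,
\]
and a standard application of Fubini's theorem converts the iterated integral to $\int_t^\infty(s-t)\int_{\mathbb R}u^2(s|x)\,\d\sigma(x)\,\d s$, which is precisely \eqref{z98}.

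\textbf{Boundary behavior of $u$.} To identify the asymptotic profile of $u(t|x)=U(x,x)$ as $t\to+\infty$, I would expand $n_1(\cdot|x)=(I-\lambda^{\frac12}C_{t,n}^{\ast})^{-1}m_1(\cdot|x)$ via a Neumann series. By \eqref{z62} the resolvent converges to the identity exponentially fast in operator norm on $L^2(\Sigma)$ as $t\to+\infty$, so $n_1(\eta|x)=m_1(\eta|x)+O(\e^{-2\Delta t})$ uniformly. Substituting into the definition of $U$ and using $\psi_n(\eta,2t+2x)/2+\psi_n(\eta,0)/2=\psi_n(\eta,t+x)$, we get
\[
	u(t|x)=\frac{\lambda^{\frac12}}{2\pi}\int_{\Gamma_{\beta}}\e^{-\im\psi_n(\eta,t+x)}\,\d\eta+o(1)=\lambda^{\frac12}\textnormal{Ai}_n(t+x)+o(1),
\]
where the last equality is the second contour representation \eqref{z17}. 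Together with the dynamical system \eqref{z97} derived in Section \ref{zsec4}, this yields the asymptotic stated in the lemma, pointwise in $x\in\mathbb{R}$.

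\textbf{Main obstacle.} The only delicate point is the exchange of limits and integrations justifying the boundary conditions at $t=+\infty$; the quantitative decay $\|C_{t,n}^k\|_1\to 0$ for $k\geq 2$, coming from the Hilbert-Schmidt bounds on $A_{t,n}^{\textnormal{ext}}$ in the proof of Corollary \ref{z:cor3}, controls all relevant remainders uniformly in $\lambda\in\overline{\mathbb D_1(0)}$, so Fubini and dominated convergence can be applied without further work. The rest of the argument is a direct assembly of identities already proven.
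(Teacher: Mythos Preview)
Your proposal follows essentially the same route as the paper: start from Lemma~\ref{z:lem9}, rewrite $\tr_{\mathcal H_1}(UV)$ using the kernel symmetry $V(y,x)=U(x,y)$ and the $y$-independence of $U(x,y)$, then integrate twice using the boundary data at $t=+\infty$ supplied by Section~\ref{zsec1} and Corollary~\ref{z:cor3}.

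One small imprecision: you write that by \eqref{z62} ``the resolvent converges to the identity exponentially fast'', but \eqref{z62} only bounds $(I-\lambda^{\frac12}C_{t,n}^\ast)^{-1}-I-\lambda^{\frac12}C_{t,n}^\ast$, and $C_{t,n}^\ast$ itself does \emph{not} decay (the $B_n^{\textnormal{ext}}$ piece is $t$-independent). The paper therefore keeps the first-order term $\lambda^{\frac12}C_{t,n}^\ast m_1(\cdot|x)$ explicitly in the expansion of $n_1$. This does not damage your conclusion, because $C_{t,n}^\ast m_1(\cdot|x)$ is supported on $\Gamma_\alpha$ while $k_2(\cdot|x)$ is supported on $\Gamma_\beta$, so the corresponding contribution to $\int_\Sigma n_1(\eta|x)k_2(\eta|x)\,\d\eta$ vanishes identically; but you should state this support argument rather than claim the resolvent itself tends to the identity.
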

\begin{proof} By Lemma \ref{z:lem9},
\begin{align}
	\frac{\partial^2}{\partial t^2}\ln D_n(t,\lambda)=&\,-\tr_{\mathcal{H}_1}(UV)=-\int_{\mathbb{R}}\int_{\mathbb{R}}U(x,y)V(y,x)\,\d\sigma(y)\,\d\sigma(x)\stackrel{\eqref{z56}}{=}-\int_{\mathbb{R}}\int_{\mathbb{R}}U^2(x,y)\,\d\sigma(x)\d\sigma(y)\nonumber\\
	\stackrel{\eqref{z33}}{=}&\,-\int_{\mathbb{R}}\int_{\mathbb{R}}U^2(x,x)\,\d\sigma(x)\d\sigma(y)=-\int_{\mathbb{R}}u^2(t|x)\,\d\sigma(x),\label{z99}
\end{align}
given that $k_2(\zeta|y)$ is $y$-independent and $\d\sigma$ a probability measure. However,
\begin{align*}
	u(t|x)=\lambda^{\frac{1}{2}}&\,\int_{\Sigma}n_1(\eta|x)k_2(\eta|x)\,\d\eta\stackrel{\eqref{z33}}{=}\frac{\lambda^{\frac{1}{2}}}{2\pi}\int_{\Gamma_{\beta}}\e^{-\im\psi_n(\eta,t+x)}\,\d\eta+\lambda\int_{\Sigma}\big(C_{t,n}^{\ast}m_1(\cdot|x)\big)(\eta)k_2(\eta|x)\,\d\eta\\
	&\,+\lambda^{\frac{1}{2}}\int_{\Sigma}\Big[n_1(\eta|x)-m_1(\eta|x)-\lambda^{\frac{1}{2}}\big(C_{t,n}^{\ast}m_1(\cdot|x)\big)(\eta)\Big]k_2(\eta|x)\,\d\eta,
\end{align*}
so by \eqref{z17} indeed $u(t|x)\sim\lambda^{\frac{1}{2}}\textnormal{Ai}_n(t+x)$ as $t\rightarrow+\infty$ once we estimate the two remaining integrals involving $m_1(\cdot|x)$ as in our proof of Corollary \ref{z:cor3}. All together, \eqref{z98} follows from \eqref{z99} after integration since $u(t|x)\sim\lambda^{\frac{1}{2}}\textnormal{Ai}_n(t+x)$ yields $\int_{\mathbb{R}}u^2(t|x)\d\sigma(x)\rightarrow 0$ exponentially fast as $t\rightarrow+\infty$ because of \eqref{z3} and \eqref{i13}. This completes our proof of the Lemma.
\end{proof}
\begin{rem}\label{zrem:1} The smoothness of $t\mapsto u(t|x;n,\lambda)$ for fixed $(x,\lambda,n)\in\mathbb{R}\times\overline{\mathbb{D}_1(0)}\times\mathbb{N}$ follows from the identity
\begin{equation*}
	u(t|x)=U(x,x)=\lambda^{\frac{1}{2}}\int_{\Sigma}n_1(\eta|x)k_2(\eta|x)\d\eta,
\end{equation*}
from the analyticity of \eqref{z33} and analyticity of the resolvent, compare \eqref{z39}. Likewise, the unique solvability of the boundary value problem \eqref{i17} for $(t,x,\lambda,n)\in\mathbb{R}^2\times\overline{\mathbb{D}_1(0)}\times\mathbb{N}$ follows from \eqref{z47} and the fact that RHP \ref{zmaster} is uniquely solvable for $(t,\lambda,n)\in\mathbb{R}\times\overline{\mathbb{D}_1(0)}\times\mathbb{N}$.
\end{rem}

\section{The integro-differential mKdV hierarchy - proof of Theorem \ref{itheo2}}\label{zsec6}

In deriving the integro-differential PDE hierarchy \eqref{i21} we first take a step back, return to \eqref{i15} and introduce another parameter $\tau\in\mathbb{R}_+$. Precisely we consider the Fredholm determinant
\begin{equation}\label{kd1}
	D_n(t,\lambda,\tau):=\det(I-\lambda K_{t,n}^{\tau}\upharpoonright_{L^2(\mathbb{R}_+)}),\ \ \ \ \ (t,\lambda,\tau,n)\in\mathbb{R}\times\mathbb{C}\times\mathbb{R}_+\times\mathbb{N},
\end{equation}
with kernel
\begin{equation*}
	K_{t,n}^{\tau}(x,y):=\int_{\mathbb{R}}\textnormal{Ai}_n(x+z+t)\textnormal{Ai}_n(z+y+t)w(\tau z)\,\d z,
\end{equation*}
and weight $w:\mathbb{R}\rightarrow\mathbb{R}_+$ of general type \eqref{i13}. Observe that the Fermi factor \eqref{i8} yields a standard example of the parametric setup \eqref{kd1} with $\tau=\alpha$. Given that the $\tau$-rescaling does not alter any of the results in Section \ref{zsec1} (in particular \eqref{z12} still holds after substitution) we then apply the methods of Sections \ref{zsec2} and \ref{zsec3} to \eqref{kd1}, en route verifying the below result.
\begin{lem}\label{kd:lem1} For every $(t,\lambda,\tau,n)\in\mathbb{R}\times\mathbb{C}\times\mathbb{R}_+\times\mathbb{N}$,
\begin{equation*}
	D_n(t,\lambda,\tau)=\det(I-\lambda^{\frac{1}{2}}C_{t,n}^{\tau}\upharpoonright_{L^2(\Sigma)}),
\end{equation*}
where $C_{t,n}^{\tau}:L^2(\Sigma)\rightarrow L^2(\Sigma)$ with $\Sigma=\mathbb{R}\sqcup\Gamma_{\beta}\sqcup\Gamma_{\alpha}$ shown in Figure \ref{figz:3} is trace class and has kernel
\begin{align*}
	(\xi-\eta)C_{t,n}^{\tau}(\xi,\eta):=\frac{1}{2\pi}\int_{\mathbb{R}}\Big(\e^{\frac{\im}{2}(\phi_n^{\tau}(\xi,2t\tau+2z)-\phi_n^{\tau}(\eta,2t\tau+2z))}&\,\chi_{\Gamma_{\alpha}}(\xi)\chi_{\Gamma_{\beta}}(\eta)\\
	&\,+\e^{-\frac{\im}{2}(\phi_n^{\tau}(\xi,0)-\phi_n^{\tau}(\eta,0))}\chi_{\Gamma_{\beta}}(\xi)\chi_{\Gamma_{\alpha}}(\eta)\Big)\d\sigma(z),
\end{align*}
defined in terms of $\phi_n^{\tau}(\lambda,z):=\frac{\tau^{2n+1}}{2n+1}\lambda^{2n+1}+z\lambda$.
\end{lem}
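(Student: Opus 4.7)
The plan is to execute, verbatim, the derivations of Sections \ref{zsec2} and \ref{zsec3} after absorbing the $\tau$-rescaling of the weight into a joint rescaling of the Fourier dual and the auxiliary integration variable. Explicitly, after substituting $z\mapsto z/\tau$ in the defining integral of $K_{t,n}^\tau$, each factor $\textnormal{Ai}_n(x+z/\tau+t)$ is expressed through the contour representation \eqref{z17}, and then the spectral dummy $\alpha$ is rescaled as $\alpha\mapsto\tau\alpha$; one verifies directly that $\psi_n(\tau\alpha, x+z/\tau+t)=\phi_n^{\tau}(\alpha, \tau(x+t)+z)$. An analogous rescaling $\beta\mapsto\tau\beta$ is used for the second Airy factor. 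The net effect is that the roles of $\psi_n$ and the external parameter $t$ in the original derivation are now played by $\phi_n^{\tau}$ and $\tau t$, while the weight-side integration reduces to $\int_{\mathbb{R}}\e^{\im z(\alpha-\beta)}w(z)\,\d z$, identical to that of Section \ref{zsec2}.

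With this rewriting in place, the analog of Proposition \ref{zprop:1} produces a conjugated identity $I-\lambda K_{t,n}^\tau\chi_{\mathbb{R}_+}=\mathcal{F}^{-1}P_{n,\tau}^{-1}(I-\lambda J_{t,n}^\tau)P_{n,\tau}\mathcal{F}$ on $L^2(\mathbb{R})$, with $P_{n,\tau}$ multiplying by $\e^{-\frac{\im}{2}\phi_n^{\tau}(\alpha,0)}$ and $J_{t,n}^\tau$ carrying the expected double-contour kernel built from $\phi_n^{\tau}$. The stability argument of Proposition \ref{zprop:2} then proceeds identically: one fixes contours $\Gamma_{\beta}=\mathbb{R}-\im\Delta$ and $\Gamma_{\alpha}=\mathbb{R}+\im\Delta$ with $0<\Delta<\omega/2$ (the original decay rate $\omega$ suffices since the relevant Fourier-side weight is still $w'$), and the Plemelj--Smithies formula delivers the equality of operator traces that leads to the determinant identity on $L^2(\Gamma_{\alpha})$. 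The Hilbert--Schmidt estimates of Lemma \ref{zlem:6} remain in force verbatim after the substitution $\psi_n\mapsto\phi_n^{\tau}$, $t\mapsto\tau t$.

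The final step reproduces Lemma \ref{zlem:6} through Proposition \ref{zprop:3}: extend $J_{t,n}^{\tau,\circ}$ to an operator on $L^2(\Sigma)$ with $\Sigma=\mathbb{R}\sqcup\Gamma_{\beta}\sqcup\Gamma_{\alpha}$, factor it as $J_{t,n}^{\tau,\textnormal{ext}}=A_{t,n}^{\tau,\textnormal{ext}}B_n^{\tau,\textnormal{ext}}$ into a product of two nilpotent trace-class operators supported on disjoint contour components, and invoke the purely algebraic factorization from Lemma \ref{zlem:7}, whose validity hinges only on the nilpotency relations \eqref{z30} and not on any specific phase. Sylvester's identity then yields $\det(I-\lambda J_{t,n}^{\tau,\textnormal{ext}})=\det(I-\lambda^{1/2}C_{t,n}^{\tau}\upharpoonright_{L^2(\Sigma)})$ with $C_{t,n}^{\tau}=A_{t,n}^{\tau,\textnormal{ext}}+B_n^{\tau,\textnormal{ext}}$, and unpacking the kernel of this sum delivers the phase $\phi_n^{\tau}$ together with the shift $2t\tau$ inherited from the variable substitution, precisely as claimed. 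The only point requiring genuine verification is that the rescaled contours can be chosen in the same angular sectors as the originals so that the deformation argument underpinning Proposition \ref{zprop:2} is not spoiled by the $\tau$-dependence; this is immediate from the admissibility constraints on $\Gamma_{\alpha},\Gamma_{\beta}$ in \eqref{z17}, which are invariant under the dilation $\lambda\mapsto\tau\lambda$ for $\tau\in\mathbb{R}_+$.
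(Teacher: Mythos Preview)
Your proposal is correct and follows essentially the same route as the paper. The only organizational difference is that the paper first replaces $\Gamma_{\alpha}\mapsto\tau\Gamma_{\alpha}$, $\Gamma_{\beta}\mapsto\tau\Gamma_{\beta}$ and runs Sections \ref{zsec2}--\ref{zsec3} with the original phase $\psi_n$ on the dilated contours (the $\tau$-dependence showing up as $\e^{\im z(\alpha-\beta)/\tau}$ in the weight integral), and only at the very end performs the ``simple rescaling'' that trades $\psi_n$ on $\tau\Gamma_{\alpha},\tau\Gamma_{\beta}$ for $\phi_n^{\tau}$ on $\Gamma_{\alpha},\Gamma_{\beta}$; you do that rescaling up front, which is equally valid and arguably cleaner since it puts the weight integral immediately into the form $\int_{\mathbb{R}}\e^{\im z(\alpha-\beta)}w(z)\,\d z$ identical to \eqref{z18}.
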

\begin{proof} Replacing all contours in Section \ref{zsec2} according to the admissible rule $\Gamma_{\alpha}\mapsto\tau\Gamma_{\alpha}$ and $\Gamma_{\beta}\mapsto\tau\Gamma_{\beta}$, compare Lemma \ref{zlem:1}, we obtain at once 
\begin{equation*}
	D_n(t,\lambda,\tau)=\det(I-\lambda J_{t,n}^{\tau}\upharpoonright_{L^2(\mathbb{R})}),\ \ \ \ \ (t,\lambda,\tau,n)\in\mathbb{R}\times\mathbb{C}\times\mathbb{R}_+\times\mathbb{N},
\end{equation*}
where $J_{t,n}^{\tau}:=A_{t,n}^{\tau}B_n:L^2(\mathbb{R})\rightarrow L^2(\mathbb{R})$ equals the composition of the Hilbert-Schmidt transformations $A_{t,n}^{\tau}:L^2(\tau\Gamma_{\beta})\rightarrow L^2(\mathbb{R})$ and $B_n:L^2(\mathbb{R})\rightarrow L^2(\tau\Gamma_{\beta})$ with kernels
\begin{equation}\label{kd2}
	A_{t,n}^{\tau}(\alpha,\beta):=\frac{1}{2\pi}\frac{\e^{\frac{\im}{2}\psi_n(\alpha,2t)-\frac{\im}{2}(\beta,2t)}}{\alpha-\beta}\left[\int_{\mathbb{R}}\e^{\im z(\alpha-\beta)/\tau}\d\sigma(z)\right],\ \ \ \ \ B_n(\beta,\gamma):=\frac{1}{2\pi}\frac{\e^{-\frac{\im}{2}\psi_n(\beta,0)+\frac{\im}{2}\psi_n(\gamma,0)}}{\beta-\gamma}.
\end{equation}
Generalizing afterwards Proposition \ref{zprop:2} to $J_{t,n}^{\circ,\tau}:L^2(\tau\Gamma_{\alpha})\rightarrow L^2(\tau\Gamma_{\alpha})$ with
\begin{equation}\label{kd3}
	(J_{t,n}^{\circ,\tau})(\xi):=\int_{\tau\Gamma_{\alpha}}J_{t,n}^{\tau}(\xi,\eta)f(\eta)\,\d\eta,\ \ \ \ \ f\in L^2(\tau\Gamma_{\alpha}),
\end{equation}
we then derive the analogue of \eqref{z26} for \eqref{kd1} in terms of \eqref{kd3}. Once done, it then remains to check that the previous extension \eqref{z28} applies verbatim to our $\tau$-dependent setup (replacing again $\Gamma_{\alpha}\mapsto\tau\Gamma_{\alpha}$ and $\Gamma_{\beta}\mapsto\tau\Gamma_{\beta}$ throughout) and Lemma \ref{zlem:7} goes through as well with the sum kernel
\begin{equation*}
	\big(A_{t,n}^{\tau,\textnormal{ext}}+B_n^{\textnormal{ext}}\big)(\xi,\eta)=A_{t,n}^{\tau}(\xi,\eta)\chi_{\tau\Gamma_{\alpha}}(\xi)\chi_{\tau\Gamma_{\beta}}(\eta)+B_n(\xi,\eta)\chi_{\tau\Gamma_{\beta}}(\eta)\chi_{\tau\Gamma_{\alpha}}(\xi).
\end{equation*}
Finally, using \eqref{kd2} and a simple rescaling, we obtain the desired determinant equality with the indicated kernel for $C_{t,n}^{\tau}$ on $L^2(\Sigma)$. This completes our proof of the Lemma.
\end{proof}

We now proceed as in the second part of Section \ref{zsec3} and thus introduce the following operators.
\begin{definition} Let $M_i^{\tau}(\zeta)\otimes K_j^{\tau}(\zeta)\in\mathcal{I}(\mathcal{H}_1),i,j=1,2$ denote the $\Sigma\ni\zeta$-parametric family of rank one integral operators with kernels
\begin{equation*}
	\big(M_i^{\tau}(\zeta)\otimes K_j^{\tau}(\zeta)\big)(x,y):=m_i^{\tau}(\zeta|x)k_j^{\tau}(\zeta|y),\ \ \ \ \ \ x,y\in\mathbb{R},
\end{equation*}
defined in terms of the $\Sigma\ni\zeta$-parametric family of functions
\begin{equation*}
	k_1^{\tau}(\zeta|y):=\frac{1}{2\pi}\e^{\frac{\im}{2}\phi_n^{\tau}(\zeta,2t\tau+2y)}\chi_{\Gamma_{\alpha}}(\zeta),\ \ k_2^{\tau}(\zeta|y):=\frac{1}{2\pi}\e^{-\frac{\im}{2}\phi_n^{\tau}(\zeta,0)}\chi_{\Gamma_{\beta}}(\zeta),\ \ m_1^{\tau}(\zeta|x):=\e^{-\frac{\im}{2}\phi_n^{\tau}(\zeta,2t\tau+2x)}\chi_{\Gamma_{\beta}}(\zeta),
\end{equation*}
\begin{equation}\label{kd30}
	m_2^{\tau}(\zeta|x):=\e^{\frac{\im}{2}\phi_n^{\tau}(\zeta,0)}\chi_{\Gamma_{\alpha}}(\zeta).
\end{equation}
\end{definition}
In turn, the following $\tau$-dependent RHP generalizes RHP \ref{zmaster}.
\begin{problem}\label{kdmaster} Given $(t,\lambda,\tau,n)\in\mathbb{R}\times\overline{\mathbb{D}_1(0)}\times\mathbb{R}_+\times\mathbb{N}$, find ${\bf X}^{\tau}(\zeta)={\bf X}^{\tau}(\zeta;t,\lambda,n)\in\mathcal{I}(\mathcal{H}_2)$ such that
\begin{enumerate}
	\item[(1)] ${\bf X}^{\tau}(\zeta)=\mathbb{I}_2+{\bf X}_0^{\tau}(\zeta)$ and ${\bf X}_0^{\tau}(\zeta)\in\mathcal{I}(\mathcal{H}_2)$ with kernel ${\bf X}_0^{\tau}(\zeta|x,y)$ is analytic in $\mathbb{C}\setminus\Sigma$.
	\item[(2)] ${\bf X}^{\tau}(\zeta)$ admits continuous boundary values ${\bf X}_{\pm}^{\tau}(\zeta)\in\mathcal{I}(\mathcal{H}_2)$ on $\Sigma$ which satisfy ${\bf X}_+^{\tau}={\bf X}_-^{\tau}(\zeta){\bf G}^{\tau}(\zeta)$ with
	\begin{equation*}
		{\bf G}^{\tau}(\zeta)=\mathbb{I}_2+2\pi\im\lambda^{\frac{1}{2}}\begin{bmatrix}M_1^{\tau}(\zeta)\otimes K_1^{\tau}(\zeta) & M_1^{\tau}(\zeta)\otimes K_2^{\tau}(\zeta)\smallskip\\
		M_2^{\tau}(\zeta)\otimes K_1^{\tau}(\zeta) & M_2^{\tau}(\zeta)\otimes K_2^{\tau}(\zeta)\end{bmatrix},\ \ \ \ \zeta\in\Sigma.
	\end{equation*}
	\item[(3)] There exists $c=c(t,n)>0$ such that for $\zeta\in\mathbb{C}\setminus\Sigma$,
	\begin{equation*}
		\|{\bf X}_0^{\tau}(\zeta|x,y)\|\leq\frac{c\sqrt{|\lambda|}}{1+|\zeta|}\Delta^{-\frac{1}{4n}}\tau^{-\frac{2n+1}{4n}}\e^{-\frac{(-1)^n\tau\Delta}{2(2n+1)}(\tau\Delta)^{2n}}\e^{\Delta(|x|+|y|+\tau|t|)},\ \ \ \Delta=\textnormal{dist}(\Gamma_{\alpha},\mathbb{R})=\textnormal{dist}(\Gamma_{\beta},\mathbb{R})>0,
	\end{equation*}
	uniformly in $(x,y)\in\mathbb{R}^2$ and $(\lambda,\tau)\in\overline{\mathbb{D}_1(0)}\times\mathbb{R}_+$.
\end{enumerate}
\end{problem}
The last problem is uniquely solvable by the proof methods of Lemma \ref{z:lem7} and Theorem \ref{z:theo1}, we only summarize the relevant results for our upcoming analysis without repeating the necessary proofs.
\begin{theo}\label{kd:theo1} For every $(t,\lambda,\tau,n)\in\mathbb{R}\times\overline{\mathbb{D}_1(0)}\times\mathbb{R}_+\times\mathbb{N}$, the integral operator
\begin{equation*}
	{\bf X}^{\tau}(\zeta)=\mathbb{I}_2+\lambda^{\frac{1}{2}}\int_{\Sigma}\begin{bmatrix}N_1^{\tau}(\eta)\otimes K_1^{\tau}(\eta) & N_1^{\tau}(\eta)\otimes K_2^{\tau}(\eta)\smallskip\\
	N_2^{\tau}(\eta)\otimes K_1^{\tau}(\eta) & N_2^{\tau}(\eta)\otimes K_2^{\tau}(\eta)\end{bmatrix}\frac{\d\eta}{\eta-\zeta},\ \ \ \ \ \ \zeta\in\mathbb{C}\setminus\Sigma,
\end{equation*}
is the unique solution of RHP \ref{kdmaster}, where $N_i^{\tau}(\eta)$ are the operators on $\mathcal{H}_1$ which multiply by the functions $n_i^{\tau}(\eta|x)$ determined through the equation
\begin{equation*}
	\big(I-\lambda^{\frac{1}{2}}C_{t,n}^{\tau\ast}\upharpoonright_{L^2(\Sigma)}\big)n_i^{\tau}(\cdot|x)=m_i^{\tau}(\cdot|x),\ \ \ \ i=1,2,
\end{equation*}
with $x\in\mathbb{R}$ and the real adjoint $C_{t,n}^{\tau\ast}$ of $C_{t,n}^{\tau}$. Moreover,
\begin{equation*}
	\big({\bf X}^{\tau}(\zeta)\big)^{-1}=\mathbb{I}_2-\lambda^{\frac{1}{2}}\int_{\Sigma}\begin{bmatrix}M_1^{\tau}(\eta)\otimes L_1^{\tau}(\eta) & M_1^{\tau}(\eta)\otimes L_2^{\tau}(\eta)\smallskip\\
	M_2^{\tau}(\eta)\otimes L_1^{\tau}(\eta) & M_2^{\tau}(\eta)\otimes L_2^{\tau}(\eta)\end{bmatrix}\frac{\d\eta}{\eta-\zeta},\ \ \ \ \zeta\in\mathbb{C}\setminus\Sigma,
\end{equation*}
where $L_i^{\tau}(\eta)$ are the integral operators on $\mathcal{H}_1$ with kernel $\ell_i^{\tau}(\eta|y)$ determined from the equation
\begin{equation*}
	\big(I-\lambda^{\frac{1}{2}}C_{t,n}^{\tau}\upharpoonright_{L^2(\Sigma)})\ell_i^{\tau}(\cdot |y)=k_i^{\tau}(\cdot |y),\ \ i=1,2,\ \ \ y\in\mathbb{R}.
\end{equation*}
In addition, we have the representation formula ${\bf N}^{\tau}(\zeta)={\bf X}^{\tau}(\zeta){\bf M}^{\tau}(\zeta),\zeta\in\Sigma$ with the vector-valued operators
\begin{equation*}
	{\bf N}^{\tau}(\zeta):=\big[N_1^{\tau}(\zeta),N_2^{\tau}(\zeta)\big]^{\top},\ \ \ \ {\bf M}^{\tau}(\zeta):=\big[M_1^{\tau}(\zeta),M_2^{\tau}(\zeta)\big]^{\top},
\end{equation*}
and equations \eqref{z54},\eqref{z55}, \eqref{z56},\eqref{z61} carry over to the $\tau$-modified setup with obvious $\tau$-superscript modifications.
\end{theo}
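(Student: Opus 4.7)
The plan is to reproduce, in the $\tau$-dependent setting, each of the steps carried out in Sections \ref{zsec3} for the $\tau=1$ analogue, verifying that no substantive new ingredient is required.

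First, I would record the $\tau$-analogue of Lemma \ref{zlem:3}: namely, the operator $I-\lambda K_{t,n}^{\tau}$ is invertible on $L^2(\mathbb{R}_+)$ for all $(t,\lambda,\tau,n)\in\mathbb{R}\times\overline{\mathbb{D}_1(0)}\times\mathbb{R}_+\times\mathbb{N}$. This follows from the same Plancherel-based estimate as in \eqref{z11}--\eqref{z12}, where the rescaling $z\mapsto\tau z$ simply replaces $\d\sigma(z)=w'(z)\d z$ by a positive Borel probability measure with bounded first moment. Combined with Lemma \ref{kd:lem1}, this yields $D_n(t,\lambda,\tau)\neq 0$ and hence invertibility of $I-\lambda^{\frac{1}{2}}C_{t,n}^{\tau}$ on $L^2(\Sigma)$, exactly as was used to prove Theorem \ref{z:theo1}. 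Here $C_{t,n}^{\tau}=A_{t,n}^{\tau,\textnormal{ext}}+B_n^{\tau,\textnormal{ext}}$ is the trace class operator of Lemma \ref{kd:lem1} and its kernel admits the integrable representation $(\xi-\eta)C_{t,n}^{\tau}(\xi,\eta)=\int_{\mathbb{R}}\big(k_1^{\tau}(\xi|z)m_1^{\tau}(\eta|z)+k_2^{\tau}(\xi|z)m_2^{\tau}(\eta|z)\big)\d\sigma(z)$ in terms of the functions \eqref{kd30}.

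Next, I would verify directly that the candidate ${\bf X}^{\tau}(\zeta)$ obeys conditions (1)--(3) of RHP \ref{kdmaster}, just as in the proof of Theorem \ref{z:theo1}: analyticity off $\Sigma$ with H\"older continuous boundary values follows from the Plemelj--Sokhotski and Plemelj--Privalov theorems applied to the rank-two Cauchy integrand; the decay estimate in (3) follows from Cauchy--Schwarz and the explicit estimate on $k_j^{\tau}, m_i^{\tau}$, which now carry the extra factor $\tau$ in the phase and hence produce the $\tau^{-(2n+1)/4n}\e^{-(-1)^n\tau\Delta(\tau\Delta)^{2n}/(2(2n+1))}$ prefactor recorded in (3); and the jump relation $\mathbf{X}_+^{\tau}=\mathbf{X}_-^{\tau}\mathbf{G}^{\tau}$ follows from the integral equation defining $n_i^{\tau}$ via the same computation as in \eqref{z42}--\eqref{z44}. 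Uniqueness of the solution is then obtained by verbatim application of the determinant argument of Lemma \ref{z:lem7}: the scalar $d^{\tau}(\zeta):=\det(\mathbb{I}_2+\mathbf{X}_0^{\tau}(\zeta)\upharpoonright_{\mathcal{H}_2})$ is entire with $d^{\tau}(\infty)=1$ and hence identically equal to $1$, so any solution is invertible off $\Sigma$, and the ratio of two solutions is entire with trivial jump and trivial limit at infinity, hence equals $\mathbb{I}_2$ by Liouville's theorem.

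The explicit formula for $(\mathbf{X}^{\tau})^{-1}$ is obtained by computing $\mathbf{X}^{\tau}(\zeta)\mathbf{Y}^{\tau}(\zeta)$ as in the proof of Corollary \ref{z:cor1}: using partial fractions, the double integral collapses by the defining equations of $n_i^{\tau}$ and $\ell_j^{\tau}$ together with the integrable kernel identity above, yielding $\mathbf{X}^{\tau}\mathbf{Y}^{\tau}=\mathbb{I}_2$; since $\mathbf{X}^{\tau}$ is already known to be invertible, $\mathbf{Y}^{\tau}=(\mathbf{X}^{\tau})^{-1}$. The representation $\mathbf{N}^{\tau}(\zeta)=\mathbf{X}^{\tau}(\zeta)\mathbf{M}^{\tau}(\zeta)$ on $\Sigma$ then follows by multiplying the jump relation on the right by the rank one block $[M_i^{\tau}\otimes K_j^{\tau}]$ and comparing with \eqref{z42}, exactly as in the second half of Corollary \ref{z:cor1}. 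Finally, the carry-over of \eqref{z54}, \eqref{z55}, \eqref{z56} and \eqref{z61} is purely mechanical: \eqref{z54}, \eqref{z55} follow from expanding $\mathbf{X}^{\tau}(\mathbf{X}^{\tau})^{-1}=\mathbb{I}_2$ as $|\zeta|\to\infty$ and matching powers of $\zeta^{-1}$; \eqref{z56} uses the reflection symmetry $\overline{\Gamma}_{\beta}=\Gamma_{\alpha}$ and the parity $\phi_n^{\tau}(-\lambda,\cdot)=-\phi_n^{\tau}(\lambda,\cdot)$ in the kernel \eqref{kd30}, reproducing the argument of Lemma \ref{z:lem8} word for word; and \eqref{z61} follows from the exponentially fast decay $\|A_{t,n}^{\tau,\textnormal{ext}}\|_1\leq c\,\e^{-2\tau t\Delta}$ as $t\to+\infty$ which replaces the corresponding estimate in Corollary \ref{z:cor3}.

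The only nontrivial point—and the step I would expect to be the main obstacle if one tried to axiomatize the argument—is confirming that the $\tau$-rescaling in the phase $\phi_n^{\tau}(\lambda,z)=\tfrac{\tau^{2n+1}}{2n+1}\lambda^{2n+1}+z\lambda$ preserves all decay estimates used in Sections \ref{zsec1}--\ref{zsec3}. Because $\tau\in\mathbb{R}_+$ only appears through the dilation $\lambda\mapsto\tau\lambda$ in the contour integrals and can therefore be absorbed by the substitution $\Gamma_{\alpha,\beta}\mapsto\tau\Gamma_{\alpha,\beta}$ in the underlying Fourier representation, the required bounds on $\Sigma$ are again of the same exponential type, with the explicit $\tau$-dependence tracked in condition (3) of RHP \ref{kdmaster}. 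No further conceptual ingredient is therefore needed, and the proof is complete.
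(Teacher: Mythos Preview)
Your proposal is correct and follows exactly the paper's approach: the paper itself does not give a detailed proof of Theorem \ref{kd:theo1} but simply states that ``the last problem is uniquely solvable by the proof methods of Lemma \ref{z:lem7} and Theorem \ref{z:theo1}, we only summarize the relevant results for our upcoming analysis without repeating the necessary proofs,'' and your sketch is precisely an elaboration of those proof methods carried over with $\tau$-superscripts. Your identification of the $\tau$-rescaling as the only point requiring verification (and your explanation via the contour dilation $\Gamma_{\alpha,\beta}\mapsto\tau\Gamma_{\alpha,\beta}$) matches the paper's own remark in the proof of Lemma \ref{kd:lem1}.
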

With Theorem \ref{kdmaster} at hand we now proceed as in Sections \ref{zsec4} and \ref{zsec5}: view $M_i^{\tau}(\zeta)$ and $N_i^{\tau}(\zeta)$ as integral operators on $\mathcal{H}_1$ with appropriate distributional kernels as in the beginning of Section \ref{zsec4} and introduce the following two mKdV variables\footnote{We follow the standard mKdV convention in denoting the time variable with $t_{2n+1}$, cf. \cite[$(3.3)$]{CM}. The spatial variable in the ordinary mKdV is typically $x$ but for us $t_1$. The variable $x$ enters in the evaluation of the bilinear form, see Definition \ref{idef2}.}
\begin{equation}\label{kd4}
	t_1:=\tau t\in\mathbb{R},\ \ \ \ \ \ \ \ \ \ t_{2n+1}:=\frac{\tau^{2n+1}}{2n+1}\in\mathbb{R}_+.
\end{equation}
In turn we find
\begin{equation*}
	\frac{\partial}{\partial t_{2n+1}}{\bf M}^{\tau}(\zeta|x,y)=\begin{bmatrix}-\frac{\im}{2}\zeta^{2n+1} & 0\smallskip\\
	0 & \frac{\im}{2}\zeta^{2n+1}\end{bmatrix}{\bf M}^{\tau}(\zeta|x,y),\ \ \ \ (\zeta,x,y)\in\Sigma\times\mathbb{R}^2,
\end{equation*}
or equivalently the operator identity
\begin{equation}\label{kd5}
	\frac{\partial}{\partial t_{2n+1}}{\bf M}(\zeta)=\big(\zeta^{2n+1}{\bf A}_0\big){\bf M}^{\tau}(\zeta),\ \ \ \ \zeta\in\Sigma,
\end{equation}
with ${\bf A}_0:\mathcal{H}_2\rightarrow\mathcal{H}_2$ as in \eqref{z67}. Similarly,
\begin{equation}\label{kd6}
	\frac{\partial}{\partial t_1}{\bf M}(\zeta)=\big(\zeta{\bf B}_0\big){\bf M}^{\tau}(\zeta),\ \ \ \zeta\in\Sigma
\end{equation}
where ${\bf B}_0:\mathcal{H}_2\rightarrow\mathcal{H}_2$ has the kernel \eqref{z69}. Combining \eqref{kd5} and \eqref{kd6} we deduce the below mKdV equivalent of Proposition \ref{z:prop4}.
\begin{prop}\label{kd:prop1} There exist $(t_1,\lambda,t_{2n+1},n)$-dependent, analytic in $\zeta\in\mathbb{C}$ integral operators ${\bf A}^{\tau}(\zeta),{\bf B}^{\tau}(\zeta)$ on $\mathcal{H}_2$ such that for every $\zeta\in\Sigma$ and $(t_1,\lambda,t_{2n+1},n)\in\mathbb{R}\times\overline{\mathbb{D}_1(0)}\times\mathbb{R}_+\times\mathbb{N}$,
\begin{equation}\label{kd60}
	\frac{\partial{\bf N}^{\tau}}{\partial t_{2n+1}}(\zeta)={\bf A}^{\tau}(\zeta){\bf N}^{\tau}(\zeta),\ \ \ \ \ \ \ \frac{\partial{\bf N}^{\tau}}{\partial t_1}(\zeta)={\bf B}^{\tau}(\zeta){\bf N}^{\tau}(\zeta).
\end{equation}
\end{prop}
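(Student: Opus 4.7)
The plan is to mimic almost verbatim the argument used for Proposition \ref{z:prop4}, replacing the pair $(\partial_\zeta, \partial_t)$ and the polynomials $\zeta^{2n}\mathbf{A}_0+\widehat{\mathbf{A}}_{2n}$, $\zeta\mathbf{B}_0$ by the pair $(\partial_{t_{2n+1}},\partial_{t_1})$ and the monomials $\zeta^{2n+1}\mathbf{A}_0$, $\zeta\mathbf{B}_0$, respectively. The starting point is the representation $\mathbf{N}^\tau(\zeta)=\mathbf{X}^\tau(\zeta)\mathbf{M}^\tau(\zeta)$ on $\Sigma$, which is part of Theorem \ref{kd:theo1} and which, formally differentiated using \eqref{kd5} and \eqref{kd6}, suggests defining
\begin{equation*}
\mathbf{A}^\tau(\zeta):=\frac{\partial\mathbf{X}^\tau}{\partial t_{2n+1}}(\zeta)\big(\mathbf{X}^\tau(\zeta)\big)^{-1}+\mathbf{X}^\tau(\zeta)\big(\zeta^{2n+1}\mathbf{A}_0\big)\big(\mathbf{X}^\tau(\zeta)\big)^{-1},
\end{equation*}
\begin{equation*}
\mathbf{B}^\tau(\zeta):=\frac{\partial\mathbf{X}^\tau}{\partial t_1}(\zeta)\big(\mathbf{X}^\tau(\zeta)\big)^{-1}+\mathbf{X}^\tau(\zeta)\big(\zeta\mathbf{B}_0\big)\big(\mathbf{X}^\tau(\zeta)\big)^{-1}
\end{equation*}
for $\zeta\in\mathbb{C}\setminus\Sigma$. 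With these defined, the two identities in \eqref{kd60} follow by construction once one knows that $\mathbf{A}^\tau,\mathbf{B}^\tau$ are well-defined on all of $\mathbb{C}$.

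First I would verify the analytic/regularity framework: invertibility of $\mathbf{X}^\tau(\zeta)$ on $\mathcal{H}_2$ for $\zeta\in\mathbb{C}\setminus\Sigma$ with continuous boundary values on $\Sigma$ is already guaranteed by the analogue of Lemma \ref{z:lem7} and Theorem \ref{kd:theo1}, while differentiability of $\mathbf{X}^\tau(\zeta)$ in $t_1$ and $t_{2n+1}$ with trace-class derivatives follows from the explicit integral representation in Theorem \ref{kd:theo1}, the analyticity of $m_i^\tau,k_j^\tau$ in \eqref{kd30} in these parameters, and the resolvent regularity for \eqref{kd1}. In turn, $\mathbf{A}^\tau(\zeta),\mathbf{B}^\tau(\zeta)\in\mathcal{I}(\mathcal{H}_2)$ and are analytic in $\zeta\in\mathbb{C}\setminus\Sigma$ with continuous boundary values $\mathbf{A}^\tau_\pm(\zeta),\mathbf{B}^\tau_\pm(\zeta)\in\mathcal{I}(\mathcal{H}_2)$ on $\Sigma$.

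The central step is to check $\mathbf{A}^\tau_+(\zeta)=\mathbf{A}^\tau_-(\zeta)$ and $\mathbf{B}^\tau_+(\zeta)=\mathbf{B}^\tau_-(\zeta)$ on $\Sigma$. For this, I would use the jump relation $\mathbf{X}^\tau_+=\mathbf{X}^\tau_-\mathbf{G}^\tau$ from condition (2) of RHP \ref{kdmaster} to get, on $\Sigma$,
\begin{align*}
\mathbf{A}^\tau_+(\zeta)=\Big[\frac{\partial\mathbf{X}^\tau_-}{\partial t_{2n+1}}\mathbf{G}^\tau+\mathbf{X}^\tau_-\frac{\partial\mathbf{G}^\tau}{\partial t_{2n+1}}\Big]&\,(\mathbf{G}^\tau)^{-1}(\mathbf{X}^\tau_-)^{-1}\\
&\,+\mathbf{X}^\tau_-\mathbf{G}^\tau(\zeta^{2n+1}\mathbf{A}_0)(\mathbf{G}^\tau)^{-1}(\mathbf{X}^\tau_-)^{-1},
\end{align*}
and an analogous identity for $\mathbf{B}^\tau_+$. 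Using the kernel formulae in \eqref{kd30} one checks by direct differentiation the commutator identities
\begin{equation*}
\frac{\partial\mathbf{G}^\tau}{\partial t_{2n+1}}(\zeta)=\big[\zeta^{2n+1}\mathbf{A}_0,\mathbf{G}^\tau(\zeta)\big],\qquad \frac{\partial\mathbf{G}^\tau}{\partial t_1}(\zeta)=\big[\zeta\mathbf{B}_0,\mathbf{G}^\tau(\zeta)\big],
\end{equation*}
which plug in to make the jump discrepancies telescope, yielding $\mathbf{A}^\tau_+=\mathbf{A}^\tau_-$ and $\mathbf{B}^\tau_+=\mathbf{B}^\tau_-$ on $\Sigma$ exactly as in the derivation leading to \eqref{z71} in the proof of Proposition \ref{z:prop4}. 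By Morera plus $(x,y)\mapsto\mathbf{A}^\tau(\zeta|x,y),\mathbf{B}^\tau(\zeta|x,y)\in L^2(\mathbb{R}^2,\mathrm{d}\sigma\!\otimes\!\mathrm{d}\sigma;\mathbb{C}^{2\times2})$, both operators then extend analytically to all of $\mathbb{C}$ in the sense of Definition \ref{c2}.

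The main (mild) obstacle is purely bookkeeping: one must be careful that the $\tau$-rescaling only affects the exponents $\phi_n^\tau$ in \eqref{kd30} while leaving the underlying kernel structure untouched, so that $\mathbf{A}_0,\mathbf{B}_0$ of \eqref{z67} and \eqref{z69} still play their role. No new genuinely analytic difficulty appears beyond what has already been handled in Proposition \ref{z:prop4}; the present statement is its direct $\tau$-parametric analogue, and the argument is concluded.
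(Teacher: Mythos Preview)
Your proposal is correct and follows essentially the same approach as the paper: define $\mathbf{A}^\tau(\zeta)$ and $\mathbf{B}^\tau(\zeta)$ via the conjugation formulas suggested by $\mathbf{N}^\tau=\mathbf{X}^\tau\mathbf{M}^\tau$ together with \eqref{kd5}, \eqref{kd6}, then verify the commutator identities $\partial_{t_{2n+1}}\mathbf{G}^\tau=[\zeta^{2n+1}\mathbf{A}_0,\mathbf{G}^\tau]$ and $\partial_{t_1}\mathbf{G}^\tau=[\zeta\mathbf{B}_0,\mathbf{G}^\tau]$ to cancel the jump across $\Sigma$. The paper's proof is line-for-line the same argument, so there is nothing to add.
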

\begin{proof} Using the aforementioned representation formula and \eqref{kd5},
\begin{equation}\label{kd7}
	\frac{\partial{\bf N}^{\tau}}{\partial t_{2n+1}}(\zeta)=\underbrace{\bigg[\frac{\partial{\bf X}^{\tau}}{\partial t_{2n+1}}(\zeta)\big({\bf X}^{\tau}(\zeta)\big)^{-1}+{\bf X}^{\tau}(\zeta)\big(\zeta^{2n+1}{\bf A}_0\big)\big({\bf X}^{\tau}(\zeta)\big)^{-1}\bigg]}_{=:{\bf A}^{\tau}(\zeta)}{\bf N}^{\tau}(\zeta),
\end{equation}
where ${\bf A}^{\tau}(\zeta)\in\mathcal{I}(\mathcal{H}_2)$ by Theorem \ref{kdmaster} and ${\bf A}^{\tau}(\zeta)$ is analytic for $\zeta\in\mathbb{C}\setminus\Sigma$ with continuous boundary values ${\bf A}_{\pm}^{\tau}(\zeta)\in\mathcal{I}(\mathcal{H}_2)$ on $\Sigma$. In fact,
\begin{align*}
	{\bf A}_+^{\tau}(\zeta)=\bigg[\frac{\partial{\bf X}_-^{\tau}}{\partial t_{2n+1}}(\zeta)G^{\tau}(\zeta)\,+&\,{\bf X}_-^{\tau}(\zeta)\frac{\partial{\bf G}^{\tau}}{\partial t_{2n+1}}(\zeta)\bigg]\big({\bf G}^{\tau}(\zeta)\big)^{-1}\big({\bf X}_-^{\tau}(\zeta)\big)^{-1}\\
	&\,\,+{\bf X}_-^{\tau}(\zeta){\bf G}^{\tau}(\zeta)\big(\zeta^{2n+1}{\bf A}_0\big)\big({\bf G}^{\tau}(\zeta)\big)^{-1}\big({\bf X}_-^{\tau}(\zeta)\big)^{-1},\ \ \ \zeta\in\Sigma,
\end{align*}
and through the commutator identity
\begin{equation*}
	\frac{\partial{\bf G}^{\tau}}{\partial t_{2n+1}}(\zeta)=\big[\zeta^{2n+1}{\bf A}_0,{\bf G}^{\tau}(\zeta)\big]\in\mathcal{I}(\mathcal{H}_2),\ \ \ \ \ \zeta\in\Sigma,
\end{equation*}
therefore
\begin{equation*}
	{\bf A}_+^{\tau}(\zeta)=\frac{\partial{\bf X}_-^{\tau}}{\partial t_{2n+1}}(\zeta)\big({\bf X}_-^{\tau}(\zeta)\big)^{-1}+{\bf X}_-^{\tau}(\zeta)\big(\zeta^{2n+1}{\bf A}_0\big)\big({\bf X}_-^{\tau}(\zeta)\big)^{-1}={\bf A}_-^{\tau}(\zeta),\ \ \ \zeta\in\Sigma.
\end{equation*}
This shows that ${\bf A}^{\tau}(\zeta)$ as defined in \eqref{kd7} is analytic for every $\zeta\in\mathbb{C}$. Quite similar,
\begin{equation}\label{kd8}
	\frac{\partial{\bf N}^{\tau}}{\partial t_1}(\zeta)=\underbrace{\bigg[\frac{\partial{\bf X}^{\tau}}{\partial t_1}(\zeta)\big({\bf X}^{\tau}(\zeta)\big)^{-1}+{\bf X}^{\tau}(\zeta)\big(\zeta {\bf B}_0\big)\big({\bf X}^{\tau}(\zeta)\big)^{-1}\bigg]}_{=:{\bf B}^{\tau}(\zeta)}{\bf N}^{\tau}(\zeta),
\end{equation}
where ${\bf B}^{\tau}(\zeta)$ is also analytic for $\zeta\in\mathbb{C}\setminus\Sigma$ by Theorem \ref{kdmaster} with continuous boundary values ${\bf B}_{\pm}^{\tau}(\zeta)\in\mathcal{I}(\mathcal{H}_2)$ on $\Sigma$ that satisfy
\begin{align*}
	{\bf B}_+^{\tau}(\zeta)=\bigg[\frac{\partial{\bf X}_-^{\tau}}{\partial t_1}(\zeta)G^{\tau}(\zeta)\,+&\,{\bf X}_-^{\tau}(\zeta)\frac{\partial{\bf G}^{\tau}}{\partial t_1}(\zeta)\bigg]\big({\bf G}^{\tau}(\zeta)\big)^{-1}\big({\bf X}_-^{\tau}(\zeta)\big)^{-1}\\
	&\,\,+{\bf X}_-^{\tau}(\zeta){\bf G}^{\tau}(\zeta)\big(\zeta{\bf B}_0\big)\big({\bf G}^{\tau}(\zeta)\big)^{-1}\big({\bf X}_-^{\tau}(\zeta)\big)^{-1},\ \ \ \zeta\in\Sigma.
\end{align*}
But in light of the commutator identity
\begin{equation*}
	\frac{\partial{\bf G}^{\tau}}{\partial t_1}(\zeta)=\big[\zeta{\bf B}_0,{\bf G}^{\tau}(\zeta)\big]\in\mathcal{I}(\mathcal{H}_2),\ \ \ \ \zeta\in\Sigma,
\end{equation*}
we find
\begin{equation*}
	{\bf B}_+^{\tau}(\zeta)=\frac{\partial{\bf X}_-^{\tau}}{\partial t_1}(\zeta)\big({\bf X}_-^{\tau}(\zeta)\big)^{-1}+{\bf X}_-^{\tau}(\zeta)\big(\zeta{\bf B}_0\big)\big({\bf X}_-^{\tau}(\zeta)\big)^{-1}={\bf B}_-^{\tau}(\zeta),\ \ \ \ \zeta\in\Sigma,
\end{equation*}
and so the analyticity of ${\bf B}(\zeta)$ for $\zeta\in\mathbb{C}$. This concludes our proof of the Proposition.
\end{proof}
In our next move we will explicitly compute ${\bf A}^{\tau}(\zeta)$ and ${\bf B}^{\tau}(\zeta)$ in terms of RHP data (this time RHP \ref{kdmaster}) as previously done in Proposition \ref{z:prop5} for the $\tau$-independent problem.
\begin{prop}\label{kd:prop2} We have
\begin{equation}\label{kd9}
	{\bf B}^{\tau}(\zeta)=\zeta{\bf B}_0+{\bf B}_1^{\tau},\ \ \ \ \ \ \ \ \ {\bf A}^{\tau}(\zeta)=\zeta^{2n+1}{\bf A}_0+\sum_{k=1}^{2n+1}{\bf A}_k^{\tau}\,\zeta^{2n+1-k}
\end{equation}
where the kernels of ${\bf B}_0$ and ${\bf B}_1^{\tau}$ are written in \eqref{z69} and \eqref{kd10} below. Likewise, the kernel of ${\bf A}_0$ appeared in \eqref{z67} and the entries of ${\bf A}_k^{\tau}$ are polynomials in $\int_{\Sigma}N_i^{\tau}(\eta)\otimes K_j^{\eta}\eta^m\d\eta$ and $\int_{\Sigma}M_i^{\tau}(\eta)\otimes L_j^{\tau}(\eta)\eta^m\d\eta$ with $m\in\mathbb{Z}_{\geq 0}$ and $i,j\in\{1,2\}$.
\end{prop}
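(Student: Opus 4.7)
The plan is to mimic the proof of Proposition \ref{z:prop5} essentially verbatim, with the only structural changes coming from the replacement of $\zeta^{2n}{\bf A}_0+\widehat{{\bf A}}_{2n}$ (the ODE generator in \eqref{z66}) by $\zeta^{2n+1}{\bf A}_0$ (the flow generator in \eqref{kd5}), which raises the polynomial degree of ${\bf A}^\tau(\zeta)$ from $2n$ to $2n+1$ and eliminates the $\widehat{\bf A}_{2n}$-contribution. First I would return to the integral representation of ${\bf X}^\tau(\zeta)$ and of $\bigl({\bf X}^\tau(\zeta)\bigr)^{-1}$ furnished by Theorem \ref{kd:theo1}, and expand the resolvent kernel at infinity using the identity
\begin{equation*}
	\frac{1}{\eta-\zeta}=-\frac{1}{\zeta}\sum_{k=0}^{2n}\Bigl(\frac{\eta}{\zeta}\Bigr)^{k}+\frac{\eta^{2n+1}}{\zeta^{2n+1}(\eta-\zeta)},
\end{equation*}
so that for $\zeta\notin\Sigma$,
\begin{equation*}
	{\bf X}^\tau(\zeta)=\mathbb{I}_2-\lambda^{\frac12}\sum_{k=1}^{2n+1}\frac{1}{\zeta^k}\!\int_\Sigma\!\begin{bmatrix}N_1^\tau(\eta)\otimes K_1^\tau(\eta) & N_1^\tau(\eta)\otimes K_2^\tau(\eta)\smallskip\\ N_2^\tau(\eta)\otimes K_1^\tau(\eta) & N_2^\tau(\eta)\otimes K_2^\tau(\eta)\end{bmatrix}\eta^{k-1}\,\d\eta+{\bf X}^\tau_{\textnormal{asy}}(\zeta),
\end{equation*}
with an analogous expansion for $\bigl({\bf X}^\tau(\zeta)\bigr)^{-1}$ obtained from the formula for the inverse in Theorem \ref{kd:theo1}. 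The remainders ${\bf X}^\tau_{\textnormal{asy}}(\zeta)$ satisfy a $\tau$-decorated version of the bound \eqref{z75}, in particular they decay like $|\zeta|^{-(2n+2)}$ at infinity, uniformly in the kernel variables on compact $\sigma\otimes\sigma$-sets.

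Next I would substitute these expansions into the bracketed defining expressions for ${\bf A}^\tau(\zeta)$ and ${\bf B}^\tau(\zeta)$ in \eqref{kd7} and \eqref{kd8}. By Proposition \ref{kd:prop1} both operators are entire in $\zeta$, and the remainder terms of the expansions together with the control on $\frac{\partial {\bf X}^\tau}{\partial t_{2n+1}}$ and $\frac{\partial {\bf X}^\tau}{\partial t_1}$ (which inherit the same decay) show that
\begin{equation*}
	{\bf A}^\tau(\zeta)-\zeta^{2n+1}{\bf A}_0=\mathcal{O}(\zeta^{2n}),\qquad {\bf B}^\tau(\zeta)-\zeta{\bf B}_0=\mathcal{O}(1)\qquad\text{as }|\zeta|\to\infty
\end{equation*}
in the operator-norm topology on $\mathcal{H}_2$. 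Liouville's theorem, applied to the analytic kernels (together with the integrability estimates coming from Theorem \ref{kd:theo1} exactly as in the proof of Lemma \ref{z:lem7}), then forces ${\bf A}^\tau(\zeta)$ to be a polynomial in $\zeta$ of degree at most $2n+1$ with leading coefficient ${\bf A}_0$, and ${\bf B}^\tau(\zeta)$ to be a polynomial of degree at most $1$ with leading coefficient ${\bf B}_0$. This yields the form \eqref{kd9}.

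Reading off the constant coefficient ${\bf B}_1^\tau$ is then a direct computation: the $\mathcal{O}(1)$ term in the expansion of ${\bf X}^\tau(\zeta)\,(\zeta{\bf B}_0)\,\bigl({\bf X}^\tau(\zeta)\bigr)^{-1}$, combined with the vanishing of the $\mathcal{O}(1)$ term of $\frac{\partial {\bf X}^\tau}{\partial t_1}(\zeta)\bigl({\bf X}^\tau(\zeta)\bigr)^{-1}$, gives, after invoking the $\tau$-versions of \eqref{z54} and \eqref{z56} guaranteed by Theorem \ref{kd:theo1}, the off-diagonal entries
\begin{equation*}
	B_1^{\tau,12}=-\im U^\tau,\qquad B_1^{\tau,21}=\im V^\tau,\qquad B_1^{\tau,11}=B_1^{\tau,22}=0,
\end{equation*}
with $U^\tau:=\lambda^{\frac12}\int_\Sigma N_1^\tau(\eta)\otimes K_2^\tau(\eta)\,\d\eta$ and $V^\tau:=\lambda^{\frac12}\int_\Sigma N_2^\tau(\eta)\otimes K_1^\tau(\eta)\,\d\eta$; this is the promised formula for the kernel of ${\bf B}_1^\tau$. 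The entries of the ${\bf A}_k^\tau$ are similarly polynomial in the moments $\int_\Sigma N_i^\tau(\eta)\otimes K_j^\tau(\eta)\,\eta^m\,\d\eta$ and $\int_\Sigma M_i^\tau(\eta)\otimes L_j^\tau(\eta)\,\eta^m\,\d\eta$ because the expansion of ${\bf X}^\tau\bigl({\bf X}^\tau\bigr)^{-1}$ only involves such moments. The only delicate point, and what I would view as the main technical obstacle, is confirming that the entrywise Liouville argument passes to the operator-valued setting: this requires uniform-in-$(x,y)$ control of the remainders ${\bf X}^\tau_{\textnormal{asy}}(\zeta|x,y)$ in a Hilbert–Schmidt sense so that the composition with $\zeta^{2n+1}{\bf A}_0$ still produces an $L^2(\d\sigma\otimes\d\sigma;\mathbb{C}^{2\times 2})$ entire kernel at each $\zeta$ — but this is exactly the analog of what underpins Proposition \ref{z:prop5}, and the $\tau$-dependent bound in property (3) of RHP \ref{kdmaster} delivers the required estimate.
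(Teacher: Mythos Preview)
Your proposal is correct and follows essentially the same approach as the paper: the paper's proof simply says ``similar to the proof of Proposition \ref{z:prop5}'' and records the formula for ${\bf B}_1^\tau$, which is exactly the argument you spell out in detail. One minor remark: the vanishing of the diagonal entries $B_1^{\tau,11}=B_1^{\tau,22}=0$ follows already from the commutator structure $[{\bf B}_0,\cdot]$ together with the $\tau$-analogue of \eqref{z54}; the symmetry \eqref{z56} is not needed at this step.
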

\begin{proof} Similar to the proof of Proposition \ref{z:prop5} we use Theorem \ref{kdmaster} above and conclude at once that
\begin{equation}\label{kd10}
	{\bf B}_1^{\tau}(x,y)=\begin{bmatrix} 0 & -\im U^{\tau}\smallskip\\
	\im V^{\tau}& 0\end{bmatrix}(x,y),\ \ U^{\tau}:=\lambda^{\frac{1}{2}}\int_{\Sigma}N_1^{\tau}(\eta)\otimes K_2^{\tau}(\eta)\,\d\eta,\  V^{\tau}:=\lambda^{\frac{1}{2}}\int_{\Sigma}N_2^{\tau}(\eta)\otimes K_1^{\tau}(\eta)\,\d\eta
\end{equation}
The corresponding expression for ${\bf A}^{\tau}(\zeta)$ requires no further explanation.
\end{proof}
Continuing our analysis we now study the compatibility condition
\begin{equation}\label{kd11}
	{\bf A}^{\tau}(\zeta){\bf B}^{\tau}(\zeta)-{\bf B}^{\tau}(\zeta){\bf A}^{\tau}(\zeta)=\frac{\partial{\bf B}^{\tau}}{\partial t_{2n+1}}(\zeta)-\frac{\partial{\bf A}^{\tau}}{\partial t_1}(\zeta)
\end{equation}
of system \eqref{kd60},\eqref{kd9}.
\begin{prop} By \eqref{kd9}, the constraint \eqref{kd11} is equivalent to the following equations for the coefficients $U^{\tau},V^{\tau}$ and $A_k^{\tau ij}$ with $i,j=1,2$,
\begin{equation}\label{kd12}
	A_1^{\tau 12}=-\im U^{\tau},\ \ \ \ A_1^{\tau 21}=\im V^{\tau},
\end{equation}
followed by
\begin{equation}\label{kd13}
	\begin{cases}\displaystyle\frac{\partial A_k^{\tau 11}}{\partial t_1}=-\im(U^{\tau}A_k^{\tau 21}+A_k^{\tau 12}V^{\tau}),\ \ \frac{\partial A_k^{\tau 12}}{\partial t_1}=-\im(A_{k+1}^{\tau 12}+U^{\tau}A_k^{\tau 22}-A_k^{\tau 11}U)\bigskip\\
	\displaystyle\frac{\partial A_k^{\tau 22}}{\partial t_1}=\im(V^{\tau}A_k^{\tau 12}+A_k^{\tau 21}U^{\tau}),\ \ \ \ \frac{\partial A_k^{\tau 21}}{\partial t_1}=\im(A_{k+1}^{\tau 21}+V^{\tau}A_k^{\tau 11}-A_k^{\tau 22}V^{\tau})\end{cases},\ \ k=1,\ldots,2n,
\end{equation}
and concluding with
\begin{equation}\label{kd14}
	\begin{cases}\displaystyle\frac{\partial A_{2n+1}^{\tau 11}}{\partial t_1}=-\im(U^{\tau}A_{2n+1}^{\tau 21}+A_{2n+1}^{\tau 12}V^{\tau}),\ \ \frac{\partial A_{2n+1}^{\tau 12}}{\partial t_1}=-\im(U^{\tau}A_{2n+1}^{\tau 22}-A_{2n+1}^{\tau 11}U^{\tau})-\im\frac{\partial U^{\tau}}{\partial t_{2n+1}}\bigskip\\
	\displaystyle\frac{\partial A_{2n+1}^{\tau 22}}{\partial t_1}=\im(V^{\tau}A_{2n+1}^{\tau 12}+A_{2n+1}^{\tau 21}U^{\tau}),\ \ \ \ \frac{\partial A_{2n+1}^{\tau 21}}{\partial t_1}=\im(V^{\tau}A_{2n+1}^{\tau 11}-A_{2n+1}^{\tau 22}V^{\tau})+\im\frac{\partial V^{\tau}}{\partial t_{2n+1}}\end{cases}.
\end{equation}
\end{prop}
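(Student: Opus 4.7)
The plan is to substitute the polynomial expressions \eqref{kd9} for ${\bf A}^{\tau}(\zeta)$ and ${\bf B}^{\tau}(\zeta)$ directly into the compatibility condition \eqref{kd11}, expand both sides as polynomials in $\zeta$, and match coefficients of each power. Because ${\bf A}_0$ and ${\bf B}_0$ have distributional kernels independent of $(t_1,t_{2n+1})$, the right-hand side of \eqref{kd11} reduces to
$$\frac{\partial{\bf B}_1^{\tau}}{\partial t_{2n+1}}-\sum_{k=1}^{2n+1}\frac{\partial{\bf A}_k^{\tau}}{\partial t_1}\,\zeta^{2n+1-k}.$$
On the left-hand side the top-degree term $\zeta^{2n+2}[{\bf A}_0,{\bf B}_0]$ vanishes because ${\bf A}_0$ and ${\bf B}_0$ are both diagonal multiplication operators (recall \eqref{z67},\eqref{z69}) and therefore commute.

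At order $\zeta^{2n+1}$ one obtains $[{\bf A}_0,{\bf B}_1^{\tau}]+[{\bf A}_1^{\tau},{\bf B}_0]={\bf 0}$. A short $2\times 2$ matrix computation using the explicit forms of ${\bf A}_0,{\bf B}_0$ and ${\bf B}_1^{\tau}$ shows that the only nonzero entries of this equation are off-diagonal and force $A_1^{\tau 12}=-\im U^{\tau}$, $A_1^{\tau 21}=\im V^{\tau}$, which is exactly \eqref{kd12}. At the intermediate orders $\zeta^{2n+1-k}$ for $k=1,\ldots,2n$ the matching yields
$$[{\bf A}_k^{\tau},{\bf B}_1^{\tau}]+[{\bf A}_{k+1}^{\tau},{\bf B}_0]=-\frac{\partial{\bf A}_k^{\tau}}{\partial t_1},$$
and reading off the four entries of this identity through the $2\times 2$ block structure reproduces the four coupled evolution equations in \eqref{kd13}, exactly as in the derivation of \eqref{z81} in Lemma \ref{z:lem10}.

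Finally, at the lowest order $\zeta^0$ the matching produces
$$[{\bf A}_{2n+1}^{\tau},{\bf B}_1^{\tau}]=\frac{\partial{\bf B}_1^{\tau}}{\partial t_{2n+1}}-\frac{\partial{\bf A}_{2n+1}^{\tau}}{\partial t_1}.$$
Its diagonal entries, for which the contribution from $\partial_{t_{2n+1}}{\bf B}_1^{\tau}$ vanishes (as ${\bf B}_1^{\tau}$ is purely off-diagonal by \eqref{kd10}), yield the first and third equations of \eqref{kd14}; its off-diagonal entries pick up the $-\im\partial_{t_{2n+1}}U^{\tau}$ and $\im\partial_{t_{2n+1}}V^{\tau}$ contributions from $\partial_{t_{2n+1}}{\bf B}_1^{\tau}$ and produce the second and fourth equations of \eqref{kd14}.

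The proof presents no substantive obstacle: it is a direct bookkeeping exercise in matching coefficients, entirely parallel to the workings in Lemma \ref{z:lem10}. The only conceptual contrast with the Painlev\'e-II setting of Section \ref{zsec4} is the absence of a $\widehat{{\bf A}}_{2n}$-type term in ${\bf A}^{\tau}(\zeta)$, which explains why the inhomogeneity $M_t$ appearing in \eqref{z82} is here replaced by the $t_{2n+1}$-derivatives $\partial_{t_{2n+1}}U^{\tau}$ and $\partial_{t_{2n+1}}V^{\tau}$ generated by the time-dependence of ${\bf B}_1^{\tau}$.
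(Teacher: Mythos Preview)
Your proposal is correct and follows essentially the same approach as the paper: substitute the polynomial forms \eqref{kd9} into the compatibility condition \eqref{kd11}, note that ${\bf A}_0,{\bf B}_0$ commute and are $(t_1,t_{2n+1})$-independent, and then match powers of $\zeta$ to obtain \eqref{kd12} at order $\zeta^{2n+1}$, \eqref{kd13} at orders $\zeta^{2n+1-k}$ for $k=1,\ldots,2n$, and \eqref{kd14} at order $\zeta^0$. Your explicit identification of the commutator equations at each order and your remark on why the $M_t$-inhomogeneity of \eqref{z82} is replaced here by the $t_{2n+1}$-derivatives of $U^{\tau},V^{\tau}$ are accurate and slightly more detailed than the paper's short proof.
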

\begin{proof} By \eqref{kd11} and \eqref{kd9},
\begin{equation*}
	\frac{\partial {\bf B}_1^{\tau}}{\partial t_{2n+1}}=\sum_{k=1}^{2n+1}\zeta^{2n+1-k}\bigg(\frac{\partial{\bf A}_k^{\tau}}{\partial t_1}+\big[{\bf A}_k^{\tau},{\bf B}_1^{\tau}\big]\bigg)+\zeta^{2n+1}\big[{\bf A}_0,{\bf B}_1^{\tau}\big]+\sum_{k=0}^{2n}\zeta^{2n+1-k}\big[{\bf A}_{k+1}^{\tau},{\bf B}_0\big]
\end{equation*}
with the operator commutator $[\cdot,\cdot]$ on $\mathcal{H}_1$. Reading this equality to order $\mathcal{O}(\zeta^{2n+1})$ yields \eqref{kd12} and when read to all orders $\mathcal{O}(\zeta^{2n+1-k})$ with $k=1,\ldots,2n$ in turn \eqref{kd13}. Finally, \eqref{kd14} follows from order $\mathcal{O}(\zeta^0)$ and concludes our proof.
\end{proof}
As done in \eqref{z91}, the operator-valued equations \eqref{kd12},\eqref{kd13} and \eqref{kd14} constitute a coupled operator-valued PDE system for $U^{\tau}$ and $V^{\tau}$. However, we have no further use for this system and will therefore immediately evaluate \eqref{kd12},\eqref{kd13} and \eqref{kd14} on the kernel level, as done in Section \ref{zsec5} in the $\tau$-independent setting. First, noticing that \eqref{kd12} and \eqref{kd13} are formally equivalent to \eqref{z80} and \eqref{z81}, we can record the following crucial symmetry constraint (the mKdV analogue of Lemma \ref{z:lem12}).
\begin{lem}\label{kd:lem2} Let $k\in\{1,\ldots,2n+1\}$, then $A_k^{\tau 12}(x,y)$ and $A_k^{\tau 21}(y,x)$ are $y$-independent and we have
\begin{equation*}
	A_k^{\tau 12}(x,y)=(-1)^kA_k^{\tau 21}(y,x),\ \ \ \ (x,y)\in\mathbb{R}^2.
\end{equation*}
\end{lem}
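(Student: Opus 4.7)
\bigskip

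\textbf{Proof proposal for Lemma \ref{kd:lem2}.} The plan is to mimic, step by step, the induction argument that established Lemma \ref{z:lem12}, since the system \eqref{kd13} has exactly the same structural form as \eqref{z81} (only the spatial variable $t$ is replaced by $t_1$, and the recursion now runs one step further, up to $k=2n+1$, because the polynomial ${\bf A}^\tau(\zeta)$ in \eqref{kd9} is of degree $2n+1$ instead of $2n$). First I would verify the base case $k=1$ directly: from \eqref{kd12} one has $A_1^{\tau 12}(x,y)=-\im U^\tau(x,y)$ and $A_1^{\tau 21}(y,x)=\im V^\tau(y,x)$, and then invoke the $\tau$-analogue of Lemma \ref{z:lem8} (whose validity in the parametric setting is guaranteed by Theorem \ref{kd:theo1}, as noted in the last sentence of its statement) to conclude $U^\tau(x,y)=V^\tau(y,x)$. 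The $y$-independence of $A_1^{\tau 12}(x,y)$ follows from the explicit formula for $k_2^\tau(\zeta|y)$ in \eqref{kd30}, which does not depend on $y$; the $y$-independence of $A_1^{\tau 21}(y,x)$, viewed as a function of the free variable $x$ (i.e.\ what was previously called the second slot), follows symmetrically from $V^\tau(x,y)=U^\tau(y,x)$.

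For the inductive step, assume both claims for all indices up to some $m\in\{1,\ldots,2n\}$. I would then write out $A_{m+1}^{\tau 12}(x,y)$ using the second equation in \eqref{kd13} and replace $A_m^{\tau 22}$, $A_m^{\tau 11}$ by their $t_1$-antiderivative expressions coming from the diagonal entries of \eqref{kd13} (applied at the previous index). This produces a formula for $A_{m+1}^{\tau 12}(x,y)$ in which $y$ only enters through the final multiplier $U^\tau(\cdot,y)$ or through the arguments of previously indexed $A_m^{\tau 12}(\cdot,y)$; by the induction hypothesis the latter are $y$-independent, and the former is $y$-independent by the base case. Hence $A_{m+1}^{\tau 12}(x,y)$ is $y$-independent. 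The analogous manipulation for the fourth equation of \eqref{kd13} yields the $y$-independence of $A_{m+1}^{\tau 21}(y,x)$ regarded as a function of $x$. The symmetry $A_{m+1}^{\tau 12}(x,y)=(-1)^{m+1}A_{m+1}^{\tau 21}(y,x)$ is then obtained exactly as in the proof of Lemma \ref{z:lem12}: substitute $A_m^{\tau 12}(x,y)=(-1)^m A_m^{\tau 21}(y,x)$ and $U^\tau(x,y)=V^\tau(y,x)$ into the recursion, relabel dummy integration variables, and identify the result with $(-1)^{m+1}$ times the formula for $A_{m+1}^{\tau 21}(y,x)$ produced from the fourth line of \eqref{kd13}.

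The only additional point, relative to Lemma \ref{z:lem12}, is that the induction must be carried one further step, to $k=2n+1$. This does not cause any difficulty because the final step uses \eqref{kd14} rather than \eqref{kd13}, but the structure of the off-diagonal entries in \eqref{kd14} differs from \eqref{kd13} only by the inhomogeneous terms $-\im\partial_{t_{2n+1}}U^\tau$ and $\im\partial_{t_{2n+1}}V^\tau$. Since, by Theorem \ref{kd:theo1}, the $\tau$-analogue of Lemma \ref{z:lem8} still gives $U^\tau(x,y)=V^\tau(y,x)$, the inhomogeneous terms themselves satisfy the required symmetry, and $y$-independence in the relevant slot is preserved because $U^\tau,V^\tau$ inherit it from the kernels in \eqref{kd30}. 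I expect no serious obstacle; the mild bookkeeping issue will be carefully tracking which slot of each $A_k^{\tau ij}$ is the ``free'' one after each antiderivative substitution, so as to make sure the $y$-independence is propagated through the nested integrals without being confused with $x$-independence in the transposed variables.
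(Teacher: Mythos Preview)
Your proposal is correct and follows exactly the same inductive approach as the paper, which simply refers back to the proof of Lemma~\ref{z:lem12} with $U,V,t$ replaced by $U^\tau,V^\tau,t_1$. One small correction: the final inductive step (from $k=2n$ to $k=2n+1$) still uses \eqref{kd13} with $k=2n$, not \eqref{kd14}, since \eqref{kd13} runs over $k=1,\ldots,2n$ and therefore already furnishes $A_{2n+1}^{\tau 12}$ and $A_{2n+1}^{\tau 21}$ in terms of level-$2n$ data; your discussion of the inhomogeneous $\partial_{t_{2n+1}}U^\tau$, $\partial_{t_{2n+1}}V^\tau$ terms in \eqref{kd14} is thus unnecessary (though harmless).
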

\begin{proof} Exactly as in the proof of Lemma \ref{z:lem12}, this time using \eqref{kd13},\eqref{kd14} together with the aforementioned $\tau$-extension of \eqref{z56}, i.e. the kernel equality
\begin{equation*}
	U^{\tau}(x,y)=V^{\tau}(y,x),\ \ \ \ (x,y)\in\mathbb{R}^2,
\end{equation*}
and \eqref{kd12}.
\end{proof}
Hence, with the abbreviation $v(t_1,t_{2n+1}|x;n,\lambda)\equiv v(t_1,t_{2n+1}|x):=U^{\tau}(x,x)$ (recall $U^{\tau}$ depends on $n$ and $\lambda$) as well as
\begin{equation*}
	a_k^{\tau}(t_1,t_{2n+1}|x):=A_k^{\tau 12}(x,x)=(-1)^kA_k^{\tau 21}(x,x),\ \ \ \ k\in\{1,\ldots,2n+1\}
\end{equation*}
we have the direct $\tau$-analogue of \eqref{z96} in the form
\begin{equation}\label{kd15}
	a_{k+1}^{\tau}(t_1,t_{2n+1}|x)=\begin{cases}(\mathcal{L}_+^va_k^{\tau})(t_1,t_{2n+1}|x),&k\equiv 0\mod 2\smallskip\\
	(\mathcal{L}_-^va_k^{\tau})(t_1,t_{2n+1}|x),&k\equiv 1\mod 2\end{cases},\ \ \ \ \ \ \ \ \ k=1,2,\ldots,2n;
\end{equation}
with initial data $a_1^{\tau}(t_1,t_{2n+1}|x):=-\im v(t_1,t_{2n+1}|x)$, using the operators $\mathcal{L}_{\pm}^v$ of Definition \ref{idef2}.
Moreover, from \eqref{kd14} we find in addition
\begin{equation}\label{kd16}
	\frac{\partial v}{\partial t_{2n+1}}(t_1,t_{2n+1}|x)=(\mathcal{L}_-^va_{2n+1}^{\tau})(t_1,t_{2n+1}|x),
\end{equation}
and \eqref{kd15},\eqref{kd16} combined yield
\begin{equation}\label{kd17}
	\frac{\partial v}{\partial t_{2n+1}}(t_1,t_{2n+1}|x)=\left((\mathcal{L}_-^v\mathcal{L}_+^v)^n\frac{\partial v}{\partial t_1}\right)(t_1,t_{2n+1}|x).
\end{equation}
Finally, returning to \eqref{z96},\eqref{z33},\eqref{kd30}, we obtain through a contour deformation argument
\begin{equation*}
	v(t_1,t_{2n+1}|x)=U^{\tau}(x,x)=\frac{1}{\tau}U\left(\frac{x}{\tau},\frac{x}{\tau}\right)=\frac{1}{\tau}u\left(t\Big|\frac{x}{\tau}\right)
\end{equation*}
subject to the mKdV variable choice \eqref{kd4}. In summary, and this proves Theorem \ref{itheo2},
\begin{prop}\label{kd:prop3} Suppose $u(t|x)=u(t|x;n)$ solves the integro-differential Painlev\'e-II hierarchy \eqref{z97}, then 
\begin{equation*}
	v(t_1,t_{2n+1}|x)=v(t_1,t_{2n+1}|x;n):=\frac{1}{\tau}u\left(t\Big|\frac{x}{\tau}\right),\ \ \ \ \ t_1=\tau t\in\mathbb{R},\ \ \ \ \ t_{2n+1}=\frac{\tau^{2n+1}}{2n+1}\in\mathbb{R}_+,
\end{equation*}
solves the integro-differential mKdV hierarchy \eqref{kd17}.
\end{prop}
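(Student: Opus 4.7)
The plan is to avoid a direct substitution of $v(t_1,t_{2n+1}|x)=\tau^{-1}u(t|x/\tau)$ into the Painlev\'e-II hierarchy \eqref{i17}, since, as the authors note, the rescaling of the $x$-variable interacts non-trivially with the weighted bilinear form $\langle\cdot,\cdot\rangle$ and with the antiderivatives $D_t^{-1}$ inside $\mathcal{L}_\pm^u$. Instead, I would proceed in parallel with the proof of Theorem \ref{itheo1} and encode the additional parameter $\tau$ at the level of the Fredholm determinant. Concretely, I would introduce the $\tau$-parametric kernel $K_{t,n}^\tau(x,y):=\int_\mathbb{R}\textnormal{Ai}_n(x+z+t)\textnormal{Ai}_n(z+y+t)w(\tau z)\,\d z$ and its determinant $D_n(t,\lambda,\tau)$, then push the Fourier-conjugation steps of Section \ref{zsec2} through with the rescaled contours $\tau\Gamma_\alpha,\tau\Gamma_\beta$, arriving at an operator $C_{t,n}^\tau$ whose phase function is $\phi_n^\tau(\lambda,z)=\tau^{2n+1}\lambda^{2n+1}/(2n+1)+z\lambda$.

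Next, I would set up the analogue of RHP \ref{zmaster} for the $\tau$-deformed setting (this is RHP \ref{kdmaster}) and verify its unique solvability by the same contour and small-norm arguments used in Lemma \ref{z:lem7} and Theorem \ref{z:theo1}. The crucial new ingredient is that the multiplication operators $M_i^\tau(\zeta),N_i^\tau(\zeta)$ now carry two natural derivations: with respect to $t_{2n+1}=\tau^{2n+1}/(2n+1)$, producing a factor $\zeta^{2n+1}{\bf A}_0$ in $\partial_{t_{2n+1}}{\bf M}^\tau(\zeta)$, and with respect to $t_1=\tau t$, producing $\zeta{\bf B}_0$ in $\partial_{t_1}{\bf M}^\tau(\zeta)$. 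Propagating these through the representation ${\bf N}^\tau(\zeta)={\bf X}^\tau(\zeta){\bf M}^\tau(\zeta)$ and using the jump relation gives, via Liouville's theorem, an operator-valued Lax pair in $(t_1,t_{2n+1})$ whose coefficients ${\bf A}^\tau(\zeta),{\bf B}^\tau(\zeta)$ are polynomials in $\zeta$ of degree $2n+1$ and $1$ respectively.

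The compatibility condition $\partial_{t_{2n+1}}{\bf B}^\tau-\partial_{t_1}{\bf A}^\tau+[{\bf B}^\tau,{\bf A}^\tau]=0$ then yields a triangular system \eqref{kd12}--\eqref{kd14} that is \emph{structurally identical} to the ODE system \eqref{z80}--\eqref{z82} from Section \ref{zsec4}, but with $-(t+x)u$ replaced by $\partial_{t_{2n+1}}v$ and $\partial_t$ replaced by $\partial_{t_1}$. At this point, I would transport the symmetry analysis of Section \ref{zsec5} wholesale: Lemma \ref{z:lem8} and Lemma \ref{z:lem12} carry over by the same induction on the recursion index $k$ (using the conjugation symmetry $\overline{\Gamma}_\beta=\Gamma_\alpha$), giving $A_k^{\tau\,12}(x,y)=(-1)^kA_k^{\tau\,21}(y,x)$ and $y$-independence, so that the kernel reduction $v(t_1,t_{2n+1}|x):=U^\tau(x,x)$ closes the recursion into $a_{k+1}^\tau=\mathcal{L}_\pm^v a_k^\tau$ and produces \eqref{i21} as claimed.

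The main obstacle I anticipate is the bookkeeping of the diagonal entries $A_k^{\tau\,11},A_k^{\tau\,22}$: in the Painlev\'e-II case these were eliminated by the auxiliary calculation with ${\bf C}(\zeta)={\bf A}(\zeta){\bf A}(\zeta)$ in Lemma \ref{z:lem11}, which relied crucially on the decay estimates of Corollary \ref{z:cor3} as $t\to+\infty$. I would need the $\tau$-analogue of these estimates, i.e.\ that $\int_\Sigma N_i^\tau\otimes K_j^\tau\,\eta^m\,\d\eta\to 0$ exponentially as $t_1\to+\infty$, which should follow from the same Hilbert-Schmidt factorization used in Lemma \ref{zlem:6} once one tracks the $\tau$-dependence through the bounds. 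The final identification $v(t_1,t_{2n+1}|x)=\tau^{-1}u(t|x/\tau)$ is then obtained not by computation, but by comparing the defining integral representation of $U^\tau(x,x)$ with that of $U(x/\tau,x/\tau)$ after the admissible contour deformation $\Gamma_{\alpha,\beta}\mapsto\tau\Gamma_{\alpha,\beta}$, which leaves the resolvent kernels in \eqref{z39} invariant.
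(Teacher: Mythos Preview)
Your proposal is correct and mirrors the paper's own approach essentially step by step: the paper introduces the $\tau$-deformed determinant and RHP \ref{kdmaster}, derives the Lax pair in $(t_1,t_{2n+1})$ and its compatibility system \eqref{kd12}--\eqref{kd14}, transports the symmetry Lemma \ref{z:lem12} verbatim to Lemma \ref{kd:lem2}, and closes with the identification $U^\tau(x,x)=\tau^{-1}U(x/\tau,x/\tau)$ via contour deformation. Your anticipated obstacle regarding the diagonal entries is handled in the paper exactly as you suggest, by noting that the $\tau$-analogue of Corollary \ref{z:cor3} (equation \eqref{z61}) carries over with the obvious modifications, as recorded in Theorem \ref{kd:theo1}.
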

The outstanding Corollary \ref{icor1} is now a straightforward consequence of Proposition \ref{kd:prop3}. Indeed, we first have the following mKdV extension of Lemma \ref{z:lem9}.
\begin{lem}\label{kd:lem3} For every $(t,\lambda,\tau,n)\in\mathbb{R}\times\overline{\mathbb{D}_1(0)}\times\mathbb{N}$,
\begin{equation*}
	\frac{\partial}{\partial t_1}\ln D_n(t,\lambda,\tau)=-\im\lambda^{\frac{1}{2}}\tr_{\mathcal{H}_1}\int_{\Sigma}N_1^{\tau}(\xi)\otimes K_1^{\tau}(\xi)\,\d\xi,
\end{equation*}
and
\begin{equation*}
	\frac{\partial^2}{\partial t_1^2}\ln D_n(t,\lambda,\tau)=-\lambda\tr_{\mathcal{H}_1}\int_{\Sigma}\int_{\Sigma}\big(N_1^{\tau}(\eta)\otimes K_2^{\tau}(\eta)\big)\big(N_2^{\tau}(\xi)\otimes K_1(\xi)\big)\,\d\eta\,\d\xi.
\end{equation*}
\end{lem}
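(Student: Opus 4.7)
The plan is to mimic the proof of Lemma \ref{z:lem9} verbatim, with the single observation that $t$ enters the kernel of $C_{t,n}^{\tau}$ of Lemma \ref{kd:lem1} only through the combination $2t\tau=2t_1$. First I would apply Jacobi's formula to the $\tau$-dependent Fredholm determinant,
\begin{equation*}
\frac{\partial}{\partial t_1}\ln D_n(t,\lambda,\tau)=-\lambda^{\frac12}\tr_{L^2(\Sigma)}\!\left[(I-\lambda^{\frac12}C_{t,n}^{\tau}\upharpoonright_{L^2(\Sigma)})^{-1}\,\frac{\partial C_{t,n}^{\tau}}{\partial t_1}\right],
\end{equation*}
noting that the second summand of the kernel of $C_{t,n}^{\tau}$ is $t$-independent, so only the first summand contributes. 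Differentiating $\exp(\frac{\im}{2}(\phi_n^{\tau}(\xi,2t_1+2z)-\phi_n^{\tau}(\eta,2t_1+2z)))$ in $t_1$ produces a factor $\im(\xi-\eta)$ which cancels the prefactor $(\xi-\eta)$, so that in analogy with the $\tau=1$ case
\begin{equation*}
\frac{\partial C_{t,n}^{\tau}}{\partial t_1}(\xi,\eta)=\im\int_{\mathbb{R}}k_1^{\tau}(\xi|z)m_1^{\tau}(\eta|z)\,\d\sigma(z),
\end{equation*}
with $k_1^{\tau},m_1^{\tau}$ as in \eqref{kd30}. Substituting back and using the defining relation $n_1^{\tau}(\cdot|z)=(I-\lambda^{\frac12}C_{t,n}^{\tau\ast})^{-1}m_1^{\tau}(\cdot|z)$ for the resolvent yields the first identity after Fubini.

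For the second identity the strategy is the $\tau$-analogue of the Liouville-type argument following \eqref{z78}. Inserting the asymptotic expansions of ${\bf X}^{\tau}(\zeta)$ and $({\bf X}^{\tau}(\zeta))^{-1}$ from Theorem \ref{kd:theo1} into the definition \eqref{kd8} of ${\bf B}^{\tau}(\zeta)$ and matching the $\mathcal{O}(\zeta^{-1})$ coefficient (which must vanish by Proposition \ref{kd:prop2}, since ${\bf B}^{\tau}(\zeta)$ is a polynomial in $\zeta$) reproduces exactly the operator commutator identity
\begin{equation*}
({\bf X}_1^{\tau})_{t_1}=[{\bf B}_0,{\bf X}_2^{\tau}]-{\bf X}_1^{\tau}[{\bf B}_0,{\bf X}_1^{\tau}],\qquad {\bf X}_k^{\tau}:=\int_{\Sigma}\!\begin{bmatrix}M_1^{\tau}(\eta)\otimes L_1^{\tau}(\eta)&M_1^{\tau}(\eta)\otimes L_2^{\tau}(\eta)\smallskip\\ M_2^{\tau}(\eta)\otimes L_1^{\tau}(\eta)&M_2^{\tau}(\eta)\otimes L_2^{\tau}(\eta)\end{bmatrix}\!\eta^{k-1}\d\eta,
\end{equation*}
where ${\bf B}_0$ is as in \eqref{z69}. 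Reading the $(11)$-entry and using the $\tau$-extensions of the symmetry relations \eqref{z54},\eqref{z55} (guaranteed by Theorem \ref{kd:theo1}) to replace $M_i^{\tau}\otimes L_j^{\tau}$ by $N_i^{\tau}\otimes K_j^{\tau}$ converts this into
\begin{equation*}
\frac{\partial}{\partial t_1}\!\left(\lambda^{\frac12}\!\int_{\Sigma}N_1^{\tau}(\xi)\otimes K_1^{\tau}(\xi)\,\d\xi\right)=-\im\lambda\!\int_{\Sigma}\!\int_{\Sigma}\!\big(N_1^{\tau}(\eta)\otimes K_2^{\tau}(\eta)\big)\big(N_2^{\tau}(\xi)\otimes K_1^{\tau}(\xi)\big)\,\d\eta\,\d\xi.
\end{equation*}
Taking the trace on $\mathcal{H}_1$ and combining with the first identity then gives the second identity.

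The only real point of care is the trace manipulation in the final step: the composition of two rank-one tensor products on $\mathcal{H}_1$ has a simple kernel expression, and one must check that trace, Fubini, and the double integral over $\Sigma\times\Sigma$ can all be interchanged. This is justified by the trace-class property of ${\bf X}_k^{\tau}$ inherited from Theorem \ref{kd:theo1} together with the $\tau$-analogues of Corollaries \ref{z:cor1}--\ref{z:cor3}, so no new technical difficulty arises. I do not expect any obstacle beyond this bookkeeping, since the rescaling $t_1=\tau t$ is purely kinematic at the level of the Lax pair \eqref{kd60} and the determinant identity of Lemma \ref{kd:lem1} has exactly the same structure as its $\tau=1$ counterpart.
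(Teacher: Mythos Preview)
Your proposal is correct and follows essentially the same approach as the paper. In fact, the paper's proof is even terser: it simply observes that ${\bf B}^{\tau}(\zeta)$ and ${\bf B}(\zeta)$ are structurally identical, records the kernel derivative $\frac{\partial}{\partial t_1}C_{t,n}^{\tau}(\xi,\eta)=\im\int_{\mathbb{R}}k_1^{\tau}(\xi|z)m_1^{\tau}(\eta|z)\,\d\sigma(z)$, and refers wholesale to the proof of Lemma~\ref{z:lem9}; you have spelled out those steps explicitly.
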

\begin{proof} Since ${\bf B}(\zeta)$ in \eqref{z74} and ${\bf B}^{\tau}(\zeta)$ in \eqref{kd9} are structurally identical, the above identities follow exactly as in the proof of Lemma \ref{z:lem9}, using en route
\begin{equation*}
	\frac{\partial}{\partial t_1}C_{t,n}^{\tau}(\xi,\eta)=\im\int_{\mathbb{R}}k_1^{\tau}(\xi|z)m_1^{\tau}(\eta|z)\,\d\sigma(z),\ \ \ \ (\xi,\eta)\in\Sigma\times\Sigma.
\end{equation*}
\end{proof}
Second, using Lemma \ref{kd:lem3} (now specialized to $\lambda=1,\tau=\alpha$) and \eqref{kd10} as well as Lemma \ref{kd:lem2}, we find
\begin{equation}\label{kd18}
	\frac{\partial^2}{\partial t_1^2}\ln D_n(t,\lambda,\tau)=-\lambda\tr_{\mathcal{H}_1}\big(U^{\tau}V^{\tau}\big)=-\int_{\mathbb{R}}\big(U^{\tau}(x,x)\big)^2\,\d\sigma(x)=-\int_{\mathbb{R}}v^2(t_1,t_{2n+1}|x)\,\d\sigma(x)
\end{equation}
and, just as in the proof of Lemma \ref{z:lem13}, as $t_1\rightarrow+\infty$, pointwise in $(t_{2n+1},x)\in\mathbb{R}_+\times\mathbb{R}$,
\begin{equation*}
	v(t_1,t_{2n+1}|x)=U^{\alpha}(x,x)\stackrel{\eqref{kd10}}{=}\int_{\Sigma}n_1^{\alpha}(\eta|x)k_2^{\alpha}(\eta|x)\,\d\eta\sim\int_{\Sigma}m_1^{\alpha}(\eta|x)k_2^{\alpha}(\eta|x)\,\d\eta=\frac{1}{2\pi}\int_{\Gamma_{\beta}}\e^{-\im\phi_n^{\alpha}(\eta,t_1+x)}\,\d\eta. 
\end{equation*}
Combining the last expansion with \eqref{kd18}, \eqref{z17}, Proposition \ref{kd:prop3} and Cauchy's as well as Fubini's theorem we finally arrive at \eqref{i22} and \eqref{i23}, and have thus completed our proof of Corollary \ref{icor1}.

\begin{appendix}
\section{Two auxiliary results}\label{appA}
The following two subsections summarize analytic results used in the proofs of Propositions \ref{zprop:1} and \ref{zprop:2}.
\subsection{On a Fourier-Stieltjes integral}\label{FSint}
Consider the function $f:\overline{\mathbb{H}}_{\epsilon}\rightarrow\mathbb{C}$ defined as the Fourier-Stieltjes integral
\begin{equation*}
	f(\lambda):=\int_{\mathbb{R}}\e^{-\im z\lambda}\d\sigma(z),
\end{equation*}
in the closed horizontal strip $\overline{\mathbb{H}}_{\epsilon}:=\{\lambda\in\mathbb{C}:\,|\Im\lambda|\leq\frac{\omega}{2}-\epsilon\}$ for any fixed $0<\epsilon<\frac{\omega}{2}$. Given that $\d\sigma$ is a positive Borel probability measure on $\mathbb{R}$ with $\frac{\d\sigma}{\d z}(z)=w'(z)\leq\e^{-\omega|z|},|z|\geq z_0$, see \eqref{i13}, we deduce that $\overline{\mathbb{H}}_{\epsilon}\ni\lambda\mapsto f(\lambda)$ is uniformly continuous. Next, given an arbitrary piecewise smooth closed curve $\Gamma\subset\mathbb{H}_{\epsilon}$ in the open strip $\mathbb{H}_{\epsilon}$, we obtain by Fubini's theorem,
\begin{equation*}
	\oint_{\Gamma}f(\lambda)\d\lambda=\oint_{\Gamma}\left[\int_{-\infty}^{\infty}\e^{-\im z\lambda}\d\sigma(z)\right]\d\lambda=\int_{-\infty}^{\infty}\left[\oint_{\Gamma}\e^{-\im z\lambda}\,\d\lambda\right]\d\sigma(z)=0.
\end{equation*}
Hence, by Morera's theorem, $f$ is analytic in $\mathbb{H}_{\epsilon}$.
\subsection{Analyticity implies stability}\label{Stab}
Consider the function $J_{t,n}:(\overline{\mathbb{H}}_{\epsilon}\setminus\Gamma_{\beta})\times(\overline{\mathbb{H}}_{\epsilon}\setminus\Gamma_{\beta})\rightarrow\mathbb{C}$ defined as the iterated integral
\begin{equation*}
	J_{t,n}(\lambda,\mu)=\frac{1}{(2\pi)^2}\int_{\Gamma_{\beta}}\frac{\e^{\frac{\im}{2}\psi_n(\lambda,2t)-\frac{\im}{2}\psi_n(\beta,2t)}}
	{\lambda-\beta}\left[\int_{\mathbb{R}}\e^{\im z(\lambda-\beta)}\d\sigma(z)\right]\frac{\e^{-\frac{\im}{2}\psi_n(\beta,0)+\frac{\im}{2}\psi_n(\mu,0)}}{\beta-\mu}\,\d\beta,
\end{equation*}
where, as before, $\overline{\mathbb{H}}_{\epsilon}=\{\lambda\in\mathbb{C}:\,|\Im\lambda|\leq\frac{\omega}{2}-\epsilon\}$ for any fixed $0<\epsilon<\frac{\omega}{2}$. Since
\begin{equation*}
	J_t(\lambda,\mu)=\frac{1}{(2\pi)^2}\e^{\frac{\im}{2}\psi_n(\lambda,2t)+\frac{\im}{2}\psi_n(\mu,0)}\int_{\Gamma_{\beta}}\frac{\e^{-\im\psi_n(\beta,t)}}{(\lambda-\beta)(\beta-\mu)}\left[\int_{\mathbb{R}}\e^{\im z(\lambda-\beta)}\,\d\sigma(z)\right]\d\beta,
\end{equation*}
we conclude that both, $\mathbb{H}_{\epsilon}\setminus\Gamma_{\beta}\ni\lambda\mapsto J_t(\lambda,\mu)$, resp. $\mathbb{H}_{\epsilon}\setminus\Gamma_{\beta}\ni\mu\mapsto J_t(\lambda,\mu)$, are analytic for every fixed $\mu\in\mathbb{H}_{\epsilon}\setminus\Gamma_{\beta}$, resp. for every fixed $\lambda\in\mathbb{H}_{\epsilon}\setminus\Gamma_{\beta}$. This is because of Appendix \ref{appA} and the fact that 
\begin{equation*}
	\beta\mapsto\frac{\e^{-\im\psi_n(\beta,t)}}{\lambda-\beta}\left[\int_{\mathbb{R}}\e^{\im z(\lambda-\beta)}\d\sigma(z)\right],\ \ \lambda\in\overline{\mathbb{H}}_{\epsilon}\setminus\Gamma_{\beta},
\end{equation*}
and
\begin{equation*}
	\beta\mapsto\frac{\e^{-\im\psi_n(\beta,t)}}{\mu-\beta}\left[\int_{-\infty}^{\infty}\e^{\im z(\lambda-\beta)}\d\sigma(z)\right],\ \ \mu\in\overline{\mathbb{H}}_{\epsilon}\setminus\Gamma_{\beta}
\end{equation*}
are locally Lipschitz continuous on $\Gamma_{\beta}$. In turn, using Hartog's theorem, we conclude that $(\lambda,\mu)\mapsto J_{t,n}(\lambda,\mu)$ is analytic on $(\mathbb{H}_{\epsilon}\setminus\Gamma_{\beta})\times(\mathbb{H}_{\epsilon}\setminus\Gamma_{\beta})$.
\section{Abbreviations and terminology}\label{appC}
The following abbreviations and terminology are used throughout Sections \ref{zsec3}, \ref{zsec4}, \ref{zsec5} and \ref{zsec6}.
\begin{definition}[{\cite[Definition $9.2$]{B}}] Let $p\in\mathbb{Z}_{\geq 1}$ and $\d\sigma(z)=w'(z)\d z$. We use the below abbreviations.
\begin{enumerate}
	\item[(1)] The direct sum Hilbert space
	\begin{equation*}
		\mathcal{H}_p:=\bigoplus_{j=1}^pL^2(\mathbb{R},\d\sigma)=\big\{{\bf f}=(f_1,\ldots,f_p)^{\top}\in\mathbb{C}^{p\times 1}:\,f_j\in L^2(\mathbb{R},\d\sigma)\big\}
	\end{equation*}
	equipped with its standard inner product and associated norm.
	\item[(2)] The space $L^2(\mathbb{R},\d\sigma;\mathbb{C}^{p\times p})$ of $p\times p$ matrix-valued functions with entries in $L^2(\mathbb{R},\d\sigma)$, equipped with the induced Frobenius integral norm.
	\item[(3)] The space $\mathcal{I}(\mathcal{H}_p)$ of Hilbert-Schmidt integral operators on $\mathcal{H}_p$ of the form
	\begin{equation*}
		({\bf K}{\bf f})(x)=\int_{\mathbb{R}}{\bf K}(x,y){\bf f}(y)\,\d\sigma(y),
	\end{equation*}
	with kernel ${\bf K}\in L^2(\mathbb{R}^2,\d\sigma\otimes\d\sigma;\mathbb{C}^{p\times p})$.
	\item[(4)] The matrix identity operator $\mathbb{I}_p$ on $\mathcal{H}_p$.
\end{enumerate}
\end{definition}
Next, we recall the notion of an \textit{analytic integral operator} as defined in \cite[Definition $9.3$]{B}, see also \cite[page $1781$]{IK}: Let ${\bf K}={\bf K}(\zeta)\in\mathcal{I}(\mathcal{H}_p)$ depend on an auxiliary variable $\zeta\in\Omega$ for some fixed region $\Omega\subset\mathbb{C}$.
\begin{definition}{{\cite[Definition $9.3$]{B}}}\label{c2} We say that ${\bf K}(\zeta)\in\mathcal{I}(\mathcal{H}_p)$ with kernel ${\bf K}(\zeta|x,y)$ is analytic in $\zeta\in\Omega$, if
\begin{enumerate}
	\item[(1)] for any $(x,y)\in\mathbb{R}^2$, the map $z\mapsto{\bf K}(\zeta|x,y)$ is analytic in $\Omega$.
	\item[(2)] for any $\zeta\in\Omega$, the map $(x,y)\mapsto{\bf K}(\zeta|x,y)$ is in $L^2(\mathbb{R}^2,\d\sigma\otimes\d\sigma;\mathbb{C}^{p\times p})$.
\end{enumerate}
\end{definition}
Furthermore, if $\Sigma\subset\Omega\subset\mathbb{C}$ is an oriented contour consisting of a finite union of smooth oriented curves in $\mathbb{CP}^1$ with finitely many self-intersections, then
\begin{definition}{{\cite[Definition $9.4$]{B}}} We say that an analytic in $\zeta\in\Omega\setminus\Sigma$ operator ${\bf K}(\zeta)\in\mathcal{I}(\mathcal{H}_p)$ admits continuous boundary values ${\bf K}_{\pm}(\zeta)\in\mathcal{I}(\mathcal{H}_p)$ on $\Sigma'\subset\Sigma$ with kernels ${\bf K}_{\pm}(\zeta|x,y)$ if
\begin{enumerate}
	\item[(1)] for any $(x,y)\in\mathbb{R}^2$, the map $\zeta\mapsto {\bf K}_{\pm}(\zeta|x,y)$ is continuous on $\Sigma'$.
	\item[(2)] for any $(x,y)\in\mathbb{R}^2$, the non-tangential limits
	\begin{equation*}
		\lim_{\lambda\rightarrow\zeta}{\bf K}(\lambda|x,y)={\bf K}_{\pm}(\zeta|x,y),\ \ \ \ \lambda\,\in\,\pm\,\textnormal{side of}\,\,\Sigma'\,\,\textnormal{at}\,\zeta
	\end{equation*}
	exist.
\end{enumerate}
\end{definition}
\end{appendix}

\begin{bibsection}
\begin{biblist}

\bib{AM}{article}{
AUTHOR = {Adler, M.},
author={Moser, J.},
     TITLE = {On a class of polynomials connected with the {K}orteweg-de
              {V}ries equation},
   JOURNAL = {Comm. Math. Phys.},
  FJOURNAL = {Communications in Mathematical Physics},
    VOLUME = {61},
      YEAR = {1978},
    NUMBER = {1},
     PAGES = {1--30},
      ISSN = {0010-3616},
   MRCLASS = {58F05 (35Q99)},
  MRNUMBER = {501106},
MRREVIEWER = {Alexander A. Pankov},
       URL = {http://projecteuclid.org.bris.idm.oclc.org/euclid.cmp/1103904169},
}

\bib{Airault}{article}{
AUTHOR = {Airault, H.},
     TITLE = {Rational solutions of {P}ainlev\'{e} equations},
   JOURNAL = {Stud. Appl. Math.},
  FJOURNAL = {Studies in Applied Mathematics},
    VOLUME = {61},
      YEAR = {1979},
    NUMBER = {1},
     PAGES = {31--53},
      ISSN = {0022-2526},
   MRCLASS = {58A17 (34A20 58F07)},
  MRNUMBER = {535866},
MRREVIEWER = {H. Hochstadt},
       DOI = {10.1002/sapm197961131},
       URL = {https://doi-org.bris.idm.oclc.org/10.1002/sapm197961131},
}

\bib{ACQ}{article}{
AUTHOR = {Amir, Gideon},
author={Corwin, Ivan} 
author={Quastel, Jeremy},
     TITLE = {Probability distribution of the free energy of the continuum
              directed random polymer in {$1+1$} dimensions},
   JOURNAL = {Comm. Pure Appl. Math.},
  FJOURNAL = {Communications on Pure and Applied Mathematics},
    VOLUME = {64},
      YEAR = {2011},
    NUMBER = {4},
     PAGES = {466--537},
      ISSN = {0010-3640},
   MRCLASS = {60K35 (60B20 60F05 60H15 82C22 82C44)},
  MRNUMBER = {2796514},
MRREVIEWER = {Timo Sepp\"{a}l\"{a}inen},
       DOI = {10.1002/cpa.20347},
       URL = {https://doi-org.bris.idm.oclc.org/10.1002/cpa.20347},
}

\bib{BB}{article}{
AUTHOR = {Baik, Jinho}
author={Bothner, Thomas},
     TITLE = {The largest real eigenvalue in the real {G}inibre ensemble and
              its relation to the {Z}akharov-{S}habat system},
   JOURNAL = {Ann. Appl. Probab.},
  FJOURNAL = {The Annals of Applied Probability},
    VOLUME = {30},
      YEAR = {2020},
    NUMBER = {1},
     PAGES = {460--501},
      ISSN = {1050-5164},
   MRCLASS = {60B20 (45M05 60G70)},
  MRNUMBER = {4068316},
       DOI = {10.1214/19-AAP1509},
       URL = {https://doi-org.bris.idm.oclc.org/10.1214/19-AAP1509},
}

\bib{Bel}{article}{
  title={Null octagon from Deift-Zhou steepest descent},
  author={Belitsky, A.V.},
 year={2020},
eprint={https://arxiv.org/abs/2012.10446},
    archivePrefix={arXiv},
    primaryClass={hep-th},
}

\bib{BC}{article}{
AUTHOR = {Bertola, M.},
author={Cafasso, M.},
     TITLE = {The transition between the gap probabilities from the
              {P}earcey to the Airy process---a {R}iemann-{H}ilbert
              approach},
   JOURNAL = {Int. Math. Res. Not. IMRN},
  FJOURNAL = {International Mathematics Research Notices. IMRN},
      YEAR = {2012},
    NUMBER = {7},
     PAGES = {1519--1568},
      ISSN = {1073-7928},
   MRCLASS = {60B20 (35Q15)},
  MRNUMBER = {2913183},
MRREVIEWER = {Beno\^{\i}t Collins},
       DOI = {10.1093/imrn/rnr066},
       URL = {https://doi-org.bris.idm.oclc.org/10.1093/imrn/rnr066},
}

\bib{BeB}{article}{
AUTHOR = {Betea, Dan},
author={Bouttier, J\'{e}r\'{e}mie},
     TITLE = {The periodic {S}chur process and free fermions at finite
              temperature},
   JOURNAL = {Math. Phys. Anal. Geom.},
  FJOURNAL = {Mathematical Physics, Analysis and Geometry. An International
              Journal Devoted to the Theory and Applications of Analysis and
              Geometry to Physics},
    VOLUME = {22},
      YEAR = {2019},
    NUMBER = {1},
     PAGES = {Paper No. 3, 47},
      ISSN = {1385-0172},
   MRCLASS = {82C23 (05E05 60G55 60K35)},
  MRNUMBER = {3903828},
       DOI = {10.1007/s11040-018-9299-8},
       URL = {https://doi-org.bris.idm.oclc.org/10.1007/s11040-018-9299-8},
}

\bib{BBW}{article}{
author={Betea, Dan},
author={Bouttier, J\'er\'emie},
author={Walsh, Harriet},
title={Multicritical random partitions},
year={2020},
eprint={https://arxiv.org/abs/2012.01995},
    archivePrefix={arXiv},
    primaryClass={math.CO},
}

\bib{BCF}{article}{
AUTHOR = {Borodin, Alexei},
author={Corwin, Ivan},
author={Ferrari, Patrik},
     TITLE = {Free energy fluctuations for directed polymers in random media
              in {$1+1$} dimension},
   JOURNAL = {Comm. Pure Appl. Math.},
  FJOURNAL = {Communications on Pure and Applied Mathematics},
    VOLUME = {67},
      YEAR = {2014},
    NUMBER = {7},
     PAGES = {1129--1214},
      ISSN = {0010-3640},
   MRCLASS = {82D10 (60B20 60H15 60K35 60K37)},
  MRNUMBER = {3207195},
       DOI = {10.1002/cpa.21520},
       URL = {https://doi-org.bris.idm.oclc.org/10.1002/cpa.21520},
}

\bib{BS}{book}{
AUTHOR = {Berezin, F. A.},
author={Shubin, M. A.},
     TITLE = {The {S}chr\"{o}dinger equation},
    SERIES = {Mathematics and its Applications (Soviet Series)},
    VOLUME = {66},
      NOTE = {Translated from the 1983 Russian edition by Yu. Rajabov, D. A.
              Le\u{\i}tes and N. A. Sakharova and revised by Shubin,
              With contributions by G. L. Litvinov and Le\u{\i}tes},
 PUBLISHER = {Kluwer Academic Publishers Group, Dordrecht},
      YEAR = {1991},
     PAGES = {xviii+555},
      ISBN = {0-7923-1218-X},
   MRCLASS = {81-01 (35J10 35P05 46N50 47F05 47N50)},
  MRNUMBER = {1186643},
       DOI = {10.1007/978-94-011-3154-4},
       URL = {https://doi-org.bris.idm.oclc.org/10.1007/978-94-011-3154-4},
}

\bib{B}{article}{
author={Bothner, Thomas},
 TITLE={On the origins of Riemann-Hilbert problems in mathematics},
    YEAR={2020},
    eprint={https://arxiv.org/abs/2003.14374},
    archivePrefix={arXiv},
    primaryClass={math.PH},
}

\bib{Bf}{article}{
author={Bothner, Thomas},
title={A Riemann-Hilbert approach to Fredholm determinants of integral Hankel composition operators: scalar kernels}
year={in preparation},
}

\bib{CCG}{article}{
author={Cafasso, Mattia},
author={Claeys, Tom},
author={Girotti, Manuela},
title={Fredholm determinant solutions of the Painlev\'e II hierarchy and gap probabilities of determinantal point processes},
journal={Int. Math. Res. Not. IMRN},
fjournal={International Mathematics Research Notices. IMRN},
volume={168},
year={2019},
URL={https://doi.org/10.1093/imrn/rnz168},
DOI={10.1093/imrn/rnz168},
}

\bib{CCR}{article}{
author={Cafasso, Mattia},
author={Claeys, Tom},
author={Ruzza, Giulio},
title={Airy kernel determinant solutions to the KdV equation and integro-differential Painlev\'e equations},
year={2020},
eprint={https://arxiv.org/abs/2010.07723},
    archivePrefix={arXiv},
    primaryClass={math.PH}
}

\bib{CIK}{article}{
AUTHOR = {Claeys, T.},
author={Its, A.},
author={Krasovsky, I.},
     TITLE = {Higher-order analogues of the {T}racy-{W}idom distribution and
              the {P}ainlev\'{e} {II} hierarchy},
   JOURNAL = {Comm. Pure Appl. Math.},
  FJOURNAL = {Communications on Pure and Applied Mathematics},
    VOLUME = {63},
      YEAR = {2010},
    NUMBER = {3},
     PAGES = {362--412},
      ISSN = {0010-3640},
   MRCLASS = {34M55 (33E17 34M50 37K15 47B10 60B20 82C05)},
  MRNUMBER = {2599459},
       DOI = {10.1002/cpa.20284},
       URL = {https://doi-org.bris.idm.oclc.org/10.1002/cpa.20284},
}

\bib{CV}{article}{
AUTHOR = {Claeys, T.},
author={Vanlessen, M.},
     TITLE = {Universality of a double scaling limit near singular edge
              points in random matrix models},
   JOURNAL = {Comm. Math. Phys.},
  FJOURNAL = {Communications in Mathematical Physics},
    VOLUME = {273},
      YEAR = {2007},
    NUMBER = {2},
     PAGES = {499--532},
      ISSN = {0010-3616},
   MRCLASS = {15A52 (62H20 82B31)},
  MRNUMBER = {2318316},
       DOI = {10.1007/s00220-007-0256-9},
       URL = {https://doi-org.bris.idm.oclc.org/10.1007/s00220-007-0256-9},
}

\bib{CJM}{article}{
AUTHOR = {Clarkson, Peter A.},
author={Joshi, Nalini}
author={Mazzocco, Marta},
     TITLE = {The {L}ax pair for the m{K}d{V} hierarchy},
 BOOKTITLE = {Th\'{e}ories asymptotiques et \'{e}quations de {P}ainlev\'{e}},
    SERIES = {S\'{e}min. Congr.},
    VOLUME = {14},
     PAGES = {53--64},
 PUBLISHER = {Soc. Math. France, Paris},
      YEAR = {2006},
   MRCLASS = {37K10 (34M55 37K15 37K20)},
  MRNUMBER = {2353461},
MRREVIEWER = {Maria A. Agrotis},
}

\bib{CM}{article}{
AUTHOR = {Clarkson, Peter A.},
author={Mansfield, Elizabeth L.},
     TITLE = {The second {P}ainlev\'{e} equation, its hierarchy and associated
              special polynomials},
   JOURNAL = {Nonlinearity},
  FJOURNAL = {Nonlinearity},
    VOLUME = {16},
      YEAR = {2003},
    NUMBER = {3},
     PAGES = {R1--R26},
      ISSN = {0951-7715},
   MRCLASS = {34M55},
  MRNUMBER = {1975781},
MRREVIEWER = {Andrei A. Kapaev},
       DOI = {10.1088/0951-7715/16/3/201},
       URL = {https://doi-org.bris.idm.oclc.org/10.1088/0951-7715/16/3/201},
}

\bib{C}{article}{
AUTHOR = {Corwin, Ivan},
     TITLE = {The {K}ardar-{P}arisi-{Z}hang equation and universality class},
   JOURNAL = {Random Matrices Theory Appl.},
  FJOURNAL = {Random Matrices. Theory and Applications},
    VOLUME = {1},
      YEAR = {2012},
    NUMBER = {1},
     PAGES = {1130001, 76},
      ISSN = {2010-3263},
   MRCLASS = {82B31 (60B20 60K35 60K37)},
  MRNUMBER = {2930377},
       DOI = {10.1142/S2010326311300014},
       URL = {https://doi-org.bris.idm.oclc.org/10.1142/S2010326311300014},
}


\bib{DDMS}{article}{
title = {Noninteracting fermions at finite temperature in a $d$-dimensional trap: Universal correlations},
  author = {Dean, David S.},
  author={Le Doussal, Pierre},
  author={Majumdar, Satya N.},
  author={Schehr, Gr\'egory},
  journal = {Phys. Rev. A},
  volume = {94},
  issue = {6},
  pages = {063622},
  numpages = {41},
  year = {2016},
  month = {Dec},
  publisher = {American Physical Society},
  doi = {10.1103/PhysRevA.94.063622},
  url = {https://link.aps.org/doi/10.1103/PhysRevA.94.063622}
}

\bib{DDMS2}{article}{
 title = {Wigner function of noninteracting trapped fermions},
  author = {Dean, David S.},
  author={Le Doussal, Pierre},
  author={Majumdar, Satya N.},
  author={Schehr, Gr\'egory},
  journal = {Phys. Rev. A},
  volume = {97},
  issue = {6},
  pages = {063614},
  numpages = {14},
  year = {2018},
  month = {Jun},
  publisher = {American Physical Society},
  doi = {10.1103/PhysRevA.97.063614},
  url = {https://link.aps.org/doi/10.1103/PhysRevA.97.063614}
}

\bib{D}{article}{
AUTHOR = {Dimitrov, Evgeni},
     TITLE = {K{PZ} and {A}iry limits of {H}all-{L}ittlewood random plane
              partitions},
   JOURNAL = {Ann. Inst. Henri Poincar\'{e} Probab. Stat.},
  FJOURNAL = {Annales de l'Institut Henri Poincar\'{e} Probabilit\'{e}s et
              Statistiques},
    VOLUME = {54},
      YEAR = {2018},
    NUMBER = {2},
     PAGES = {640--693},
      ISSN = {0246-0203},
   MRCLASS = {60K35 (05A17 05E05 33D52 60H15 82B23)},
  MRNUMBER = {3795062},
       DOI = {10.1214/16-AIHP817},
       URL = {https://doi-org.bris.idm.oclc.org/10.1214/16-AIHP817},
}

\bib{FN}{article}{
	author = {Flaschka, Hermann} 
	author = {Newell, Alan C.},
	coden = {CMPHAY},
	date-added = {2010-11-08 17:10:12 -0500},
	date-modified = {2010-11-08 17:10:21 -0500},
	fjournal = {Communications in Mathematical Physics},
	issn = {0010-3616},
	journal = {Comm. Math. Phys.},
	mrclass = {35Q20 (14D05 58F07 81C05)},
	mrnumber = {588248 (82g:35103)},
	mrreviewer = {H{{\'e}}l{{\`e}}ne Airault},
	number = {1},
	pages = {65--116},
	title = {Monodromy and spectrum preserving deformations. {I}},
	url = {http://projecteuclid.org/getRecord?id=euclid.cmp/1103908189},
	volume = {76},
	year = {1980},
	Bdsk-Url-1 = {http://www.ams.org/mathscinet-getitem?mr=588248}}

\bib{GGK}{book}{
AUTHOR = {Gohberg, Israel}
author={Goldberg, Seymour}
author={Krupnik, Nahum},
     TITLE = {Traces and determinants of linear operators},
    SERIES = {Operator Theory: Advances and Applications},
    VOLUME = {116},
 PUBLISHER = {Birkh\"{a}user Verlag, Basel},
      YEAR = {2000},
     PAGES = {x+258},
      ISBN = {3-7643-6177-8},
   MRCLASS = {47B10 (45B05 45P05 47A53 47G10 47L10)},
  MRNUMBER = {1744872},
MRREVIEWER = {Hermann K\"{o}nig},
       DOI = {10.1007/978-3-0348-8401-3},
       URL = {https://doi.org/10.1007/978-3-0348-8401-3},
}

\bib{IK}{article}{
AUTHOR = {Its, A. R.},
author={Kozlowski, K. K.},
     TITLE = {Large-{$x$} analysis of an operator-valued {R}iemann-{H}ilbert
              problem},
   JOURNAL = {Int. Math. Res. Not. IMRN},
  FJOURNAL = {International Mathematics Research Notices. IMRN},
      YEAR = {2016},
    NUMBER = {6},
     PAGES = {1776--1806},
      ISSN = {1073-7928},
   MRCLASS = {47G10 (35Q15 45P05)},
  MRNUMBER = {3509940},
MRREVIEWER = {Lu\'{\i}s P. Castro},
       DOI = {10.1093/imrn/rnv188},
       URL = {https://doi-org.bris.idm.oclc.org/10.1093/imrn/rnv188},
}

\bib{Joh}{article}{
author={Johansson, Kurt},
 TITLE={Random matrices and determinantal processes},
    YEAR={2005},
    journal={Lecture notes from the Les Houches summer school on Mathematical Statistical Physics},
    eprint={https://arxiv.org/abs/math-ph/0510038v1},
    archivePrefix={arXiv},
    primaryClass={math.PH},
}

\bib{J}{article}{
AUTHOR = {Johansson, Kurt},
     TITLE = {From {G}umbel to {T}racy-{W}idom},
   JOURNAL = {Probab. Theory Related Fields},
  FJOURNAL = {Probability Theory and Related Fields},
    VOLUME = {138},
      YEAR = {2007},
    NUMBER = {1-2},
     PAGES = {75--112},
      ISSN = {0178-8051},
   MRCLASS = {60G70 (15A52 60G07 62G32 82B41)},
  MRNUMBER = {2288065},
MRREVIEWER = {Alexander Roitershtein},
       DOI = {10.1007/s00440-006-0012-7},
       URL = {https://doi-org.bris.idm.oclc.org/10.1007/s00440-006-0012-7},
}

\bib{KZ}{article}{
author={Kimura, Taro},
author={Zahabi, Ali},
title={Universal edge scaling in random partitions},
year={2020}
eprint={https://arxiv.org/abs/2012.06424},
    archivePrefix={arXiv},
    primaryClass={cond-mat.stat-mech}
}

\bib{Ko}{article}{
AUTHOR = {Kohno, Mitsuhiko},
     TITLE = {An extended {A}iry function of the first kind},
   JOURNAL = {Hiroshima Math. J.},
  FJOURNAL = {Hiroshima Mathematical Journal},
    VOLUME = {9},
      YEAR = {1979},
    NUMBER = {2},
     PAGES = {473--489},
      ISSN = {0018-2079},
   MRCLASS = {33A70 (34E05)},
  MRNUMBER = {535522},
MRREVIEWER = {F. W. J. Olver},
       URL = {http://projecteuclid.org.bris.idm.oclc.org/euclid.hmj/1206134896},
}


\bib{Kra}{article}{
author={Krajenbrink, Alexandre},
 title={From Painlev\'e to Zakharov-Shabat and beyond: Fredholm determinants and integro-differential hierarchies},
    year={2020},
    eprint={https://arxiv.org/abs/2008.01509},
    archivePrefix={arXiv},
    primaryClass={math.PH}
}

\bib{KBI}{book}{
AUTHOR = {Korepin, Vladimir}
author={Bogoliubov, N. M.}
author={Izergin, A. G.},
     TITLE = {Quantum inverse scattering method and correlation functions},
    SERIES = {Cambridge Monographs on Mathematical Physics},
 PUBLISHER = {Cambridge University Press, Cambridge},
      YEAR = {1993},
     PAGES = {xx+555},
      ISBN = {0-521-37320-4; 0-521-58646-1},
   MRCLASS = {81U40 (81-02 81T40 82-02 82B10)},
  MRNUMBER = {1245942},
MRREVIEWER = {Makoto Idzumi},
       DOI = {10.1017/CBO9780511628832},
       URL = {https://doi.org/10.1017/CBO9780511628832},
}

\bib{DoussalKP}{article}{
doi = {10.1088/1742-5468/ab75e4},
	url = {https://doi.org/10.1088/1742-5468/ab75e4},
	month = {apr},
	publisher = {{IOP} Publishing},
	volume = {2020},
	year={2020},
	number = {4},
	pages = {043201},
	author = {Le Doussal, Pierre},
	title = {Large deviations for the Kardar{\textendash} Parisi{\textendash}Zhang equation from the Kadomtsev{\textendash}Petviashvili equation},
	journal = {Journal of Statistical Mechanics: Theory and Experiment},
}

\bib{DMS}{article}{
  title = {Multicritical Edge Statistics for the Momenta of Fermions in Nonharmonic Traps},
  author = {Le Doussal, Pierre}
  author={Majumdar, Satya N.}
  author={Schehr, Gr\'egory},
  journal = {Phys. Rev. Lett.},
  volume = {121},
  issue = {3},
  pages = {030603},
  numpages = {7},
  year = {2018},
  month = {Jul},
  publisher = {American Physical Society},
  doi = {10.1103/PhysRevLett.121.030603},
  url = {https://link.aps.org/doi/10.1103/PhysRevLett.121.030603}
}

\bib{DMSa}{article}{
 title = {Multicritical Edge Statistics for the Momenta of Fermions in Nonharmonic Traps},
  author = {Le Doussal, Pierre}
  author={Majumdar, Satya N.}
  author={Schehr, Gr\'egory},
year={2018},
eprint={https://arxiv.org/abs/1802.06436},
    archivePrefix={arXiv},
    primaryClass={cond-mat.stat-mech}
}

\bib{LW}{article}{
AUTHOR = {Liechty, Karl},
author={Wang, Dong},
     TITLE = {Asymptotics of free fermions in a quadratic well at finite
              temperature and the {M}oshe-{N}euberger-{S}hapiro random
              matrix model},
   JOURNAL = {Ann. Inst. Henri Poincar\'{e} Probab. Stat.},
  FJOURNAL = {Annales de l'Institut Henri Poincar\'{e} Probabilit\'{e}s et
              Statistiques},
    VOLUME = {56},
      YEAR = {2020},
    NUMBER = {2},
     PAGES = {1072--1098},
      ISSN = {0246-0203},
   MRCLASS = {60B20 (15B52 82B23)},
  MRNUMBER = {4076776},
       DOI = {10.1214/19-AIHP994},
       URL = {https://doi-org.bris.idm.oclc.org/10.1214/19-AIHP994},
}

\bib{M}{book}{
AUTHOR = {Muskhelishvili, N. I.},
     TITLE = {Singular integral equations},
      NOTE = {Boundary problems of function theory and their application to
              mathematical physics,
              Translated from the second (1946) Russian edition and with a
              preface by J. R. M. Radok,
              Corrected reprint of the 1953 English translation},
 PUBLISHER = {Dover Publications, Inc., New York},
      YEAR = {1992},
     PAGES = {447},
      ISBN = {0-486-66893-2},
   MRCLASS = {45-02 (30E25 45Exx 47G10 47N20)},
  MRNUMBER = {1215485},
}

\bib{NIST}{book}{
TITLE = {N{IST} handbook of mathematical functions},
    EDITOR = {Olver, Frank W. J.}
    editor={Lozier, Daniel W.}
    editor={Boisvert, Ronald F.}
    editor={Clark, Charles W.},
 PUBLISHER = {U.S. Department of Commerce, National Institute of Standards
              and Technology, Washington, DC; Cambridge University Press,
              Cambridge},
      YEAR = {2010},
     PAGES = {xvi+951},
      ISBN = {978-0-521-14063-8},
   MRCLASS = {33-00 (00A20 65-00)},
  MRNUMBER = {2723248},
}

\bib{QR}{article}{
  title={KP governs random growth off a one dimensional substrate},
  year={2019}
  author={Quastel, Jeremy}
  author={Remenik, Daniel},
  eprint={https://arxiv.org/abs/1908.10353},
  archivePrefix={arXiv},
  primaryClass={math.PR},
}

\bib{S}{book}{
AUTHOR = {Simon, Barry},
     TITLE = {Trace ideals and their applications},
    SERIES = {Mathematical Surveys and Monographs},
    VOLUME = {120},
   EDITION = {Second},
 PUBLISHER = {American Mathematical Society, Providence, RI},
      YEAR = {2005},
     PAGES = {viii+150},
      ISBN = {0-8218-3581-5},
   MRCLASS = {47L20 (47A40 47A55 47B10 47B36 47E05 81Q15 81U99)},
  MRNUMBER = {2154153},
MRREVIEWER = {Pavel B. Kurasov},
}

\bib{S2}{book}{
AUTHOR = {Simon, Barry},
     TITLE = {Advanced complex analysis},
    SERIES = {A Comprehensive Course in Analysis, Part 2B},
 PUBLISHER = {American Mathematical Society, Providence, RI},
      YEAR = {2015},
     PAGES = {xvi+321},
      ISBN = {978-1-4704-1101-5},
   MRCLASS = {30-01 (11-01 33-01 34-01 60J67)},
  MRNUMBER = {3364090},
MRREVIEWER = {Fritz Gesztesy},
       DOI = {10.1090/simon/002.2},
       URL = {https://doi-org.bris.idm.oclc.org/10.1090/simon/002.2},
}

\bib{S3}{book}{
AUTHOR = {Simon, Barry},
     TITLE = {Operator theory},
    SERIES = {A Comprehensive Course in Analysis, Part 4},
 PUBLISHER = {American Mathematical Society, Providence, RI},
      YEAR = {2015},
     PAGES = {xviii+749},
      ISBN = {978-1-4704-1103-9},
   MRCLASS = {47-01 (34-01 35-01 42B35 42B37 43-01 46-01 81-01)},
  MRNUMBER = {3364494},
MRREVIEWER = {Fritz Gesztesy},
       DOI = {10.1090/simon/004},
       URL = {https://doi-org.bris.idm.oclc.org/10.1090/simon/004},
}

\bib{Sos}{article}{
AUTHOR = {Soshnikov, A.},
     TITLE = {Determinantal random point fields},
   JOURNAL = {Uspekhi Mat. Nauk},
  FJOURNAL = {Uspekhi Matematicheskikh Nauk},
    VOLUME = {55},
      YEAR = {2000},
    NUMBER = {5(335)},
     PAGES = {107--160},
      ISSN = {0042-1316},
   MRCLASS = {60G55 (60F05 60K05)},
  MRNUMBER = {1799012},
MRREVIEWER = {Boris A. Khoruzhenko},
       DOI = {10.1070/rm2000v055n05ABEH000321},
       URL = {https://doi-org.bris.idm.oclc.org/10.1070/rm2000v055n05ABEH000321},
}

\bib{T}{book}{
AUTHOR = {Takhtajan, Leon A.},
     TITLE = {Quantum mechanics for mathematicians},
    SERIES = {Graduate Studies in Mathematics},
    VOLUME = {95},
 PUBLISHER = {American Mathematical Society, Providence, RI},
      YEAR = {2008},
     PAGES = {xvi+387},
      ISBN = {978-0-8218-4630-8},
   MRCLASS = {81-01 (37J05 47N50 70H05)},
  MRNUMBER = {2433906},
MRREVIEWER = {H. Hogreve},
       DOI = {10.1090/gsm/095},
       URL = {https://doi-org.bris.idm.oclc.org/10.1090/gsm/095},
}

\bib{Tar}{article}{
author={Tarricone, Sofia},
title={A fully noncommutative Painlev\'e II hierarchy: Lax pair and solutions related to Fredholm determinants},
year={2020},
eprint={https://arix.org.abs/2007.05707},
archivePrefix={arXiv},
primaryClass={math.PH}
}

\bib{TW0}{article}{
AUTHOR = {Tracy, Craig A.}
author={Widom, Harold},
     TITLE = {Level-spacing distributions and the {A}iry kernel},
   JOURNAL = {Comm. Math. Phys.},
  FJOURNAL = {Communications in Mathematical Physics},
    VOLUME = {159},
      YEAR = {1994},
    NUMBER = {1},
     PAGES = {151--174},
      ISSN = {0010-3616},
   MRCLASS = {82B05 (33C90 47A75 47G10 47N55 82B10)},
  MRNUMBER = {1257246},
MRREVIEWER = {Estelle L. Basor},
       URL = {http://projecteuclid.org/euclid.cmp/1104254495},
}

\bib{TW}{article}{
AUTHOR = {Tracy, Craig A.}
author={Widom, Harold},
     TITLE = {Asymptotics in {ASEP} with step initial condition},
   JOURNAL = {Comm. Math. Phys.},
  FJOURNAL = {Communications in Mathematical Physics},
    VOLUME = {290},
      YEAR = {2009},
    NUMBER = {1},
     PAGES = {129--154},
      ISSN = {0010-3616},
   MRCLASS = {60K35 (47N30 82C22)},
  MRNUMBER = {2520510},
MRREVIEWER = {Timo Sepp\"{a}l\"{a}inen},
       DOI = {10.1007/s00220-009-0761-0},
       URL = {https://doi-org.bris.idm.oclc.org/10.1007/s00220-009-0761-0},
}

\bib{WE}{article}{
AUTHOR = {Warren, Oliver H.},
author={Elgin, John N.},
     TITLE = {The vector nonlinear {S}chr\"{o}dinger hierarchy},
   JOURNAL = {Phys. D},
  FJOURNAL = {Physica D. Nonlinear Phenomena},
    VOLUME = {228},
      YEAR = {2007},
    NUMBER = {2},
     PAGES = {166--171},
      ISSN = {0167-2789},
   MRCLASS = {37K10 (14H70 35Q55 37K20)},
  MRNUMBER = {2331592},
MRREVIEWER = {Beatrice Pelloni},
       DOI = {10.1016/j.physd.2007.03.006},
       URL = {https://doi.org/10.1016/j.physd.2007.03.006},
}

\end{biblist}
\end{bibsection}
\end{document}